\let\savebigtimes\bigtimes
\let\bigtimes\relax
\let\bigtimes\savebigtimes
\tikzset{
  funbisim/.style={
    decoration={funbisim, amplitude=0.25ex},
    decorate,
    funbisim options/.style={#1}    
  }}
\tikzset{
  bisim/.style={
    decoration={bisim, amplitude=0.25ex},
    decorate,
    bisim options/.style={#1}    
  }}
\def\calcLength(#1,#2)#3{%
\pgfpointdiff{\pgfpointanchor{#1}{center}}%
             {\pgfpointanchor{#2}{center}}%
\pgf@xa=\pgf@x%
\pgf@ya=\pgf@y%
\FPeval\@temp@a{\pgfmath@tonumber{\pgf@xa}}%
\FPeval\@temp@b{\pgfmath@tonumber{\pgf@ya}}%
\FPeval\@temp@sum{(\@temp@a*\@temp@a+\@temp@b*\@temp@b)}%
\FProot{\FPMathLen}{\@temp@sum}{2}%
\FPround\FPMathLen\FPMathLen5\relax
\global\expandafter\edef\csname #3\endcsname{\FPMathLen}
}
\tikzset{
  my dash/.style={dash pattern=on 5pt off 2pt}
         }
\numberwithin{equation}{section}
\newtheorem*{repeatedlem}{Lemma}{\normalfont\bfseries}{\itshape}
\newtheorem*{repeatedprop}{Proposition}{\normalfont\bfseries}{\itshape}
\DeclareFontFamily{U}{mathx}{\hyphenchar\font45}   
\DeclareFontShape{U}{mathx}{m}{n}{
      <5> <6> <7> <8> <9> <10>
      <10.95> <12> <14.4> <17.28> <20.74> <24.88>
      mathx10
      }{}
\DeclareSymbolFont{mathx}{U}{mathx}{m}{n}
\DeclareMathAccent{\widecheck}{0}{mathx}{"71}
\DeclareMathAccent{\wideparen}{0}{mathx}{"75}
\definecolor{azure}{rgb}{0.94,1.00,1.00}
\definecolor{brown}{rgb}{.75,.25,.25}
\definecolor{cyan}{rgb}{0.25,0.88,0.82}
\definecolor{chocolate}{rgb}{0.82,0.41,0.12}
\definecolor{darkcyan}{rgb}{0.5,0,1}
\definecolor{darkgreen}{rgb}{0,0.39,0}
\definecolor{darkmagenta}{rgb}{0.5,0,0.5}
\definecolor{darkgoldenrod}{RGB}{184,134,11}
\definecolor{firebrick}{RGB}{175,25,25}
\definecolor{forestgreen}{rgb}{0.13,0.55,0.13}
\definecolor{goldenrod}{RGB}{218,165,32}
\definecolor{lightcyan}{rgb}{0.88,1.00,1.00}
\definecolor{lightpink}{rgb}{1.00,0.71,0.76}
\definecolor{myyellow}{RGB}{235,235,0}
\definecolor{lightyellow}{rgb}{1.00,1.00,0.88}
\definecolor{lightgoldenrod}{rgb}{0.83,0.97,0.51}
\definecolor{lightgoldenrodyellow}{rgb}{0.98,0.98,0.82}
\definecolor{lightskyblue}{rgb}{0.53,0.81,0.98}
\definecolor{moccasin}{rgb}{1.00,0.89,0.71}
\definecolor{magenta}{rgb}{1,0,1}
\definecolor{navyblue}{rgb}{0,0,0.5}
\definecolor{orange}{rgb}{1.0,0.65,0.0}
\definecolor{orangered}{rgb}{1.0,0.27,0.0}
\definecolor{palegreen}{rgb}{0.60,0.98,0.60}
\definecolor{powderblue}{rgb}{0.69,0.88,0.90}
\definecolor{purple}{rgb}{1,0.5,1}
\definecolor{royalblue}{RGB}{65,105,225}
\definecolor{mediumblue}{RGB}{0,0,205}
\definecolor{cornflowerblue}{RGB}{100,149,237}
\definecolor{springgreen}{rgb}{0.0,1.0,0.5}
\definecolor{turquoise}{rgb}{0.25,0.88,0.82}
\definecolor{snow}{rgb}{1.00,0.98,0.98}
\definecolor{tan}{rgb}{0.82,0.71,0.55}
\definecolor{red}{rgb}{1,0,0}
\definecolor{violetred}{RGB}{208,32,144}
\newcommand{\colorin}[1]{\textcolor{#1}}
\newcommand{\black}{\colorin{black}}
\newcommand{\colorred}{\colorin{red}}
\newcommand{\darkcyan}{\colorin{darkcyan}}
\newcommand{\forestgreen}{\colorin{forestgreen}}
\newcommand{\nb}{\nobreakdash}
\newcommand{\punc}[1]{\ensuremath{\hspace*{1.5pt}{#1}}}
\newcommand{\nf}{\normalfont}
\newenvironment{new}{\color{chocolate}}{\color{black}}
\newenvironment{newer}{\color{firebrick}}{\color{black}}
\newenvironment{newest}{\color{red}}{\color{black}}
\newenvironment{revised}{\color{violetred}}{\color{black}}
\newenvironment{change}{\color{violetred}}{\color{black}}
\newcommand{\funin}{\mathrel{:}}
\newcommand{\fap}[2]{{#1}(\hspace*{-0.5pt}{#2}\hspace*{-0.5pt})}
\newcommand{\bfap}[3]{{#1}({#2},\hspace*{0.5pt}{#3})}
\newcommand{\iap}[2]{#1_{#2}}
\newcommand{\bap}[2]{#1_{#2}}
\newcommand{\pap}[2]{#1^{#2}}
\newcommand{\bpap}[3]{#1_{#2}^{#3}}
\newcommand{\pbap}[3]{#1_{#3}^{#2}}
\newcommand{\sidfunon}{\iap{\textrm{\nf id}}}
\newcommand{\idfunon}[1]{\fap{\sidfunon{#1}}}
\newcommand{\sdefdby}{{:=}}
\newcommand{\defdby}{\mathrel{\sdefdby}}
\newcommand\tuple[1]{\langle #1 \rangle}
\newcommand\tuplespace{\hspace*{0.5pt}}
\newcommand\pair[2]{\tuple{#1, \tuplespace #2}}
\newcommand\triple[3]{\tuple{#1, \tuplespace #2, \tuplespace #3}}
\newcommand{\nat}{\mathbb{N}}
\newcommand{\natplus}{\pap{\nat}{+}} 
\newcommand{\BNFor}{\:\mid\:}
\newcommand{\BNFdefdby}{\:{::=}\:}
\newcommand{\ssyntequal}{{\equiv}}
\newcommand{\syntequal}{\mathrel{\ssyntequal}}
\newcommand{\sred}{{\to}}
\newcommand{\red}{\mathrel{\sred}}
\newcommand{\sredi}[1]{{\iap{\sred}{#1}}}
\newcommand{\redi}[1]{\mathrel{\sredi{#1}}}
\newcommand{\sconvredi}[1]{{\iap{\leftarrow}{#1}}}
\newcommand{\sredtc}{\sred^{+}}
\newcommand{\redtc}{\mathrel{\sredtc}}
\newcommand{\sredtci}[1]{{\iap{\sredtc}{#1}}}
\newcommand{\redtci}[1]{\mathrel{\sredtci{#1}}}
\newcommand{\sredrtc}{\sred^{*}}
\newcommand{\redrtc}{\mathrel{\sredrtc}}
\newcommand{\sredrtci}[1]{{\iap{\sredrtc}{#1}}}
\newcommand{\redrtci}[1]{\mathrel{\sredrtci{#1}}}
\newcommand{\scomprewrels}[2]{{#1}\cdot{#2}}
\newcommand{\comprewrels}[2]{\mathrel{\scomprewrels{#1}{#2}}}
\newcommand{\stavoidsv}{\text{\nf\bf\st{$\hspace*{1.5pt}$t$\hspace*{1.5pt}$}}}
\newcommand{\tavoidsv}{\fap{\stavoidsv}}
\newcommand{\sredtavoidsv}[1]{{\xrightarrow[\raisebox{0pt}{\scriptsize {$\tavoidsv{#1}$}}]{}}}
\newcommand{\redtavoidsv}[1]{\mathrel{\sredtavoidsv{#1}}}
\newcommand{\sredtavoidsvrtc}[1]{{\xrightarrow[\raisebox{0pt}{\scriptsize {$\tavoidsv{#1}$}}]{}}{^{*}}}
\newcommand{\redtavoidsvrtc}[1]{\mathrel{\sredtavoidsvrtc{#1}}}
\newcommand{\sredtavoidsvtc}[1]{{\xrightarrow[\raisebox{0pt}{\scriptsize {$\tavoidsv{#1}$}}]{}}{^{+}}}
\newcommand{\redtavoidsvtc}[1]{\mathrel{\sredtavoidsvtc{#1}}}
\newcommand{\sredtavoidsvi}[2]{{\sredtavoidsv{#1}{}_{#2}}}
\newcommand{\redtavoidsvi}[2]{\mathrel{\sredtavoidsvi{#1}{#2}}}
\newcommand{\sredtavoidsvrtci}[2]{{\xrightarrow[\raisebox{0pt}{\scriptsize {$\tavoidsv{#1}$}}]{}}{^{*}_{#2}}}
\newcommand{\redtavoidsvrtci}[2]{\mathrel{\sredtavoidsvrtci{#1}{#2}}}
\newcommand{\sredtavoidsvtci}[2]{{\xrightarrow[\raisebox{0pt}{\scriptsize {$\tavoidsv{#1}$}}]{}}{^{+}_{#2}}}
\newcommand{\redtavoidsvtci}[2]{\mathrel{\sredtavoidsvtci{#1}{#2}}}
  \newcommand{\specfontsize}{\fontsize{5}{6}\selectfont} 
  \newcommand{\subosr}{\hspace*{-1pt}\mbox{\specfontsize $(\hspace*{-0.6pt}\sone\hspace*{-0.85pt})$}}
\newcommand{\descsetexpmid}{\mathrel{\vert}}
\newcommand{\descsetexpbigmid}{\mathrel{\big\vert}}
\newcommand{\descsetexp}[2]{\left\{{#1}\descsetexpmid{#2}\right\}}
\newcommand{\descsetexpbig}[2]{\bigl\{{#1}\descsetexpbigmid{#2}\bigr\}}
\newcommand{\sphifun}{\phi}
\newcommand{\phifun}{\fap{\sphifun}}
\newcommand{\scompfuns}[2]{{#1}\circ{#2}}
\newcommand{\compfuns}[2]{\fap{\scompfuns{#1}{#2}}}
\newcommand{\setexp}[1]{\left\{{#1}\right\}}
\newcommand{\setexpbig}[1]{\bigl\{{#1}\bigr\}}
\renewcommand{\emptyset}{\varnothing}
\newcommand{\slogand}{\wedge}
\newcommand{\logand}{\mathrel{\slogand}}
\newcommand{\slognot}{\neg}
\newcommand{\lognot}[1]{\slognot{#1}}
\newcommand{\existsstzero}[1]{\exists{\hspace*{1pt}#1}}
\newcommand{\actionderivative}{action derivative}
\newcommand{\actionderivatives}{\actionderivative{s}}
\newcommand{\languageequivalent}{lan\-guage-equiv\-a\-lent}
\newcommand{\aderivative}[1]{{$#1$}\nb-de\-riv\-a\-tive}
\newcommand{\aderivatives}[1]{\aderivative{#1}s}
\newcommand{\entrytransition}{en\-try tran\-si\-tion}
\newcommand{\entrytransitions}{\entrytransition{s}}
\newcommand{\generatedby}[1]{${#1}$\nb-ge\-ne\-ra\-ted}
\newcommand{\startconnected}{start-ver\-tex con\-nect\-ed}
\newcommand{\loopelimination}{loop\nb-elim\-i\-nat\-ion}
\newcommand{\loopentry}{loop-en\-try}
\newcommand{\loopbody}{loop-body}
\newcommand{\txtloopsbackto}{loops-back-to}
\newcommand{\entrybodylabeling}{en\-try\discretionary{/}{}{/}body-la\-be\-ling}
\newcommand{\entrybodylabelings}{\entrybodylabeling{s}}
\newcommand{\looplabeling}{loop-la\-be\-ling}
\newcommand{\LEEshaped}{\LEE\nb-shaped}
\newcommand{\LEEwitness}{$\LEE$\nb-wit\-ness}
\newcommand{\LEEwitnesses}{$\LEE$\hspace*{1.25pt}\nb-wit\-nes\-ses}
\newcommand{\LLEEwitness}{{\nf LLEE}\nb-wit\-ness}
\newcommand{\LLEEwitnesses}{{\nf LLEE}\nb-wit\-nesses}
\newcommand{\LLEEchart}{{\nf LLEE}\nb-chart}
\newcommand{\LLEEchartemph}{\emph{LLEE\nb-chart}}
\newcommand{\LLEEcharts}{{\nf LLEE}\nb-chart{s}}
\newcommand{\nonzero}{non-zero}
\newcommand{\onetransitions}{\oneTransition{s}}
\newcommand{\oneTransition}{$\sone$\nb-tran\-si\-tion}
\newcommand{\provablein}[1]{{$#1$}\nb-pro\-vable}
\newcommand{\provablyin}[1]{{$#1$}\nb-pro\-vably}
\newcommand{\structureconstrained}{struc\-ture-con\-strained}
\newcommand{\transitionact}[1]{{${#1}$}\nb-tran\-si\-tion}
\newcommand{\welldefined}{well-de\-fined}
\newcommand{\scc}{scc}
\newcommand{\sccs}{\scc's}
\newcommand{\sscc}{\textsf{scc}}
\newcommand{\sccof}{\fap{\sscc}}
\newcommand{\astexp}{e}
\newcommand{\bstexp}{f}
\newcommand{\cstexp}{g}
\newcommand{\astexpi}{\iap{\astexp}}
\newcommand{\bstexpi}{\iap{\bstexp}}
\newcommand{\astexpacc}{\astexp'}
\newcommand{\astexpacci}{\iap{\astexpacc}}
\newcommand{\astexpdacc}{\astexp''}
\newcommand{\astexpdacci}{\iap{\astexpdacc}}
\newcommand{\StExps}{\mathit{StExp}}
\newcommand{\StExpsover}{\fap{\StExps}}
\newcommand{\tickStExpsover}[1]{\bap{\StExpsover{#1}}{\tick}}
\newcommand{\stexpzero}{0}
\newcommand{\stexpone}{1}
\newcommand{\astexpact}{a}
\newcommand{\sstexpit}{\sstar}
\newcommand{\stexpit}[1]{{#1^{\sstexpit}}}
\newcommand{\stexppl}[1]{{#1^{\spl}}}
\newcommand{\sstexpprod}{{\cdot}}
\newcommand{\stexpprod}[2]{{#1}\mathrel{\sstexpprod}{#2}}
\newcommand{\sstexpsum}{+}
\newcommand{\stexpsum}[2]{{#1}\sstexpsum{#2}}
\newcommand{\sstexpbit}{\circledast} 
\newcommand{\stexpbit}[2]{{#1}\hspace*{0.35pt}\pap{}{\sstexpbit}\hspace*{-0.6pt}{#2}}
\newcommand{\atickstexp}{\xi}
\newcommand{\bsth}[1]{|{#1}|_{\scalebox{0.8}{$\sstexpbit$}}}
\newcommand{\sdescrelstexpit}{\sredi{\scriptscriptstyle(\sstar)}}
\newcommand{\descrelstexpit}[1]{\mathrel{\sdescrelstexpit}}
\newcommand{\sconvdescrelstexpit}{\sconvredi{\scriptscriptstyle(\sstar)}}
\newcommand{\convdescrelstexpit}[1]{\mathrel{\sconvdescrelstexpit}}
\newcommand{\tick}{\surd}
\newcommand{\spartderivs}{\partial}
\newcommand{\sactderivs}{\textit{A\hspace*{-0.25pt}$\spartderivs$}}
\newcommand{\actderivs}{\fap{\sactderivs}}
\newcommand{\sprocint}{P}
\newcommand{\procint}[1]{\llbracket{#1}\rrbracket_{\sprocint}}
\newcommand{\sprocsem}{P}
\newcommand{\procsem}[1]{\llbracket{#1}\rrbracket_{\sprocsem}}
\newcommand{\slt}[1]{{\xrightarrow{#1}}}
\newcommand{\slti}[2]{{\xrightarrow{#1}}{_{#2}}}
\newcommand{\lt}[1]{\mathrel{\slt{#1}}}
\newcommand{\lti}[2]{\mathrel{\slti{#1}{#2}}}
\newcommand{\sone}{1}
\newcommand{\sstar}{*}
\newcommand{\spl}{\omega}
\newcommand{\branchlab}{\text{\nf br}}
\newcommand{\bodylab}{\text{\nf bo}}
\newcommand{\bodytransition}{body tran\-si\-tion}
\newcommand{\bodytransitions}{\bodytransition{s}}
\newcommand{\loopsteplabof}[1]{[{#1}]}
\newcommand{\loopnsteplab}[1]{[{#1}]}
\newcommand{\sloopnstepto}[1]{{\iap{\rightarrow}{\loopnsteplab{#1}}}}
\newcommand{\loopnstepto}[1]{\mathrel{\sloopnstepto{#1}}}
\newcommand{\loopentrytransition}{loop-entry tran\-si\-tion}
\newcommand{\loopentrytransitions}{\loopentrytransition{s}}
\newcommand{\aLname}{\alpha}
\newcommand{\aLnamei}{\iap{\aLname}}
\newcommand{\bLname}{\beta}
\newcommand{\bLnamei}{\iap{\bLname}}
\newcommand{\cLname}{\gamma}
\newcommand{\cLnamei}{\iap{\aLname}}
\newcommand{\dLname}{\delta}
\newcommand{\dLnamei}{\iap{\dLname}}
\newcommand{\eLname}{\epsilon}
\newcommand{\aLnameacc}{\alpha'}
\renewcommand{\ll}[1]{\left\lvert{#1}\right\rvert}
\newcommand{\bosn}[1]{\left\lVert{#1}\right\rVert_{\bodylab}} 
\newcommand{\bosnnf}[1]{\lVert{#1}\rVert_{\bodylab}}
\newcommand{\sen}{\text{\nf en}}
\newcommand{\enl}[1]{\left\lvert{#1}\right\rvert_{\sen}}
\newcommand{\enlnf}[1]{|{#1}|_{\sen}}
\newcommand{\slex}{\mathit{lex}}
\newcommand{\sltlex}{{\bap{<}{\slex}}}
\newcommand{\ltlex}{\mathrel{\sltlex}}
\newcommand{\sult}{{\to}}
\newcommand{\ult}{\mathrel{\sult}}
\newcommand{\achart}{\mathcal{C}}
\newcommand{\acharti}{\iap{\achart}}
\newcommand{\achartacc}{\mathcal{C}'}
\newcommand{\acharthat}{\hspace*{0.75pt}\Hat{\hspace*{-0.75pt}\achart}\hspace*{-0pt}} 
\newcommand{\acharthati}[1]{\hspace*{0.2pt}\iap{\Hat{\hspace*{-0.75pt}\achart}}{#1}\hspace*{-0.75pt}} 
\newcommand{\acharthatacc}{\acharthat'}
\newcommand{\acharthatdacc}{\hspace*{0.2pt}\acharthat\hspace*{0.2pt}''\hspace*{-0.75pt}}
\newcommand{\chartof}{\fap{\achart}}
\newcommand{\charthatof}[1]{\widehat{\rule{0pt}{6.5pt}\smash{\fap{\achart\hspace*{-1pt}}{{#1}}}}}
\newcommand{\charthighhatof}[1]{\widehat{\rule{0pt}{7.5pt}\smash{\fap{\achart\hspace*{-1pt}}{{#1}}}}}
\newcommand{\aloop}{\mathcal{L}}
\newcommand{\aloopi}{\iap{\aloop}}
\newcommand{\indsubchartinat}[1]{\fap{\acharti{#1}}}
\newcommand{\sinktermination}{sink-ter\-mi\-na\-tion}
\newcommand{\sasol}{s}
\newcommand{\sasoli}{\iap{\sasol}}
\newcommand{\asol}{\fap{\sasol}}
\newcommand{\asoli}[1]{\fap{\iap{\sasol}{#1}}}
\newcommand{\sbsol}{t}
\newcommand{\sextrsol}{\sasol}
\newcommand{\sextrsolof}{\iap{\sextrsol}}
\newcommand{\sextrsoluntil}{\sbsol}
\newcommand{\sextrsoluntilof}{\iap{\sextrsoluntil}}
\newcommand{\extrsolof}[1]{\fap{\sextrsolof{#1}}}
\newcommand{\extrsoluntilof}[1]{\bfap{\sextrsoluntilof{#1}}}
\newcommand{\entries}{E}
\newcommand{\entriesof}{\fap{\entries}}
\newcommand{\actions}{\mathit{A}}
\newcommand{\aact}{a}
\newcommand{\bact}{b}
\newcommand{\cact}{c}
\newcommand{\dact}{d}
\newcommand{\aacti}{\iap{\aact}}
\newcommand{\bacti}{\iap{\bact}}
\newcommand{\cacti}{\iap{\cact}}
\newcommand{\dacti}{\iap{\dact}}
\newcommand{\verts}{V}
\newcommand{\start}{\averti{\hspace*{-0.5pt}\text{\nf s}}}
\newcommand{\transs}{T}
\newcommand{\alab}{l}
\newcommand{\vertsof}{\fap{\verts\hspace*{-1pt}}}
\newcommand{\transsof}{\fap{\transs}}
\newcommand{\vertsi}[1]{\iap{\verts}{\hspace*{-0.25pt}{#1}}}
\newcommand{\starti}[1]{\averti{\text{\nf s},#1}}
\newcommand{\transsi}[1]{\iap{\transs}{\hspace*{-0.25pt}{#1}}}
\newcommand{\transshat}{\widehat{\transs}}
\newcommand{\apath}{\pi}
\newcommand{\apathacc}{\pi'}
\newcommand{\asettranss}{U}
\newcommand{\avert}{v}
\newcommand{\bvert}{w}
\newcommand{\cvert}{u}
\newcommand{\dvert}{x}
\newcommand{\averti}{\iap{\avert}}
\newcommand{\bverti}{\iap{\bvert}}
\newcommand{\cverti}{\iap{\cvert}}
\newcommand{\dverti}{\iap{\dvert}}
\newcommand{\avertacc}{\avert'}
\newcommand{\bvertacc}{\bvert'}
\newcommand{\cvertacc}{\cvert'}
\newcommand{\bverthat}{\widehat{\bvert}}
\newcommand{\avertacci}{\iap{\avertacc}}
\newcommand{\cvertacci}{\iap{\cvertacc}}
\newcommand{\cverttilde}{\tilde{\cvert}}
\newcommand{\avertdacc}{\avert''}
\newcommand{\avertdacci}{\iap{\avertdacc}}
\newcommand{\bverthati}{\iap{\bverthat}}
\newcommand{\atrans}{\tau}
\newcommand{\atransi}{\iap{\atrans}}
\newcommand{\atranshat}{\widehat{\atrans}}
\newcommand{\elimloopfrom}[2]{{#2}{-}{#1}}
\newcommand{\connthroughin}[2]{\pbap{#1}{(#2\hspace*{-1pt})}}
\newcommand{\connectthroughchart}[2]{con\-nect-{$#1$}-through-to-{$#2$} chart}
\newcommand{\avar}{x}
\newcommand{\bvar}{y}
\newcommand{\cvar}{z}
\newcommand{\sfunbisim}{%
    \setbox0=\hbox{\kern-.1ex{$\rightarrow$}\kern-.1ex}
    \setbox1=\vbox{\hbox{\raise .1ex \box0}\hrule}%
    {\hbox{\kern.05ex\box1\kern.1ex}}
  }
\newcommand{\funbisim}{\hspace*{-1.5pt}\mathrel{\sfunbisim}}
\newcommand{\sconvfunbisim}{%
    \setbox0=\hbox{\kern-.1ex{$\leftarrow$}\kern-.1ex}
    \setbox1=\vbox{\hbox{\raise .1ex \box0}\hrule}%
    {\hbox{\kern.05ex\box1\kern.1ex}}
  }
\newcommand{\convfunbisim}{\mathrel{\sconvfunbisim}}
\newcommand{\sbisim}{%
    \setbox0=\hbox{\kern-.1ex{$\leftrightarrow$}\kern-.1ex}
    \setbox1=\vbox{\hbox{\raise .1ex \box0}\hrule}%
    \hbox{\kern.1ex\box1\kern.1ex}
  }
\newcommand{\bisim}{\mathrel{\sbisim\hspace*{1pt}}}
\newcommand{\sfunbisimos}{%
    \setbox0=\hbox{\kern-.1ex{$\rightarrow$}\kern-.1ex}
    \setbox1=\vbox{\hbox{\raise .1ex \box0}\hrule}%
    {\pap{\hbox{\kern.05ex\box1\kern.1ex}}{\hspace*{0.5pt}\subosr}}
  }
\newcommand{\sconvfunbisimos}{%
    \setbox0=\hbox{\kern-.1ex{$\leftarrow$}\kern-.1ex}
    \setbox1=\vbox{\hbox{\raise .1ex \box0}\hrule}%
    {\pap{\hbox{\kern.05ex\box1\kern.1ex}}{\hspace*{0.5pt}\subosr}}
  }
\newcommand{\sbisimos}{%
    \setbox0=\hbox{\kern-.1ex{$\leftrightarrow$}\kern-.1ex}
    \setbox1=\vbox{\hbox{\raise .1ex \box0}\hrule}%
    \ensuremath{\pap{\mathrel{\hbox{\kern.1ex\box1\kern.1ex}}}{\hspace*{0.5pt}\subosr}}
  }
\newcommand{\sfunbisimosvia}[1]{%
    \setbox0=\hbox{\kern-.1ex{$\rightarrow$}\kern-.1ex}
    \setbox1=\vbox{\hbox{\raise .1ex \box0}\hrule}%
    {\bpap{\hbox{\kern.05ex\box1\kern.1ex}}{#1}{\hspace*{0.5pt}\subosr}}
  }
\newcommand{\sconvfunbisimosvia}[1]{%
    \setbox0=\hbox{\kern-.1ex{$\leftarrow$}\kern-.1ex}
    \setbox1=\vbox{\hbox{\raise .1ex \box0}\hrule}%
    {\bpap{\hbox{\kern.05ex\box1\kern.1ex}}{#1}{\hspace*{0.5pt}\subosr}}
  }
\newcommand{\abisim}{B}
\newcommand{\abisimi}{\iap{\abisim}}
\newcommand{\sbehinc}{{\sqsubseteq}}
\newcommand{\sbehinca}[1]{{\prescript{#1}{}{\sbehinc}}}
\newcommand{\behinca}[1]{\mathrel{\sbehinca}}
\newcommand{\sonebehinc}{{\pap{\sbehinc}{\subosr}}}
\newcommand{\sonebehinca}[1]{{{}_{#1}\sonebehinc}}
\newcommand{\onebehinca}[1]{\mathrel{\sonebehinca}}
\newcommand{\sRSP}{\textrm{\nf RSP}}
\newcommand{\RSPpl}{\sRSP^{\spl}\hspace*{-1pt}}
\newcommand{\RSPbit}{\sRSP^{\sstexpbit}\hspace*{-1pt}}
\newcommand{\assocstexpsum}{\text{\nf B2}} 
\newcommand{\assocstexpprod}{\text{\nf B5}}  
\newcommand{\commstexpsum}{\text{\nf B1}} 
\newcommand{\neutralstexpsum}{\text{\nf B6}} 
\newcommand{\idempotstexpsum}{\text{\nf B3}} 
\newcommand{\distr}{\text{\nf B4}}  
\newcommand{\stexpzerostexpprod}{\text{\nf B7}}
\newcommand{\ACI}{\text{\sf ACI}}
\newcommand{\BPA}{\text{\sf BPA}}
\newcommand{\BPAzeropl}{\text{$\pbap{\BPA}{\spl}{\text{\nf\sf 0}}$}}
\newcommand{\seqin}[1]{{\iap{=}{#1}\hspace*{1pt}}}
\newcommand{\eqin}[1]{\mathrel{\seqin{#1}}}
\newcommand{\seqinsol}[1]{{\pbap{=}{\text{\scriptsize (sol)}}{#1}\hspace*{1pt}}}
\newcommand{\eqinsol}[1]{\mathrel{\seqinsol{#1}}}
\newcommand{\ACIeq}{\eqin{\ACI}}
\newcommand{\BBPeq}{\eqin{\BBP}}
\newcommand{\sdescendsinloopto}{{\curvearrowright}}  
\newcommand{\descendsinloopto}{\mathrel{\sdescendsinloopto}}
\newcommand{\sdescendsinlooptotc}{{\pap{\sdescendsinloopto}{\hspace*{-0.8pt}+}}}
\newcommand{\descendsinlooptotc}{\mathrel{\sdescendsinlooptotc}}
\newcommand{\sdescendsinlooptortc}{{\pap{\sdescendsinloopto}{\hspace*{-0.8pt}*}}}
\newcommand{\descendsinlooptortc}{\mathrel{\sdescendsinlooptortc}}
\newcommand{\sdescendsinlooplto}[1]{{\pap{}{#1}\hspace*{-1pt}{\sdescendsinloopto}}}
\newcommand{\descendsinlooplto}[1]{\mathrel{\sdescendsinlooplto{#1}}}
\newcommand{\txtdescendsinloopto}{de\-scends-in-loop-to}
\newcommand{\sloopsbackto}{{\lefttorightarrow}} 
\newcommand{\loopsbackto}{\mathrel{\sloopsbackto}}
\newcommand{\sloopsbacktotc}{{\sloopsbackto^{+}}}
\newcommand{\loopsbacktotc}{\mathrel{\sloopsbacktotc\hspace*{-1pt}}}
\newcommand{\sloopsbacktortc}{{\sloopsbackto^{*}}}
\newcommand{\loopsbacktortc}{\mathrel{\sloopsbacktortc\hspace*{-1pt}}}
\newcommand{\sconvloopsbackto}{{\righttoleftarrow}} 
\newcommand{\convloopsbackto}{\mathrel{\sconvloopsbackto}}
\newcommand{\sconvloopsbacktotc}{{\pap{\sconvloopsbackto}{\hspace*{-1pt}+}}}
\newcommand{\convloopsbacktotc}{\mathrel{\sconvloopsbacktotc\hspace*{-1pt}}}
\newcommand{\sconvloopsbacktortc}{{\pap{\sconvloopsbackto}{\hspace*{-1pt}*}}}
\newcommand{\convloopsbacktortc}{\mathrel{\sconvloopsbacktortc\hspace*{-1pt}}}
\newcommand{\sdloopsbackto}{\hspace*{-1pt}\prescript{}{\mathit{d}}{\hspace*{-0.75pt}\sloopsbackto}}  
\newcommand{\dloopsbackto}{\mathrel{\sdloopsbackto}}
\newcommand{\sdloopsbacktotc}{{\prescript{}{\mathit{d}}{\hspace*{-0.75pt}\sloopsbacktotc}}}
\newcommand{\dloopsbacktotc}{\mathrel{\sdloopsbacktotc}}
\newcommand{\sconvdloopsbackto}{\hspace*{-1pt}\prescript{}{\mathit{d}}{\hspace*{-0.75pt}{\sconvloopsbackto}}} 
\newcommand{\convdloopsbackto}{\mathrel{\sconvdloopsbackto}}
\newcommand{\milnersysmin}{\text{\smash{$\text{\sf Mil}^{\boldsymbol{-}}$}}}
\newcommand{\BBP}{\text{$\text{\sf BBP}$}}
\newcommand{\sLEE}{\text{\nf LEE}}
\newcommand{\LEE}{\sLEE}
\newcommand{\LEEexists}{\text{$\sLEE^{\exists}$}}
\newcommand{\LEEforall}{\text{$\sLEE^{\forall}$}}
\newcommand{\thplus}[2]{{#1}{+}{#2}}
\newcommand{\slbs}{\textit{lb}}
\newcommand{\slbsred}{\sredi{\slbs}}
\newcommand{\lbsred}{\mathrel{\slbsred}}
\newcommand{\lbsminn}[1]{\left\lVert{#1}\right\rVert^{\text{\nf min}}_{\slbs}}
\begin{document}

%
\title{A Complete Proof System for 1-Free~Regular~Expressions~Modulo~Bisimilarity}
%
%
%



\author{Clemens Grabmayer}
\orcid{nnnn-nnnn-nnnn-nnnn}             
\affiliation{
  \department{Department of Computer Science}              
  \institution{Gran Sasso Science Institute}            
  \streetaddress{Viale F.\ Crispi, 7}
  \city{L'Aquila}
  \postcode{67100 AQ}
  \country{Italy}                    
}
\email{clemens.grabmayer@gssi.it}          

\author{Wan Fokkink}
\orcid{nnnn-nnnn-nnnn-nnnn}             
\affiliation{
  \department{Department of Computer Science}             
  \institution{Vrije Universiteit Amsterdam}           
  \streetaddress{De Boelelaan 1111}
  \city{Amsterdam}
  \postcode{1081 HV}
  \country{The Netherlands}                   
}
\email{w.j.fokkink@vu.nl}         

\begin{abstract}
  Robin Milner (1984)
gave a sound proof system for bisimilarity of regular expressions interpreted as processes:
Basic Process Algebra with unary Kleene star iteration, deadlock 0, successful termination 1, and a fixed-point rule.
He asked whether this system is complete. Despite intensive research over the last 35 years, the problem is still open.

This paper gives a partial positive answer to Milner's problem. 
We prove that the adaptation of Milner's system over the subclass of regular expressions that arises by dropping 
the constant 1, and by changing to binary Kleene star iteration is complete. 
The crucial tool we use is a graph structure property that guarantees expressibility of a process graph
by a regular expression, and is preserved by going over from a process graph to its bisimulation collapse.


%

\end{abstract}

\begin{CCSXML}
<ccs2012>
<concept>
<concept_id>10003752.10003753.10003761.10003764</concept_id>
<concept_desc>Theory of computation~Process calculi</concept_desc>
<concept_significance>500</concept_significance>
</concept>
<concept>
<concept_id>10003752.10003790.10003798</concept_id>
<concept_desc>Theory of computation~Equational logic and rewriting</concept_desc>
<concept_significance>500</concept_significance>
</concept>
</ccs2012>
\end{CCSXML}

\ccsdesc[500]{Theory of computation~Process calculi}
\ccsdesc[500]{Theory of computation~Equational logic and rewriting}

\keywords{regular expressions, process algebra, bisimilarity, process graphs, complete proof system}  

\maketitle

\section{Introduction}
  \label{intro}
%

Regular expressions, introduced by Kleene \cite{klee:1951}, are widely studied in formal language theory, 
notably for string searching \cite{thom:1968}. 
They are constructed from constants 0 (no strings), 1 (the empty string), and $a$ (a single letter) from some alphabet; 
binary operators $+$ and $\cdot$ (union and concatenation); and the unary Kleene star ${}^{\sstexpit}$ (zero or more iterations). 

Their interpretations are Kleene algebras with as prime example the algebra of regular events,
the language semantics of regular expressions,
which is closely linked with deterministic finite state automata. 
Aanderaa \cite{aand:1965} and Salomaa \cite{salo:1966} gave complete axiomatizations 
for the language semantics of regular expressions, with a non-algebraic fixed-point rule that has a non-empty-word property as side~condition.
Krob \cite{krob:1991} gave an infinitary, and then Kozen \cite{koze:1994} a finitary algebraic axiomatization involving equational implications.


Regular expressions also received significant attention in the process algebra community \cite{berg-fokk-pons:2001}, 
where they are interpreted modulo the bisimulation process semantics \cite{park:1981}.
Robin Milner \cite{miln:1984} was the first to study regular expressions in this setting,
where he called them star expressions. 
Here the interpretation of 
0 is deadlock, 
1 is (successful) termination, 
$a$ is an atomic action, and 
$+$ and $\cdot$ are alternative and sequential composition of two processes, respectively. 
Milner adapted Salomaa's axiomatization to obtain a sound proof system for this setting, 
and posed the (still open) question whether this axiomatization is complete, 
meaning that if the process graphs of two star expressions are bisimilar, then they can be proven equal. 

Milner's axiomatization contains a fixed-point rule,
which is inevitable because due to the presence of $\stexpzero$
the underlying equational theory is not finitely based \cite{sewe:1994,sewe:1997}.
Bergstra, Bethke, and Ponse \cite{berg-beth-pons:1994} studied star expressions without 0 and 1, 
replaced the unary by the binary Kleene star ${}^{\sstexpbit}$, which represents an iteration of the first argument, 
possibly eventually followed by the execution of the second argument.
They obtained an axiomatization by basically omitting the axioms for 0 and 1 as well as the fixed-point rule from Milner's axiomatization, 
and adding Troeger's axiom~\cite{troe:1993}.
This purely equational axiomatization was proven complete in \cite{fokk:zant:1994,fokk:1996:kleene:star:AMAST}. 
A sound and complete axiomatization for star expressions without 
unary Kleene star, 
but with $\stexpzero$ and $\stexpone$ and a unary perpetual loop operator ${}^{\sstexpit}$0
(equivalently, unary star is restricted to terms $\stexpprod{\stexpit{\astexp}}{\stexpzero}$),
was given in~\cite{fokk:1996:term:cycle:LGPS,fokk:1997:pl:ICALP}.

In contrast to the formal languages setting, not all finite-state process graphs can be expressed by a star expression modulo bisimilarity. 
Milner posed a second question in \cite{miln:1984}, namely, 
to characterize which finite-state process graphs can be expressed. 
This was shown to be decidable in \cite{baet:corr:grab:2007} by defining and using `well-behaved' specifications.

In this paper we prove completeness of Milner's axiomatization (tailored to the adapted setting) for star expressions with 0, 
but without 1 and with the binary Kleene star. 

While earlier completeness proofs focus on manipulation of terms, we follow Milner's footsteps and focus on their process graphs. 
A key idea is to determine loops in graphs associated to star expressions. 
By a loop we mean a subgraph generated by a set of entry transitions from a vertex $\avert$ in which 
(1) there is an infinite path from $\avert$,
(2) each infinite path eventually returns to $\avert$, and 
(3) termination is not permitted.
A graph is said to satisfy LLEE (Layered Loop Existence and Elimination) 
if repeatedly eliminating the entry transitions of a loop, and performing garbage collection, leads to a graph without infinite paths.
LLEE offers a generalization (and more elegant definition) of the notion of a well-behaved specification.

Our completeness proof roughly works as follows (for more details see Sect.~\ref{compl:proof}). 
Let $e_1$ and $e_2$ be star expressions that have bisimilar graphs process graph interpretations $g_1$ and $g_2$.
We show that $g_1$ and $g_2$ satisfy LLEE.
We moreover prove that LLEE is preserved under bisimulation collapse.
And we construct for each graph that satisfies LLEE a star expression that corresponds to this graph, modulo bisimilarity. 
In particular such a star expression $f$ can be constructed for the bisimulation collapse of $g_1$ and $g_2$. 
We show that both $e_1$ and $e_2$ can be proven equal to $f$, 
by a pull-back over the functional bisimulations from the bisimulation collapse back to $g_1$~and~$g_2$. 
This yields the desired completeness result.

In our proof, the minimization of terms (and thereby of the associated process graphs)
in the left-hand side of a binary Kleene star modulo bisimilarity is partly inspired by \cite{fokk:1996:term:cycle:LGPS,fokk:1997:pl:ICALP}. 
Interestingly, we will be able to use as running example
the process graph interpretation of the star expression that at the end of \cite{fokk:1997:pl:ICALP} is mentioned as problematic for a completeness proof.
Our crucial use of witnesses for the graph property LLEE
borrows from the representation of cyclic $\lambda$-terms \cite{grab:roch:2013}
as \structureconstrained\ term graphs, 
as used for defining and implementing maximal sharing in the $\lambda$-calculus with {\tt letrec} \cite{grab:roch:2014} (see also \cite{grab:2018}).

The completeness result for star expressions with 0 but without 1 and with the binary Kleene star 
settles a natural question. 
We are also hopeful that the property LLEE provides a strong conceptual tool for approaching
Milner's long-standing open question regarding the class of all star expressions.
The presence of \onetransitions\ in graphs presents new challenges, such as
that LLEE is not always preserved under bisimulation collapse.
In order to be able to still work with this concept, we will need workarounds. 

This is a report version of the article \cite{grab:fokk:2020}
in the proceedings of the conference LICS~2020.
It was compiled from the submission version, containing a technical appendix.

{\em Please see the appendix for details of proofs that have been omitted or that are only sketched.}

\section{Preliminaries}%
  \label{prelims}

In this section we define star expressions,
their process semantics as `charts',
the proof system \BBP\ for bisimilarity of their chart interpretations,
and provable solutions of charts.

\begin{definition}\label{def:StExps}
  Given a set $A$ of \emph{actions}, the set $\StExpsover{\actions}$ of \emph{star expressions over $\actions$} 
  is generated by the grammar:
  \begin{center}
    $
    \astexp 
      \:\BNFdefdby\:
        \stexpzero 
          \BNFor
        \astexpact
          \BNFor 
        (\stexpsum{\astexpi{1}}{\astexpi{2}})
          \BNFor
        (\stexpprod{\astexpi{1}}{\astexpi{2}})
          \BNFor
        (\stexpbit{\astexpi{1}}{\astexpi{2}}) 
      \quad  
        \text{\nf (with $a\in A$).  } 
    $
  \end{center}
  0 represents deadlock (i.e., does not perform any action), $a$ an atomic action, $+$ alternative and $\cdot$ sequential composition, and ${}^{\sstexpbit}$ the binary Kleene star.
  Note that 1 (for empty steps) is missing from the syntax.
 $\sum_{i=1}^k e_i$ 
  is defined recursively as $0$ if $k = 0$,
  $e_1$ if $k=1$, and $\stexpsum{(\sum_{i=1}^{k-1} e_i)}{e_{k}}$ if $k > 1$.
  
  The \emph{star height} $\bsth{\astexp}$ of a star expression $\astexp\in\StExpsover{\actions}$
  denotes the maximum number of nestings of Kleene stars~in~$\astexp\,$:
  it is defined by $\bsth{\stexpzero} \defdby \bsth{\aact} \defdby 0$, 
                   $\bsth{\stexpsum{\bstexp}{\cstexp}} \defdby \bsth{\stexpprod{\bstexp}{\cstexp}}
                                                       \defdby \max\setexp{\bsth{\bstexp}, \bsth{\cstexp}}$, 
               and $\bsth{\stexpbit{\bstexp}{\cstexp}} \defdby \max\setexp{\bsth{\bstexp}+1,\bsth{\cstexp}}$.
\end{definition}

\begin{definition}\label{def:charts}
  By a (finite \sinktermination) \emph{chart} $\achart$
  we understand a 5\nb-tuple $\tuple{\verts,\tick,\start,\actions,\transs}$
  where $\verts$ is a finite set of \emph{vertices},
  $\tick$ is, in case $\tick\in\verts$,   
    a special vertex with no outgoing transitions (a sink) that indicates termination
    (in case $\tick\notin\verts$, the chart does not admit termination),
  $\start\in\verts\backslash\{\tick\}$ is the \emph{start vertex},
  $\actions$ is a set \emph{actions},
  and $\transs \subseteq \verts\times A\times\verts$ the set of
  \emph{transitions}.
  Since $\actions$ can be reconstructed from $\transs$,
  we will frequently keep $\actions$ implicit,
  denote a chart as a 4-tuple $\tuple{\verts,\tick,\start,\transs}$. 
  A chart is \emph{\startconnected} if
  every vertex is reachable by a path from the start vertex.
  This property can be achieved by removing unreachable vertices (`garbage collection').
  We will assume charts to~be~\mbox{\startconnected}. 
  
  In a chart~$\achart$, let $\avert\in\verts$ and $\asettranss\subseteq\transs$ be a set of transitions from $\avert$.
  By the \emph{\generatedby{\pair{\avert}{\asettranss}} subchart of $\achart$}
  we mean the chart $\acharti{0} = \tuple{\vertsi{0},\tick,\avert,\actions,\transsi{0}}$ with start vertex~$\avert$
  where $\vertsi{0}$ is the set of vertices and $\transsi{0}$ the set of transitions
  that are on paths in $\achart$ from $\avert$
  that first take a transition in $\asettranss$, and then, until $\avert$ is reached again,
  continue with other transitions~of~$\achart$.
  
  \noindent
  We use the standard notation $\avert \lt{\aact} \avertacc$~in~lieu~of~$\triple{\bvert}{\aact}{\bvertacc}\in\transs$. 
\end{definition}

\begin{definition}
  Let $\acharti{i} = \tuple{\vertsi{i},\tick,\starti{i},\transsi{i}}$ for $i\in\setexp{1,2}$ be two charts.
  A \emph{bisimulation} between $\acharti{1}$ and $\acharti{2}$ 
  is a relation $\abisim \subseteq \vertsi{1}\times\vertsi{2}$
  that satisfies the following conditions:  
  \begin{description}
    \item{(\emph{start}) }
       $\starti{1} \,\abisim\, \starti{2}$ (it relates the start vertices),
  \end{description}     
  and for all $\averti{1},\averti{2}\in\verts$ with $\averti{1} \,\abisim\, \averti{2}\,$:
  \begin{description}
    \item{(\emph{forth}) }
      for every transition $\averti{1} \lt{\aact} \avertacci{1}$ in $\acharti{1}$ 
        there is a transition $\averti{2} \lt{\aact} \avertacci{2}$ in $\acharti{2}$ with $\avertacci{1}\,\abisim\,\avertacci{2}$,
    \item{(\emph{back}) }
      for every transition $\averti{2} \lt{\aact} \avertacci{2}$ in $\acharti{2}$ 
        there is a transition $\averti{1} \lt{\aact} \avertacci{1}$ in $\acharti{1}$ with $\avertacci{1}\,\abisim\,\avertacci{2}$,
    \item{(\emph{termination}) }
      $\averti{1} = \tick$ if and only if $\averti{2} = \tick$.
  \end{description}
  If there is a bisimulation between $\acharti{1}$ and $\acharti{2}$,
  then we write $\acharti{1} \bisim \acharti{2}$ and say that $\acharti{1}$ and $\acharti{2}$ are \emph{bisimilar}.
  If a bisimulation is the graph of a function, we say that it is a \emph{functional} bisimulation. 
  We write $\acharti{1} \funbisim \acharti{2}$ if there is a functional bisimulation between $\acharti{1}$ and $\acharti{2}$. 
\end{definition}


\begin{definition}\label{def:chart:interpretation}
  For every star expression $\astexp\in\StExpsover{\actions}$ 
  the \emph{chart interpretation~$\chartof{\astexp} = \tuple{\vertsof{\astexp},\tick,\astexp,\actions,\transsof{\astexp}}$ of $\astexp$} is
  the chart with start vertex $\astexp$ that is specified by iteration via the following transition rules,
  which form a transition system specification (TSS),
  with $\astexp,\astexpi{1},\astexpi{2},\astexpacci{1}\in\StExpsover{\actions}$,
       $\aact\in\actions$:     
  \begin{gather*}
     \begin{aligned}
       &
       \AxiomC{$\phantom{\aact \:\lt{\aact}\: \tick}$}
       \UnaryInfC{$\aact \:\lt{\aact}\: \tick$}
       \DisplayProof\hspace*{2mm}
       & & 
       \AxiomC{$ \astexpi{i} \:\lt{a}\: \xi $}
       \RightLabel{$(i=1,2)$}
       \UnaryInfC{$ \stexpsum{\astexpi{1}}{\astexpi{2}} \:\lt{a}\: \xi $}
       \DisplayProof
     \end{aligned}
     \displaybreak[0]\\
     \begin{aligned}
       &
       \AxiomC{$ \astexpi{1} \:\lt{a}\: \astexpacci{1} $}
       \UnaryInfC{$ \stexpprod{\astexpi{1}}{\astexpi{2}} \:\lt{a}\: \stexpprod{\astexpacci{1}}{\astexpi{2}} $}
       \DisplayProof\hspace*{2mm}
       & &
       \AxiomC{$ \astexpi{1} \:\lt{a}\: \surd$}
       \UnaryInfC{$ \stexpprod{\astexpi{1}}{\astexpi{2}} \:\lt{a}\: \astexpi{2} $}
       \DisplayProof
     \end{aligned}
     \displaybreak[0]\\
     \begin{aligned}
       &
       \AxiomC{$\astexpi{1} \:\lt{a}\: \astexpacci{1}$}
       \UnaryInfC{$\stexpbit{\astexpi{1}}{\astexpi{2}} \:\lt{a}\: \stexpprod{\astexpacci{1}}{\stexpbit{(\astexpi{1}}{\astexpi{2})}}$}
       \DisplayProof
       & &
       \AxiomC{$\astexpi{1} \:\lt{a}\: \surd$}
       \UnaryInfC{$\stexpbit{\astexpi{1}}{\astexpi{2}} \:\lt{a}\: \stexpbit{\astexpi{1}}{\astexpi{2}}$}
       \DisplayProof
       & &
       \AxiomC{$\astexpi{2} \:\lt{a}\: \xi$}
       \UnaryInfC{$\stexpbit{\astexpi{1}}{\astexpi{2}} \:\lt{a}\: \xi$}
       \DisplayProof 
     \end{aligned}
  \end{gather*} 
  with $\atickstexp\in\tickStExpsover{\actions} \defdby \StExpsover{\actions}\cup\setexp{\tick}$,
  where $\tick$ indicates sink termination.
  If $\astexp \lt{\aact} \atickstexp$ can be proved, 
  $\atickstexp$ is called an \emph{\aderivative{\aact}},
  or just \emph{derivative}, of $\astexp$. 
  The set $\vertsof{\astexp} \subseteq \tickStExpsover{\actions}$ consists of the \emph{iterated derivatives} of $\astexp$. 
  To see that $\chartof{\astexp}$ is finite, 
  Antimirov's result \cite{anti:1996},
  that a regular expression has only finitely many iterated derivatives,
  can be adapted.
  
  We say that a star expression $\astexp\in\StExpsover{\actions}$ is \emph{normed} 
  if there is a path of transitions from $\astexp$ to $\tick$ in $\chartof{\astexp}$.
\end{definition}

  %
\begin{center}
\hspace*{-1em}
\begin{tikzpicture}[scale=1,every node/.style={transform shape}]
%
\matrix[anchor=north,row sep=0.9cm,every node/.style={draw,very thick,circle,minimum width=2.5pt,fill,inner sep=0pt,outer sep=2pt}] at (0,-0.5) {
  \node(v_e1_0){};
  \\
  \node(v_e1_1){};
  \\
  \node(v_e1_2){};
  \\
  \node(v_e1_0'){};
  \\
};
\calcLength(v_e1_0,v_e1_1){mylen}
\draw[<-,very thick,>=latex,chocolate](v_e1_0) -- ++ (90:{0.45*\mylen pt});
%
\draw[->] (v_e1_0) to 
                      node[left,pos=0.5]{\small $\aact$} (v_e1_1);
%
\draw[->] (v_e1_1) to 
                      node[left,pos=0.625]{\small $\aact$} (v_e1_2);
\draw[->,shorten <= 4.5pt] (v_e1_1) to[out=170,in=180,distance={0.75*\mylen pt}] node[left,pos=0.5]{\small $\cact$} (v_e1_0);
%
\draw[->] (v_e1_2) to node[right,pos=0.5]{\small $\bact$} (v_e1_0');
\draw[->] (v_e1_2) to[out=180,in=190,distance={0.75*\mylen pt}] node[left,pos=0.5]{\small $\bact$} (v_e1_1);
%
\draw[->] (v_e1_0') to[out=0,in=0,distance={1.5*\mylen pt}] node[right,pos=0.5]{\small $\aact$} (v_e1_1); 
%
\path (v_e1_0') ++ ({0*\mylen pt},{-0.6*\mylen pt}) node{\large $\chartof{\astexpi{1}}$};

\matrix[anchor=north,row sep=0.9cm,every node/.style={draw,very thick,circle,minimum width=2.5pt,fill,inner sep=0pt,outer sep=2pt}] at (2.8,0) {
  \node[draw=none,fill=none](v_1-dummy){};
  \\
  \node(v_0){};
  \\
  \node(v_1){};
  \\
  \node(v_2){};
  \\
};
\calcLength(v_0,v_1){mylen}
\draw[draw=none,<-,very thick,>=latex,chocolate](v_1-dummy) -- ++ (90:{0.45*\mylen pt});
\draw[<-,very thick,>=latex,chocolate](v_0) -- ++ (90:{0.5*\mylen pt});
\path(v_0) ++ ({-0.25*\mylen pt},{0.25*\mylen pt}) node{$\averti{0}$};
\draw[->](v_0) to node[right,xshift={-0.05*\mylen pt},pos=0.45]{\small $\aact$} (v_1); 
\path(v_1) ++ ({0.325*\mylen pt},0cm) node{$\averti{1}$};
\draw[->](v_1) to 
                  node[left,xshift={0.05*\mylen pt},pos=0.45]{\small $\aact$} (v_2);
\draw[->,shorten <= 5pt](v_1) to[out=175,in=180,distance={0.75*\mylen pt}] 
         node[above,yshift={0.05*\mylen pt},pos=0.65]{\small $\cact$} (v_0);
\path(v_2) ++ (-0cm,{-0.275*\mylen pt}) node{$\averti{2}$};
\draw[->](v_2) to[out=180,in=185,distance={0.75*\mylen pt}]  
               node[below,yshift={0.0*\mylen pt},pos=0.2]{\small $\bact$} (v_1);
\draw[->](v_2) to[out=0,in=0,distance={1.3*\mylen pt}] 
               node[below,yshift={0.00*\mylen pt},pos=0.125]{\small $\bact$} (v_0);

\draw[-,magenta,thick,densely dashed] (v_e1_0) to (v_0);
\draw[-,magenta,thick,densely dashed] (v_e1_0') to (v_0);
\draw[-,magenta,thick,densely dashed] (v_e1_1) to (v_1);
\draw[-,magenta,thick,densely dashed] (v_e1_2) to (v_2);

\path (v_2) ++ ({0*\mylen pt},{-0.8*\mylen pt}) node{\large $\chartof{\astexpi{0}}$};

%
\matrix[anchor=north,row sep=0.9cm,every node/.style={draw,very thick,circle,minimum width=2.5pt,fill,inner sep=0pt,outer sep=2pt}] at (5.85,0.1) {
  \node(v_e2_0){};
  \\
  \node(v_e2_1){};
  \\
  \node(v_e2_2){};
  \\
  \node(v_e2_0''){};
  \\
  \node(v_e2_1'){};
  \\
};
\calcLength(v_e2_0,v_e2_1){mylen}
  \draw[draw=none] (v_e2_2) arc (270:205:{\mylen pt}) node[style={draw,very thick,circle,minimum width=2.5pt,fill,inner sep=0pt,outer sep=2pt}](v_e2_0'){};
  \draw[draw=none] (v_e2_0'') arc (90:205:{\mylen pt}) node[style={draw,very thick,circle,minimum width=2.5pt,fill,inner sep=0pt,outer sep=2pt}](v_e2_0'''){};
\draw[<-,very thick,>=latex,chocolate](v_e2_0) -- ++ (90:{0.45*\mylen pt});
\draw[->](v_e2_0) to node[right,xshift={-0.05*\mylen pt},pos=0.4]{\small $\aact$} (v_e2_1);
%
%
\draw[->](v_e2_1) to 
                     node[left,pos=0.45,xshift={0.065*\mylen}]{\small $\aact$} (v_e2_2);
\draw[->](v_e2_1) to 
                     node[above,pos=0.65]{\small $\cact$} (v_e2_0');
%
\draw[->] (v_e2_0') to[out=115,in=150,distance={0.7*\mylen pt}] node[above,pos=0.75]{\small $\aact$} (v_e2_1);
%
\draw[->](v_e2_2) to 
                     node[left,pos=0.425,xshift={0.05*\mylen}]{\small $\bact$} (v_e2_0'');
\draw[->,shorten <= 5pt] (v_e2_2) to[out=10,in=0,distance=0.7cm] 
                                node[right,pos=0.5,xshift={-0.025*\mylen}]{\small $\bact$} (v_e2_1);
%
\draw[->](v_e2_0'') to node[right,xshift={-0.05*\mylen pt},pos=0.4]{\small $\aact$} (v_e2_1');
%
\draw[->](v_e2_1') to 
                      node[above,pos=0.65]{\small $\cact$} (v_e2_0''');
\draw[->](v_e2_1') to[out=0,in=0,distance=1.4cm] node[right,pos=0.6,xshift={-0.025*\mylen}]{\small $\aact$} (v_e2_2);
%
\draw[->] (v_e2_0''') to[out=115,in=150,distance={0.7*\mylen pt}] node[above,pos=0.4]{\small $\aact$} (v_e2_1');

\path (v_e2_1') ++ ({1*\mylen pt},{-0.3*\mylen pt}) node{\large $\chartof{\astexpi{2}}$};

\draw[-,magenta,thick,densely dashed] (v_e2_0) to (v_0);
\draw[-,magenta,thick,densely dashed] (v_e2_0') to (v_0);
\draw[-,magenta,thick,densely dashed] (v_e2_0'') to (v_0);
\draw[-,magenta,thick,densely dashed] (v_e2_0''') to (v_0);
\draw[-,magenta,thick,densely dashed,out=160,bend right,distance={0.75*\mylen pt}] (v_e2_1) to (v_1);
\draw[-,magenta,thick,densely dashed] (v_e2_1') to (v_1);
\draw[-,magenta,thick,densely dashed] (v_e2_2) to (v_2);

\end{tikzpicture} 
\end{center}
  %
\begin{example}\label{ex:chart:interpretation}
  By the rules in Def.~\ref{def:chart:interpretation},
  $\astexpi{0} \defdby \stexpprod{\aact}{\astexpacci{0}}$
  with 
  $\astexpacci{0} \defdby \stexpbit{(\stexpsum{\stexpprod{\cact}{\aact}}
                                              {\stexpprod{\aact}{(\stexpsum{\bact}{\stexpprod{\bact}{\aact}})})})}{\stexpzero}$
  has the chart $\chartof{\astexpi{0}}$ as above,
  with $\averti{0} \defdby \astexpi{0}$,
       $\averti{1} \defdby \astexpacci{0}$
       and 
       $\averti{2} \defdby \stexpprod{(\stexpsum{\bact}{\stexpprod{\bact}{\aact}})}{\astexpacci{0}}$. 
  This chart is the bisimulation collapse of the 
  charts $\chartof{\astexpi{1}}$ and $\chartof{\astexpi{2}}$
  of star expressions
  $\astexpi{1} \defdby \stexpbit{(\stexpprod{\aact}{(\stexpbit{(\stexpprod{\aact}{(\stexpsum{\bact}{\stexpprod{\bact}{\aact}})})}{\cact})})}{\stexpzero}\,$, 
  and 
  $\astexpi{2} \defdby \stexpprod{\aact}{(\stexpbit{(\stexpsum{\stexpprod{\cact}{\aact}}
                                 {\stexpprod{\aact}{\stexpbit{(\stexpprod{\bact}{\stexpprod{\aact}{(\stexpbit{(\stexpprod{\cact}{\aact})}{\aact})}})}{\bact}}})} {\stexpzero})}$.
  Bisimulations between $\chartof{\astexpi{1}}$ and $\chartof{\astexpi{0}}$, and between $\chartof{\astexpi{0}}$ and $\chartof{\astexpi{1}}$
  are indicated by the broken lines.
  The chart~$\chartof{\astexpi{0}}$ was considered problematic in \cite{fokk:1997:pl:ICALP}. 
\end{example}

\begin{example}\label{ex:not:expressible}
  The left chart below does not admit termination. 
  The right chart is a double-exit graph with the sink termination vertex~$\tick$ at the bottom.
  \begin{center}
\begin{tikzpicture}
  
%
\matrix[anchor=north,row sep=0.8cm,column sep=0.924cm,ampersand replacement=\&,
        every node/.style={draw,very thick,circle,minimum width=2.5pt,fill,inner sep=0pt,outer sep=2pt}] at (3.75,0) {
  \node(C_1-0){};  \&                  \&     \node(C_1-1){};
  \\
                   \&                  \&                  
  \\
                   \& \node(C_1-2){};  \&
  \\
};
\draw[<-,very thick,>=latex,color=chocolate](C_1-0) -- ++ (90:0.5cm);

\draw[thick] (C_1-2) circle (0.12cm);
\path (C_1-2) ++ (0cm,0.45cm) node{$\tick$};

\draw[->,bend left,distance=0.6cm] (C_1-0) to node[above]{$a$} (C_1-1); 
\draw[->,bend left,distance=0.5cm] (C_1-1) to node[below]{$a$} (C_1-0); 

\draw[->,bend right,distance=0.6cm,shorten >=2pt] (C_1-0) to node[left]{$b$} (C_1-2);
\draw[->,bend left,distance=0.6cm,shorten >=2pt] (C_1-1) to node[right]{$c$} (C_1-2);

%
\matrix[anchor=north,row sep=0.8cm,column sep=0.924cm,ampersand replacement=\&,
        every node/.style={draw,very thick,circle,minimum width=2.5pt,fill,inner sep=0pt,outer sep=2pt}] at (0,0) {
  \node(C_2-0){};  \&                  \&     \node(C_2-1){};
  \\
                   \&                  \&                  
  \\
                   \& \node(C_2-2){};  \&
  \\
};
\draw[<-,very thick,>=latex,color=chocolate](C_2-0) -- ++ (90:0.5cm);

\draw[->,bend left,distance=0.6cm]  (C_2-0) to node[above]{$a$} (C_2-1); 
\draw[->,bend right,distance=0.6cm] (C_2-0) to node[left]{$b$}  (C_2-2);

\draw[->,bend left,distance=0.5cm,shorten <=9pt] ($(C_2-1)+(-0.125cm,0cm)$) to node[below]{$a$} (C_2-0); 
\draw[->,bend left,distance=0.6cm]               (C_2-1) to node[right]{$c$} (C_2-2);

\draw[->,bend right,distance=0.5cm] (C_2-2) to node[left]{$a$}  (C_2-0);
\draw[->,bend left,distance=0.5cm]  (C_2-2) to node[right]{$a$} (C_2-1);

\end{tikzpicture}
\end{center}
  %
  These charts are not bisimilar to chart interpretations of star expressions.
  For the left chart this was shown by Milner~\cite{miln:1984}, 
  and for the right chart by Bosscher~\cite{boss:1997}.
\end{example}

\begin{definition}\label{def:BBP}
  The proof system \BBP\ or the class of star expressions has the axioms (B1)--(B6), (BKS1), (BKS2), 
  the inference rules of equational logic,
  and the rule $\RSPbit$:
  \begin{alignat*}{2}
    {(\text{B1})} & \quad\; &
      \stexpsum{\avar}{\bvar}
        & \hspace{2mm}=\hspace{2mm}
      \stexpsum{\bvar}{\avar} 
    \displaybreak[0]\\[-0.25ex]
    {(\text{B2})} & & 
      \stexpsum{(\stexpsum{\avar}{\bvar})}
               {\cvar}
        & \hspace{2mm}=\hspace{2mm}
      \stexpsum{\avar}
               {(\stexpsum{\bvar}{\cvar})} 
    \displaybreak[0]\\[-0.25ex]             
    {(\text{B3})} & &
      \stexpsum{\avar}{\avar}
        & \hspace{2mm}=\hspace{2mm}
      \avar 
    \displaybreak[0]\\[-0.25ex]
    {(\text{B4})} & &
      \stexpprod{(\stexpsum{\avar}{\bvar})}
                {\cvar}
      & \hspace{2mm}=\hspace{2mm}
      \stexpsum{\stexpprod{\avar}{\cvar}}
               {\stexpprod{\bvar}{\cvar}}  
    \displaybreak[0]\\[-0.25ex]
    {(\text{B5})} & &
      \stexpprod{(\stexpprod{\avar}{\bvar})}
                {\cvar}
      & \hspace{2mm}=\hspace{2mm}
      \stexpprod{\avar}
                {(\stexpprod{\bvar}{\cvar})}
    \displaybreak[0]\\[-0.25ex]
    {(\text{B6})} & \quad\; &
          \stexpsum{\avar}{\stexpzero}
      & \hspace{2mm}=\hspace{2mm}
      \avar 
    \displaybreak[0]\\[-0.25ex]
    {(\text{B7})} & &
      \stexpprod{\stexpzero}{\avar}
      & \hspace{2mm}=\hspace{2mm}
      \stexpzero
    \displaybreak[0]\\[-0.25ex]     
    {(\text{BKS1})} & &
      \stexpsum{\stexpprod{\avar}{(\stexpbit{\avar}{\bvar})}}
               {\bvar}
      & \hspace{2mm}=\hspace{2mm}
      \stexpbit{\avar}{\bvar} 
    \displaybreak[0]\\[-0.25ex]  
    {(\text{BKS2})} & &
      \stexpprod{(\stexpbit{\avar}{\bvar})}
                {\cvar}
      & \hspace{2mm}=\hspace{2mm}
      \stexpbit{\avar}
               {(\stexpprod{\bvar}{\cvar})}
    \displaybreak[0]\\[-0.25ex]           
    {(\text{$\RSPbit$})} & & & \hspace{-1.925ex}
       \begin{gathered}    
         \Axiom$\avar \hspace{2mm}\fCenter=\hspace{2mm} \stexpsum{\stexpprod{(\bvar}{\avar)\,}}{\,\cvar} $
         \UnaryInf$ \avar \hspace{2mm}\fCenter=\hspace{2mm} \stexpbit{\bvar}{\cvar} $
         \DisplayProof
       \end{gathered}
  \end{alignat*}
  By $\astexpi{1} \BBPeq \astexpi{2}$ we denote that $\astexpi{1} = \astexpi{2}$ is derivable in \BBP.

\end{definition}

\BBP\ is a finite `implicational' proof system \cite{tayl:1977}, because
unlike in Salomaa's and Milner's systems for regular expressions with $\stexpone$
the fixed-point rule does not require any side-condition to ensure `guardedness'.  
%
%

\begin{definition}\label{def:provable-solution}
  For a chart $\achart = \tuple{\verts,\tick,\start,\actions,\transs}$,
  a \emph{provable solution of $\achart$}
    is a function $\sasol \funin \verts\setminus\setexp{\tick} \to \StExpsover{\actions}$ 
    such that:\vspace*{-1mm}
  \begin{alignat*}{2}
    \asol{\avert} & \BBPeq \Bigl(\sum_{i=1}^m a_i\Bigr)+\Bigl(\sum_{j=1}^n b_j\cdot s(w_j)\Bigr)
                  & & \;\;\text{(for all $\avert\in\verts\setminus\{\tick\}$)}\vspace*{-2mm}
  \end{alignat*}
  holds, \vspace*{-.5mm}given that the union of  
  $ \descsetexpbig{ \avert \lt{\aacti{i}} \tick }{ i=1,\ldots,m } $
      and
  $ \descsetexpbig{ \avert \lt{\bacti{j}} \bverti{j} }{ j=1,\ldots,n ,\, \bverti{j}\neq\tick } $ 
  is the set of transitions from $\avert$ in $\achart$.
  We call $\asol{\start}$ the \emph{principal value} of $\sasol$.
\end{definition}

\begin{proposition}[uses \BBP-axioms (B1)--(B7), (BKS1)]\label{prop:id:is:sol:chart:interpretation}
  For every $\astexp\in\StExpsover{\actions}$,
  the identity function $\sidfunon{\vertsof{\astexp}} \funin \vertsof{\astexp} \to \vertsof{\astexp}\subseteq\StExpsover{\actions}$, $\astexpacc \mapsto \astexpacc$,
  is a provable solution of the chart interpretation $\chartof{\astexp}$ of $\astexp$. 
\end{proposition}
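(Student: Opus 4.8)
The plan is to reduce the proposition to a ``fundamental theorem'' for star expressions and then establish the latter by structural induction.

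First I would observe that the vertices of $\chartof{\astexp}$ are iterated derivatives of $\astexp$, hence star expressions (elements of $\StExpsover{\actions}$), that $\tick\notin\vertsof{\astexp}\setminus\setexp{\tick}$, and that for a vertex $\astexpacc\in\vertsof{\astexp}\setminus\setexp{\tick}$ the transitions from $\astexpacc$ in $\chartof{\astexp}$ are \emph{exactly} the transitions $\astexpacc\lt{\aact}\atickstexp$ derivable in the TSS of Def.~\ref{def:chart:interpretation}, of which there are only finitely many. Writing
\[
  \tau(\astexpacc) \;\defdby\;
    \sum\setexp{\,\aact \;:\; \astexpacc\lt{\aact}\tick\,}
    \;+\;
    \sum\setexp{\,\stexpprod{\bact}{\bstexp} \;:\; \astexpacc\lt{\bact}\bstexp,\; \bstexp\neq\tick\,}
\]
for the ``sum of the transitions of $\astexpacc$'', the requirement that $\sidfunon{\vertsof{\astexp}}$ be a provable solution of $\chartof{\astexp}$ is precisely the conjunction, over all $\astexpacc\in\vertsof{\astexp}\setminus\setexp{\tick}$, of $\astexpacc\BBPeq\tau(\astexpacc)$. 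Since any two enumerations of the summands of $\tau(\astexpacc)$ are provably equal by \BBP-axioms (B1)--(B3), and (B6) removes summands $\stexpzero$, the value $\tau(\astexpacc)$ is well defined up to $\BBPeq$. It therefore suffices to prove: $\astexp'\BBPeq\tau(\astexp')$ for \emph{every} $\astexp'\in\StExpsover{\actions}$, which I would do by induction on the structure of $\astexp'$.

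The base cases are immediate: $\astexp'=\stexpzero$ has no transitions, so $\tau(\stexpzero)=\stexpzero$; and $\astexp'=\aact$ has the single transition $\aact\lt{\aact}\tick$, so $\tau(\aact)=\aact$. If $\astexp'=\stexpsum{\astexpi{1}}{\astexpi{2}}$, the transitions from $\astexp'$ are the union of those from $\astexpi{1}$ and from $\astexpi{2}$; by the induction hypothesis $\astexpi{k}\BBPeq\tau(\astexpi{k})$, so $\stexpsum{\astexpi{1}}{\astexpi{2}}\BBPeq\stexpsum{\tau(\astexpi{1})}{\tau(\astexpi{2})}$, and rearranging and deduplicating summands with (B1)--(B3), together with (B6) when some $\tau(\astexpi{k})$ is the empty sum $\stexpzero$, yields $\tau(\stexpsum{\astexpi{1}}{\astexpi{2}})$. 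If $\astexp'=\stexpprod{\astexpi{1}}{\astexpi{2}}$, the transitions from $\astexp'$ are $\stexpprod{\astexpi{1}}{\astexpi{2}}\lt{\aact}\astexpi{2}$ for each $\astexpi{1}\lt{\aact}\tick$ and $\stexpprod{\astexpi{1}}{\astexpi{2}}\lt{\bact}\stexpprod{\cstexp}{\astexpi{2}}$ for each $\astexpi{1}\lt{\bact}\cstexp$ with $\cstexp\neq\tick$; using the induction hypothesis for $\astexpi{1}$ only (this is why the induction goes through: the ``new'' derivatives $\stexpprod{\cstexp}{\astexpi{2}}$ need not be treated here), we get $\stexpprod{\astexpi{1}}{\astexpi{2}}\BBPeq\stexpprod{\tau(\astexpi{1})}{\astexpi{2}}$, and distributing by (B4), reassociating by (B5) --- and applying (B7) when $\tau(\astexpi{1})\BBPeq\stexpzero$ --- followed by (B1)--(B3) produces exactly $\tau(\stexpprod{\astexpi{1}}{\astexpi{2}})$. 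Finally, if $\astexp'=\stexpbit{\astexpi{1}}{\astexpi{2}}$, I would first unfold via (BKS1): $\stexpbit{\astexpi{1}}{\astexpi{2}}\BBPeq\stexpsum{\stexpprod{\astexpi{1}}{(\stexpbit{\astexpi{1}}{\astexpi{2}})}}{\astexpi{2}}$; then, applying the induction hypothesis to $\astexpi{1}$ and $\astexpi{2}$ and distributing $\stexpprod{\tau(\astexpi{1})}{(\stexpbit{\astexpi{1}}{\astexpi{2}})}$ by (B4), (B5) (and (B7) if $\tau(\astexpi{1})\BBPeq\stexpzero$), one obtains a sum whose summands are $\stexpprod{\aact}{(\stexpbit{\astexpi{1}}{\astexpi{2}})}$ for each $\astexpi{1}\lt{\aact}\tick$ (matching $\stexpbit{\astexpi{1}}{\astexpi{2}}\lt{\aact}\stexpbit{\astexpi{1}}{\astexpi{2}}$), $\stexpprod{\bact}{(\stexpprod{\cstexp}{(\stexpbit{\astexpi{1}}{\astexpi{2}})})}$ for each $\astexpi{1}\lt{\bact}\cstexp$ with $\cstexp\neq\tick$ (matching $\stexpbit{\astexpi{1}}{\astexpi{2}}\lt{\bact}\stexpprod{\cstexp}{(\stexpbit{\astexpi{1}}{\astexpi{2}})}$), and the summands of $\tau(\astexpi{2})$ (matching the transitions $\stexpbit{\astexpi{1}}{\astexpi{2}}\lt{\aact}\atickstexp$ inherited from $\astexpi{2}\lt{\aact}\atickstexp$); up to (B1)--(B3) this is $\tau(\stexpbit{\astexpi{1}}{\astexpi{2}})$.

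The step I expect to be the main obstacle is this last one (and, to a lesser degree, the product case): the transitions derivable in the TSS form a \emph{set}, so the three star-rules may generate the same transition in more than one way, whereas the terms produced from (BKS1) and distribution form a \emph{multiset} of summands. One must check \emph{both} that every produced summand corresponds to an actual transition from $\stexpbit{\astexpi{1}}{\astexpi{2}}$ \emph{and} that every such transition is represented by some produced summand, and only then collapse the multiset onto the canonical transition sum using commutativity, associativity and idempotence of $+$ (B1)--(B3). Everything else is routine equational reasoning with (B1)--(B7) and congruence; in particular the axioms (BKS2) and $\RSPbit$ play no role here ($\RSPbit$ is what will be used later for the converse, uniqueness-type statements).
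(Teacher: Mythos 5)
Your proposal is correct and follows essentially the same route as the paper: reduce the claim to the ``fundamental theorem'' that every star expression is provably equal to the sum of its prefixed action derivatives, and establish that by structural induction (the paper only sketches this reduction and defers the induction to its appendix). Your attention to the set-versus-multiset issue in the product and iteration cases, resolved via (B1)--(B3), matches the intended argument.
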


\begin{proof}[Proof (Idea)]
  Each $\astexp$ in $\StExpsover{\actions}$ 
  is the \provablein{\BBP} sum of expressions $\aact$ and $\stexpprod{\aact}{\astexpacc}$ over all $\aact\in\actions$
  for \aderivatives{\aact} $\tick$ and $\astexpacc$, respectively, of $\astexp$.
  This `fundamental theorem%
    \footnote{Rutten \cite{rutt:2005} used this name for an analogous result on infinite streams \cite{rutt:2005}.
              The first author \cite{grab:2005}, and Kozen and Silva \cite{koze:silv:2020,silv:2010} used it
              for the provable synthesis of regular expressions from their Brzozowski derivatives.
              The result here can be viewed as stating the provable synthesis of regular expressions from their partial derivatives (due to Antimirov \cite{anti:1996}).}
        of differential calculus for star expressions'
  implies, quite directly, that $\sidfunon{\vertsof{\astexp}}$
  is a provable solution of $\chartof{\astexp}$. 
\end{proof}

\section{Layered loop existence and elimination}%
  \label{LLEE}
\renewcommand{\ll}[1]{#1}

As preparation for the definition of the central concept of `\LLEEwitness', 
we start with an informal explanation of the structural chart property `\LEE'.
It is a necessary condition for a chart to be the chart interpretation of a star expression. 
\LEE\ is defined by a dynamic elimination procedure that analyses the structure of the graph
by peeling off `loop sub\-charts'. Such subcharts capture,
within the chart interpretation of a star expression~$\astexp$,
the behaviour of the iteration of $\bstexpi{1}$ within innermost subterms $\stexpbit{\bstexpi{1}}{\bstexpi{2}}$ in~$\astexp$.
(A weaker form of `loop' by Milner \cite{miln:1984}, which describes the behavior of general iteration subterms,
 is not sufficient for our aims.)  

\begin{definition}\label{def:loop:chart}
  A chart $\aloop = \tuple{\verts,\tick,\start,\transs}$ is a \emph{loop chart} if:
  \begin{enumerate}[label={{\rm (L\arabic*)}},leftmargin=*,align=left,itemsep=0.5ex]
    \item{}\label{loop:1}
      There is an infinite path from the start vertex $\start$.
    \item{}\label{loop:2}  
      Every infinite path from $\start$ returns to $\start$ after a positive number of transitions
      (and so visits $\start$ infinitely often).
    \item{}\label{loop:3}
       $\verts$ does not contain the vertex $\tick$.
  \end{enumerate}
  In such a loop chart we call the transitions from $\start$ \emph{\loopentry\ transitions},
  and all other transitions \emph{\loopbody\ transitions}.
  
  Let $\achart$ be a chart. A loop chart $\aloop$ is called a \emph{loop subchart of $\achart$}
  if $\aloop$ is the \generatedby{\pair{\avert}{\asettranss}} subchart of $\achart$
  for some vertex $\avert$ of $\achart$, and a set $\asettranss$ of transitions of $\achart$ that depart from $\avert$
  (so the transitions in $\asettranss$ are the \loopentrytransitions~of~$\aloop$).
\end{definition}

Note that the two charts in Ex.~\ref{ex:not:expressible} are not loop charts:
the left one violates \ref{loop:2}, and the right one violates \ref{loop:3}.
Moreover, none of these charts contains a loop subchart. 
While the chart $\chartof{\astexpi{0}}$ in Ex.~\ref{ex:chart:interpretation} is not
a loop chart either, as it violates \ref{loop:2}, we will see that it has loop subcharts. 

Let $\aloop$ be a loop subchart of a chart~$\achart$.
Then the result of \emph{eliminating $\aloop$ from $\achart$}
arises by removing all \loopentrytransitions\ of $\aloop$ from $\achart$, 
and then removing all vertices and transitions that get unreachable. 
We say that a chart $\achart$ has the \emph{loop existence and elimination property (LEE)}
if the process, started on~$\achart$, of repeated eliminations of loop subcharts
results in a chart that does not have an infinite path.

For the charts in Ex.~\ref{ex:not:expressible} the procedure stops immediately,
as they do not contain loop subcharts. Since both of them have infinite paths,
it follows that they do not satisfy LEE. 

We consider three runs of the elimination procedure for the
chart~$\chartof{\astexpi{0}}$ in Ex.~\ref{ex:chart:interpretation}. 
The \loopentrytransitions\ of loop subcharts that are removed 
in each step are marked in bold.
\begin{center}
\begin{tikzpicture}
  
\matrix[anchor=north,row sep=0.9cm,every node/.style={draw,very thick,circle,minimum width=2.5pt,fill,inner sep=0pt,outer sep=2pt}] at (0,0) {
  \node(v_0){};
  \\
  \node(v_1){};
  \\
  \node(v_2){};
  \\
};
\calcLength(v_0,v_1){mylen}
\draw[<-,very thick,>=latex,chocolate](v_0) -- ++ (90:{0.45*\mylen pt});
\path(v_0) ++ ({0.3*\mylen pt},{0.25*\mylen pt}) node{$\averti{0}$};
\draw[->](v_0) to node[right,xshift={-0.05*\mylen pt},pos=0.45]{\small $\aact$} (v_1); 
\path(v_1) ++ ({0.325*\mylen pt},0cm) node{$\averti{1}$};
\draw[->,very thick](v_1) to 
                  node[left,xshift={0.05*\mylen pt},pos=0.45]{\small $\aact$} (v_2);
\draw[->,shorten <= 5pt](v_1) to[out=175,in=180,distance={0.75*\mylen pt}]
                              node[above,yshift={0.05*\mylen pt},pos=0.7]{\small $\cact$} (v_0);
\path(v_2) ++ (-0cm,{-0.275*\mylen pt}) node{$\averti{2}$};
\draw[->](v_2) to[out=180,in=185,distance={0.75*\mylen pt}]  
               node[below,yshift={0.0*\mylen pt},pos=0.2]{\small $\bact$} (v_1);
\draw[->](v_2) to[out=0,in=0,distance={1.3*\mylen pt}] 
               node[below,yshift={0.00*\mylen pt},pos=0.125]{\small $\bact$} (v_0);

\matrix[anchor=north,row sep=0.9cm,every node/.style={draw,very thick,circle,minimum width=2.5pt,fill,inner sep=0pt,outer sep=2pt}] at (2.6,0) {
  \node(v_01){};
  \\
  \node(v_11){};
  \\
  \node[draw=none,fill=none](v_21){};
  \\
};
\calcLength(v_0,v_1){mylen}
\draw[<-,very thick,>=latex,chocolate](v_01) -- ++ (90:{0.45*\mylen pt});
\path(v_01) ++ ({0.25*\mylen pt},{-0.15*\mylen pt}) node{$\averti{0}$};
\draw[->](v_01) to node[right,xshift={-0.05*\mylen pt},pos=0.55]{\small $\aact$} (v_11);
\path(v_11) ++ ({0.325*\mylen pt},0cm) node{$\averti{1}$};
\draw[->,very thick](v_11) to[out=180,in=180,distance={0.75*\mylen pt}]
                           node[right,xshift={-0.05*\mylen pt},pos=0.5]{\small $\cact$} (v_01);

\draw[-implies,thick,double equal sign distance, bend left,distance={0.85*\mylen pt},
               shorten <= 0.6cm,shorten >= 0.5cm
               ] (v_0) to node[below,pos=0.7]{\scriptsize elim} (v_01);

\matrix[anchor=north,row sep=0.9cm,every node/.style={draw,very thick,circle,minimum width=2.5pt,fill,inner sep=0pt,outer sep=2pt}] at (4.15,0) {
  \node(v_02){};
  \\
  \node(v_12){};
  \\
  \node[draw=none,fill=none](v_2){};
  \\
};
\calcLength(v_0,v_1){mylen}
\draw[<-,very thick,>=latex,chocolate](v_02) -- ++ (90:{0.45*\mylen pt});
\path(v_02) ++ ({0.25*\mylen pt},{-0.15*\mylen pt}) node{$\averti{0}$};
\draw[->](v_02) to node[right,xshift={-0.05*\mylen pt},pos=0.55]{\small $\aact$} (v_12); 
\path(v_12) ++ ({0.325*\mylen pt},0cm) node{$\averti{1}$};

\draw[-implies,thick,double equal sign distance, bend left,distance={0.65*\mylen pt},
               shorten <= 0.2cm,shorten >= 0.2cm
               ] (v_01) to node[below,pos=0.75]{\scriptsize elim} (v_02);

\matrix[anchor=north,row sep=0.9cm,every node/.style={draw,very thick,circle,minimum width=2.5pt,fill,inner sep=0pt,outer sep=2pt}] at (6.25,0) {
  \node(v_0-rep){};
  \\
  \node(v_1-rep){};
  \\
  \node(v_2-rep){};
  \\
};
\calcLength(v_0-rep,v_1-rep){mylen}
\draw[<-,very thick,>=latex,chocolate](v_0-rep) -- ++ (90:{0.45*\mylen pt});
\path(v_0-rep) ++ ({0.3*\mylen pt},{0.25*\mylen pt}) node{$\averti{0}$};
\draw[->](v_0-rep) to node[right,xshift={-0.05*\mylen pt},pos=0.45]{\small $\aact$} (v_1-rep); 
\path(v_1-rep) ++ ({0.325*\mylen pt},0cm) node{$\averti{1}$};
\draw[->,very thick](v_1-rep) to node[left,xshift={0.05*\mylen pt},pos=0.45]{\small $\aact$} (v_2-rep);
\draw[->,very thick,shorten <= 5pt](v_1-rep) to[out=175,in=180,distance={0.75*\mylen pt}]
                                             node[right,xshift={-0.05*\mylen pt},pos=0.5]{\small $\cact$} (v_0-rep);
\path(v_2-rep) ++ (-0cm,{-0.275*\mylen pt}) node{$\averti{2}$};
\draw[->](v_2-rep) to[out=180,in=185,distance={0.75*\mylen pt}]  
               node[below,yshift={0.0*\mylen pt},pos=0.2]{\small $\bact$} (v_1-rep);
\draw[->](v_2-rep) to[out=0,in=0,distance={1.3*\mylen pt}] 
               node[below,yshift={0.00*\mylen pt},pos=0.125]{\small $\bact$} (v_0-rep);

\draw[-implies,thick,double equal sign distance, bend right,distance={0.7*\mylen pt},
               shorten <= 0.4cm,shorten >= 0.4cm
               ] (v_0-rep) to node[below,pos=0.7]{\scriptsize elim} (v_02);

\end{tikzpicture}  
\end{center}
  %
\begin{center}  
\begin{tikzpicture}
  %
  %
\matrix[anchor=north,row sep=0.9cm,every node/.style={draw,very thick,circle,minimum width=2.5pt,fill,inner sep=0pt,outer sep=2pt}] at (0,0) {
  \node(v_0){};
  \\
  \node(v_1){};
  \\
  \node(v_2){};
  \\
};
\calcLength(v_0,v_1){mylen}
\draw[<-,very thick,>=latex,chocolate](v_0) -- ++ (90:{0.45*\mylen pt});
\path(v_0) ++ ({0.3*\mylen pt},{0.25*\mylen pt}) node{$\averti{0}$};
\draw[->](v_0) to node[right,xshift={-0.05*\mylen pt},pos=0.45]{\small $\aact$} (v_1); 
\path(v_1) ++ ({0.325*\mylen pt},0cm) node{$\averti{1}$};
\draw[->](v_1) to node[left,xshift={0.05*\mylen pt},pos=0.45]{\small $\aact$} (v_2);
\draw[->,very thick,shorten <= 5pt](v_1) to[out=175,in=180,distance={0.75*\mylen pt}]
         node[above,yshift={0.05*\mylen pt},pos=0.7]{\small $\cact$} (v_0);
\path(v_2) ++ (-0cm,{-0.275*\mylen pt}) node{$\averti{2}$};
\draw[->](v_2) to[out=180,in=185,distance={0.75*\mylen pt}]  
               node[below,yshift={0.0*\mylen pt},pos=0.2]{\small $\bact$} (v_1);
\draw[->](v_2) to[out=0,in=0,distance={1.3*\mylen pt}] 
               node[below,yshift={0.00*\mylen pt},pos=0.125]{\small $\bact$} (v_0);

\matrix[anchor=north,row sep=0.9cm,every node/.style={draw,very thick,circle,minimum width=2.5pt,fill,inner sep=0pt,outer sep=2pt}] at (2.6,0) {
  \node(v_01){};
  \\
  \node(v_11){};
  \\
  \node(v_21){};
  \\
};
\calcLength(v_01,v_11){mylen}
\draw[<-,very thick,>=latex,chocolate](v_01) -- ++ (90:{0.45*\mylen pt});
\path(v_01) ++ ({0.3*\mylen pt},{0.25*\mylen pt}) node{$\averti{0}$};
\draw[->](v_01) to node[right,xshift={-0.05*\mylen pt},pos=0.45]{\small $\aact$} (v_11); 
\path(v_11) ++ ({0.325*\mylen pt},0cm) node{$\averti{1}$};
\draw[->](v_11) to node[left,xshift={0.05*\mylen pt},pos=0.45]{\small $\aact$} (v_21);
\path(v_21) ++ (-0cm,{-0.275*\mylen pt}) node{$\averti{2}$};
\draw[->,very thick](v_21) to[out=180,in=185,distance={0.75*\mylen pt}]  
                           node[below,yshift={0.0*\mylen pt},pos=0.2]{\small $\bact$} (v_11);
\draw[->](v_21) to[out=0,in=0,distance={1.3*\mylen pt}] 
               node[below,yshift={0.00*\mylen pt},pos=0.125]{\small $\bact$} (v_01);

\draw[-implies,thick,double equal sign distance, bend left,distance={0.85*\mylen pt},
               shorten <= 0.65cm,shorten >= 0.4cm
               ] (v_0) to node[below,pos=0.75]{\scriptsize elim} (v_01);

\matrix[anchor=north,row sep=0.9cm,every node/.style={draw,very thick,circle,minimum width=2.5pt,fill,inner sep=0pt,outer sep=2pt}] at (4.75,0) {
  \node(v_02){};
  \\
  \node(v_12){};
  \\
  \node(v_22){};
  \\
};
\calcLength(v_02,v_12){mylen}
\draw[<-,very thick,>=latex,chocolate](v_02) -- ++ (90:{0.45*\mylen pt});
\path(v_02) ++ ({0.3*\mylen pt},{0.25*\mylen pt}) node{$\averti{0}$};
\draw[->](v_02) to node[right,xshift={-0.05*\mylen pt},pos=0.45]{\small $\aact$} (v_12); 
\path(v_12) ++ ({0.325*\mylen pt},0cm) node{$\averti{1}$};
\draw[->](v_12) to node[left,xshift={0.05*\mylen pt},pos=0.45]{\small $\aact$} (v_22);
\path(v_22) ++ (-0cm,{-0.275*\mylen pt}) node{$\averti{2}$};

\draw[->,very thick](v_22) to[out=0,in=0,distance={1.3*\mylen pt}] 
               node[below,yshift={0.00*\mylen pt},pos=0.125]{\small $\bact$} (v_02);

\draw[-implies,thick,double equal sign distance, bend left,distance={0.85*\mylen pt},
               shorten <= 0.65cm,shorten >= 0.4cm
               ] (v_01) to node[below,pos=0.75]{\scriptsize elim} (v_02);

\matrix[anchor=north,row sep=0.9cm,every node/.style={draw,very thick,circle,minimum width=2.5pt,fill,inner sep=0pt,outer sep=2pt}] at (7,0) {
  \node(v_03){};
  \\
  \node(v_13){};
  \\
  \node(v_23){};
  \\
};
\calcLength(v_03,v_13){mylen}
\draw[<-,very thick,>=latex,chocolate](v_03) -- ++ (90:{0.45*\mylen pt});
\path(v_03) ++ ({0.3*\mylen pt},{0.25*\mylen pt}) node{$\averti{0}$};
\draw[->](v_03) to node[right,xshift={-0.05*\mylen pt},pos=0.45]{\small $\aact$} (v_13); 
\path(v_13) ++ ({0.325*\mylen pt},0cm) node{$\averti{1}$};
\draw[->](v_13) to node[left,xshift={0.05*\mylen pt},pos=0.45]{\small $\aact$} (v_23);
\path(v_23) ++ (-0cm,{-0.275*\mylen pt}) node{$\averti{2}$};

\draw[-implies,thick,double equal sign distance, bend left,distance={0.85*\mylen pt},
               shorten <= 0.65cm,shorten >= 0.4cm
               ] (v_02) to node[below,pos=0.75]{\scriptsize elim} (v_03);


 %
 %
\end{tikzpicture}
\end{center} 
  %
Each run witnesses that $\achart$ satisfies \LEE. 
Note that loop elimination does not yield a unique result.%
  \footnote{
    Confluence, and unique normalization, can be shown 
    if a pruning operation is added that permits to drop transitions to deadlocking vertices.}
Runs can be recorded by attaching, in the original chart, to transitions 
that get removed in the elimination procedure as marking label
the sequence number of the appertaining elimination step. 
For the three runs of loop elimination above we get the following 
marking labeled versions of $\achart$, respectively:
\begin{center}
\begin{tikzpicture}
  %
  %
  
\matrix[anchor=north,row sep=0.9cm,every node/.style={draw,very thick,circle,minimum width=2.5pt,fill,inner sep=0pt,outer sep=2pt}] at (5.475,0) {
  \node(v_0-hat-1){};
  \\
  \node(v_1-hat-1){};
  \\
  \node(v_2-hat-1){};
  \\
};
\calcLength(v_0,v_1){mylen}
\draw[<-,very thick,>=latex,chocolate](v_0-hat-1) -- ++ (90:{0.45*\mylen pt});
\path(v_0-hat-1) ++ ({0.3*\mylen pt},{0.25*\mylen pt}) node{$\averti{0}$};
\draw[->](v_0-hat-1) to node[right,xshift={-0.05*\mylen pt},pos=0.45]{\small $\aact$} (v_1-hat-1); 
\path(v_1-hat-1) ++ ({0.325*\mylen pt},0cm) node{$\averti{1}$};
\draw[->](v_1-hat-1) to node[right,xshift={-0.05*\mylen pt},pos=0.45]{\small $\aact$} (v_2-hat-1);
\draw[->,very thick,shorten <= 5pt](v_1-hat-1) to[out=175,in=180,distance={0.75*\mylen pt}] 
         node[left,pos=0.5,xshift={0.05*\mylen pt}]{$\loopnsteplab{1}$} 
         node[above,yshift={0.05*\mylen pt},pos=0.7]{\small $\cact$} (v_0-hat-1);
\path(v_2-hat-1) ++ (-0cm,{-0.275*\mylen pt}) node{$\averti{2}$};
\draw[->,very thick](v_2-hat-1) to[out=180,in=185,distance={0.75*\mylen pt}] 
               node[left,pos=0.5,xshift={0.05*\mylen pt}]{$\loopnsteplab{2}$} 
               node[below,yshift={0.0*\mylen pt},pos=0.2]{\small $\bact$} (v_1-hat-1);
\draw[->,very thick](v_2-hat-1) to[out=0,in=0,distance={1.3*\mylen pt}] 
               node[right,pos=0.5,xshift={-0.05*\mylen pt}]{$\loopnsteplab{3}$} 
               node[below,yshift={0.00*\mylen pt},pos=0.125]{\small $\bact$} (v_0-hat-1);

\matrix[anchor=north,row sep=0.9cm,every node/.style={draw,very thick,circle,minimum width=2.5pt,fill,inner sep=0pt,outer sep=2pt}] at (0,0) {
  \node(v_0-hat-2){};
  \\
  \node(v_1-hat-2){};
  \\
  \node(v_2-hat-2){};
  \\
};
\calcLength(v_0,v_1){mylen}
\draw[<-,very thick,>=latex,chocolate](v_0-hat-2) -- ++ (90:{0.45*\mylen pt});
\path(v_0-hat-2) ++ ({0.3*\mylen pt},{0.25*\mylen pt}) node{$\averti{0}$};
\draw[->](v_0-hat-2) to node[right,xshift={-0.05*\mylen pt},pos=0.45]{\small $\aact$} (v_1-hat-2); 
\path(v_1-hat-2) ++ ({0.325*\mylen pt},0cm) node{$\averti{1}$};
\draw[->,very thick](v_1-hat-2) to node[right,xshift={-0.05*\mylen pt},pos=0.45]{\small $\aact$} 
                                   node[left,pos=0.45,xshift={0.05*\mylen pt}]{$\loopnsteplab{1}$} (v_2-hat-2);
\draw[->,very thick,shorten <= 5pt](v_1-hat-2) to[out=175,in=180,distance={0.75*\mylen pt}] 
         node[left,pos=0.5,xshift={0.05*\mylen pt}]{$\loopnsteplab{2}$} 
         node[above,yshift={0.05*\mylen pt},pos=0.7]{\small $\cact$} (v_0-hat-2);
\path(v_2-hat-2) ++ (-0cm,{-0.275*\mylen pt}) node{$\averti{2}$};
\draw[->](v_2-hat-2) to[out=180,in=185,distance={0.75*\mylen pt}] 
               node[below,yshift={0.0*\mylen pt},pos=0.2]{\small $\bact$} (v_1-hat-2);
\draw[->](v_2-hat-2) to[out=0,in=0,distance={1.3*\mylen pt}]  
               node[below,yshift={0.00*\mylen pt},pos=0.125]{\small $\bact$} (v_0-hat-2);

\matrix[anchor=north,row sep=0.9cm,every node/.style={draw,very thick,circle,minimum width=2.5pt,fill,inner sep=0pt,outer sep=2pt}] at (2.675,0) {
  \node(v_0-hat-3){};
  \\
  \node(v_1-hat-3){};
  \\
  \node(v_2-hat-3){};
  \\
};
\calcLength(v_0,v_1){mylen}
\draw[<-,very thick,>=latex,chocolate](v_0-hat-3) -- ++ (90:{0.45*\mylen pt});
\path(v_0-hat-3) ++ ({0.3*\mylen pt},{0.25*\mylen pt}) node{$\averti{0}$};
\draw[->](v_0-hat-3) to node[right,xshift={-0.05*\mylen pt},pos=0.45]{\small $\aact$} (v_1-hat-3); 
\path(v_1-hat-3) ++ ({0.325*\mylen pt},0cm) node{$\averti{1}$};
\draw[->,very thick](v_1-hat-3) to node[left,xshift={0.05*\mylen pt},pos=0.45]{\small $\aact$} 
                                   node[right,pos=0.45,xshift={-0.05*\mylen pt}]{$\loopnsteplab{1}$} (v_2-hat-3);
\draw[->,very thick,shorten <= 5pt](v_1-hat-3) to[out=175,in=180,distance={0.75*\mylen pt}] 
                                                 node[left,pos=0.5,xshift={0.05*\mylen pt}]{$\loopnsteplab{1}$} 
                                                 node[right,xshift={-0.05*\mylen pt},pos=0.5]{\small $\cact$} (v_0-hat-3);
\path(v_2-hat-3) ++ (-0cm,{-0.275*\mylen pt}) node{$\averti{2}$};
\draw[->](v_2-hat-3) to[out=180,in=185,distance={0.75*\mylen pt}] 
                     node[below,yshift={0.0*\mylen pt},pos=0.2]{\small $\bact$} (v_1-hat-3);
\draw[->](v_2-hat-3) to[out=0,in=0,distance={1.3*\mylen pt}] 
                     node[below,yshift={0.00*\mylen pt},pos=0.125]{\small $\bact$} (v_0-hat-3);

\end{tikzpicture}  
\end{center}
Since all three runs were successful (as they yield charts without infinite paths), 
these recordings (marking-labeled charts) can be viewed as `\LEEwitnesses'.
We now will define a concept of a `layered \LEEwitness' (\LLEEwitness), i.e., a \LEEwitness\
with the added constraint that in the formulated run of the loop elimination procedure
it never happens that a \loopentrytransition\ is removed from within the body of 
a previously removed loop subchart. This refined concept has simpler properties, and it will fit our purpose.  

Before introducing `\LLEEwitnesses', we first define chart labelings
that mark transitions in a chart as `(loop-)entry' and as `(loop-)body' transitions,
but without safeguarding that these markings refer to actual loops. 


\begin{definition}\label{def:entry-body-labeling}
  Let $\achart = \tuple{\verts,\start,\tick,\actions,\transs}$ be a chart. 
  An \emph{\entrybodylabeling}~$\acharthat = \tuple{\verts,\start,\tick,\actions\times\nat,\transshat}$ of $\achart$ is 
  a chart 
  that arises from $\achart$ by adding, for each transition~$\atrans = \triple{\averti{1}}{\aact}{\averti{2}}\in\transs$, 
  to the action label $\aact$ of $\atrans$ a 
  \emph{marking label} $\aLname\in\nat$,
  yielding \mbox{$\atranshat = \triple{\averti{1}}{\pair{\aact}{\aLname}}{\averti{2}} \in \transshat$}. 
  In such an \entrybodylabeling\ we call transitions with marking label $0$ \emph{body transitions},
  and transitions with marking labels in $\natplus$ \emph{\entrytransitions}.
  
  Let $\acharthat$ be an \entrybodylabeling\ of $\achart$, and let $\avert$ and $\bvert$ be vertices of $\achart$ and $\acharthat$.
  We denote by $\avert \redi{\bodylab} \bvert$ that there is a \bodytransition\ $\avert \lt{\pair{\aact}{0}} \bvert$ in $\acharthat$ for some $\aact\in\actions$,
  and by $\avert \redi{\loopnsteplab{\aLname}} \bvert$, for $\aLname\in\natplus$
  that there is an \entrytransition\ $\avert \lt{\pair{\aact}{\aLname}} \bvert$ in $\acharthat$ for some $\aact\in\actions$.
  We will use $\aLname,\bLname,\cLname,\ldots$ for marking labels in $\natplus$ of \entrytransitions.
  By the set $\entriesof{\acharthat}$ of \emph{\entrytransition\ identifiers} we denote the set of pairs $\pair{\avert}{\aLname}\in\verts\times\natplus$ 
  such that an \entrytransition\ $\sloopnstepto{\aLname}$ departs from $\avert$ in $\acharthat$. 
  For $\pair{\avert}{\aLname}\in\entriesof{\acharthat}$,
  we define by $\indsubchartinat{\acharthat}{\avert,\aLname}$ the subchart of $\achart$ 
  with start vertex $\start$
  that consists of the vertices and transitions which occur on paths in $\achart$ 
  as follows: they start with a $\sloopnstepto{\aLname}$ \entrytransition\ from $\avert$,
  continue with body transitions only, and halt immediately if $\avert$ is revisited. 
\end{definition}

\begin{definition}\label{def:LLEEwitness}
  A \emph{LLEE-witness}~$\acharthat$ of a chart~$\achart$ is an \entrybodylabeling\ of $\achart$ 
  that satisfies the following properties:
  \begin{enumerate}[label={\mbox{\rm (W\arabic*)}},leftmargin=*,align=left,itemsep=0.5ex]
    \item{}\label{LLEEw:1}%
      There is no infinite path of $\sredi{\bodylab}$ transitions from $\start$. 
    \item{}\label{LLEEw:2}%
      For all $\pair{\avert}{\aLname}\in\entriesof{\acharthat}$, 
      \begin{enumerate*}[label={(\alph*)}]
        \item{}\label{LLEEw:2a}%
          $\indsubchartinat{\acharthat}{\avert,\aLname}$ is a loop chart,
          and
        \item{}\label{LLEEw:2b} (\emph{layeredness}) %
           from \ul{no vertex} $\bvert\neq \avert$ of $\indsubchartinat{\acharthat}{\avert,\aLname}$ 
           there departs in $\acharthat$ an \entrytransition\ $\sloopnstepto{\bLname}$ with $\bLname \geq \aLname$.%
      \end{enumerate*}
  \end{enumerate}
  
  The stipulation in \ref{LLEEw:2}\ref{LLEEw:2a} justifies to call \entrytransitions\ in a \LLEEwitness\
  a \emph{\loopentrytransition}.
  For a \loopentrytransition\ $\sredi{\loopnsteplab{\bLname}}$ with $\bLname\in\natplus$, we call $\bLname$ its \emph{loop level}. 
  
  A chart is a \LLEEchartemph\ if it has a \LLEEwitness.
\end{definition}

\begin{example} 
  The three labelings of the chart~$\chartof{\astexpi{0}}$ in Ex.~\ref{ex:chart:interpretation}
  that arose as recordings of runs of the loop elimination procedure
  can be viewed as \entrybodylabelings\ of that chart. 
  There, and below, we dropped the body labels of transitions,
  and instead only indicated the entry labels in boldface together with their levels. 
  By checking conditions \ref{LLEEw:1} and \ref{LLEEw:2},\ref{LLEEw:2a}-\ref{LLEEw:2b},
  it is easy to verify that these \entrybodylabelings\ are \LLEEwitnesses.
  In fact it is not difficult to establish
  that every \LLEEwitness\ of $\chartof{\astexpi{0}}$ in Ex.~\ref{ex:chart:interpretation}
  is of either of the following two forms, with marking labels $\aLname,\bLname,\cLname,\dLname,\eLname\in\natplus$:
  \begin{center}
\begin{tikzpicture}
  %
  %
  
\matrix[anchor=north,row sep=0.9cm,every node/.style={draw,very thick,circle,minimum width=2.5pt,fill,inner sep=0pt,outer sep=2pt}] at (0,0) {
  \node(v_0-hat-1){};
  \\
  \node(v_1-hat-1){};
  \\
  \node(v_2-hat-1){};
  \\
};
\calcLength(v_0,v_1){mylen}
\draw[<-,very thick,>=latex,chocolate](v_0-hat-1) -- ++ (90:{0.45*\mylen pt});
\path(v_0-hat-1) ++ ({0.3*\mylen pt},{0.25*\mylen pt}) node{$\averti{0}$};
\draw[->](v_0-hat-1) to node[right,xshift={-0.05*\mylen pt},pos=0.45]{\small $\aact$} (v_1-hat-1); 
\path(v_1-hat-1) ++ ({0.325*\mylen pt},0cm) node{$\averti{1}$};
\draw[->,very thick](v_1-hat-1) to node[left,xshift={0.05*\mylen pt},pos=0.45]{\small $\aact$} 
                                   node[right,pos=0.45,xshift={-0.05*\mylen pt}]{$\loopnsteplab{\bLname}$} (v_2-hat-1);
\draw[->,very thick,shorten <= 5pt](v_1-hat-1) to[out=175,in=180,distance={0.75*\mylen pt}] 
                                                 node[left,pos=0.5,xshift={0.05*\mylen pt}]{$\loopnsteplab{\aLname}$} 
                                                 node[right,xshift={-0.05*\mylen pt},pos=0.5]{\small $\cact$} (v_0-hat-1);
\path(v_2-hat-1) ++ (-0cm,{-0.275*\mylen pt}) node{$\averti{2}$};
\draw[->](v_2-hat-1) to[out=180,in=185,distance={0.75*\mylen pt}] 
                     node[below,yshift={0.0*\mylen pt},pos=0.2]{\small $\bact$} (v_1-hat-1);
\draw[->](v_2-hat-1) to[out=0,in=0,distance={1.3*\mylen pt}] 
                     node[below,yshift={0.00*\mylen pt},pos=0.125]{\small $\bact$} (v_0-hat-1);

\matrix[anchor=north,row sep=0.9cm,every node/.style={draw,very thick,circle,minimum width=2.5pt,fill,inner sep=0pt,outer sep=2pt}] at (3.25,0) {
  \node(v_0-hat-3){};
  \\
  \node(v_1-hat-3){};
  \\
  \node(v_2-hat-3){};
  \\
};
\calcLength(v_0,v_1){mylen}
\draw[<-,very thick,>=latex,chocolate](v_0-hat-3) -- ++ (90:{0.45*\mylen pt});
\path(v_0-hat-3) ++ ({0.3*\mylen pt},{0.25*\mylen pt}) node{$\averti{0}$};
\draw[->](v_0-hat-3) to node[right,xshift={-0.05*\mylen pt},pos=0.45]{\small $\aact$} (v_1-hat-3); 
\path(v_1-hat-3) ++ ({0.325*\mylen pt},0cm) node{$\averti{1}$};
\draw[->](v_1-hat-3) to node[right,xshift={-0.05*\mylen pt},pos=0.45]{\small $\aact$} (v_2-hat-3);
\draw[->,very thick,shorten <= 5pt](v_1-hat-3) to[out=175,in=180,distance={0.75*\mylen pt}] 
         node[left,pos=0.5,xshift={0.05*\mylen pt}]{$\loopnsteplab{\cLname}$} 
         node[right,xshift={-0.05*\mylen pt},pos=0.5]{\small $\cact$} (v_0-hat-3);
\path(v_2-hat-3) ++ (-0cm,{-0.275*\mylen pt}) node{$\averti{2}$};
\draw[->,very thick](v_2-hat-3) to[out=180,in=185,distance={0.75*\mylen pt}] 
               node[left,pos=0.5,xshift={0.05*\mylen pt}]{$\loopnsteplab{\dLname}$} 
               node[below,yshift={0.0*\mylen pt},pos=0.2]{\small $\bact$} (v_1-hat-3);
\draw[->,very thick](v_2-hat-3) to[out=0,in=0,distance={1.3*\mylen pt}] 
               node[right,pos=0.5,xshift={-0.05*\mylen pt}]{$\loopnsteplab{\eLname}$} 
               node[below,yshift={0.00*\mylen pt},pos=0.125]{\small $\bact$} (v_0-hat-3);  
               
\path(v_2-hat-3) ++ ({2*\mylen pt},0cm) node{(with $\cLname < \dLname, \eLname$)};              

\end{tikzpicture}  
\end{center}
\end{example}

We now argue  that \LLEEwitnesses\
guarantee the property \LEE.
Let $\acharthat$ be a \LLEEwitness\ of a chart $\achart$. 
Repeatedly pick an \entrytransition\ identifier $\pair{\avert}{\aLname}$ 
with $\aLname\in\natplus$ minimal, remove the loop subchart that is generated by \loopentry\ transitions of level $\aLname$ from $\avert$
(it is indeed a loop by \ref{LLEEw:2}{\ref{LLEEw:2a}, and minimality of $\aLname$ and \ref{LLEEw:2}{\ref{LLEEw:2b} ensure
 the absence of departing \loopentrytransitions\ of lower level), and perform garbage collection. 
Eventually the part of $\achart$ that is reachable by body transitions from the start vertex
is obtained. This subchart does not have an infinite path due to \ref{LLEEw:1}.
Therefore $\achart$ indeed satisfies \LEE, as witnessed by $\acharthat$.

The property \LEE\ and the concept of \LLEEwitness\
are closely linked with the process semantics of star expressions.
In fact, we now define a labeling of the TSS in Def.~\ref{def:chart:interpretation}
that permits to define, for every star expression $\astexp$,
an \entrybodylabeling\ of the chart interpretation $\chartof{\astexp}$ of $\astexp$,
which can then be recognized as a \LLEEwitness\ of $\chartof{\astexp}$. 
 
We refine the TSS rules in Def.~\ref{def:chart:interpretation}
as follows:
A body label is added to transitions that cannot return to the star expression in their left-hand side.
The rule for transitions into the iteration part $\astexpi{1}$ of an iteration $\stexpbit{\astexpi{1}}{\astexpi{2}}$
is split into the cases where $\astexpi{1}$ is normed or not.  
Only in the normed case can $\stexpbit{\astexpi{1}}{\astexpi{2}}$ return to itself,
and then a \loopentrytransition\ with the star height $\bsth{\astexpi{1}}$ of $\astexpi{1}$ as its level is created. 

\begin{definition}\label{def:lbl:chart:translation}
  For every $\astexp\in\StExpsover{\actions}$,
  we define \emph{the \entrybodylabeling\/ $\charthatof{\astexp}$ of} the chart interpretation~$\chartof{\astexp}$ of $\astexp$
  in analogy with $\chartof{\astexp}$ by using the following
  transition rules that refine the rules in Def.~\ref{def:chart:interpretation}
  by adding marking labels:
  \begin{gather*}
    \AxiomC{$\phantom{a \:\lt{\aact}_{\bodylab}\: \tick}$}
    \UnaryInfC{$a \:\lt{\aact}_{\bodylab}\: \tick$}
    \DisplayProof
    \hspace*{1.75em}
    \AxiomC{$ \astexpi{i} \:\lt{\aact}_{\alab}\: \atickstexp $}
    \RightLabel{$i\in\setexp{1,2}$}
    \UnaryInfC{$ \stexpsum{\astexpi{1}}{\astexpi{2}} \:\lt{\aact}_{\bodylab}\: \atickstexp $}
    \DisplayProof
    \displaybreak[0]\\[1ex]
    \AxiomC{$ \astexpi{1} \:\lt{\aact}_{\alab}\: \astexpacci{1} $}
    \UnaryInfC{$ \stexpprod{\astexpi{1}}{\astexpi{2}} \:\lt{\aact}_{\alab}\: \stexpprod{\astexpacci{1}}{\astexpi{2}} $}
    \DisplayProof
    \hspace*{1.75em}
    \AxiomC{$ \astexpi{1} \:\lt{\aact}_{\bodylab}\: \tick$}
    \UnaryInfC{$ \stexpprod{\astexpi{1}}{\astexpi{2}} \:\lt{\aact}_{\bodylab}\: \astexpi{2} $}
    \DisplayProof
    \displaybreak[0]\\[1ex]
    \AxiomC{$\astexpi{1} \:\lt{\aact}_{\alab}\: \astexpacci{1}$}
    \RightLabel{if $\astexpi{1}$ is normed}
    \UnaryInfC{$\stexpbit{\astexpi{1}}{\astexpi{2}} 
                  \:\lt{\aact}_{\loopnsteplab{\bsth{\astexpi{1}}+1}}\: 
                \stexpprod{\astexpacci{1}}{\stexpbit{(\astexpi{1}}{\astexpi{2})}}$}
    \DisplayProof
    \displaybreak[0]\\[1ex]
    \AxiomC{$\astexpi{1} \:\lt{\aact}_{\alab}\: \astexpacci{1}$}
    \RightLabel{if $\astexpi{1}$ is not normed}
    \UnaryInfC{$\stexpbit{\astexpi{1}}{\astexpi{2}} \:\lt{\aact}_{\bodylab}\: \stexpprod{\astexpacci{1}}{\stexpbit{(\astexpi{1}}{\astexpi{2})}}$}
    \DisplayProof
    \displaybreak[0]\\
    \AxiomC{$ \astexpi{1} \:\lt{\aact}_{\bodylab}\: \tick$}
    \UnaryInfC{$\stexpbit{\astexpi{1}}{\astexpi{2}} \:\lt{\aact}_{\loopnsteplab{\bsth{\astexpi{1}}+1}}\: \stexpbit{\astexpi{1}}{\astexpi{2}}$}
    \DisplayProof
    \hspace*{1.75em}
    \AxiomC{$\astexpi{2} \:\lt{\aact}_{\alab}\: \atickstexp$}
    \UnaryInfC{$\stexpbit{\astexpi{1}}{\astexpi{2}} \:\lt{\aact}_{\bodylab}\: \atickstexp$}
    \DisplayProof 
  \end{gather*}
  for $\alab\in\setexp{\bodylab}\cup\descsetexp{\loopnsteplab{\aLname}}{\aLname\in\natplus}$, where we employed notation defined in Def.~\ref{def:chart:interpretation}
  for writing marking labels as subscripts.
\end{definition}

\begin{example}\label{ex:entry-body-labeling}
  In Fig.~\ref{fig-ex-LLEEw-translation}
  we depict the \entrybodylabelings, as defined in Def.~\ref{def:entry-body-labeling},
  for star expressions $\astexpi{1}$, $\astexpi{0}$, and $\astexpi{2}$ in Ex.~\ref{ex:chart:interpretation}.
  \begin{figure}[t!]
  %
\begin{center}
\hspace*{-1em}
\begin{tikzpicture}[scale=1,every node/.style={transform shape}]
%
\matrix[anchor=north,row sep=0.9cm,every node/.style={draw,very thick,circle,minimum width=2.5pt,fill,inner sep=0pt,outer sep=2pt}] at (0,-0.5) {
  \node(v_e1_0){};
  \\
  \node(v_e1_1){};
  \\
  \node(v_e1_2){};
  \\
  \node(v_e1_0'){};
  \\
};
\calcLength(v_e1_0,v_e1_1){mylen}
\draw[<-,very thick,>=latex,chocolate](v_e1_0) -- ++ (90:{0.45*\mylen pt});
%
\draw[->,very thick] (v_e1_0) to node[right,pos=0.45,xshift={-0.05*\mylen pt}]{$\loopnsteplab{2}$} 
                                 node[left,pos=0.45]{\small $\aact$} (v_e1_1);
%
\draw[->,very thick] (v_e1_1) to node[right,pos=0.45,xshift={-0.05*\mylen pt}]{$\loopnsteplab{1}$} 
                                 node[left,pos=0.45]{\small $\aact$} (v_e1_2);
\draw[->,shorten <= 4.5pt] (v_e1_1) to[out=170,in=180,distance={0.75*\mylen pt}] node[left,pos=0.5]{\small $\cact$} (v_e1_0);
%
\draw[->] (v_e1_2) to node[right,pos=0.5]{\small $\bact$} (v_e1_0');
\draw[->] (v_e1_2) to[out=180,in=190,distance={0.75*\mylen pt}] node[left,pos=0.5]{\small $\bact$} (v_e1_1);
%
\draw[->] (v_e1_0') to[out=0,in=0,distance={1.5*\mylen pt}] node[right,pos=0.5]{\small $\aact$} (v_e1_1); 
%
\path (v_e1_0') ++ ({0*\mylen pt},{-0.6*\mylen pt}) node{\large $\charthighhatof{\astexpi{1}}$};

\matrix[anchor=north,row sep=0.9cm,every node/.style={draw,very thick,circle,minimum width=2.5pt,fill,inner sep=0pt,outer sep=2pt}] at (2.9,0) {
  \node[draw=none,fill=none](v_1-dummy){};
  \\
  \node(v_0){};
  \\
  \node(v_1){};
  \\
  \node(v_2){};
  \\
};
\calcLength(v_0,v_1){mylen}
\draw[draw=none,<-,very thick,>=latex,chocolate](v_1-dummy) -- ++ (90:{0.45*\mylen pt});
\draw[<-,very thick,>=latex,chocolate](v_0) -- ++ (90:{0.5*\mylen pt});
\draw[->](v_0) to node[right,xshift={-0.05*\mylen pt},pos=0.45]{\small $\aact$} (v_1); 
%
\draw[->,very thick](v_1) to node[right,xshift={-0.05*\mylen pt},pos=0.45]{$\loopnsteplab{1}$}
                             node[left,xshift={0.05*\mylen pt},pos=0.45]{\small $\aact$} (v_2);
\draw[->,very thick,shorten <= 5pt](v_1) to[out=175,in=180,distance={0.75*\mylen pt}] 
         node[left,pos=0.5,xshift={0.05*\mylen pt}]{$\loopnsteplab{1}$} 
         node[right,yshift={0.05*\mylen pt},pos=0.45,xshift={-0.05*\mylen pt}]{\small $\cact$} (v_0);
%
\draw[->](v_2) to[out=180,in=185,distance={0.75*\mylen pt}]  
               node[below,yshift={0.0*\mylen pt},pos=0.2]{\small $\bact$} (v_1);
\draw[->](v_2) to[out=0,in=0,distance={1.3*\mylen pt}] 
               node[below,yshift={0.00*\mylen pt},pos=0.125]{\small $\bact$} (v_0);


\path (v_2) ++ ({0*\mylen pt},{-0.8*\mylen pt}) node{\large $\charthighhatof{\astexpi{0}}$};

%
\matrix[anchor=north,row sep=0.9cm,every node/.style={draw,very thick,circle,minimum width=2.5pt,fill,inner sep=0pt,outer sep=2pt}] at (5.85,0.1) {
  \node(v_e2_0){};
  \\
  \node(v_e2_1){};
  \\
  \node(v_e2_2){};
  \\
  \node(v_e2_0''){};
  \\
  \node(v_e2_1'){};
  \\
};
\calcLength(v_e2_0,v_e2_1){mylen}
  \draw[draw=none] (v_e2_2) arc (270:205:{\mylen pt}) node[style={draw,very thick,circle,minimum width=2.5pt,fill,inner sep=0pt,outer sep=2pt}](v_e2_0'){};
  \draw[draw=none] (v_e2_0'') arc (90:205:{\mylen pt}) node[style={draw,very thick,circle,minimum width=2.5pt,fill,inner sep=0pt,outer sep=2pt}](v_e2_0'''){};
\draw[<-,very thick,>=latex,chocolate](v_e2_0) -- ++ (90:{0.45*\mylen pt});
\draw[->](v_e2_0) to node[right,xshift={-0.05*\mylen pt},pos=0.4]{\small $\aact$} (v_e2_1);
%
%
\draw[->,very thick](v_e2_1) to node[right,pos=0.45,xshift={-0.075*\mylen}]{$\loopnsteplab{3}$}
                                node[left,pos=0.45,xshift={0.065*\mylen}]{\small $\aact$} (v_e2_2);
\draw[->,very thick](v_e2_1) to node[below,pos=0.45]{$\loopnsteplab{3}$} 
                                node[above,pos=0.65]{\small $\cact$} (v_e2_0');
%
\draw[->] (v_e2_0') to[out=115,in=150,distance={0.7*\mylen pt}] node[above,pos=0.75]{\small $\aact$} (v_e2_1);
%
\draw[->,very thick](v_e2_2) to node[right,pos=0.45,xshift={-0.075*\mylen}]{$\loopnsteplab{2}$} 
                                node[left,pos=0.425,xshift={0.05*\mylen}]{\small $\bact$} (v_e2_0'');
\draw[->,shorten <= 5pt] (v_e2_2) to[out=10,in=0,distance=0.7cm] 
                                node[right,pos=0.5,xshift={-0.025*\mylen}]{\small $\bact$} (v_e2_1);
%
\draw[->](v_e2_0'') to node[right,xshift={-0.05*\mylen pt},pos=0.4]{\small $\aact$} (v_e2_1');
%
\draw[->,very thick](v_e2_1') to node[below,pos=0.45]{$\loopnsteplab{1}$} 
                                 node[above,pos=0.65]{\small $\cact$} (v_e2_0''');
\draw[->](v_e2_1') to[out=0,in=0,distance=1.4cm] node[right,pos=0.6,xshift={-0.025*\mylen}]{\small $\aact$} (v_e2_2);
%
\draw[->] (v_e2_0''') to[out=115,in=150,distance={0.7*\mylen pt}] node[above,pos=0.4]{\small $\aact$} (v_e2_1');

\path (v_e2_1') ++ ({1*\mylen pt},{-0.3*\mylen pt}) node{\large $\charthighhatof{\astexpi{2}}$};


\end{tikzpicture} 
\end{center}
  %
    \vspace*{-2.25ex}
    \caption{\label{fig-ex-LLEEw-translation}%
             \LLEEwitness\ \entrybodylabelings\ as defined by Def.~\ref{def:lbl:chart:translation} for the chart interpretations
             of $\astexpi{0}$, $\astexpi{1}$, and $\astexpi{2}$ in Ex.~\ref{ex:chart:interpretation}.}
  \end{figure}
  It is easy to verify that these labelings
  are \LLEEwitnesses\ of the charts~$\chartof{\astexpi{0}}$, $\chartof{\astexpi{1}}$, and $\chartof{\astexpi{2}}$
  in Ex.~\ref{ex:chart:interpretation}, resp.. 
\end{example}

\begin{proposition}\label{prop:lbl:chart:translation:is:LLEEw}
  For every $\astexp\in\StExpsover{\actions}$,
  the \entrybodylabeling\/ $\charthatof{\astexp}$ of $\chartof{\astexp}$ is a \LLEEwitness\ of $\chartof{\astexp}$.  
\end{proposition}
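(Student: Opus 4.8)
The plan is to argue by induction on the structure of $e\in\StExpsover{\actions}$, establishing on the way a structural description of $\chartof{e}$ together with its labeling $\charthatof{e}$ from Def.~\ref{def:lbl:chart:translation}. It is convenient to prove the slightly stronger statement that $\charthatof{e}$ satisfies~\ref{LLEEw:2} together with the strengthening of~\ref{LLEEw:1} that from \emph{no} vertex of $\chartof{e}$ there is an infinite path of $\redi{\bodylab}$-transitions (equivalently, since $\chartof{e}$ is finite, that the $\redi{\bodylab}$-relation on $\chartof{e}$ is acyclic); this stronger form is what carries through the induction. The base cases $e=\stexpzero$ and $e=\astexpact$ are immediate. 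Before the inductive step I would record two auxiliary facts, each by a routine induction on term structure: (i) \emph{star height does not increase along transitions}, i.e.\ every iterated derivative $t$ of $e$ satisfies $\bsth{t}\le\bsth{e}$; and (ii) \emph{normedness is witnessed by body transitions}, i.e.\ if $e$ is normed then there is a path from $e$ to $\tick$ in $\charthatof{e}$ using only $\redi{\bodylab}$-transitions. Fact~(ii) is the linchpin, and its crucial case is $e=\stexpbit{\astexpi{1}}{\astexpi{2}}$: if this is normed then $\astexpi{2}$ must be normed (a terminating path cannot iterate $\astexpi{1}$ forever), so one leaves the iteration vertex via the $\astexpi{2}$-rule --- which always produces a body transition --- and concludes by the induction hypothesis on~$\astexpi{2}$.

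For $e=\stexpsum{\astexpi{1}}{\astexpi{2}}$ the analysis is easy: apart from the start vertex, every vertex of $\chartof{e}$ is a vertex of $\chartof{\astexpi{1}}$ or of $\chartof{\astexpi{2}}$ with the same outgoing transitions, all transitions out of the start vertex are body transitions, and the labeling is inherited; so the strengthened~\ref{LLEEw:1} and~\ref{LLEEw:2} transfer directly from the two hypotheses. For $e=\stexpprod{\astexpi{1}}{\astexpi{2}}$ the vertices split into an \emph{$\astexpi{1}$-region} consisting of the $\stexpprod{\astexpacci{1}}{\astexpi{2}}$ for non-$\tick$ iterated derivatives $\astexpacci{1}$ of $\astexpi{1}$, and an \emph{$\astexpi{2}$-region} of iterated derivatives of $\astexpi{2}$; transitions inside the $\astexpi{1}$-region copy those of $\chartof{\astexpi{1}}$ with the same marking labels, the only transitions leaving it are the body transitions $\stexpprod{\astexpacci{1}}{\astexpi{2}}\redi{\bodylab}\astexpi{2}$ (produced whenever $\astexpacci{1}\lt{}\tick$), and the $\astexpi{2}$-region is transition-closed and copies $\chartof{\astexpi{2}}$. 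Hence any $\redi{\bodylab}$-cycle of $\chartof{e}$ would project onto a $\redi{\bodylab}$-cycle of $\chartof{\astexpi{1}}$ or of $\chartof{\astexpi{2}}$, which is impossible by the hypotheses. Moreover each \entrytransition\ of $\chartof{e}$ is a lifted \entrytransition\ of $\chartof{\astexpi{1}}$ (inside the $\astexpi{1}$-region) or of $\chartof{\astexpi{2}}$, and --- using property~\ref{loop:3} of the loop subcharts furnished by the hypotheses, which rests on fact~(ii) --- the associated loop subchart of $\chartof{e}$ is a faithful copy of the corresponding loop subchart of the component chart (with the factor $\astexpi{2}$ appended in the first case); hence it is again a loop chart and layeredness is preserved.

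The decisive case is $e=\stexpbit{\astexpi{1}}{\astexpi{2}}$. The vertices are the start vertex $\avert=\stexpbit{\astexpi{1}}{\astexpi{2}}$, an \emph{$\astexpi{1}$-region} consisting of the $\stexpprod{\astexpacci{1}}{(\stexpbit{\astexpi{1}}{\astexpi{2}})}$ for non-$\tick$ iterated derivatives $\astexpacci{1}$ of $\astexpi{1}$, and an \emph{$\astexpi{2}$-region} of iterated derivatives of $\astexpi{2}$; the \emph{new} \entrytransitions\ are exactly those departing from $\avert$, all of loop level $\bsth{\astexpi{1}}+1$, targeting $\avert$ itself (self-loop case) or a vertex of the $\astexpi{1}$-region, whereas the remaining \entrytransitions\ are lifted from $\chartof{\astexpi{1}}$ (inside the $\astexpi{1}$-region) or from $\chartof{\astexpi{2}}$. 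For the strengthened~\ref{LLEEw:1}: the only $\redi{\bodylab}$-transitions out of $\avert$ enter the transition-closed $\astexpi{2}$-region (via the $\astexpi{2}$-rule), while the $\astexpi{1}$-region is reached from $\avert$ only by \entrytransitions; so every $\redi{\bodylab}$-cycle again projects onto one of $\chartof{\astexpi{1}}$ or $\chartof{\astexpi{2}}$. For~\ref{LLEEw:2}\ref{LLEEw:2a}: the loop subchart associated (per Def.~\ref{def:entry-body-labeling}) with an \entrytransition\ identifier $\pair{\avert}{\bsth{\astexpi{1}}+1}$ consists of $\avert$ together with the $\astexpi{1}$-region vertices reached by $\redi{\bodylab}$-steps before $\avert$ is revisited; it avoids $\tick$, because a $\redi{\bodylab}$-path inside the $\astexpi{1}$-region can only leave it by returning to $\avert$; every infinite path from $\avert$ returns to $\avert$, since a non-returning one would project onto an infinite $\redi{\bodylab}$-path in $\chartof{\astexpi{1}}$; and an infinite path exists because $\astexpi{1}$ is normed, so by fact~(ii) it has a $\redi{\bodylab}$-path to $\tick$ whose first step lifts to an \entrytransition\ out of $\avert$ and whose remainder lifts to a $\redi{\bodylab}$-path back to $\avert$. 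Finally, for~\ref{LLEEw:2}\ref{LLEEw:2b}: a non-root vertex of this loop subchart has the form $\stexpprod{\astexpacci{1}}{(\stexpbit{\astexpi{1}}{\astexpi{2}})}$; an \entrytransition\ departing from it is necessarily a lifted \entrytransition\ of $\chartof{\astexpi{1}}$ (a transition into the tail $\stexpbit{\astexpi{1}}{\astexpi{2}}$ is always a body transition, by the product rules), hence by fact~(i) it arises from a $\sstexpbit$-subterm $\stexpbit{\cstexp}{\dstexp}$ of an iterated derivative of $\astexpi{1}$ with $\bsth{\stexpbit{\cstexp}{\dstexp}}\le\bsth{\astexpi{1}}$, so its loop level equals $\bsth{\cstexp}+1\le\bsth{\astexpi{1}}<\bsth{\astexpi{1}}+1$; the layeredness of the \entrytransitions\ lifted from $\chartof{\astexpi{1}}$ and from $\chartof{\astexpi{2}}$ transfers as in the product case.

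I expect the main obstacle to be getting the region-decomposition bookkeeping of the product and iteration cases exactly right --- in particular, tracking how the rule for $\stexpprod{}{}$ inherits marking labels so that every $\redi{\bodylab}$-cycle (and every loop subchart) of $\chartof{e}$ really does project onto the corresponding object of a component chart, which is precisely why the strengthened ``from no vertex'' form of~\ref{LLEEw:1} has to be carried as induction hypothesis --- together with a careful proof of fact~(ii), since it underpins both property~\ref{loop:3} for the loop subcharts and the existence of an infinite path inside them.
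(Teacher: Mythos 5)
Your plan is correct and, as far as one can tell, follows essentially the route of the paper's own proof (which is deferred to the appendix): a structural induction on $\astexp$ carrying the strengthened form of \ref{LLEEw:1} (no $\sredi{\bodylab}$-cycles anywhere), supported by exactly the two auxiliary facts that the labeled TSS of Def.~\ref{def:lbl:chart:translation} is designed around --- that star height does not increase along iterated derivatives, which is what makes the loop levels $\bsth{\astexpi{1}}+1$ satisfy layeredness \ref{LLEEw:2}\ref{LLEEw:2b}, and that normedness is witnessed by a $\sredi{\bodylab}$-path to $\tick$, which is what delivers \ref{loop:1} and \ref{loop:3} for the loop subcharts generated at iteration vertices. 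One imprecision is worth fixing before you execute the plan: in the case $\astexp = \stexpbit{\astexpi{1}}{\astexpi{2}}$ you assert that the $\astexpi{1}$-region is reached from the start vertex $\avert$ only by \entrytransitions. That holds only when $\astexpi{1}$ is normed; otherwise the not-normed iteration rule produces $\sredi{\bodylab}$-transitions from $\avert$ into that region, so your stated justification for the absence of $\sredi{\bodylab}$-cycles through $\avert$ does not apply. The conclusion still holds, since a $\sredi{\bodylab}$-path from the $\astexpi{1}$-region back to $\avert$ must end in a product-rule step whose premise is $\astexpacci{1} \lt{\aact} \tick$ for some iterated derivative $\astexpacci{1}$ of $\astexpi{1}$, which would make $\astexpi{1}$ normed after all --- a one-line repair, but one that should appear explicitly. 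The vertex-collision issue you flag (the same star expression arising in two regions) is indeed the other point needing care, but it is harmless because each region is a transition-closed set of iterated derivatives whose transitions and marking labels are intrinsic to the expressions themselves.
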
 



For a binary relation $R$, let $R^+$ and $R^*$ be its transitive and transitive-reflexive closures.
$u\,\sredi{l}\,v$ denotes that there is a transition $u\,\lt{a}_l\,v$ for an $a\in A$, and in proofs (but not pictures)
 $u\,\sred\,v$ denotes that $u\,\sredi{l}\,v$ for some label $l$.
By $u \redtavoidsvi{w}{l} v$ we denote that $u \redi{l} v$ and $v\neq w$ (this transition avoids \ul{t}arget~$w$).
\vspace*{-.5mm}Likewise, $u \redtavoidsv{w} v$ denotes that $u \redtavoidsvi{w}{l} v$ for some label~$l$.
By $\sccof{u}$ we denote the strongly connected component (scc) to which $u$ belongs.

\begin{definition}\label{def:descendsinloopto}
  Let $\acharthat$ be a \LLEEwitness\ of chart $\achart$. 
  If there is a path $\avert \comprewrels{\sredtavoidsvi{\avert}{\loopnsteplab{\aLname}}}{\sredtavoidsvrtci{\avert}{\bodylab}} \bvert$,\vspace*{-0.5mm}
  then we write $\avert \descendsinlooplto{\aLname} \bvert$.
  (Note that $\avert \descendsinlooplto{\aLname} \bvert$ holds  
     if and only if 
   $\bvert$ is a vertex $\neq\avert$ of the loop chart $\indsubchartinat{\acharthat}{\avert,\aLname}$
   that is generated by the $\sredi{\loopnsteplab{\aLname}}$ \entrytransitions\ at $\avert$ in $\achart$.)
  We write $\avert \descendsinloopto \bvert$
    and say that $\avert$ \emph{descends in a loop to} $\bvert$ if $\avert \descendsinlooplto{\aLname} \bvert$ for some $\aLname\in\natplus$.
    
  We write $\bvert \loopsbackto \avert$ (or $\avert \convloopsbackto \bvert$), and say that $\bvert$ \emph{loops back to} $\avert$, 
    if $\avert \descendsinloopto \bvert \redtci{\bodylab} \avert$. 
    The \txtloopsbackto~relation $\sloopsbackto$ totally orders its successors (see Lem.~\ref{lem:loop:relations},~\ref{it:loopsbackto:lo:successors}).
    Therefore we define the `direct successor relation' $\sdloopsbackto$ of $\sloopsbackto$ as follows:
  We write $\bvert \dloopsbackto \avert$ (or $\avert \convdloopsbackto \bvert$), and say that $\bvert$ \emph{directly loops back to} $\avert$, 
  if $\bvert \loopsbackto \avert$ and for all $\cvert$ with $\bvert \loopsbackto \cvert$ either $\cvert = \avert$ or $\avert \loopsbackto \cvert$.  
\end{definition}

\begin{lemma}\label{lem:loop:relations}
  The relations $\sredi{\bodylab}$, $\sdescendsinloopto$, $\sloopsbackto$, $\sdloopsbackto$\vspace*{-0.5mm}
  as defined by a \LLEEwitness~$\acharthat$ on a chart~$\achart$ satisfy the following properties:
  \begin{enumerate}[label=(\roman{*})]
    \item{}\label{it:bo:terminating}
      There are no infinite $\sredi{\bodylab}$ paths (so no $\sredi{\bodylab}$ cycles).
    \item{}\label{it:descendsinloopto:scc:loopsbackto}
      If $\sccof{\cvert} = \sccof{\avert}$, then $\cvert \descendsinlooptortc \avert$ implies $\avert \loopsbacktortc \cvert$.
    \item{}\label{it:descendsinloopto:notloopsbackto:not:normed}
      If $\avert \descendsinloopto \bvert$ and $\lognot{(\bvert\,\sloopsbackto)}$, then $\bvert$ is not normed.
    \item{}\label{it:bi:reachable:lLEEw}
      $\sccof{\cvert}=\sccof{\avert}$ if and only if  $\cvert\loopsbacktortc \bvert$ and $\avert\loopsbacktortc \bvert$ for some vertex~$\bvert$.
    \item{}\label{it:least-upper-bound}   
      $\sloopsbacktortc$ is a partial order with the least-upper-bound property: 
      if a non\-empty set of vertices has an upper bound with respect to $\sloopsbacktortc$, then it has a least upper bound.
    \item{}\label{it:loopsbackto:lo:successors}
      $\sloopsbackto$ is a total order on $\sloopsbackto$-successor vertices: 
      if $\bvert \loopsbackto \averti{1}$ and $\bvert \loopsbackto \averti{2}$, 
      then $\averti{1} \loopsbackto \averti{2}$ or $\averti{1} = \averti{2}$ or $\averti{2} \loopsbackto \averti{1}$.  
    \item{}\label{it:direct-subordinates}  
      If $\averti{1}\dloopsbackto \cvert$ and $\averti{2}\dloopsbackto \cvert$ for distinct $\averti{1},\averti{2}$, 
      then there is no vertex $\bvert$ such that both $\bvert\loopsbacktortc \averti{1}$ and $w\loopsbacktortc \averti{2}$.
  \end{enumerate}
\end{lemma}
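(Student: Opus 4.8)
The plan is to prove the seven items in the order (i); then the interlinked pair (iii) and (ii); then (vi); then (v) and (vii), which follow quickly from (vi); and finally (iv). The common engine throughout is induction on loop levels, combined with the two structural facts furnished by a \LLEEwitness{}: layeredness (W2)(b), and the fact (W2)(a) that \emph{every} entry‑transition identifier generates a genuine loop chart, so that (L1)--(L3) are available not only at the top level but for all nested loops.

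First I would dispatch (i). Since $\achart$ is finite it suffices to rule out $\sredi{\bodylab}$-cycles, hence an infinite $\sredi{\bodylab}$-path from an arbitrary vertex (every vertex is reachable from $\start$ by \startconnectedness). I would induct on the least number $k$ of entry transitions on a path from $\start$ to that vertex. For $k=0$ the vertex is $\sredi{\bodylab}$-reachable from $\start$, so an infinite body path from it contradicts (W1). For $k+1$, split off the last entry transition $\avert \redi{\loopnsteplab{\aLname}} \avertacc$ of a minimal path; the remainder is a body path, so an infinite body path from $\avertacc$ either never revisits $\avert$ --- then it lies entirely inside the loop chart generated at $\avert$ and violates (L2) --- or it revisits $\avert$, producing an infinite body path from $\avert$, which is reachable from $\start$ with fewer entry transitions, contradicting the induction hypothesis.

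The technical heart is (iii), and I expect it to be the main obstacle. Fix a witness $\avert \descendsinlooplto{\aLname} \bvert$, so $\bvert$ is a vertex $\neq\avert$ of the loop chart $\indsubchartinat{\acharthat}{\avert,\aLname}$, and let $D$ be the set of vertices $\sredrtci{\bodylab}$-reachable from $\bvert$. Either $\avert\in D$, giving $\bvert\redtci{\bodylab}\avert$, hence $\bvert\loopsbackto\avert$ (contradicting the hypothesis $\lognot{(\bvert\,\sloopsbackto)}$); or $D$ consists of vertices $\neq\avert$ of the loop chart, so by (L3) $\tick\notin D$, by layeredness no entry transition of level $\geq\aLname$ departs from $D$, and by (i) $D$ is a finite $\sredi{\bodylab}$-DAG. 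I would then prove, by a nested induction --- outer on $\aLname$, inner on the length of a shortest path to $\tick$ --- a strengthened statement: any normed vertex $\neq\avert$ of a loop chart $\indsubchartinat{\acharthat}{\avert,\aLname}$ loops back to a vertex that is $\loopsbacktortc$-above-or-equal-to $\avert$. For such a vertex the first transition of a shortest norm‑path cannot be a body step (inner hypothesis) nor an entry self‑loop (which would make the vertex a deadlock), so it is an entry step of some level $\bLname<\aLname$; by (W2)(a) its target sits in a loop chart at level $\bLname$, to which the outer hypothesis applies. The delicate point is threading the invariant ``loops back to something at least as high as $\avert$'' back across the entry step and through $D$; here one uses (i) (no infinite body path), (L2) of the inner loop chart, and the ordering facts being assembled in (v)/(vi). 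Item (iii) is then immediate: if $\bvert$ loops back to nothing, the strengthened statement forces $\bvert$ non‑normed.

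For (ii), from $\sccof{\cvert}=\sccof{\avert}$ and $\cvert \descendsinlooptortc \avert$ one first checks that all intermediate vertices lie in the same scc, and then that each step $\cvert\descendsinloopto\cvertacc$ between vertices of one scc is witnessed backwards by $\cvertacc\loopsbackto\cvert$ (using (iii) --- a vertex of the pertinent loop chart that can reach $\cvert$ at all can only do so by body transitions --- together with the same loop‑level induction), so chaining yields $\avert\loopsbacktortc\cvert$. For (vi), given $\bvert\loopsbackto\averti{1}$ and $\bvert\loopsbackto\averti{2}$ with witnessing levels $\aLnamei{1},\aLnamei{2}$, I would compare the levels: layeredness forces the entry vertex of the lower‑level loop to lie inside the higher‑level loop chart, fixing the relative position of $\averti{1}$ and $\averti{2}$ and giving $\averti{1}\loopsbackto\averti{2}$, $\averti{1}=\averti{2}$, or $\averti{2}\loopsbackto\averti{1}$. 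From (vi) one gets antisymmetry of $\loopsbacktortc$ (loop levels strictly decrease along nested loops, so there are no nontrivial $\loopsbackto$-cycles) and the least‑upper‑bound property (the upper‑bound set of a $\loopsbacktortc$-bounded set is a $\sloopsbackto$-chain by (vi), hence has a least element), which is (v). Then (vii) is short: if $\bvert\loopsbacktortc\averti{1}$ and $\bvert\loopsbacktortc\averti{2}$ with $\averti{1}\dloopsbackto\cvert$, $\averti{2}\dloopsbackto\cvert$, $\averti{1}\neq\averti{2}$, then (vi) gives say $\averti{1}\loopsbackto\averti{2}$, and directness forces $\averti{2}=\cvert$ or $\cvert\loopsbackto\averti{2}$, both contradicting irreflexivity of $\loopsbackto$ (using $\averti{2}\loopsbackto\cvert$ and (v)). Finally, (iv)'s direction ``$\Leftarrow$'' is immediate since each $\loopsbackto$-step joins two vertices of one scc (both legs of the required cycle being read off from the $\descendsinloopto$- and $\sredtci{\bodylab}$-paths in Def.~\ref{def:descendsinloopto}), and for ``$\Rightarrow$'' I would take a cycle through $\cvert$ and $\avert$ --- which contains an entry transition by (i) --- and, by picking one of maximal level and iterating under the control of layeredness, extract a vertex $\bvert$ on the cycle with $\cvert\loopsbacktortc\bvert$ and $\avert\loopsbacktortc\bvert$; equivalently this shows each scc has a $\loopsbacktortc$-greatest element. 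Besides (iii), the scc analysis underlying (ii) and (iv) is where the remaining work lies, all of it resting on simultaneously controlling loop levels and the loop‑chart structure generated by lower‑level entries.
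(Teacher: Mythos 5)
Your toolkit is the right one (layeredness, closure properties of the generated loop subcharts, induction on loop levels), and your arguments for (i), (v), (vi) and (vii) are essentially workable. But the proof of (iii) --- which you correctly identify as the technical heart --- rests on an inverted case analysis. You claim that the first transition of a shortest norm-path from a vertex $\bvert\neq\avert$ of $\indsubchartinat{\acharthat}{\avert,\aLname}$ ``cannot be a body step (inner hypothesis)'' and must therefore be an entry step of level $\bLname<\aLname$. The truth is the opposite: such a path cannot take \emph{any} entry transition before it reaches $\avert$. A shortest path to $\tick$ is simple; if it took an entry transition $\cvert\redi{\loopnsteplab{\bLname}}\cvertacc$ from some $\cvert\neq\avert$ reached from $\bvert$ by body steps, then $\bLname<\aLname$ by (W2)(b), and the path would be trapped in $\indsubchartinat{\acharthat}{\cvert,\bLname}$: that subchart is closed under body successors at vertices $\neq\cvert$, contains neither $\tick$ (by (L3)) nor $\avert$ (a level-$\aLname$ entry at $\avert$ inside it would violate layeredness since $\aLname>\bLname$), and the only exits are revisiting $\cvert$ --- impossible on a simple path --- or a further entry transition of strictly smaller level. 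Since levels are positive integers this descent terminates, so the path reaches neither $\tick$ nor $\avert$; contradiction. Hence the shortest norm-path proceeds by body transitions to $\avert$, giving the stronger conclusion $\bvert\redtci{\bodylab}\avert$, i.e.\ $\bvert\loopsbackto\avert$ itself, not merely ``$\bvert$ loops back to something $\sloopsbacktortc$-above $\avert$''. The same trapping argument applied to a shortest path from $\bvert$ to $\avert$ (rather than to $\tick$) is exactly the single-step case that (ii) needs.

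Even on a charitable reading of ``cannot be'' as ``is discharged by the inner hypothesis'', the argument does not close. In the body-step case $\bvert\redi{\bodylab}\cvert$ with $\cvert\loopsbackto\cvertacc$ for some $\cvertacc$ above $\avert$, concluding $\bvert\loopsbackto\cvertacc$ requires $\cvertacc\descendsinloopto\bvert$, which does not follow from $\cvertacc\descendsinloopto\cvert$ (it does follow when $\cvertacc=\avert$ --- another reason the invariant must be ``loops back to $\avert$''). In the entry-step case, which is in fact vacuous, the outer hypothesis yields a body path from $\cvertacc$ back towards $\bvert$, which says nothing about body paths \emph{from} $\bvert$; you flag this ``delicate point'' but resolve it only by appeal to (v)/(vi), which in your ordering are proved afterwards --- a circular dependency. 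Since (ii) is routed through the same machinery, it inherits the gap. The repair is to drop the nested induction on $(\aLname,\text{norm})$ in favour of the single descending-level argument on a simple path sketched above; with $\bvert\loopsbackto\avert$ in hand, your derivations of (ii), (iv)--(vii) go through essentially as you describe.
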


\section{The completeness proof, anticipated}%
  \label{compl:proof}

After having introduced \LLEEcharts\
as our crucial auxiliary concept,
we now sketch the completeness proof. 
In doing so we need to anticipate four results that will be developed
in the next two sections:
    {\bf (C)} The bisimulation collapse of a \LLEEchart\ is again a \LLEEchart.
    {\bf (E)} From every \LLEEchart\ a provable solution can be extracted.
    {\bf (S)} All provable solutions of \LLEEcharts\ are provably equal.
    {\bf (P)} All provable solutions can be pulled back from the target to the source chart of a functional bisimulation.

Then completeness of $\BBP$ can be argued as follows.
Given two bisimilar star expressions $\astexpi{1}$ and $\astexpi{2}$, 
obtain their chart interpretations $\chartof{\astexpi{1}}$ and $\chartof{\astexpi{2}}$, which are \LLEEcharts\ due to Prop.~\ref{prop:lbl:chart:translation:is:LLEEw}.
By Prop.~\ref{prop:id:is:sol:chart:interpretation}, $\astexpi{1}$ and $\astexpi{2}$ are principal values 
of provable solutions of $\chartof{\astexpi{1}}$ and $\chartof{\astexpi{2}}$.
These charts have the same bisimulation collapse $\achart$.
By ({\bf C},~Thm.~\ref{thm:LEEshaped:collapse}), $\achart$ is again a \LLEEchart.
Use ({\bf E},~Prop.~\ref{prop:extracted:fun:is:solution}) to
build a provable solution $\sasol$ of $\achart$; let its principal value be $\astexp$. 
Apply ({\bf P},~Prop.~\ref{prop:transf:sol:via:funbisim}) to
transfer $\sasol$ backwards over the functional bisimulations 
to obtain provable solutions $\sasoli{1}$ and $\sasoli{2}$ of $\chartof{\astexpi{1}}$ and $\chartof{\astexpi{2}}$, respectively.
By construction, $\sasoli{1}$ and $\sasoli{2}$ have the same principal value $\astexp$ as $\sasol$.
Finally, by using ({\bf S}, Prop.~\ref{prop:extrsol:vs:solution}),
$\astexpi{1}$ and $\astexpi{2}$ are both provably equal to $\astexp$.
Hence, $\astexpi{1}\BBPeq e\BBPeq \astexpi{2}$.

In his completeness proof for regular expressions in formal language theory, 
Salomaa \cite{salo:1966} argued `upwards' from two equivalent regular expressions to a larger regular expression 
that can be homomorphically collapsed onto both of them.
In contrast, our proof approach forces us `downwards' to the bisimulation collapse, 
because in the opposite direction the property of being a \LLEEchart\ may be lost.

\begin{example}\label{ex:salomaa}
  The picture below highlights why 
                                   we cannot adopt Salomaa's proof strategy 
  of linking two \languageequivalent\ regular expressions
  via the product of the DFAs they represent.
  The bisimilar LLEE-charts~$\acharti{1}$ and $\acharti{2}$
  are interpretations of 
  $\stexpbit{(\stexpsum{\stexpprod{a}{(\stexpsum{a}{b})}}{b})}{\stexpzero}$ 
    and 
   $\stexpbit{(\stexpsum{\stexpprod{b}{(\stexpsum{a}{b})}}{a})}{\stexpzero}$, respectively
  (the indicated labelings~$\acharthati{1}$ and $\acharthati{2}$ are \LLEEwitnesses). 
  But their product $\acharti{12}$ is a not a \LLEEchart; it is of the form  of
  one the not expressible charts from Ex.~\ref{ex:not:expressible}.
  Yet their common bisimulation collapse $\acharti{0}$,
  the chart interpretation of $\stexpbit{(\stexpsum{a}{b})}{\stexpzero}$, is a \LLEEchart\ with \LLEEwitness~$\acharthati{0}$.%
  \vspace{-2.5ex}
  \begin{center}
  %
\begin{center}
\begin{tikzpicture}[scale=0.6,every node/.style={transform shape}]


\matrix[matrix of nodes,matrix anchor=center,row sep=1cm,every node/.style={draw,very thick,circle,minimum width=2.5pt,fill,inner sep=0pt,outer sep=2pt}](C_0) at (0,0) {
  \node(v_0_0){};
  \\
};  
\draw[->,out=240,in=180,distance=1.5cm,very thick
                                      ] (v_0_0) to node[below,pos=0.2,scale=1.25]{$\loopnsteplab{1}$} 
                                                   node[left,pos=0.5,scale=1.25]{$\aact$}  (v_0_0);                                                  
\draw[->,out=-60,in=0,distance=1.5cm,very thick
                                    ]   (v_0_0) to node[below,pos=0.2,scale=1.25]{$\loopnsteplab{1}$} 
                                                   node[right,pos=0.5,scale=1.25]{$\bact$} (v_0_0);

\path (v_0_0) ++ (-3.75cm,-1.75cm) node(anchor_1){};
\path (v_0_0) ++ (3.75cm,-1.5cm)  node(anchor_2){};
\path (v_0_0) ++ (0cm,-3.9cm)    node(anchor_3){};

\matrix[matrix of nodes,anchor=north, 
        row sep=1cm,column sep=1cm,every node/.style={draw,very thick,circle,minimum width=2.5pt,fill,inner sep=0pt,outer sep=2pt}] (C_1) at (anchor_1) {
                    &  \node(v_0_1){};
  \\
  \node(v_1_1){};   &  \node[draw=none,fill=none](dummy_11){};  & \node[draw=none,fill=none]{};
  \\
}; 
\calcLength(v_0_1,dummy_11){mylen}; 
\draw[<-,very thick,>=latex,chocolate](v_0_1) -- ++ (90:{0.425*\mylen pt});        
\draw[->,very thick
        ] (v_0_1) to node[below,pos=0.45,yshift={-0.1*\mylen pt},scale=1.25]{$\loopnsteplab{1}$} 
                     node[left,pos=0.45,scale=1.25]{$\aact$} (v_1_1);  
\draw[->,very thick,
         out=-60,in=0,distance=1.75cm] (v_0_1) to 
                                                  node[right,pos=0.5,scale=1.25]{$\bact$} (v_0_1); 
\path(v_0_1) ++ ({0.25*\mylen pt},{0.25*\mylen pt}) node[scale=1.5]{$\averti{1}$};                                                 
\path(v_1_1) ++ ({-0.05*\mylen pt},{-0.25*\mylen pt}) node[scale=1.5]{$\averti{2}$};                  
\draw[->,out=210,in=180,distance=1.8cm]  (v_1_1) to node[left,pos=0.5,scale=1.25]{$\aact$}  (v_0_1);
\draw[->,out=-45,in=270,distance=1.35cm] (v_1_1) to node[below,pos=0.3,scale=1.25]{$\bact$} (v_0_1);

\path (v_0_1) ++ ({-0.7*\mylen pt},{0.35*\mylen pt}) node[scale=1.9]{$\acharti{1},\hspace*{1pt}\acharthati{1}$};

\draw[<-,very thick,>=latex,chocolate](v_0_0) -- ++ (90:{0.425*\mylen pt});  
  
\path (v_0_0) ++ ({0.7*\mylen pt},{0.45*\mylen pt}) node[scale=1.9]{$\acharti{0},\hspace*{1pt}\acharthati{0}$};

\matrix[matrix of nodes,anchor=north,
        row sep=1cm,column sep=1cm,every node/.style={draw,very thick,circle,minimum width=2.5pt,fill,inner sep=0pt,outer sep=2pt}] (C_2) at (anchor_2) {
                                & \node(v_0_2){};
  \\
  \node[draw=none,fill=none]{}; & \node[draw=none,fill=none](dummy_12){};  & \node(v_1_2){};   
  \\
}; 
\draw[<-,very thick,>=latex,chocolate](v_0_2) -- ++ (90:{0.425*\mylen pt});     
\draw[->,very thick
        ] (v_0_2) to node[below,pos=0.45,yshift={-0.1*\mylen pt},scale=1.25]{$\loopnsteplab{1}$} 
                     node[pos=0.45,right,xshift={0.1*\mylen pt},scale=1.25]{$\bact$} (v_1_2);
\path(v_0_2) ++ ({-0.25*\mylen pt},{0.25*\mylen pt}) node[scale=1.5]{$\bverti{1}$};                  
\draw[->,very thick,
         out=240,in=180,distance=1.75cm] (v_0_2) to node[below,pos=0.3,scale=1.25]{$\loopnsteplab{1}$} 
                                                    node[left,pos=0.5,scale=1.25]{$\aact$} (v_0_2);
\path(v_1_2) ++ ({0.05*\mylen pt},{-0.25*\mylen pt}) node[scale=1.5]{$\bverti{2}$};  
\draw[->,out=-30,in=0,distance=1.8cm]  (v_1_2) to node[right,pos=0.5,scale=1.25]{$\bact$} (v_0_2);
\draw[->,out=225,in=270,distance=1.35cm] (v_1_2) to node[below,pos=0.3,scale=1.25]{$\aact$} (v_0_2);

\path (v_0_2) ++ ({0.7*\mylen pt},{0.35*\mylen pt}) node[scale=1.9]{$\acharti{2},\hspace*{1pt}\acharthati{2}$};

\matrix[matrix of nodes,anchor=north,
        row sep=1cm,column sep=0.5cm,every node/.style={draw,very thick,circle,minimum width=2.5pt,fill,inner sep=0pt,outer sep=2pt}] (C_12) at (anchor_3) {
                    &                     & \node(v_0_12){};  &                     &
  \\
  \node(v_1_12){};  &                     &                   &                     &  \node(v_2_12){};
  \\
};  
\draw[<-,very thick,>=latex,chocolate](v_0_12) -- ++ (90:{0.425*\mylen pt});

\path(v_0_12) ++ ({-0.575*\mylen pt},{0.25*\mylen pt}) node[scale=1.5]{$\pair{\averti{1}}{\bverti{1}}$};                    
\draw[->] (v_0_12) to node[above,scale=1.25]{$\aact$} (v_1_12){};
\draw[->] (v_0_12) to node[above,scale=1.25]{$\bact$} (v_2_12){};

\path(v_1_12) ++ ({-0.1*\mylen pt},{-0.35*\mylen pt}) node[scale=1.5]{$\pair{\averti{2}}{\bverti{1}}$};  
\draw[->,bend right,distance=0.8cm] (v_1_12) to node[above,yshift=-0.325ex,scale=1.25]{${\bact}$} (v_2_12); 
\draw[->,out=210,in=180,distance=1.8cm,] (v_1_12) to node[left,pos=0.5,scale=1.25]{$\aact$} (v_0_12);

\path(v_2_12) ++ ({-0.1*\mylen pt},{-0.35*\mylen pt}) node[scale=1.5]{$\pair{\averti{1}}{\bverti{2}}$};  
\draw[->,bend right,distance=0.8cm,shorten <=4pt,shorten >= 5pt] (v_2_12) to node[above,yshift=-0.325ex,scale=1.25]{${\aact}$} (v_1_12); 
\draw[->,out=-30,in=0,distance=1.8cm] (v_2_12) to node[right,pos=0.5,scale=1.25]{$\bact$} (v_0_12);

\path (v_0_12) ++ ({0.45*\mylen pt},{0.35*\mylen pt}) node[scale=1.9]{$\acharti{12}$};

\node[scale=1.45] (label_2_0) at ($(C_2.center)!0.475!(C_0.center)$) {\reflectbox{\rotatebox{35}{\LARGE $\boldsymbol{\sfunbisim}$}}};    
\node[scale=1.45] (label_1_0) at ($(C_1.center)!0.475!(C_0.center)$) {\rotatebox{35}{\LARGE $\boldsymbol{\sfunbisim}$}};
  
  
\node[scale=1.45,yshift={-0.08*\mylen pt},xshift={-0.05*\mylen pt}] (label_12_1) at ($(C_12.center)!0.625!(C_1.center)$) {\reflectbox{\rotatebox{35}{\LARGE $\boldsymbol{\sfunbisim}$}}}; 
\node[scale=1.45,yshift={-0.08*\mylen pt},xshift={0.05*\mylen pt}] (label_12_2) at ($(C_12.center)!0.625!(C_2.center)$) {\rotatebox{35}{\LARGE $\boldsymbol{\sfunbisim}$}};  
  
\end{tikzpicture}

\end{center}
%
  \end{center}
  In view of $\acharti{1} \convfunbisim \acharti{12} \funbisim \acharti{2}$ this also shows
  that \LLEEchart{s} are not closed under converse functional bisimilarity~$\sconvfunbisim$.
\end{example}

\section{Extraction of star expressions from,
         and transferral between, LLEE-charts}
  \label{extraction:transferral}        

In this section we develop the results {\bf (E)}, {\bf (S)}, and {\bf (P)} as mentioned in Sect.~\ref{compl:proof}.
We start with the statement {\bf (P)}.

\begin{proposition}[requires \BBP-axioms (B1), (B2), (B3)]\label{prop:transf:sol:via:funbisim}%
  Let $\sphifun \funin \vertsi{1} \rightarrow \vertsi{2}$ be a functional bisimulation between charts $\acharti{1}$ and $\acharti{2}$.
  If $\sasoli{2} \funin \vertsi{2}\setminus\setexp{\tick} \to \StExpsover{\actions}$ is a provable solution of $\acharti{2}$,
  then $\scompfuns{\sasoli{2}}{\sphifun} \funin \vertsi{1}\setminus\setexp{\tick} \to \StExpsover{\actions}$ is a provable solution of $\acharti{1}$
  with the same principal value~as~$\sasoli{2}$. 
\end{proposition}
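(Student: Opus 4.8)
The plan is to verify the defining equation of a provable solution (Def.~\ref{def:provable-solution}) for $\scompfuns{\sasoli{2}}{\sphifun}$ on $\acharti{1}$ directly at every vertex, by transporting along $\sphifun$ the equation that $\sasoli{2}$ satisfies at the $\sphifun$\nobreakdash-image vertex.

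First I would record the immediate consequences of the graph of $\sphifun$ being a bisimulation. Applying the termination clause to $\avert$ and $\phifun{\avert}$ forces $\phifun{\tick} = \tick$ and $\phifun{\avert} \neq \tick$ whenever $\avert \neq \tick$; hence $\scompfuns{\sasoli{2}}{\sphifun}$ is a well-defined map $\vertsi{1}\setminus\setexp{\tick} \to \StExpsover{\actions}$, and in particular $\asoli{2}{\phifun{\avert}}$ is defined for $\avert\in\vertsi{1}\setminus\setexp{\tick}$. The start clause gives $\phifun{\starti{1}} = \starti{2}$, which at the very end yields the statement about principal values: the principal value $\asoli{2}{\phifun{\starti{1}}}$ of $\scompfuns{\sasoli{2}}{\sphifun}$ equals the principal value $\asoli{2}{\starti{2}}$ of $\sasoli{2}$.

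The combinatorial core is that, for each $\averti{1}\in\vertsi{1}$, applying $\sphifun$ to the set of transitions out of $\averti{1}$ yields exactly the set of transitions out of $\phifun{\averti{1}}$: by the forth condition every $\averti{1} \lt{\aact} \avert$ induces a transition $\phifun{\averti{1}} \lt{\aact} \phifun{\avert}$ (functionality of $\sphifun$ pins down that the target is $\phifun{\avert}$), and by the back condition every transition out of $\phifun{\averti{1}}$ is of this shape. This map moreover sends a transition $\averti{1} \lt{\aact} \tick$ to $\phifun{\averti{1}} \lt{\aact} \tick$, and a transition $\averti{1} \lt{\bact} \bvert$ with $\bvert\neq\tick$ to $\phifun{\averti{1}} \lt{\bact} \phifun{\bvert}$ with $\phifun{\bvert}\neq\tick$. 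Consequently the summand that the solution equation for $\scompfuns{\sasoli{2}}{\sphifun}$ at $\averti{1}$ attaches to a transition out of $\averti{1}$ (that is, $\aact$, respectively $\stexpprod{\bact}{\asoli{2}{\phifun{\bvert}}}$) is \emph{syntactically identical} to the summand that the solution equation for $\sasoli{2}$ at $\phifun{\averti{1}}$ attaches to its image transition. Thus the right-hand side of the equation we need for $\scompfuns{\sasoli{2}}{\sphifun}$ at $\averti{1}$ arises, as a formal sum, from the right-hand side of the equation that $\sasoli{2}$ satisfies at $\phifun{\averti{1}}$ by a reordering of the $\tick$\nobreakdash-summands together with a reordering and possible duplication of the remaining summands --- duplication occurring exactly when $\sphifun$ identifies two transitions $\averti{1}\lt{\bact}\bvert$ and $\averti{1}\lt{\bact}\bvertacc$ with $\phifun{\bvert} = \phifun{\bvertacc}$. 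I would then invoke the routine fact, derivable from axioms (B1), (B2), (B3) alone, that a finite sum of star expressions is determined modulo \BBP\ by the underlying \emph{set} of its summands (the empty sum being $\stexpzero$), apply it under a $+$\nobreakdash-context via the congruence rules of equational logic, and conclude that the two right-hand sides are \BBP\nobreakdash-equal. Chaining this with $\asoli{2}{\phifun{\averti{1}}} \BBPeq (\text{the right-hand side of the equation }\sasoli{2}\text{ satisfies at }\phifun{\averti{1}})$ gives $\asoli{2}{\phifun{\averti{1}}} \BBPeq (\text{the desired right-hand side})$, which is precisely the solution equation for $\scompfuns{\sasoli{2}}{\sphifun}$ at $\averti{1}$. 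Running over all $\averti{1}\in\vertsi{1}\setminus\setexp{\tick}$, and adding the principal-value identity above, completes the argument.

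The only point that needs some care --- the main obstacle, such as it is --- is the bookkeeping forced by the fact that $\sphifun$ may identify distinct transitions out of $\averti{1}$: because of this, the right-hand side to be established for $\scompfuns{\sasoli{2}}{\sphifun}$ at $\averti{1}$ can genuinely carry repeated summands, even though the sums in Def.~\ref{def:provable-solution} formally range over transition \emph{sets}. Absorbing these repetitions is exactly the role of the idempotence axiom (B3) (with associativity (B2) and commutativity (B1) used to regroup and reorder), which is why these three axioms, and no others, are needed for the statement. Everything else is a direct unfolding of the definitions.
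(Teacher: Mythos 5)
Your proposal is correct and follows essentially the same route as the paper: the proof idea given there is precisely to transfer the solution condition at $\phifun{\averti{1}}$ back to $\averti{1}$ via the forth/back/termination/start clauses of the (functional) bisimulation, and then to absorb the resulting reordering and duplication of summands using (B1), (B2), (B3). Your explicit observation that functionality of $\sphifun$ makes the image of the transition set out of $\averti{1}$ coincide with the transition set out of $\phifun{\averti{1}}$, with (B3) needed exactly when $\sphifun$ identifies distinct transitions, is the intended content of that sketch.
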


\begin{proof}[Proof (Idea)]
  The bisimulation clauses make it possible to 
  demonstrate the condition for $\scompfuns{\sasoli{2}}{\sphifun}$ to be a provable solution of $\acharti{1}$ at $\bvert$
  by using the condition for the provable solution $\sasoli{2}$ of $\acharti{2}$ at $\phifun{\bvert}$,
  together with the axioms $(\commstexpsum)$, $(\assocstexpsum)$, $(\idempotstexpsum)$.
\end{proof}

We now turn to proving results {\bf (E)} and {\bf (S)} from Sect.~\ref{compl:proof}.
We show that from every chart~$\achart$ with \LLEEwitness~$\acharthat$
a provable solution $\sextrsolof{\acharthat}$ of $\achart$ can be extracted.
Intuitively, the extraction process follows 
a run of the \loopelimination\ procedure on $\achart$, guided
by the \LLEEwitness~$\acharthat$.
All loop subcharts that are generated by the \loopentrytransitions\ from a vertex $\avert$ are removed
in a row.%
  \footnote{We
  repeatedly pick vertices $\avert$ in the remaining \LLEEwitness\
    with entry step level $\enl{\avert}$ (see in the text below) minimal.} 
Extraction synthesizes a star expression $\astexpi{1}$ 
whose behavior captures the eliminated loop subcharts of $\avert$ and their previously eliminated inner loop subcharts,
and that will later be part of an iteration expression $\stexpbit{\astexpi{1}}{\astexpi{2}}$ in the solution value at $\avert$. 
This idea motivates an inside-out extraction process that works with partial solutions,
and eventually builds up a provable solution of $\achart$. 

In particular, we inductively define
`relative extracted solutions' $\extrsoluntilof{\acharthat}{\bvert}{\avert}$ 
for vertices $\avert$ and $\bvert$ where $\bvert$ is in a loop subchart $\indsubchartinat{\acharthat}{\avert,\aLname}$ at $\avert$, for some $\aLname\in\natplus$,
that is, $\avert \descendsinlooplto{\aLname} \bvert$.
Hereby $\extrsoluntilof{\acharthat}{\bvert}{\avert}$ captures the part of the behavior in $\achart$ from $\bvert$ 
until 
      $\avert$ is reached. 
Then we define the from~$\acharthat$ `extracted solution' $\extrsolof{\acharthat}{\avert}$ at $\avert$
by using the relative solutions $\extrsoluntilof{\acharthat}{\bverti{j}}{\avert}$
for all targets $\bverti{j}$ of \loopentrytransitions\ from $\avert$ to define the iteration part $\astexpi{1}$
of the extracted solution $\extrsolof{\acharthat}{\avert} = \stexpbit{\astexpi{1}}{\astexpi{2}}$ at $\avert$. 
We start with a preparation.

Let $\acharthat$ be a \LLEEwitness,
and let $\avert$ be a vertex of $\acharthat$. 
By the \emph{entry step level $\enl{\avert}$ of $\avert$}
we mean the maximum loop level of a \loopentrytransition\ in $\acharthat$ that departs from $\avert$,
or $0$ if no \loopentrytransition\ departs from $\avert$.  
By the \emph{body step norm $\bosn{\avert}$ of $\avert$}
we mean the maximal length of a \bodytransition\ path in $\achart$ from $\avert$
(\welldefined\ by Lem.~\ref{lem:loop:relations}, \ref{it:bo:terminating}).

\begin{lemma}\label{lem:def:extrsol}
  For all vertices $\avert,\bvert$ in a chart $\achart$ with \LLEEwitness~$\acharthat$
  it holds (for the concepts as defined with respect to~$\acharthat$):
  \begin{enumerate}[label=(\roman{*})]
    \item{}\label{it:bosn:lem:def:extrsol}
      $\avert \redi{\bodylab}  \bvert \Rightarrow \bosn{\avert} > \bosn{\bvert}$,
    \item{}\label{it:enl:lem:def:extrsol}
      $\avert \descendsinloopto \bvert \Rightarrow \enl{\avert} > \enl{\bvert}$.
  \end{enumerate}
\end{lemma}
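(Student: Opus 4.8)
The plan is to read off both inequalities directly from the definitions of $\bosn{\cdot}$ and $\enl{\cdot}$, using the well-foundedness of $\sredi{\bodylab}$ and the layeredness clause \ref{LLEEw:2}\ref{LLEEw:2b} of the \LLEEwitness~$\acharthat$. Neither part requires more than a short argument.

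For \ref{it:bosn:lem:def:extrsol} I would first recall that $\bosn{\cvert}$, the maximal length of a $\sredi{\bodylab}$-path in $\achart$ from $\cvert$, is finite and well-defined: by Lem.~\ref{lem:loop:relations},~\ref{it:bo:terminating} there is no infinite $\sredi{\bodylab}$-path, and since $\achart$ has finitely many vertices and transitions there are only finitely many $\sredi{\bodylab}$-paths from $\cvert$, so a longest one exists. Given $\avert \redi{\bodylab} \bvert$, I would take a $\sredi{\bodylab}$-path from $\bvert$ realising $\bosn{\bvert}$ and prepend the step $\avert \redi{\bodylab} \bvert$ to it, obtaining a $\sredi{\bodylab}$-path from $\avert$ of length $\bosn{\bvert}+1$; hence $\bosn{\avert} \geq \bosn{\bvert}+1 > \bosn{\bvert}$.

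For \ref{it:enl:lem:def:extrsol}, suppose $\avert \descendsinloopto \bvert$. By Def.~\ref{def:descendsinloopto} there is a loop level $\aLname \in \natplus$ with $\avert \descendsinlooplto{\aLname} \bvert$; equivalently, $\bvert$ is a vertex $\neq\avert$ of the loop chart $\indsubchartinat{\acharthat}{\avert,\aLname}$ that is generated by the $\sredi{\loopnsteplab{\aLname}}$ \entrytransitions\ at $\avert$. In particular some \loopentrytransition\ $\avert \redi{\loopnsteplab{\aLname}} \cvert$ departs from $\avert$, so $\pair{\avert}{\aLname} \in \entriesof{\acharthat}$ and, by definition of the entry step level, $\enl{\avert} \geq \aLname$. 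On the other hand, layeredness \ref{LLEEw:2}\ref{LLEEw:2b}, applied to $\pair{\avert}{\aLname}$ and the vertex $\bvert \neq \avert$ of $\indsubchartinat{\acharthat}{\avert,\aLname}$, forbids any \entrytransition\ $\sredi{\loopnsteplab{\bLname}}$ with $\bLname \geq \aLname$ from departing from $\bvert$; hence every \loopentrytransition\ departing from $\bvert$ has loop level $< \aLname$, and so $\enl{\bvert} < \aLname$ (this also covers the case that no \loopentrytransition\ departs from $\bvert$, where $\enl{\bvert} = 0 < \aLname$ since $\aLname \in \natplus$). Combining the two bounds gives $\enl{\avert} \geq \aLname > \enl{\bvert}$.

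I do not expect a genuine obstacle here; the only point to handle with care is the bookkeeping in \ref{it:enl:lem:def:extrsol} that $\avert$ itself is the source of the level-$\aLname$ \loopentrytransitions\ generating $\indsubchartinat{\acharthat}{\avert,\aLname}$ (which yields the lower bound $\enl{\avert} \geq \aLname$), whereas $\bvert$, being a vertex of that loop chart distinct from $\avert$, is constrained by layeredness to emit only strictly lower-level \loopentrytransitions\ (which yields $\enl{\bvert} < \aLname$).
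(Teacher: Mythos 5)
Your proof is correct and is the same direct argument from the definitions that the paper uses: part \ref{it:bosn:lem:def:extrsol} by prepending the step $\avert \redi{\bodylab} \bvert$ to a maximal body-step path from $\bvert$ (well-defined via Lem.~\ref{lem:loop:relations},~\ref{it:bo:terminating} and finiteness), and part \ref{it:enl:lem:def:extrsol} by combining the lower bound $\enl{\avert}\geq\aLname$ from the departing level-$\aLname$ \loopentrytransition\ with the upper bound $\enl{\bvert}<\aLname$ that layeredness \ref{LLEEw:2}\ref{LLEEw:2b} imposes on the vertex $\bvert\neq\avert$ of $\indsubchartinat{\acharthat}{\avert,\aLname}$. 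No gaps.
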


\begin{figure*}[t!]
\begin{center}
  \begin{equation*}
    \scalebox{0.92}{$
    \begin{aligned}[c]
      \extrsolof{\acharthat}{\averti{0}} 
        & \;\parbox[t]{\widthof{~~$\BBPeq$~~}}{~~$\defdby$}\,
      \stexpbit{\stexpzero}{(\stexpprod{\aact}{\extrsolof{\acharthat}{\averti{1}}})}
      \\ 
        & \;\parbox[t]{\widthof{~~$\BBPeq$~~}}{~~$\BBPeq$~~}\,
      \stexpprod{\aact}{\extrsolof{\acharthat}{\averti{1}}}
      \\ 
        & \;\parbox[t]{\widthof{~~$\BBPeq$~~}}{~~$\BBPeq$~~}\,
      \stexpprod{\aact}{\stexpbit{(\stexpsum{\stexpprod{\cact}{\aact}}
                                            {\stexpprod{\aact}{(\stexpsum{\bact}{\stexpprod{\bact}{\aact}})}})}
                                 {\stexpzero}}
      \\
      \extrsolof{\acharthat}{\averti{1}} 
        & \;\parbox[t]{\widthof{~~$\BBPeq$~~}}{~~$\defdby$}\,
      \stexpbit{(\stexpsum{\stexpprod{\cact}{\extrsoluntilof{\acharthat}{\averti{0}}{\averti{1}}}}
                          {\stexpprod{\aact}{\extrsoluntilof{\acharthat}{\averti{2}}{\averti{1}}})}}
               {\stexpzero}
      \\ 
        & \;\parbox[t]{\widthof{~~$\BBPeq$~~}}{~~$\BBPeq$~~}\,
      \stexpbit{(\stexpsum{\stexpprod{\cact}{\aact}}
                          {\stexpprod{\aact}{(\stexpsum{\bact}{\stexpprod{\bact}{\aact}})}})}
               {\stexpzero}
      \\
      \extrsoluntilof{\acharthat}{\averti{0}}{\averti{1}}
        & \;\parbox[t]{\widthof{~~$\BBPeq$~~}}{~~$\defdby$}\,
      \stexpbit{\stexpzero}{\aact}
      \\
        & \;\parbox[t]{\widthof{~~$\BBPeq$~~}}{~~$\BBPeq$~~}\,
      \aact
      \\
      \extrsoluntilof{\acharthat}{\averti{2}}{\averti{1}}
        & \;\parbox[t]{\widthof{~~$\BBPeq$~~}}{~~$\defdby$}\,
      \stexpbit{\stexpzero}{(\stexpsum{\bact}{\stexpprod{\bact}{\extrsoluntilof{\acharthat}{\averti{0}}{\averti{1}}}})}
      \\
        & \;\parbox[t]{\widthof{~~$\BBPeq$~~}}{~~$\BBPeq$~~}\,
      \stexpsum{\bact}{\stexpprod{\bact}{\aact}}
      \\
      \extrsolof{\acharthat}{\averti{2}}
        & \;\parbox[t]{\widthof{~~$\BBPeq$~~}}{~~$\defdby$}\,
      \stexpbit{\stexpzero}{(\stexpsum{\stexpprod{\bact}{\extrsolof{\acharthat}{\averti{1}}}}
                                      {\stexpprod{\bact}{\extrsolof{\acharthat}{\averti{0}}}})}
      \\
        & \;\parbox[t]{\widthof{~~$\BBPeq$~~}}{~~$\BBPeq$~~}\,
      \stexpsum{\stexpprod{\bact}{\extrsolof{\acharthat}{\averti{1}}}}
               {\stexpprod{\bact}{(\stexpprod{\aact}{\extrsolof{\acharthat}{\averti{1}}})}}
      \\
        & \;\parbox[t]{\widthof{~~$\BBPeq$~~}}{~~$\BBPeq$~~}\,
      \stexpprod{(\stexpsum{\bact}{\stexpprod{\bact}{\aact})}}
                {\extrsolof{\acharthat}{\averti{1}}}
      \\
        & \;\parbox[t]{\widthof{~~$\BBPeq$~~}}{~~$\BBPeq$~~}\,
      \stexpprod{(\stexpsum{\bact}{\stexpprod{\bact}{\aact})}}
                {(\stexpbit{(\stexpsum{\stexpprod{\cact}{\aact}}
                                      {\stexpprod{\aact}{(\stexpsum{\bact}{\stexpprod{\bact}{\aact}})}})}
                           {\stexpzero})}
    \end{aligned}$}
    \hspace*{-9ex}
    \begin{aligned}[c]
      \\[-4ex]
      \scalebox{1.1}{\begin{tikzpicture}[scale=1,every node/.style={transform shape}]

\hspace*{2mm}
\matrix[anchor=north,row sep=0.9cm,every node/.style={draw,very thick,circle,minimum width=2.5pt,fill,inner sep=0pt,outer sep=2pt}] {
  \node(v_0){};
  \\
  \node(v_1){};
  \\
  \node(v_2){};
  \\
};
\calcLength(v_0,v_1){mylen}
%
\draw[<-,very thick,>=latex,chocolate](v_0) -- ++ (90:{0.5*\mylen pt});
\path(v_0) ++ ({0.3*\mylen pt},{0.25*\mylen pt}) node{$\averti{0}$};
\draw[->](v_0) to node[right,xshift={-0.05*\mylen pt},pos=0.45]{\small $\aact$} (v_1); 
\path(v_1) ++ ({0.325*\mylen pt},0cm) node{$\averti{1}$};
\draw[->,thick](v_1) to node[right,xshift={-0.05*\mylen pt},pos=0.45]{$\loopnsteplab{1}$}
                        node[left,xshift={0.05*\mylen pt},pos=0.45]{\small $\aact$} (v_2);
\draw[->,thick,shorten <= 5pt](v_1) to[out=175,in=180,distance={0.75*\mylen pt}] 
         node
             [left]{$\loopnsteplab{1}$} 
         node[above,yshift={0.05*\mylen pt},pos=0.7]{\small $\cact$} (v_0);
\path(v_2) ++ (-0cm,{-0.275*\mylen pt}) node{$\averti{2}$};
\draw[->](v_2) to[out=180,in=185,distance={0.75*\mylen pt}]  
               node[below,yshift={0.0*\mylen pt},pos=0.2]{\small $\bact$} (v_1);
\draw[->](v_2) to[out=0,in=0,distance={1.3*\mylen pt}] 
               node[below,yshift={0.00*\mylen pt},pos=0.125]{\small $\bact$} (v_0);
\path (v_0) ++ ({0*\mylen pt},{0.95*\mylen pt}) node{\large $\achart,\,\acharthat$};
%
\end{tikzpicture}}
    \end{aligned}
    \hspace*{0ex}
    \scalebox{0.92}{$
    \begin{aligned}[c]
      \asol{\averti{0}} 
        & \;\parbox[t]{\widthof{~$\eqinsol{\BBP}$~}}{~$\eqinsol{\BBP}$~}\;
      \stexpprod{\aact}{\asol{\averti{1}}}
        \;\;\;\parbox[t]{\widthof{(use of `is provable solution')}}
                        {(${}^{\text{\scriptsize (sol)}}$ means 
                         \\[-0.5ex]\phantom{(}%
                         use of `is provable solution')}
      \\[0.75ex]
      \asol{\averti{1}}   
        & \;\parbox[t]{\widthof{~$\eqinsol{\BBP}$~}}{~$\eqinsol{\BBP}$~}\;
      \stexpsum{\stexpprod{\cact}{\asol{\averti{0}}}}
               {\stexpprod{\aact}{\asol{\averti{2}}}} 
      \\[-0.25ex] 
        & \;\parbox[t]{\widthof{~$\eqinsol{\BBP}$~}}{~$\eqinsol{\BBP}$~}\;
      \stexpsum{\stexpprod{\cact}{(\stexpprod{\aact}{\asol{\averti{1}}})}}
               {\stexpprod{\aact}{(\stexpsum{\stexpprod{\bact}{\asol{\averti{1}}}}
                                            {\stexpprod{\bact}{\asol{\averti{0}}}})}} 
      \\[-0.25ex]
        & \;\parbox[t]{\widthof{~$\eqinsol{\BBP}$~}}{~$\eqinsol{\BBP}$~}\;
      \stexpsum{\stexpprod{\cact}{(\stexpprod{\aact}{\asol{\averti{1}}})}}
               {\stexpprod{\aact}{(\stexpsum{\stexpprod{\bact}{\asol{\averti{1}}}}
                                            {\stexpprod{\bact}{(\stexpprod{\aact}{\asol{\averti{1}}})}})}}
      \\
        & \;\parbox[t]{\widthof{~$\eqinsol{\BBP}$~}}{~$\eqin{\BBP}$~}\;
      \stexpsum{\stexpprod{(\stexpsum{\stexpprod{\cact}{\aact}}
                                     {\stexpprod{\aact}  
                                                {(\stexpsum{\bact}{\stexpprod{\bact}{\aact}})}})}
                          {\asol{\averti{1}}}}
               {\stexpzero}           
      \\[-0.25ex]
        & \;\parbox[t]{\widthof{~$\eqinsol{\BBP}$~}}{~~~~~~~$\Downarrow$\text{~~applying $\RSPbit$}} 
      \\[-0.25ex]  
      \asol{\averti{1}} 
        & \;\parbox[t]{\widthof{~$\eqinsol{\BBP}$~}}{~$\seqin{\BBP}$~}\;
      \stexpbit{(\stexpsum{\stexpprod{\cact}{\aact}}
                          {\stexpprod{\aact}
                                     {(\stexpsum{\bact}{\stexpprod{\bact}{\aact}})}})}
               {\stexpzero} 
      \\
        & \;\parbox[t]{\widthof{~$\eqinsol{\BBP}$~}}{~$\eqin{\BBP}$~}\;
      \extrsolof{\acharthat}{\averti{1}}  \quad \text{(see in the derivation on the left)}
      \\[-0.25ex]
        & \;\parbox[t]{\widthof{~$\eqinsol{\BBP}$~}}{~~~~~~~$\Downarrow$\phantom{\text{~~applying $\RSPbit$}}} 
      \\[-0.25ex]
      \asol{\averti{0}}
        & \;\parbox[t]{\widthof{~$\eqinsol{\BBP}$~}}{~$\eqinsol{\BBP}$~}\;
      \stexpprod{\aact}{\asol{\averti{1}}}
        ~~\seqin{\BBP}~~
      \stexpprod{\aact}{\extrsolof{\acharthat}{\averti{1}}}
        ~~\seqinsol{\BBP}~~
      \extrsolof{\acharthat}{\averti{0}}
      \\[-0.25ex]
        & \;\parbox[t]{\widthof{~$\eqinsol{\BBP}$~}}{~~~~~~~$\Downarrow$\phantom{\text{~~applying $\RSPbit$}}} 
      \\[-0.25ex]
      \asol{\averti{2}}
        & \;\parbox[t]{\widthof{~$\eqinsol{\BBP}$~}}{~$\eqinsol{\BBP}$~}\;
      \stexpsum{\stexpprod{\bact}{\asol{\averti{1}}}}
               {\stexpprod{\bact}{\asol{\averti{0}}}} 
      \\[-0.25ex]
        & \;\parbox[t]{\widthof{~$\eqinsol{\BBP}$~}}{~$\eqin{\BBP}$~}\;
      \stexpsum{\stexpprod{\bact}{\extrsolof{\acharthat}{\averti{1}}}}
               {\stexpprod{\bact}{\extrsolof{\acharthat}{\averti{0}}}} ~~\eqinsol{\BBP}~~ \extrsolof{\acharthat}{\averti{2}}
    \end{aligned}$}
  \end{equation*}     
  \vspace*{-1.75ex}
  \caption{\label{fig:ex:extr::sol:vs:extrsol}%
           Left:
           the process of extracting the provable solution~$\sextrsolof{\acharthat}$ of a chart $\achart$ from an \protect\LLEEwitness~$\acharthat$ of $\achart$
           as in the middle.\vspace*{-0.25mm} 
           Right: 
           steps for showing that an arbitrary provable solution~$\sasol$ of $\achart$ 
           is \protect\provablyin{\BBP} equal to the extracted solution $\sextrsolof{\acharthat}$.%
           }
\end{center}
\end{figure*}  %

\begin{definition}\label{def:extrsol}
  Let $\acharthat$ be a \LLEEwitness\ of a chart~$\achart$.
  Then the \emph{relative extraction function of $\acharthat$} is defined inductively as:
  \begin{align*}
    &
    \sextrsoluntilof{\acharthat} 
      \funin 
        \descsetexp{\pair{\bvert}{\avert}}{\avert,\bvert\in\verts\setminus\setexp{\tick},\avert\descendsinloopto\bvert} 
          \to
        \StExpsover{\actions} \punc{,}
    \\[-0.75ex]
    &    
    \begin{aligned}[b]
      \extrsoluntilof{\acharthat}{\bvert}{\avert} 
      \;\defdby\; 
        \Bigl( &
                \Bigl(
                  \stexpsum{
                    \Bigl(
                      \sum_{i=1}^{m}
                        \aacti{i}
                    \Bigr)
                             }{         
                    \Bigl(
                      \sum_{j=1}^{n}
                            \stexpprod{\bacti{i}}
                                      {\extrsoluntilof{\acharthat}{\bverti{j}}{\bvert}}
                    \Bigr)
                              }
                \Bigr)^{\sstexpbit}\hspace*{-1pt} 
        \\[-1ex]     
        & \Bigl(
                  \stexpsum{         
                    \Bigl(
                      \sum_{i=1}^{p}
                        \cacti{i}
                    \Bigr) \:
                            }{\:
                    \Bigl(
                      \sum_{j=1}^{q}
                        \stexpprod{\dacti{j}}
                                  {\extrsoluntilof{\acharthat}{\cverti{j}}{\avert}}
                    \Bigr)   }
                \Bigr)
        \Bigr) \punc{,}          
    \end{aligned}
  \end{align*}
  provided that $\bvert$ has \loopentrytransitions\vspace*{-1mm}  
  $\{ \bvert \lti{\aacti{i}}{\loopnsteplab{\aLnamei{i}}} \bvert   \mid  i=1,\ldots,m \}
     \cup
   \{\bvert \lti{\bacti{j}}{\loopnsteplab{\bLnamei{j}}} \bverti{j}   \mid  j = 1,\ldots,n \land \bverti{j}\neq\bvert \}$
  and\vspace*{-1mm} body transitions
  $\{ \bvert \lti{\cacti{i}}{\bodylab} \avert   \mid   i=1,\ldots,p \}
     \cup
   \{ \bvert \lti{\dacti{j}}{\bodylab} \cverti{j}   \mid  j=1,\ldots,q\land \cverti{j}\neq v \}$.
  Hereby the induction proceeds on $\pair{\enl{\avert}}{\bosn{\bvert}}$ with the lexicographic order $\sltlex$ on $\nat\times\nat$:
  For $\extrsoluntilof{\acharthat}{\bverti{j}}{\bvert}$ 
  we have $\pair{\enl{\bvert}}{\bosnnf{\bverti{j}}} \ltlex \pair{\enl{\avert}}{\bosn{\bvert}}$
  due to $\enlnf{\bverti{j}} < \enlnf{\avert}$,
  which follows from $\avert \descendsinloopto \bvert$ by Lem.~\ref{lem:def:extrsol}, \ref{it:enl:lem:def:extrsol}.
  For $\extrsoluntilof{\acharthat}{\cverti{j}}{\avert}$ 
  we have $\pair{\enl{\avert}}{\bosnnf{\cverti{j}}} \ltlex \pair{\enl{\avert}}{\bosn{\bvert}}$
  due to $\bosnnf{\cverti{j}} < \bosn{\bvert}$,
  which follows from $\bvert \redi{\bodylab} \cverti{j}$ by Lem~\ref{lem:def:extrsol}, \ref{it:bosn:lem:def:extrsol}.
  
  The \emph{extraction function of $\acharthat$} is defined by:
  \begin{align*}
    &
    \sextrsolof{\acharthat} 
      \funin 
        \verts\setminus\setexp{\tick}
          \to
        \StExpsover{\actions} \punc{,}
    \\[-0.75ex]
    &    
    \begin{aligned}[b]
      \extrsolof{\acharthat}{\bvert} 
      \;\defdby\; 
        \Bigl( &
                \Bigl(
                  \stexpsum{
                    \Bigl(
                      \sum_{i=1}^{m}
                        \aacti{i}
                    \Bigr)
                             }{         
                    \Bigl(
                      \sum_{j=1}^{n}
                            \stexpprod{\bacti{j}}
                                      {\extrsoluntilof{\acharthat}{\bverti{j}}{\bvert}}
                    \Bigr)
                              }
                \Bigr)^{\sstexpbit}\hspace*{-1pt} 
        \\[-1ex]     
        & \Bigl(
                  \stexpsum{         
                    \Bigl(
                      \sum_{i=1}^{p}
                        \cacti{i}
                    \Bigr) \:
                            }{\:
                    \Bigl(
                      \sum_{j=1}^{q}
                        \stexpprod{\dacti{j}}
                                  {\extrsolof{\acharthat}{\cverti{j}}}
                    \Bigr)   }
                \Bigr)
        \Bigr) \punc{,}          
    \end{aligned}
  \end{align*}
  with induction on $\bosn{\bvert}$,
  provided that $\bvert$ has \loopentrytransitions\  
  $\{ \bvert \lti{\aacti{i}}{\loopnsteplab{\aLnamei{i}}} \bvert   \mid  i=1,\ldots,m \}
     \cup
   \{\bvert \lti{\bacti{j}}{\loopnsteplab{\bLnamei{j}}} \bverti{j}   \mid  j = 1,\ldots,n \land \bverti{j}\neq\bvert \}$
  and body transitions
  $\{ \bvert \lti{\cacti{i}}{\bodylab} \tick   \mid   i=1,\ldots,p \}
     \cup
   \{ \bvert \lti{\dacti{j}}{\bodylab} \cverti{j}   \mid  j=1,\ldots,q\land \cverti{j}\neq \tick \}$.
  For $\extrsolof{\acharthat}{\cverti{j}}$ the induction hypothesis holds  
  due to $\bosn{\cverti{j}} < \bosn{\bvert}$, which follows from $\bvert \redi{\bodylab} \cverti{j}$ 
  by Lem.~\ref{lem:def:extrsol}, \ref{it:bosn:lem:def:extrsol}. 
\end{definition}

\begin{lemma}[uses the \BBP-axioms (B1)--(B6), (BKS2), but not the rule $\RSPbit\,$]\label{lem:prop:extracted:fun:is:solution}
  In a chart $\achart$ with \LLEEwitness\ $\acharthat$, if $\,\avert \descendsinloopto \bvert\,$,
  then $\,\extrsolof{\acharthat}{\bvert} 
            \eqin{\BBP}
          \stexpprod{\extrsoluntilof{\acharthat}{\bvert}{\avert}}
                    {\extrsolof{\acharthat}{\avert}}\,$.
\end{lemma}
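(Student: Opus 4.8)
The plan is to compare the two star expressions directly from their definitions in Def.~\ref{def:extrsol}, running an induction on the body step norm $\bosn{\bvert}$, with the vertex $\avert$ (and some $\aLname\in\natplus$ witnessing $\avert\descendsinlooplto{\aLname}\bvert$) held fixed along the induction. Note first that $\extrsoluntilof{\acharthat}{\bvert}{\avert}$ is well-defined by the hypothesis $\avert\descendsinloopto\bvert$. The starting observation is that the loop entry transitions from $\bvert$, and hence the iteration part $P$ of the expression $\stexpbit{P}{\,\cdot\,}$ that occurs both in $\extrsolof{\acharthat}{\bvert}$ and in $\extrsoluntilof{\acharthat}{\bvert}{\avert}$, do not depend on the second argument: in both cases $P$ is literally $(\sum_i \aacti{i}) + (\sum_j \stexpprod{\bacti{j}}{\extrsoluntilof{\acharthat}{\bverti{j}}{\bvert}})$, where the $\bverti{j}$ range over the targets $\neq\bvert$ of loop entry transitions from $\bvert$. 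So I would write $\extrsolof{\acharthat}{\bvert} = \stexpbit{P}{Q_1}$ and $\extrsoluntilof{\acharthat}{\bvert}{\avert} = \stexpbit{P}{Q_2}$, where $Q_1$ and $Q_2$ are the two ``exit'' parts built from the body transitions of $\bvert$. By (BKS2) and congruence,
\[
  \stexpprod{(\stexpbit{P}{Q_2})}{\extrsolof{\acharthat}{\avert}}
    \;\eqin{\BBP}\;
  \stexpbit{P}{(\stexpprod{Q_2}{\extrsolof{\acharthat}{\avert}})} ,
\]
so it suffices to prove $Q_1 \eqin{\BBP} \stexpprod{Q_2}{\extrsolof{\acharthat}{\avert}}$, and then conclude $\stexpbit{P}{Q_1} \eqin{\BBP} \stexpbit{P}{(\stexpprod{Q_2}{\extrsolof{\acharthat}{\avert}})}$ by congruence.

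For this remaining claim I would analyse the body transitions from $\bvert$. Since $\avert\descendsinlooplto{\aLname}\bvert$, the vertex $\bvert$ lies in the loop chart $\indsubchartinat{\acharthat}{\avert,\aLname}$, and every body transition $\bvert\redi{\bodylab}w$ lies again in that loop chart; in particular $w\neq\tick$, for otherwise $\tick$ would be a vertex of the loop chart, contradicting \ref{loop:3} (via \ref{LLEEw:2a}). Splitting the body transitions of $\bvert$ into those targeting $\avert$ and those targeting some $w\neq\avert$, the definitions give that $Q_1$ has a summand $\stexpprod{c}{\extrsolof{\acharthat}{\avert}}$ for each body transition $\bvert\lti{c}{\bodylab}\avert$ and a summand $\stexpprod{d}{\extrsolof{\acharthat}{w}}$ for each body transition $\bvert\lti{d}{\bodylab}w$ with $w\neq\avert$, while $Q_2$ has a summand $c$ for each $\bvert\lti{c}{\bodylab}\avert$ and a summand $\stexpprod{d}{\extrsoluntilof{\acharthat}{w}{\avert}}$ for each $\bvert\lti{d}{\bodylab}w$ with $w\neq\avert$. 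Using right distributivity (B4) and associativity of product (B5) --- together with (B6), (B7) to cover the degenerate case that $\bvert$ has no body transitions at all --- one obtains that $\stexpprod{Q_2}{\extrsolof{\acharthat}{\avert}}$ is \provablyin{\BBP} equal to the expression that is like $Q_1$ except that each summand $\stexpprod{d}{\extrsolof{\acharthat}{w}}$ (for $\bvert\lti{d}{\bodylab}w$, $w\neq\avert$) has been replaced by $\stexpprod{d}{(\stexpprod{\extrsoluntilof{\acharthat}{w}{\avert}}{\extrsolof{\acharthat}{\avert}})}$.

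It then remains to discharge these replacements by the induction hypothesis. For each body transition $\bvert\redi{\bodylab}w$ with $w\neq\avert$: appending this step to a path witnessing $\avert\descendsinlooplto{\aLname}\bvert$ (the new last step still avoids $\avert$ as target, since $w\neq\avert$) shows $\avert\descendsinlooplto{\aLname}w$, hence $\avert\descendsinloopto w$; and $\bosn{w}<\bosn{\bvert}$ by Lem.~\ref{lem:def:extrsol},~\ref{it:bosn:lem:def:extrsol}. So the induction hypothesis applies to the pair $(\avert,w)$ and yields $\stexpprod{\extrsoluntilof{\acharthat}{w}{\avert}}{\extrsolof{\acharthat}{\avert}} \eqin{\BBP} \extrsolof{\acharthat}{w}$. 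Substituting these (by congruence) turns $\stexpprod{Q_2}{\extrsolof{\acharthat}{\avert}}$ back into $Q_1$, which proves $Q_1\eqin{\BBP}\stexpprod{Q_2}{\extrsolof{\acharthat}{\avert}}$ and, with the reduction above, the lemma.

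I do not expect a deep obstacle: the fixed-point rule $\RSPbit$ is not used, and the argument is essentially bookkeeping. The one point that needs care is matching the two defining expressions exactly --- in particular recognizing that the iteration parts $P$ coincide verbatim (so that (BKS2) is the only Kleene-star manipulation needed) and that no body transition of $\bvert$ can reach $\tick$ (so that $Q_1$ really has the stated shape) --- after which a single round of distributivity together with the induction hypothesis closes the gap.
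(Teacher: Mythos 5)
Your proof is correct and follows essentially the same route as the paper's: induction on the body step norm $\bosn{\bvert}$, observing that the iteration parts of $\extrsolof{\acharthat}{\bvert}$ and $\extrsoluntilof{\acharthat}{\bvert}{\avert}$ coincide verbatim, pushing the product inside with (BKS2), and then closing the remaining gap on the exit parts via distributivity, associativity, and the induction hypothesis applied to the body-successors of $\bvert$ (using, correctly, that none of these is $\tick$ and that each is again descended-to in the same loop). The only cosmetic discrepancy is that your degenerate case (a vertex $\bvert$ without body transitions) invokes (B7), which the lemma's bracketed annotation does not list; that is a corner case the annotation glosses over, not a flaw in your argument.
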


\begin{proposition}[uses the \BBP-axioms (B1)--(B6), (BKS1), (BKS2), but not the rule $\RSPbit\,$]\label{prop:extracted:fun:is:solution}
  For every \LLEEwitness~$\acharthat$ of a chart $\achart$,
  the extraction function $\sextrsolof{\acharthat}$ 
  is a provable solution~of~$\achart$.
\end{proposition}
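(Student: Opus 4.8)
The plan is to verify the defining equation of Definition~\ref{def:provable-solution} for $\sextrsolof{\acharthat}$ at each vertex $\avert\in\verts\setminus\setexp{\tick}$ separately. No fresh induction is needed: the inductive content already sits inside Definition~\ref{def:extrsol} (whose well\-foundedness rests on Lemma~\ref{lem:def:extrsol}) and inside Lemma~\ref{lem:prop:extracted:fun:is:solution}, which we may invoke as a black box.

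First I would fix $\avert$ and classify its outgoing transitions by the entry/body labeling of $\acharthat$: the \loopentrytransitions\ split into self-loops $\avert\lti{\aacti{i}}{\loopnsteplab{\aLnamei{i}}}\avert$ $(i=1,\dots,m)$ and transitions $\avert\lti{\bacti{j}}{\loopnsteplab{\bLnamei{j}}}\bverti{j}$ $(j=1,\dots,n)$ with $\bverti{j}\neq\avert$, and the \bodytransitions\ split into $\avert\lti{\cacti{i}}{\bodylab}\tick$ $(i=1,\dots,p)$ and $\avert\lti{\dacti{j}}{\bodylab}\cverti{j}$ $(j=1,\dots,q)$ with $\cverti{j}\neq\tick$; together these exhaust all transitions from $\avert$ in $\achart$. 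With this notation $\extrsolof{\acharthat}{\avert}=\stexpbit{e_1}{e_2}$ where $e_1=(\sum_i\aacti{i})+(\sum_j\stexpprod{\bacti{j}}{\extrsoluntilof{\acharthat}{\bverti{j}}{\avert}})$ and $e_2=(\sum_i\cacti{i})+(\sum_j\stexpprod{\dacti{j}}{\extrsolof{\acharthat}{\cverti{j}}})$. Then I would unfold the star once with (BKS1), giving $\extrsolof{\acharthat}{\avert}\BBPeq\stexpprod{e_1}{\extrsolof{\acharthat}{\avert}}+e_2$, distribute $\stexpprod{e_1}{\extrsolof{\acharthat}{\avert}}$ over the sum with (B4), and use associativity (B5) to bring the $\bverti{j}$-summands into the shape $\stexpprod{\bacti{j}}{(\stexpprod{\extrsoluntilof{\acharthat}{\bverti{j}}{\avert}}{\extrsolof{\acharthat}{\avert}})}$.

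The decisive step is then to apply Lemma~\ref{lem:prop:extracted:fun:is:solution} to each such summand: since $\bverti{j}$ is the direct target of a \loopentrytransition\ of level $\bLnamei{j}$ from $\avert$ with $\bverti{j}\neq\avert$, property~\ref{LLEEw:2}\ref{LLEEw:2a} of the \LLEEwitness\ $\acharthat$ guarantees that $\indsubchartinat{\acharthat}{\avert,\bLnamei{j}}$ is a loop chart containing $\bverti{j}$, i.e. $\avert\descendsinloopto\bverti{j}$; hence $\stexpprod{\extrsoluntilof{\acharthat}{\bverti{j}}{\avert}}{\extrsolof{\acharthat}{\avert}}\BBPeq\extrsolof{\acharthat}{\bverti{j}}$. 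After this substitution the right-hand side has become $\sum_i\stexpprod{\aacti{i}}{\extrsolof{\acharthat}{\avert}}+\sum_j\stexpprod{\bacti{j}}{\extrsolof{\acharthat}{\bverti{j}}}+\sum_i\cacti{i}+\sum_j\stexpprod{\dacti{j}}{\extrsolof{\acharthat}{\cverti{j}}}$, and a rearrangement with (B1)--(B3) and (B6) puts it in exactly the form $(\sum\text{atomic})+(\sum\text{product}\cdot\sextrsolof{\acharthat})$ demanded by Definition~\ref{def:provable-solution}: the transitions to $\tick$ supply the atomic summands, and the self-loops, the $\bverti{j}$-transitions, and the $\cverti{j}$-transitions supply the product summands with the correct targets. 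The one special case is a deadlocking $\avert$ with no \loopentrytransitions, so $e_1=\stexpzero$; then (BKS1) followed by (B7) ($\stexpprod{\stexpzero}{x}=\stexpzero$) and (B6) collapses $\extrsolof{\acharthat}{\avert}=\stexpbit{\stexpzero}{e_2}$ to $e_2$, which is already in the required shape. (The axiom (BKS2) enters only through Lemma~\ref{lem:prop:extracted:fun:is:solution}.)

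I do not expect a genuine obstacle in this proposition: the delicate reasoning is Lemma~\ref{lem:prop:extracted:fun:is:solution} itself. The only points needing care are the transition bookkeeping --- checking that the four transition classes above partition the outgoing transitions of $\avert$ and that self-loops are counted exactly once --- and the small observation that every target $\bverti{j}$ of a \loopentrytransition\ from $\avert$ satisfies $\avert\descendsinloopto\bverti{j}$, which is what makes the Lemma applicable.
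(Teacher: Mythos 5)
Your proposal is correct and follows essentially the same route as the paper's (appendix) proof: unfold $\extrsolof{\acharthat}{\avert}=\stexpbit{e_1}{e_2}$ once with (BKS1), distribute and reassociate with (B4)/(B5), use the observation that $\avert\descendsinloopto\bverti{j}$ for every \loopentrytransition\ target $\bverti{j}\neq\avert$ so that Lemma~\ref{lem:prop:extracted:fun:is:solution} folds $\stexpprod{\extrsoluntilof{\acharthat}{\bverti{j}}{\avert}}{\extrsolof{\acharthat}{\avert}}$ into $\extrsolof{\acharthat}{\bverti{j}}$, and rearrange — with no fresh induction, exactly as the paper announces. Your handling of the degenerate case via $\stexpbit{\stexpzero}{e_2}\BBPeq e_2$ (which does need (B7), as the paper itself notes in Ex.~\ref{ex:prop:extracted:fun:is:solution}) is if anything more explicit than the proposition's own axiom annotation.
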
 

The proof of Lem.~\ref{lem:prop:extracted:fun:is:solution} 
proceeds by induction on $\bosn{\bvert}$; no induction is needed
for the proof of Prop.~\ref{prop:extracted:fun:is:solution} (cf. appendix).

\begin{example}\label{ex:prop:extracted:fun:is:solution}\mbox{}%
  Left in Fig.~\ref{fig:ex:extr::sol:vs:extrsol} we illustrate the extraction of a provable solution 
  for the \LLEEwitness~$\acharthat = \charthatof{\astexpi{0}}$ in Ex.~\ref{ex:entry-body-labeling}
  of the chart $\achart = \chartof{\astexpi{0}}$ in Ex.~\ref{ex:chart:interpretation}.
  In order to obtain the principal value $\extrsolof{\acharthat}{\averti{0}}$ of the extracted solution $\sextrsolof{\acharthat}$,
  its definition is expanded. It recurs on $\extrsolof{\acharthat}{\averti{1}}$,
  and then on $\extrsoluntilof{\acharthat}{\averti{0}}{\averti{1}}$ and $\extrsoluntilof{\acharthat}{\averti{2}}{\averti{1}}$.
  After computing those star expressions by using the definition of $\sextrsoluntilof{\acharthat}$, the principal value can be obtained
  by substitution. The star expressions $\extrsolof{\acharthat}{\averti{1}}$ and $\extrsolof{\acharthat}{\averti{2}}$ are obtained similarly.
  For readability we have simplified the arising terms on the way by using the equality $\stexpbit{\stexpzero}{\avar} \BBPeq \avar$
  (which follows by $(\commstexpsum)$, $(\neutralstexpsum)$, $(\stexpzerostexpprod)$, and (BKS1)).
\end{example}

\begin{lemma}[uses the \BBP-axioms (B1)--(B6), and the rule $\RSPbit\,$]\label{lem:prop:extrsol:vs:solution}
  If $\avert \descendsinloopto \bvert$,\vspace*{-0.5mm}
  then $\asol{\bvert} 
            \eqin{\BBP}
          \stexpprod{\extrsoluntilof{\acharthat}{\bvert}{\avert}}
                    {\asol{\avert}}$
  for every provable solution $\sasol$ of a chart~$\achart$ with \LLEEwitness~$\acharthat$.                   
\end{lemma}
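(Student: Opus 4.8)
The plan is to prove the claim by well-founded induction on the pair $\pair{\enl{\avert}}{\bosn{\bvert}}$ ordered lexicographically by $\sltlex$, quantifying over all provable solutions $\sasol$ of $\achart$ and all vertices $\avert,\bvert$ with $\avert\descendsinloopto\bvert$. The idea is to retrace the recursion of the relative extraction function $\sextrsoluntilof{\acharthat}$ in Def.~\ref{def:extrsol}, except that where extraction builds an iteration expression $\stexpbit{E}{F}$ we will instead invoke $\RSPbit$ to ``solve'' for $\asol{\bvert}$ in terms of $\asol{\avert}$. So suppose $\avert\descendsinlooplto{\aLname}\bvert$ for some $\aLname\in\natplus$; then $\bvert\neq\avert$ and, by the remark in Def.~\ref{def:descendsinloopto}, $\bvert$ is a vertex of the loop chart $\indsubchartinat{\acharthat}{\avert,\aLname}$. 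First I would note that $\bvert$ has no transition to $\tick$: a \bodytransition\ $\bvert\lt{c}\tick$ would put $\tick$ on a body-path continuing the path from $\avert$ and hence into $\indsubchartinat{\acharthat}{\avert,\aLname}$, and a \loopentrytransition\ $\bvert\redi{\loopnsteplab{\kappa}}\tick$ would put $\tick$ into $\indsubchartinat{\acharthat}{\bvert,\kappa}$; either way \ref{loop:3} is violated. Consequently the transitions from $\bvert$ split exactly as in Def.~\ref{def:extrsol}: self-loop \loopentrytransitions\ $\bvert\lti{\aacti{i}}{\loopnsteplab{\aLnamei{i}}}\bvert$ ($i\le m$), \loopentrytransitions\ $\bvert\lti{\bacti{j}}{\loopnsteplab{\bLnamei{j}}}\bverti{j}$ with $\bverti{j}\neq\bvert$ ($j\le n$), \bodytransitions\ $\bvert\lti{\cacti{i}}{\bodylab}\avert$ reaching $\avert$ ($i\le p$), and \bodytransitions\ $\bvert\lti{\dacti{j}}{\bodylab}\cverti{j}$ with $\cverti{j}\neq\avert$ ($j\le q$).

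Since $\sasol$ is a provable solution of $\achart$, the defining equation at $\bvert$, regrouped with $(\commstexpsum),(\assocstexpsum)$ and with $(\neutralstexpsum)$ absorbing the empty sum over transitions to $\tick$, reads
\begin{equation*}
  \asol{\bvert} \eqin{\BBP}
     \Bigl(\sum_{i=1}^{m}\stexpprod{\aacti{i}}{\asol{\bvert}}\Bigr)
   + \Bigl(\sum_{j=1}^{n}\stexpprod{\bacti{j}}{\asol{\bverti{j}}}\Bigr)
   + \Bigl(\sum_{i=1}^{p}\stexpprod{\cacti{i}}{\asol{\avert}}\Bigr)
   + \Bigl(\sum_{j=1}^{q}\stexpprod{\dacti{j}}{\asol{\cverti{j}}}\Bigr) \punc{.}
\end{equation*}
Next I would invoke the induction hypothesis twice. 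For $j\le n$ the single step $\bvert\redi{\loopnsteplab{\bLnamei{j}}}\bverti{j}$ with $\bverti{j}\neq\bvert$ shows $\bvert\descendsinloopto\bverti{j}$; since $\enl{\bvert}<\enl{\avert}$ by Lem.~\ref{lem:def:extrsol},~\ref{it:enl:lem:def:extrsol} applied to $\avert\descendsinloopto\bvert$, we get $\pair{\enl{\bvert}}{\bosn{\bverti{j}}}\ltlex\pair{\enl{\avert}}{\bosn{\bvert}}$, and the hypothesis yields $\asol{\bverti{j}}\eqin{\BBP}\stexpprod{\extrsoluntilof{\acharthat}{\bverti{j}}{\bvert}}{\asol{\bvert}}$. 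For $j\le q$, extending the $\descendsinlooplto{\aLname}$-path from $\avert$ to $\bvert$ by the \bodytransition\ $\bvert\redi{\bodylab}\cverti{j}$ with $\cverti{j}\neq\avert$ shows $\avert\descendsinloopto\cverti{j}$; since $\bosn{\cverti{j}}<\bosn{\bvert}$ by Lem.~\ref{lem:def:extrsol},~\ref{it:bosn:lem:def:extrsol}, we get $\pair{\enl{\avert}}{\bosn{\cverti{j}}}\ltlex\pair{\enl{\avert}}{\bosn{\bvert}}$, so $\asol{\cverti{j}}\eqin{\BBP}\stexpprod{\extrsoluntilof{\acharthat}{\cverti{j}}{\avert}}{\asol{\avert}}$. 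Substituting these equalities into the display, re-associating products with $(\assocstexpprod)$, and factoring $\asol{\bvert}$ and $\asol{\avert}$ out of the respective groups with right-distributivity $(\distr)$ (using $(\stexpzerostexpprod)$ in the degenerate case where a group is empty), I obtain $\asol{\bvert}\eqin{\BBP}\stexpprod{E}{\asol{\bvert}}+\stexpprod{F}{\asol{\avert}}$, where $E$ and $F$ are precisely the iteration part and the exit part of $\extrsoluntilof{\acharthat}{\bvert}{\avert}=\stexpbit{E}{F}$. Applying $\RSPbit$ with $\avar:=\asol{\bvert}$, $\bvar:=E$, $\cvar:=\stexpprod{F}{\asol{\avert}}$ gives $\asol{\bvert}\eqin{\BBP}\stexpbit{E}{(\stexpprod{F}{\asol{\avert}})}$, and a final step --- directly by (BKS2), or, staying closer to the quoted axioms, by checking with (BKS1), $(\distr)$, $(\assocstexpprod)$ that $\stexpprod{(\stexpbit{E}{F})}{\asol{\avert}}$ satisfies the very same equation and applying $\RSPbit$ once more --- identifies this with $\stexpprod{(\stexpbit{E}{F})}{\asol{\avert}}=\stexpprod{\extrsoluntilof{\acharthat}{\bvert}{\avert}}{\asol{\avert}}$.

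I expect the main obstacle to lie not in the equational bookkeeping but in seeing that this induction scheme is well founded, because of the asymmetry of the two recursive references: following a \loopentrytransition\ out of $\bvert$ moves the ``base point'' from $\avert$ down to $\bvert$ and strictly decreases $\enl$, whereas following a \bodytransition\ out of $\bvert$ keeps the base point $\avert$ fixed and strictly decreases $\bosn$. This asymmetry is exactly what layeredness of $\acharthat$ buys us through Lem.~\ref{lem:def:extrsol}, and it is the structural heart of the argument; once it is in place the rest is the routine equational manipulation sketched above together with the single essential use of $\RSPbit$. A secondary point needing a little care is the bookkeeping in the degenerate cases where $\bvert$ has no \loopentrytransitions\ (then $E\eqin{\BBP}\stexpzero$) or no \bodytransitions\ to vertices other than $\avert$; these are quickly dispatched but rely on the empty-sum conventions of Def.~\ref{def:StExps} and on $(\stexpzerostexpprod)$.
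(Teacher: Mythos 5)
Your proof is correct and is essentially the paper's own: the appendix proof likewise proceeds by lexicographic induction on $\pair{\enl{\avert}}{\bosn{\bvert}}$, mirroring the recursion of $\sextrsoluntilof{\acharthat}$ --- expand the solution condition at $\bvert$ (noting that $\bvert$ has no transition to $\tick$ by \ref{loop:3}), apply the induction hypothesis to the loop-entry successors $\bverti{j}$ (where the first component drops) and to the body successors $\cverti{j}$ (where the second component drops), factor with (B4) and (B5), and close with $\RSPbit$. Your only deviation is the point of bookkeeping you flag yourself: converting $\stexpbit{E}{(\stexpprod{F}{\asol{\avert}})}$ into $\stexpprod{(\stexpbit{E}{F})}{\asol{\avert}}$ genuinely requires (BKS2) (or (BKS1) together with a second application of $\RSPbit$, neither of which lies in (B1)--(B6) either), and the degenerate empty-sum cases need (B7), so the lemma's header listing only (B1)--(B6) and $\RSPbit$ slightly undercounts the axioms used --- an imprecision in the annotation rather than a gap in your argument.
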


\begin{proposition}[uses the \BBP-axioms (B1)--(B6), and the rule $\RSPbit\,$]\label{prop:extrsol:vs:solution}
  Let $\sasoli{1}$ and $\sasoli{2}$ be provable solutions of a \LLEEchart. 
  Then $\asoli{1}{\bvert} \BBPeq \asoli{2}{\bvert}$ for all vertices $\bvert\neq\tick$.
\end{proposition}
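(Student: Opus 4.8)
The plan is to prove the sharper statement that \emph{every} provable solution of a \LLEEchart\ agrees, modulo $\BBPeq$, with the solution extracted from a fixed witness. So let $\achart$ be a \LLEEchart\ and fix a \LLEEwitness~$\acharthat$ of $\achart$ (one exists since $\achart$ is a \LLEEchart). I claim:
\begin{equation*}
  \asol{\bvert} \BBPeq \extrsolof{\acharthat}{\bvert}
  \qquad\text{for every provable solution $\sasol$ of $\achart$ and every $\bvert\neq\tick$.}
\end{equation*}
Granting this, the proposition is immediate: applying the claim to $\sasoli{1}$ and to $\sasoli{2}$ gives $\asoli{1}{\bvert} \BBPeq \extrsolof{\acharthat}{\bvert} \BBPeq \asoli{2}{\bvert}$, hence $\asoli{1}{\bvert} \BBPeq \asoli{2}{\bvert}$ for all $\bvert\neq\tick$, by symmetry and transitivity of $\BBPeq$.

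To prove the claim I would induct on the body step norm $\bosn{\bvert}$ (well-defined by Lem.~\ref{lem:loop:relations},~\ref{it:bo:terminating}), following the recursion of Def.~\ref{def:extrsol}. Fix $\bvert\neq\tick$. Using the marking labels of $\acharthat$, the transitions departing from $\bvert$ split into self-loop \loopentrytransitions\ $\bvert\lt{\aacti{i}}\bvert$ ($i\le m$), \loopentrytransitions\ $\bvert\lt{\bacti{j}}\bverti{j}$ with $\bverti{j}\neq\bvert$ ($j\le n$), \bodytransitions\ to the sink $\bvert\lt{\cacti{i}}\tick$ ($i\le p$), and \bodytransitions\ $\bvert\lt{\dacti{j}}\cverti{j}$ with $\cverti{j}\neq\tick$ ($j\le q$) --- here $\cverti{j}\neq\bvert$ by Lem.~\ref{lem:loop:relations},~\ref{it:bo:terminating}, while no \loopentrytransition\ can reach $\tick$ since its target lies in a loop subchart, which contains no $\tick$ by \ref{LLEEw:2}\ref{LLEEw:2a}; so this partition is exactly the one used in Def.~\ref{def:extrsol}. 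The defining equation of the provable solution $\sasol$ at $\bvert$ reads
\begin{equation*}
  \asol{\bvert} \;\BBPeq\;
    \Bigl(\sum_{i=1}^{m}\stexpprod{\aacti{i}}{\asol{\bvert}}\Bigr)
  + \Bigl(\sum_{j=1}^{n}\stexpprod{\bacti{j}}{\asol{\bverti{j}}}\Bigr)
  + \Bigl(\sum_{i=1}^{p}\cacti{i}\Bigr)
  + \Bigl(\sum_{j=1}^{q}\stexpprod{\dacti{j}}{\asol{\cverti{j}}}\Bigr)\punc{.}
\end{equation*}
Each one-step path $\bvert\lt{\bacti{j}}\bverti{j}$ with $\bverti{j}\neq\bvert$ witnesses $\bvert\descendsinloopto\bverti{j}$, so Lem.~\ref{lem:prop:extrsol:vs:solution} gives $\asol{\bverti{j}}\BBPeq\stexpprod{\extrsoluntilof{\acharthat}{\bverti{j}}{\bvert}}{\asol{\bvert}}$; and $\bvert\redi{\bodylab}\cverti{j}$ gives $\bosn{\cverti{j}}<\bosn{\bvert}$ (Lem.~\ref{lem:def:extrsol},~\ref{it:bosn:lem:def:extrsol}), so by the induction hypothesis $\asol{\cverti{j}}\BBPeq\extrsolof{\acharthat}{\cverti{j}}$. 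Substituting these, re-bracketing products by $(\assocstexpprod)$, and collecting $\asol{\bvert}$ out to the right of the first two groups by repeated right-distributivity $(\distr)$ together with $(\commstexpsum)$, $(\assocstexpsum)$, $(\neutralstexpsum)$, turns the equation into $\asol{\bvert}\BBPeq\stexpprod{\bstexp}{\asol{\bvert}}+\cstexp$, where $\bstexp$ and $\cstexp$ are \provablein{\BBP} equal to the iteration part and the base part of $\extrsolof{\acharthat}{\bvert}$ as given in Def.~\ref{def:extrsol}. Replacing $\bstexp,\cstexp$ by those two literal subexpressions via congruence of $\BBPeq$, and applying $\RSPbit$ once, yields $\asol{\bvert}\BBPeq\extrsolof{\acharthat}{\bvert}$, closing the induction. (When $q=0$ this subsumes the base case $\bosn{\bvert}=0$, where no hypothesis is invoked.)

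The conceptual key --- and the step that needs the most care --- is the passage from the \emph{linear} ``is provable solution'' equation at $\bvert$ to a genuine fixed-point equation of the exact shape $\avar\BBPeq\stexpsum{\stexpprod{\bstexp}{\avar}}{\cstexp}$ demanded by the premise of $\RSPbit$. This hinges on Lem.~\ref{lem:prop:extrsol:vs:solution}, which ``closes the loop'' by re-expressing $\asol$ at each loop-body target $\bverti{j}$ as $\stexpprod{\extrsoluntilof{\acharthat}{\bverti{j}}{\bvert}}{\asol{\bvert}}$ modulo $\BBPeq$; the remainder is the careful but routine bookkeeping of the recursively built finite sums, together with the verification that the partition of transitions from $\bvert$ matches the one used to define $\extrsolof{\acharthat}{\bvert}$.
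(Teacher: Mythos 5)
Your proposal is correct and follows essentially the same route as the paper's proof: reduce to showing that every provable solution $\sasol$ satisfies $\asol{\bvert}\BBPeq\extrsolof{\acharthat}{\bvert}$ for a fixed \LLEEwitness~$\acharthat$, rewrite the solution condition at $\bvert$ using Lem.~\ref{lem:prop:extrsol:vs:solution} at the \loopentry\ targets so as to obtain a fixed-point equation of the shape required by $\RSPbit$, factor with (B1)--(B6), and apply $\RSPbit$. The one point where you are more explicit than the paper's Fig.~\ref{fig:prf:prop:extrsol:vs:sol} is the induction on $\bosn{\bvert}$, which is indeed needed to replace $\asol{\cverti{j}}$ by $\extrsolof{\acharthat}{\cverti{j}}$ in the base part after the $\RSPbit$ step (the paper performs this substitution silently); making it explicit, as you do, is the right call.
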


For the \emph{proof} of this proposition, see Fig.~\ref{fig:prf:prop:extrsol:vs:sol}.
The proof of Lem.~\ref{lem:prop:extrsol:vs:solution} (see in the appendix)
proceeds by the same induction measure as we used for the 
relative extraction function. 

\begin{figure*}
  \begin{minipage}{\textwidth}
    \begin{proof}[Proof (of Prop.~\ref{prop:extrsol:vs:solution})]
      Let $\acharthat$ be a \LLEEwitness\ of a chart~$\achart$. Let $\sasol$ be a provable solution of $\achart$.  
      We have to show that $\asol{\bvert} \BBPeq \extrsolof{\acharthat}{\bvert}$ for all $\bvert\neq\tick$.
      For this, let $\bvert \neq \tick$.   
      The derivation below \vspace*{-.4mm}is based on the set representation of transitions from $\bvert$ in $\acharthat$ 
      as formulated in the definition of $\extrsolof{\acharthat}{\bvert}$.
      The first derivation step uses that $\sasol$ is a provable solution of $\achart$ and axioms $(\commstexpsum)$, $(\assocstexpsum)$, and $(\idempotstexpsum)$,
      the second step uses Lem.~\ref{lem:prop:extrsol:vs:solution} in view of $\bvert \descendsinloopto \bverti{j}$ for $j=1,\ldots,n$, 
      and the third step uses axioms $(\distr)$, $(\assocstexpprod)$, and $(\neutralstexpsum)$.
      \begin{align*}
        \asol{\bvert}~
          & \;\parbox[t]{\widthof{$\eqin{\BBP}$}}{$\BBPeq$}\:
            \stexpsum{
              \Bigl(
                \stexpsum{
                  \Bigl(
                    \sum_{i=1}^{m}
                      \stexpprod{\aacti{i}}{\asol{\bvert}}
                  \Bigr)
                          }{         
                  \Bigl(
                    \sum_{j=1}^{n}
                      \stexpprod{\bacti{j}}
                                {\asol{\bverti{j}}}
                  \Bigr)
                            }   
              \Bigr)
                      }{
              \Bigr(           
                \stexpsum{          
                  \Bigl(
                    \sum_{i=1}^{p}
                      \cacti{i}
                  \Bigr) 
                          }{
                  \Bigl(
                    \sum_{j=1}^{q}
                      \stexpprod{\dacti{j}}{\asol{\cverti{j}}}
                  \Bigr)
                            }
              \Bigr)
                         }          
            \\
          & \;\parbox[t]{\widthof{$\eqin{\BBP}$}}{$\BBPeq$}\:
            \stexpsum{
              \Bigl(
                \stexpsum{
                  \Bigl(
                    \sum_{i=1}^{m}
                      \stexpprod{\aacti{i}}{\asol{\bvert}}
                  \Bigr)
                          }{         
                  \Bigl(
                    \sum_{j=1}^{n}
                      \stexpprod{\bacti{j}}
                                {\bigl(
                                   \stexpprod{\extrsoluntilof{\acharthat}{\bverti{j}}{\bvert}}
                                             {\asol{\bvert}}
                                 \bigr)}             
                  \Bigr)
                            }   
              \Bigr)
                      }{
              \Bigr(           
                \stexpsum{          
                  \Bigl(
                    \sum_{i=1}^{p}
                      \cacti{i}
                  \Bigr) 
                          }{
                  \Bigl(
                    \sum_{j=1}^{q}
                      \stexpprod{\dacti{j}}{\asol{\cverti{j}}}
                  \Bigr)
                            }
              \Bigr)
                         }          
            \\
          & \;\parbox[t]{\widthof{$\eqin{\BBP}$}}{$\BBPeq$}\:
            \stexpsum{
              \stexpprod{
                \Bigl(
                  \stexpsum{
                    \Bigl(
                      \sum_{i=1}^{m}
                        \aacti{i}
                    \Bigr)
                            }{         
                    \Bigl(
                      \sum_{j=1}^{n}
                          \bigl(
                            \stexpprod{\bacti{j}}
                                      {\extrsoluntilof{\acharthat}{\bverti{j}}{\bvert}}
                          \bigr)
                    \Bigr)
                              }   
                \Bigr)
                       }{\asol{\bvert}}
                      }{
              \Bigr(           
                \stexpsum{          
                  \Bigl(
                    \sum_{i=1}^{p}
                      \cacti{i}
                  \Bigr) 
                          }{
                  \Bigl(
                    \sum_{j=1}^{q}
                      \stexpprod{\dacti{j}}{\asol{\cverti{j}}}
                  \Bigr)
                            }
              \Bigr)
                         } 
      \end{align*}  
      In view of this derived provable equality for $\asol{\bvert}$,
      we can now apply the rule $\RSPbit$ in order to obtain:
      \begin{align*}
        \asol{\bvert}~
          & \;\parbox[t]{\widthof{$\eqin{\BBP}$}}{$\BBPeq$}\:
            \Bigl(
              \stexpsum{
                \Bigl(
                  \sum_{i=1}^{m}
                    \aacti{i}
                \Bigr)
                        }{         
                \Bigl(
                  \sum_{j=1}^{n}
                        \stexpprod{\bacti{j}}
                                  {\extrsoluntilof{\acharthat}{\bverti{j}}{\bvert}}
                \Bigr)
                          }
                \Bigr)^{\sstexpbit}\hspace*{-1pt} 
                \Bigl(
                  \stexpsum{         
                    \Bigl(
                      \sum_{i=1}^{p}
                        \cacti{i}
                    \Bigr) 
                            }{
                    \Bigl(
                      \sum_{j=1}^{q}
                        \stexpprod{\dacti{j}}
                                  {\extrsolof{\acharthat}{\cverti{j}}}
                    \Bigr)   }
                \Bigr)
          ~~\parbox[t]{\widthof{$\syntequal$}}{$\syntequal$}~~
            \extrsolof{\acharthat}{\bvert}
      \end{align*}
    In this last step we have used the definition of $\extrsolof{\acharthat}{\bvert}$.  
    \end{proof}
  \end{minipage}
  \caption{\label{fig:prf:prop:extrsol:vs:sol}%
           Proof of Prop.~\ref{prop:extrsol:vs:solution}.}
\end{figure*}  

\begin{example}\label{ex:prop:extrsol:vs:solution} \mbox{}
  In the right half of Fig.~\ref{fig:ex:extr::sol:vs:extrsol} we prove 
  that an arbitrary provable solution~$\sasol$ of \LLEEchart~$\achart = \chartof{\astexpi{0}}$ in Ex.~\ref{ex:chart:interpretation}
  with \LLEEwitness~$\acharthat = \charthatof{\astexpi{0}}$ in Ex.~\ref{ex:entry-body-labeling} 
  is provably equal to the extracted solution $\sextrsolof{\acharthat}$ of $\achart$. 
  Crucially, the defining conditions for $\sasol$ as a provable solution of $\achart$ are expanded along the loop at $\averti{1}$.
  The loop behavior obtained is the same as that which is used in the definition of $\extrsolof{\acharthat}{\averti{1}}$.
  By applying the fixed-point rule $\RSPbit$ 
  we can then deduce \provablein{\BBP} equality of $\asol{\averti{1}}$ and $\extrsolof{\acharthat}{\averti{1}}$.
  By using the solution conditions for $\sasol$ again, provable equality is then transferred to~$\averti{0}$~and~$\averti{1}$.
\end{example}

\section{Preservation of LLEE 
         under 
               collapse}%
  \label{collapse}         
\renewcommand{\ll}[1]{#1}%

In this section we establish the remaining result {\bf (C)} from Sect.~\ref{compl:proof} that is crucial for the completeness proof:
that the bisimulation collapse of a \LLEEchart\ is again a \LLEEchart. 

This result is achieved by a step-wise construction of a bisimulation collapse.
Pairs of bisimilar vertices $\bverti{1}$ and $\bverti{2}$ are collapsed one at a time,
whereby the incoming transitions of $\bverti{1}$ are redirected to $\bverti{2}$.
The crux is to take care, and to prove, that the resulting chart has again a \LLEEwitness.

\begin{definition}\label{def:connect-through}
  Let $\achart$ be a chart, with vertices $\bverti{1}$ and $\bverti{2}$.
  
  The \emph{connect-$\bverti{1}$-through-to-$\bverti{2}$ chart} $\connthroughin{\achart}{\bverti{1}}{\bverti{2}}$ 
  of $\achart$ 
  is obtained by redirecting all incoming transitions at $\bverti{1}$ over to $\bverti{2}$,
  and, if $\bverti{1}$ is the start vertex of $\achart$, making $\bverti{2}$ the new start vertex;
  in this way $\bverti{1}$ gets unreachable, and it is removed with other unreachable vertices
  to obtain a \startconnected\ chart.
  
  Let $\acharthat$ be an \entrybodylabeling\ of $\achart$.
  Then we define the \entrybodylabeling~$\connthroughin{\acharthat}{\bverti{1}}{\bverti{2}}$ of $\connthroughin{\achart}{\bverti{1}}{\bverti{2}}$ 
  as follows:\vspace{-0.5mm}
  every transition in $\connthroughin{\achart}{\bverti{1}}{\bverti{2}}$ that was already a transition~$\atrans$ in $\achart$
        inherits\vspace{-0.5mm} its marking label from $\atrans$ in $\acharthat$;
  and every transition in $\connthroughin{\achart}{\bverti{1}}{\bverti{2}}$ that arises as the redirection~$\atransi{\bverti{2}}$ to $\bverti{2}$ 
  of a transition $\atrans$ to $\bverti{1}$ in $\achart$
  such that $\atransi{\bverti{2}}$ does not coincide with a transition already in $\achart$ 
  inherits its marking label from $\atrans$ in $\acharthat$.  
  %
\end{definition}

\vspace*{-1.5ex}
\begin{lemma}\label{lem:connthroughchart:bisim}
  If $\bverti{1}\bisim \bverti{2}$ in $\achart$, then $\connthroughin{\achart}{\bverti{1}}{\bverti{2}}\bisim\achart$.
\end{lemma}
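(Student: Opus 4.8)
The plan is to exhibit an explicit bisimulation between $\connthroughin{\achart}{\bverti{1}}{\bverti{2}}$ and $\achart$, built from bisimilarity on $\achart$ itself. Write $\achart = \tuple{\verts,\tick,\start,\transs}$ and $\connthroughin{\achart}{\bverti{1}}{\bverti{2}} = \tuple{\verts',\tick,\start',\transs'}$. Recalling Def.~\ref{def:connect-through}, $\connthroughin{\achart}{\bverti{1}}{\bverti{2}}$ arises from $\achart$ by replacing every transition $\avert\lt{\aact}\bverti{1}$ with $\avert\lt{\aact}\bverti{2}$, keeping every other transition, and then discarding the vertices that have become unreachable; hence $\verts'\subseteq\verts\setminus\setexp{\bverti{1}}$, and $\bverti{2}\in\verts'$, since $\achart$ is \startconnected, so $\bverti{1}$ is reachable in $\achart$ and therefore $\bverti{2}$ is reachable in $\connthroughin{\achart}{\bverti{1}}{\bverti{2}}$. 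Let $\bisim$ denote bisimilarity on $\achart$, that is, the union of all bisimulations between $\achart$ and itself; as usual this is simultaneously a bisimulation and an equivalence relation, and by hypothesis it relates $\bverti{1}$ and $\bverti{2}$.

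I would then take the relation $\sabinrel \subseteq \verts'\times\verts$ defined by $\avert \abinrel \bvert \defLongleftrightarrow \avert\bisim\bvert$ (bisimilarity restricted to source vertices in $\verts'$), and verify the four clauses for $\sabinrel$ as a relation between $\connthroughin{\achart}{\bverti{1}}{\bverti{2}}$ and $\achart$. The clauses (\emph{start}) and (\emph{termination}) are immediate: if $\start\neq\bverti{1}$ then $\start' = \start$ and $\start\abinrel\start$ by reflexivity, while if $\start = \bverti{1}$ then $\start' = \bverti{2}$ and $\bverti{2}\abinrel\bverti{1} = \start$ by the hypothesis; and $\avert\abinrel\bvert$ gives $\avert\bisim\bvert$, whence $\avert = \tick$ iff $\bvert=\tick$.

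The core of the argument is a dictionary between transitions of $\achart$ and of $\connthroughin{\achart}{\bverti{1}}{\bverti{2}}$ at a vertex $\avert\in\verts'$: every transition $\avert\lt{\aact}\avertacc$ of $\connthroughin{\achart}{\bverti{1}}{\bverti{2}}$ originates from a transition $\avert\lt{\aact}\cvert$ of $\achart$ with $\avertacc\bisim\cvert$ — namely $\cvert = \avertacc$, unless the transition is a redirection, in which case $\avertacc = \bverti{2}$, $\cvert = \bverti{1}$, and $\bverti{2}\bisim\bverti{1}$; and conversely every transition $\avert\lt{\aact}\cvert$ of $\achart$ lifts to a transition $\avert\lt{\aact}\avertacc$ of $\connthroughin{\achart}{\bverti{1}}{\bverti{2}}$ with $\avertacc\bisim\cvert$ and $\avertacc\in\verts'$ — again $\avertacc = \cvert$, unless $\cvert = \bverti{1}$, in which case $\avertacc = \bverti{2}$; membership $\avertacc\in\verts'$ holds because $\avert$ is reachable in $\connthroughin{\achart}{\bverti{1}}{\bverti{2}}$, hence so is the target of any of its transitions. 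Granting this dictionary, clauses (\emph{forth}) and (\emph{back}) for $\sabinrel$ follow from (\emph{forth}), resp.\ (\emph{back}), for $\bisim$ on $\achart$ together with transitivity of $\bisim$: for (\emph{back}), from $\avert\abinrel\bvert$ and a transition $\bvert\lt{\aact}\bvertacc$ in $\achart$, use (\emph{back}) for $\bisim$ to obtain $\avert\lt{\aact}\cvert$ in $\achart$ with $\cvert\bisim\bvertacc$, lift it to $\avert\lt{\aact}\avertacc$ in $\connthroughin{\achart}{\bverti{1}}{\bverti{2}}$ with $\avertacc\bisim\cvert$, and conclude $\avertacc\bisim\bvertacc$, i.e.\ $\avertacc\abinrel\bvertacc$; clause (\emph{forth}) is symmetric.

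I do not expect a genuine obstacle; the only delicate point is the bookkeeping around the redirection. A possible concern is that a redirected transition may coincide with a transition already present in $\achart$ — exactly the case distinguished in Def.~\ref{def:connect-through} for the sake of marking labels — so that $\connthroughin{\achart}{\bverti{1}}{\bverti{2}}$ may carry strictly fewer transitions than the naive redirect of $\achart$. For bisimilarity of the underlying (unlabelled) charts this is harmless, since the clauses only require the \emph{existence} of matching transitions: the transition needed is still present, merely identified with another. The remaining care is to check that all targets occurring in the dictionary lie in $\verts'$ after garbage collection, which is a routine consequence of reachability of $\avert$ in $\connthroughin{\achart}{\bverti{1}}{\bverti{2}}$.
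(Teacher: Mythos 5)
Your proof is correct and is essentially the argument the paper relies on: you relate $\connthroughin{\achart}{\bverti{1}}{\bverti{2}}$ to $\achart$ via bisimilarity on $\achart$ itself (restricted to the surviving vertices), which is exactly the right move — the naive relation $\setexp{(\avert,\avert)}\cup\setexp{(\bverti{2},\bverti{1})}$ would not be closed under the forth/back clauses, and composing the redirection dictionary with $\sbisim$ and invoking transitivity fixes precisely that. The bookkeeping you flag (redirected transitions possibly coinciding with existing ones, and targets surviving garbage collection) is handled correctly, so there is nothing to add.
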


While the connect-through operation of bisimilar vertices in a chart thus results in a bisimilar chart,
its application to a \LLEEwitness\ (an \entrybodylabeling) does not need to 
yield a \LLEEwitness\ again: the property LEE may be lost.

 

\begin{example}\label{ex:trans-I}
  Consider the \LLEEwitness~$\acharthat$ in the middle below.
  The unspecified action labels are assumed to facilitate that $\bverti{1}$ and $\bverti{2}$ are bisimilar.
  Hence also $\bverthati{1}$ and $\bverthati{2}$ are bisimilar. 
  Bisimilarity is indicated by the broken lines.
  The \connectthroughchart{\bverti{1}}{\bverti{2}} on the left is not a LLEE-chart,\label{argument:counterex:transformation:I} 
  because it does not satisfy \LEE: after the loop subchart induced by the downwards transition from $\bverthati{2}$ is eliminated,
  and garbage collection is done, the remaining chart without the dotted transitions still has an infinite path;
  yet it does not contain another loop subchart, 
  because each infinite path can reach $\tick$ without returning to its source.
  An example of this is the red path from $\bverthati{1}$ via $\bverti{2}$ and $\bverthati{2}$ to $\tick$.
  In $\acharthat$, the bisimilar pair $\bverti{1},\bverti{2}$ progresses to the bisimilar pair $\bverthati{1},\bverthati{2}$. 
  The \connectthroughchart{\bverthati{1}}{\bverthati{2}} on the right 
  is a \LLEEchart, as witnessed by~the~\entrybodylabeling~$\connthroughin{\acharthat}{\bverthati{1}}{\bverthati{2}}$.
  \vspace*{-1.5ex}
  \begin{center}
    \scalebox{0.98}{
  %
%
%
\begin{tikzpicture}[scale=0.875]
\matrix[anchor=center,row sep=1cm,column sep=0.5cm,every node/.style={draw,very thick,circle,minimum width=2.5pt,fill,inner sep=0pt,outer sep=2pt}] at (0,0) {
  \node(0){};   & \node[draw=none,fill=none](root-anchor){};                    
                                       & \node(1){};
  \\
  \node(00){};  & \node[draw=none,fill=none](sink-anchor){};  
                                       & \node(10){};
  \\
  \node(000)[draw=none,fill=none]{}; 
                & \node[draw=none,fill=none](label-anchor){};  
                                       & \node(100){};
  \\
};
\calcLength(0,00){mylen};
\path (root-anchor) ++ (0cm,1cm) node[style={draw,very thick,circle,minimum width=2.5pt,fill,inner sep=0pt,outer sep=2pt}](root){};
\path (sink-anchor) ++ (0cm,0.25cm) node[style={draw,very thick,circle,minimum width=2.5pt,fill,inner sep=0pt,outer sep=2pt,red}](sink){};
\draw[thick,red] (sink) circle (0.12cm); 
%

%
%
\draw[<-,very thick,>=latex,chocolate](root) -- ++ (90:0.575cm);
\path (root) ++ (-0.15cm,1.25cm) node{\scalebox{1.45}{{$\connthroughin{\achart}{\bverti{1}}{\bverti{2}}$}}};
\draw[->] (root) to node[below,xshift=0.1cm]{} (0); 
\draw[->] (root) to node[below,xshift=-0.1cm]{} (1);
\path (0) ++ (-0.175cm,0.05cm) node[above]{$\bverthati{1}$};
\draw[->,red] (0) to node[left,pos=0.45,xshift=0.06cm]{
} (00);
\draw[->,shorten >= 2pt] (0) to (sink);
\draw[->,red] (00) to node[left,darkmagenta]{} (100);
\draw[->,distance=0.5cm,out=180,in=185] (00) to (0);
\path (1) ++ (0.15cm,0.05cm) node[above]{$\bverthati{2}$};
\draw[-{>[length=1mm,width=1.8mm]},thick,dotted] (1) to node[right,pos=0.45,xshift=-0.06cm]{
} (10);
\draw[->,shorten >= 2pt,red] (1) to (sink);
\path (sink) ++ (0cm,0.55cm) node{$\tick$};
\draw[-{>[length=1mm,width=1.8mm]},thick,dotted] (10) to (100);
\draw[-{>[length=1mm,width=1.8mm]},thick,dotted,distance=0.75cm,out=0,in=-5] (10) to (1);
\draw[->,distance=1.25cm,out=0,in=-5,red] (100) to (1);
\path (100) ++ (0cm,-0.3cm) node{$\bverti{2}$};

\draw[magenta,thick,densely dashed] (0) to (1);
\matrix[anchor=center,row sep=1cm,column sep=0.5cm,every node/.style={draw,very thick,circle,minimum width=2.5pt,fill,inner sep=0pt,outer sep=2pt}] at (4,0) {
  \node(C_0){};   & \node[draw=none,fill=none](C_root-anchor){};
                                       & \node(C_1){};
  \\
  \node(C_00){};  & \node[draw=none,fill=none](C_sink-anchor){};  
                                       & \node(C_10){};
  \\
  \node(C_000){}; & \node[draw=none,fill=none](C_label-anchor){};  
                                       & \node(C_100){};
  \\
};
\path (C_root-anchor) ++ (0cm,1cm) node[style={draw,very thick,circle,minimum width=2.5pt,fill,inner sep=0pt,outer sep=2pt}](C_root){};
\path (C_root) ++ (0cm,1.25cm) node{\scalebox{1.45}{{$\acharthat$}}};
\path (C_sink-anchor) ++ (0cm,0.25cm) node[style={draw,very thick,circle,minimum width=2.5pt,fill,inner sep=0pt,outer sep=2pt}](C_sink){};
\draw[thick] (C_sink) circle (0.12cm); 
\path (C_sink) ++ (0cm,0.55cm) node{$\tick$};
%
%
%
%
\draw[<-,very thick,>=latex,chocolate](C_root) -- ++ (90:0.575cm);
\draw[->] (C_root) to node[below,xshift=0.1cm]{} (C_0);
\draw[->] (C_root) to node[below,xshift=-0.1cm]{} (C_1);
\draw[->,thick] (C_0) to node[left,pos=0.45,xshift=0.1cm]{$\loopsteplabof{2}$} (C_00);
\draw[->,shorten >= 2pt] (C_0) to (C_sink);
\path (C_0) ++ (-0.09cm,0.05cm) node[above]{$\bverthati{1}$};
\draw[->] (C_00) to node[left]{} (C_000);
\draw[->,distance=0.75cm,out=180,in=185] (C_00) to (C_0);
\draw[->,distance=1.25cm,out=180,in=185] (C_000) to (C_0);
\path (C_000) ++ (0cm,-0.3cm) node{$\bverti{1}$};
\draw[->,thick] (C_1) to node[right,pos=0.45,xshift=-0.08cm]{$\loopsteplabof{1}$} (C_10);
\draw[->,shorten >= 2pt] (C_1) to (C_sink);
\path (C_1) ++ (0.15cm,0.05cm) node[above]{$\bverthati{2}$};
\draw[->] (C_10) to (C_100);
\draw[->,distance=0.75cm,out=0,in=-5] (C_10) to (C_1);
\draw[->,distance=1.25cm,out=0,in=-5] (C_100) to (C_1);
\path (C_100) ++ (0cm,-0.3cm) node{$\bverti{2}$};
\draw[magenta,thick,densely dashed] (C_0) to (C_1);
\draw[magenta,thick,densely dashed,bend right,distance=0.4cm,looseness=1] (C_000) to (C_100);

\matrix[anchor=center,row sep=1cm,column sep=0.5cm,every node/.style={draw,very thick,circle,minimum width=2.5pt,fill,inner sep=0pt,outer sep=2pt}] at (6.6,0) {
  \node[draw=none,fill=none](Cw1hatw2hat_0){};   & \node[draw=none,fill=none](Cw1hatw2hat_root-anchor){};                    
                                       & \node(Cw1hatw2hat_1){};
  \\
  \node[draw=none,fill=none](Cw1hatw2hat_00){};  & \node[draw=none,fill=none](Cw1hatw2hat_sink-anchor){};  
                                       & \node(Cw1hatw2hat_10){};
  \\
  \node[draw=none,fill=none](Cw1hatw2hat_000)[draw=none,fill=none]{}; 
                & \node[draw=none,fill=none](Cw1hatw2hat_label-anchor){};  
                                       & \node(Cw1hatw2hat_100){};
  \\
};
\path (Cw1hatw2hat_root-anchor) ++ (0cm,1cm) node[style={draw,very thick,circle,minimum width=2.5pt,fill,inner sep=0pt,outer sep=2pt}](Cw1hatw2hat_root){};
\path (Cw1hatw2hat_sink-anchor) ++ (0cm,0.25cm) node[style={draw,very thick,circle,minimum width=2.5pt,fill,inner sep=0pt,outer sep=2pt}](Cw1hatw2hat_sink){};
\draw[thick] (Cw1hatw2hat_sink) circle (0.12cm); 
\path (Cw1hatw2hat_sink) ++ (0cm,0.55cm) node{$\tick$};
%
%
%
\draw[<-,very thick,>=latex,chocolate](Cw1hatw2hat_root) -- ++ (90:0.575cm);
\draw[->] (Cw1hatw2hat_root) to node[below,xshift=-0.1cm]{} (Cw1hatw2hat_1);
\path (Cw1hatw2hat_root) ++ (0.65cm,1.15cm) node{\scalebox{1.45}{{$\connthroughin{\acharthat}{\bverthati{1}}{\bverthati{2}}$}}};
\draw[->,thick] (Cw1hatw2hat_1) to node[right,pos=0.45,xshift=-0.08cm]{$\loopsteplabof{1}$} (Cw1hatw2hat_10);
\draw[->,shorten >= 2pt] (Cw1hatw2hat_1) to (Cw1hatw2hat_sink);
\path (Cw1hatw2hat_1) ++ (0.15cm,0.05cm) node[above]{$\bverthati{2}$};
\draw[->] (Cw1hatw2hat_10) to (Cw1hatw2hat_100);
\draw[->,distance=0.75cm,out=0,in=-5] (Cw1hatw2hat_10) to (Cw1hatw2hat_1);
\draw[->,distance=1.25cm,out=0,in=-5] (Cw1hatw2hat_100) to (Cw1hatw2hat_1);
\path (Cw1hatw2hat_100) ++ (0cm,-0.3cm) node{$\bverti{2}$};

\draw[-implies,thick,double equal sign distance, bend right,distance={1.2*\mylen pt},
               shorten <= 0.8cm,shorten >= 0.8cm
               ] (C_root) to node[above,pos=0.5,yshift={0.075*\mylen pt}] {\scalebox{1.25}{$\connthroughin{\achart}{\bverti{1}}{\bverti{2}}\! \mapsfrom \achart$}}  
                                                                         (root) ;

\draw[-implies,thick,double equal sign distance, bend left,distance={0.8*\mylen pt},
               shorten <= 0.5cm,shorten >= 0.5cm
               ] (C_root) to node[above,pos=0.6,yshift={0.075*\mylen pt}]{\scalebox{1.25}{$(\text{\nf I})^{(\bverthati{1})}_{\bverthati{2}}$}} (Cw1hatw2hat_root) ;

\end{tikzpicture}\label{fig:counterex:ex:transformation:I}}
  \end{center}
  \vspace*{-1.5ex}
\end{example}

%

\noindent
This illustrates that bisimilar pairs of vertices must be selected carefully, to safeguard
that the connect-through construction preserves LLEE. 
The proposition below expresses that
a pair of \ul{distinct} bisimilar vertices can always be selected in one of three mutually exclusive categories.
Later, three LLEE-preserving transformations I, II, and III will be defined for each of these categories.

\begin{proposition}\label{prop:reduced:br}
  If a \LLEEchart~$\achart$ is not a bisimulation collapse,
  then it contains a pair of bisimilar vertices $w_1,w_2$ 
  that satisfy, for a \LLEEwitness\ of $\achart$, one of the conditions:
  \begin{enumerate}[label=(C\arabic*)$\:$]
    \item{}\label{cond:transf:I}
      $\lognot{(\bverti{2} \redrtc \bverti{1})} \logand (\descendsinloopto \bverti{1} \Rightarrow \text{$\bverti{2}$ is not normed}\,)$,
    \item{}\label{cond:transf:II}  
      $\bverti{2} \loopsbacktotc \bverti{1}$,
    \item{}\label{cond:transf:III}
      $\existsstzero{\avert\in\verts}
         \bigl(\,
           \bverti{1} \dloopsbackto \avert
             \logand 
           \bverti{2} \loopsbacktotc \avert
         \,\bigr)\logand\lognot(\bverti{2} \redrtci{\bodylab} \bverti{1})$.
  \end{enumerate}
\end{proposition}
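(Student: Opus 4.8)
Fix once and for all a \LLEEwitness~$\acharthat$ of $\achart$; the derived relations $\sredi{\bodylab}$, $\sdescendsinloopto$, $\sloopsbackto$, $\sdloopsbackto$ below are all taken with respect to~$\acharthat$. Since $\achart$ is not a bisimulation collapse, the collection $\afam$ of two‑element sets of \emph{distinct} bisimilar vertices of $\achart$ is nonempty. I would first record that \ref{cond:transf:I}, \ref{cond:transf:II}, \ref{cond:transf:III} are pairwise incompatible, so that it suffices to exhibit \emph{one} pair in $\afam$ meeting one of them: by the definitions, $\bvert\loopsbackto\avert$ entails $\bvert\redtci{\bodylab}\avert$ (hence $\bverti{2}\loopsbacktotc\bverti{1}$ entails $\bverti{2}\redtci{\bodylab}\bverti{1}$), and $\avert\descendsinloopto\bvert$ entails $\avert\redrtc\bvert$. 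Consequently each of \ref{cond:transf:II} and \ref{cond:transf:III} implies $\bverti{2}\redrtc\bverti{1}$, contradicting the first conjunct of \ref{cond:transf:I}, while \ref{cond:transf:II} moreover implies $\bverti{2}\redrtci{\bodylab}\bverti{1}$, contradicting the last conjunct of~\ref{cond:transf:III}.

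For the existence of a good pair I would split on whether some pair in $\afam$ straddles two strongly connected components. Suppose some $\{\bverti{1},\bverti{2}\}\in\afam$ has $\sccof{\bverti{1}}\neq\sccof{\bverti{2}}$. Since the two sccs are distinct, reachability $\redrtc$ cannot hold both ways, so after orienting the pair I may assume $\lognot(\bverti{2}\redrtc\bverti{1})$. For the second conjunct of \ref{cond:transf:I} I would argue: normedness is a bisimulation invariant, so it suffices to show that $\bverti{1}$ is not normed whenever some vertex descends in a loop to $\bverti{1}$; and by Lemma~\ref{lem:loop:relations}\ref{it:descendsinloopto:notloopsbackto:not:normed}, if $\bverti{1}$ is descended‑in‑loop‑to but $\bverti{1}$ does not loop back to any vertex, then $\bverti{1}$ is indeed not normed. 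The remaining possibility — $\bverti{1}$ descended‑in‑loop‑to, normed, and looping back — puts $\bverti{1}$ on a body cycle inside a loop; there I would slide the pair down that cycle to a bisimilar pair for which the obstruction has disappeared, a replacement that terminates because $\sredi{\bodylab}$ is well‑founded (Lemma~\ref{lem:loop:relations}\ref{it:bo:terminating}).

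In the complementary case every pair in $\afam$ lies inside a single scc. Take $\{\bverti{1},\bverti{2}\}\in\afam$. By Lemma~\ref{lem:loop:relations}\ref{it:loopsbackto:lo:successors} the $\sloopsbacktortc$‑successors of any vertex form a chain, so the map sending a vertex to its $\sloopsbackto$‑least successor is a partial ``parent'' function and $\sdloopsbacktortc$ is a forest; by Lemma~\ref{lem:loop:relations}\ref{it:bi:reachable:lLEEw} $\bverti{1}$ and $\bverti{2}$ have a common $\sloopsbacktortc$‑upper bound, hence by Lemma~\ref{lem:loop:relations}\ref{it:least-upper-bound} a least one. If $\bverti{1}$ and $\bverti{2}$ are $\sloopsbacktortc$‑comparable, orienting the pair yields \ref{cond:transf:II}. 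If they are incomparable, I would choose the pair as low as possible — e.g.\ with $\bverti{1}$ $\sloopsbacktortc$‑minimal among vertices having a distinct bisimilar partner in their own scc — and trace the parent paths of $\bverti{1}$ and $\bverti{2}$ up to their least common ancestor $\avert$; using Lemma~\ref{lem:loop:relations}\ref{it:direct-subordinates} on the two distinct children of $\avert$ through which these paths pass, the minimality should force $\bverti{1}\dloopsbackto\avert$ (after orienting), giving the loop‑part $\bverti{1}\dloopsbackto\avert\logand\bverti{2}\loopsbacktotc\avert$ of~\ref{cond:transf:III}. Its remaining conjunct $\lognot(\bverti{2}\redrtci{\bodylab}\bverti{1})$ I would again secure by a descent: if a body path from $\bverti{2}$ to $\bverti{1}$ exists, slide the pair along it to a bisimilar pair of strictly smaller body‑step norm, terminating by Lemma~\ref{lem:loop:relations}\ref{it:bo:terminating}.

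\textbf{Main obstacle.}
The coarse classification — straddling versus internal to an scc, comparable versus split — is routine once the forest structure of $\sdloopsbacktortc$ is in place. The real work, and the step I expect to be delicate, is the two descent arguments that must \emph{re‑select} a bisimilar pair whenever the fine side‑conditions fail (``$\bverti{2}$ not normed'' in \ref{cond:transf:I}, and $\lognot(\bverti{2}\redrtci{\bodylab}\bverti{1})$ in \ref{cond:transf:III}). One has to produce the replacement pair via the bisimulation back/forth clauses along the offending body path, verify that it stays in the same case, and exhibit a well‑founded measure (body‑step norm, or the lexicographic measure $\pair{\enl{\cdot}}{\bosn{\cdot}}$ used elsewhere in the paper) that strictly decreases. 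Nailing down this terminating re‑selection process is the crux of the proof.
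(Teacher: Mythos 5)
Your top-level case split (different sccs; same scc with the pair $\sloopsbacktortc$\nb-comparable; same scc incomparable) coincides with the paper's, and your first branch is in essence the paper's argument: descend along loop\nb-back transitions and invoke Lemma~\ref{lem:loop:relations},~\ref{it:descendsinloopto:notloopsbackto:not:normed} once $\bverti{1}$ no longer loops back. Two repairs are needed there, though. First, ``puts $\bverti{1}$ on a body cycle'' is wrong --- there are no body cycles by Lemma~\ref{lem:loop:relations},~\ref{it:bo:terminating}; what you have is a loop\nb-back path, i.e.\ $\lbsminn{\bverti{1}}>0$, and the induction is on that norm. Second, after one descent step $\bverti{1} \lbsred \bvertacci{1}$ matched by $\bverti{2} \red \bvertacci{2}$ you must verify that the invariant $\lognot{(\bvertacci{2} \redrtc \bvertacci{1})}$ survives; the paper gets this from $\bvertacci{1} \redrtc \bverti{1}$ (loop\nb-back transitions stay inside the scc), and without this check the induction does not close. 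These are recoverable omissions.

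The genuine gap is in the incomparable same\nb-scc branch, which is where the difficulty of the proposition lives. You choose $\bverti{1}$ $\sloopsbacktortc$\nb-minimal among vertices with a distinct bisimilar partner in their scc and assert that ``minimality should force $\bverti{1} \dloopsbackto \avert$'' for $\avert$ the least common ancestor. This does not follow: minimality says that no vertex strictly below $\bverti{1}$ has a bisimilar partner, whereas $\bverti{1} \dloopsbackto \avert$ says that $\bverti{1}$ is a \emph{direct} child of $\avert$; nothing excludes $\bverti{1} \loopsbacktotc \averti{1} \dloopsbackto \avert$ with a long chain between $\bverti{1}$ and $\averti{1}$ while no vertex below $\bverti{1}$ carries a partner. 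Moreover, you treat the two requirements of \ref{cond:transf:III} as independent repairs --- first secure $\bverti{1} \dloopsbackto \avert \logand \bverti{2} \loopsbacktotc \avert$, then separately ``slide'' to remove a body path $\bverti{2} \redrtci{\bodylab} \bverti{1}$ --- but each repair can destroy the other, and ``sliding along the body path from $\bverti{2}$ to $\bverti{1}$'' is not a bisimulation\nb-game move with an identified replacement pair or a preserved configuration. The paper instead runs a \emph{single} induction on $\lbsminn{\cverti{1}}$ that maintains the whole invariant $\cverti{1} \loopsbacktortc \averti{1} \dloopsbackto \avert \convdloopsbackto \averti{2} \convloopsbacktortc \cverti{2} \logand \lognot{(\averti{2} \redrtci{\bodylab} \averti{1})}$ at once; each step takes $\cverti{1} \lbsred \cvertacci{1}$, matches $\cverti{2} \red \cvertacci{2}$, and a case analysis shows that the new pair either satisfies the same invariant (possibly with $\averti{2}$ replaced by a new $\avertacci{2}$), or has escaped into the situation of \ref{cond:transf:I} (the sccs split) or \ref{cond:transf:II}. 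It is the termination of that descent at $\cverti{1} = \averti{1}$, not a one\nb-shot minimal choice, that produces $\bverti{1} \dloopsbackto \avert$ together with the body\nb-path condition. You flag this re\nb-selection process as the crux you have not nailed down; as sketched, the strategy for it would not work.
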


Condition~\ref{cond:transf:I} requires that $\bverti{1}$ and $\bverti{2}$ are in different \sccs, as there is no path from $\bverti{2}$ to $\bverti{1}$.
The additional proviso in \ref{cond:transf:I} constrains the pair in such a way that if both are normed, then $\bverti{1}$ must be outside of all loops
(otherwise the connect-$w_1$-through-to-$w_2$ operation does not preserve \LLEEcharts, see Ex.~\ref{ex:trans-I});
its asymmetric formulation helps to avoid the assumption of bisimilarity in Prop.~\ref{prop:LEEshape:preserve:conds} below. 
The two other conditions concern the situation that $\bverti{1}$ and $\bverti{2}$ are in the same \scc.
While in \ref{cond:transf:II}$\!$ $\bverti{1}$ and $\bverti{2}$ are comparable (but different) by the \txtloopsbackto\ relation $\sloopsbacktortc$,
they are incomparable in \ref{cond:transf:III}$\!$.
In the situation that $w_1,w_2$ loop back to the same vertex $v$, but $w_1$ directly loops back to $v$,
\ref{cond:transf:III}$\!$ also demands that no body step path exists from $w_2$ to $w_1$
(otherwise the connect-$w_1$-through-to-$w_2$ construction does not preserve \LLEEcharts, 
 see an example in the appendix).


In the proof of Prop.~\ref{prop:reduced:br}
we progress, from a given pair of distinct bisimilar vertices,
repeatedly via transitions, at one side picking loop-back transitions,
over pairs of distinct bisimilar vertices, until one of the conditions \ref{cond:transf:I}, \ref{cond:transf:II}, \ref{cond:transf:III} is met.
%
We will use a subset of the \bodytransitions\ in a \LLEEwitness.
By a \emph{loop-back transition}, written as $\cvert \lbsred \avert$, 
we mean a transition $\cvert \redi{\bodylab} \avert$ that stays within an \scc,
that is, $\sccof{\cvert}=\sccof{\avert}$.
The \emph{loops-back-to norm} $\lbsminn{\cvert}$ of $\cvert$ 
is the maximal length of a $\slbsred$ path from $\cvert$
(which is \welldefined\ by Lem.~\ref{lem:loop:relations}, \ref{it:bo:terminating} and chart finiteness). 
Note that $\lbsminn{\cvert}=0$ if and only if $\cvert$ does not loop back (denoted by $\lognot(\cvert \loopsbackto)$).

\begin{proof}
[Proof of Prop.~\ref{prop:reduced:br}]\label{prf:prop:reduced:br:start}
  We pick distinct bisimilar vertices $u_1,u_2$. First we consider the case \mbox{$\sccof{\cverti{1}} \neq \sccof{\cverti{2}}$}. 
  Without loss of generality, suppose $\lognot{( \cverti{2} \redrtc \cverti{1} )}$.
  We progress to a pair of vertices where \ref{cond:transf:I} holds, using induction on $\lbsminn{\cverti{1}}$.
  In the base case, $\lbsminn{\cverti{1}} = 0$,
  it suffices to show that it is not possible that both $\sdescendsinloopto\,\cverti{1}$ holds and $\cverti{2}$ is normed,
  because then we can define $w_1=u_1$ and $w_2=u_2$, and are done.
  Therefore suppose, toward a contradiction, that $\sdescendsinloopto\,\cverti{1}$ holds and $\cverti{2}$ is normed.
  Then $\cverti{1}$ is normed, too, since $\cverti{1}$ and $\cverti{2}$ are bisimilar.
  Also $\lognot{(\cverti{1}\,\sloopsbackto)}$ follows from $\lbsminn{\cverti{1}} = 0$, which says that there are no \txtloopsbackto\ steps from $\cverti{1}$.
  So we get that $\sdescendsinloopto\,\cverti{1}$, $\lognot{(\cverti{1}\,\sloopsbackto)}$, and $\cverti{1}$ is normed.
  This contradicts Lemma~\ref{lem:loop:relations}, \ref{it:descendsinloopto:notloopsbackto:not:normed}.
  In the induction step, $\lbsminn{\cverti{1}} > 0$ implies
    $\cverti{1} \lbsred \cvertacci{1}$
      and
    $\lbsminn{\cvertacci{1}} < \lbsminn{\cverti{1}}$ for some $\cvertacci{1}$.
  Since $\cverti{1} \bisim \cverti{2}$,
  we have
    $\cverti{2} \red \cvertacci{2}$
      and
    $\cvertacci{1} \bisim \cvertacci{2}$ for some $\cvertacci{2}$.
  Since $\cverti{1} \lbsred \cvertacci{1}$,
  by definition, $\cverti{1}$ and $\cvertacci{1}$ are in the same scc. Hence
    $\cvertacci{1} \redrtc \cverti{1}$. This implies $\lognot{( \cvertacci{2} \redrtc \cvertacci{1} )}$, for else
    $\cverti{2}
      \red
    \cvertacci{2}
      \redrtc
    \cvertacci{1}
      \redrtc
    \cverti{1}$,
   which contradicts the assumption $\lognot{( \cverti{2} \redrtc \cverti{1} )}$.
  Since $\cvertacci{1} \bisim \cvertacci{2}$ and 
    $\lognot{( \cvertacci{2} \redrtc \cvertacci{1} )}$ 
        and
    $\lbsminn{\cvertacci{1}} < \lbsminn{\cverti{1}}$,
  by induction there exists a bisimilar pair $\bverti{1},\bverti{2}$ for which \ref{cond:transf:I}~holds.
  
  Now let $\sccof{\cverti{1}} = \sccof{\cverti{2}}$.
  Then by Lem.~\ref{lem:loop:relations},~\ref{it:bi:reachable:lLEEw}, $\cverti{1}\loopsbacktortc v$ and $\cverti{2}\loopsbacktortc v$ for some~$v$.
  By Lem.~\ref{lem:loop:relations},~\ref{it:least-upper-bound}
  we pick $v$ as the least upper bound of $u_1,u_2$ with regard to $\loopsbacktortc$.
  If $\cverti{1}=\avert$, then $\cverti{2}\loopsbacktotc \cverti{1}$, 
  so \ref{cond:transf:II} holds
  for $\bverti{1} = \cverti{1}$ and $\bverti{2} = \cverti{2}$.
  If $\cverti{2}=\avert$, then likewise \ref{cond:transf:II} holds
  for $\bverti{1} = \cverti{2}$ and $\bverti{2} = \cverti{1}$.
  Now let $\cverti{1},\cverti{2} \neq \avert$.
  Since $\avert$ is the least upper bound,
  $\cverti{1} \loopsbacktortc \averti{1} \dloopsbackto \avert
              \convdloopsbackto \averti{2} \convloopsbacktortc \cverti{2}$
  for distinct $\averti{1},\averti{2}\in\verts$.
  There cannot be a cycle of body transitions, so
  $\neg(\averti{2} \redrtci{\bodylab} \averti{1})$ or
  $\neg(\averti{1} \redrtci{\bodylab} \averti{2})$.
    By symmetry it suffices to consider $\neg(\averti{2} \redrtci{\bodylab} \averti{1})$.
    Summarizing,
    \mbox{%
    $ 
        \cverti{1} 
           \loopsbacktortc 
         \averti{1}
           \dloopsbackto 
         \avert
           \convdloopsbackto 
         \averti{2}
           \convloopsbacktortc 
         \cverti{2}$
         and
         $\lognot{( \averti{2} \redrtci{\bodylab} \averti{1} )} 
         $}.
    For this situation we use induction on $\lbsminn{\cverti{1}}$.
    If $\cverti{1}=\averti{1}$, then  $\cverti{1}\dloopsbackto \avert$;
    taking $\bverti{1}=\cverti{1}$ and $\bverti{2}=\cverti{2}$, \ref{cond:transf:III} holds.
    \vspace*{-.5mm}So we can assume $\cverti{1} \loopsbacktotc \averti{1} \dloopsbackto \avert$.
      Pick a transition  $\cverti{1} \lbsred \cverti{1}'$ with $\lbsminn{\cverti{1}'}<\lbsminn{\cverti{1}}$;
      by definition, $\sccof{\cverti{1}'}=\sccof{\cverti{1}}$.
      Since $\cverti{1}\bisim \cverti{2}$, there is a transition $\cverti{2}\ult \cverti{2}'$ with $\cverti{1}'\bisim \cverti{2}'$ for some $\cverti{2}'$.
      If $\sccof{\cverti{1}'}\neq\sccof{\cverti{2}'}$, 
      then as before we can find bisimilar $\bverti{1},\bverti{2}$ for which \ref{cond:transf:I} holds.
      Now let
$\sccof{\cverti{1}'}=\sccof{\cverti{2}'}$, so $\cverti{1},\cverti{2},\cverti{1}',\cverti{2}'$ are in the same scc.
          Since $\cverti{1}\loopsbacktotc \averti{1}$ and $\cverti{1}\ult \cverti{1}'$, either $\cverti{1}'=\averti{1}$ or $\averti{1} \descendsinlooptotc\cverti{1}'$. 
          Moreover, $\sccof{\cverti{1}'}=\sccof{\cverti{1}}=\sccof{\averti{1}}$, so by Lem.~\ref{lem:loop:relations},~\ref{it:descendsinloopto:scc:loopsbackto}, $\cverti{1}'\loopsbacktortc \averti{1}$. 
          Since $\cverti{2}\loopsbacktortc\averti{2}$, we can distinguish two cases 
          (for illustrations for each of the subcases, see the appendix).\vspace*{-.75mm}
      %
      \begin{description}\setlength{\itemsep}{.5ex}
          \item{\emph{Case 1:}}
          $\cverti{2}\loopsbacktotc \averti{2}$.
          Since $\cverti{2}\ult \cverti{2}'$, either $\cverti{2}'=\averti{2}$ or $\averti{2} \descendsinlooptotc\cverti{2}'$. 
          Moreover,  $\sccof{\cverti{2}'}=\sccof{\cverti{2}}=\sccof{\averti{2}}$, \vspace*{-.25mm}so by Lem.~\ref{lem:loop:relations},~\ref{it:descendsinloopto:scc:loopsbackto},
          $\cverti{2}'\loopsbacktortc \averti{2}$.
          Hence,
          $
             \cvertacci{1} 
                 \loopsbacktortc 
               \averti{1}
                 \dloopsbackto 
               \avert
                 \convdloopsbackto 
               \averti{2}
                 \convloopsbacktortc 
               \cvertacci{2}
               \logand
             \lognot{( \averti{2} \redrtci{\bodylab} \averti{1} )} 
             $,
         and $\lbsminn{\cverti{1}'}<\lbsminn{\cverti{1}}$. 
         \vspace*{.25mm}We apply the induction hypothesis to obtain 
         a bisimilar pair $w_1,w_2$ for which \ref{cond:transf:I}, \ref{cond:transf:II}, or \ref{cond:transf:III} holds.
         Below we illustrate both of the cases in which $\cverti{2} \red \cvertacci{2}$ 
         is a \loopentrytransition, or a \bodytransition.
          \begin{center}
            $
            \begin{aligned}[c]
             \scalebox{0.925}{\begin{tikzpicture}[scale=1,every node/.style={transform shape}]
%
\matrix[anchor=center,row sep=0.8cm,column sep=1.75cm,every node/.style={draw,thick,circle,minimum width=2.5pt,fill,inner sep=0pt,outer sep=2pt}] at (0,0) {
                   &  \node(v){};
  \\[0.25cm]
  \node(v_1){};    &  \node[draw=none,fill=none](h0){};
                                   &  \node(v_2){};  
  \\
  \node(v_11){};   &               &  \node(v_12){};
  \\[0.25cm]
  \node(v_n1){};   &               &  \node(v_n2){};
  \\
  \node(u_1){};    &  \node[draw=none,fill=none](h){}; 
                                   &  \node(u_2){};
  \\
};
\calcLength(v,h0){mylen};

\path (v) ++ (0pt,{0.25*\mylen pt}) node{$\avert$};
  \path (v) ++ ({-0.5*\mylen pt},{-0.4*\mylen pt}) node[draw,thick,circle,minimum width=2.5pt,fill,inner sep=0pt,outer sep=2pt] (v_01){};
    \draw[->,thick] (v) to (v_01);
    \draw[->>,bend right,distance={0.35*\mylen pt}] (v_01) to (v_1);
  \path (v) ++ ({0.5*\mylen pt},{-0.4*\mylen pt}) node[draw,thick,circle,minimum width=2.5pt,fill,inner sep=0pt,outer sep=2pt] (v_03){}; 
    \draw[->,thick] (v) to (v_03);
    \draw[->>,bend left,distance={0.35*\mylen pt}] (v_03) to (v_2);
   %
   %
%
\path (v_1) ++ ({-0.35*\mylen pt},{0.05*\mylen pt}) node{$\averti{1}$};
\draw[->>,out=170,in=220,distance={0.5*\mylen pt}] (v_11) to (v_1);

\path (v_1) ++ ({0.25*\mylen pt},{-0.4*\mylen pt}) node[draw,thick,circle,minimum width=2.5pt,fill,inner sep=0pt,outer sep=2pt](v_1_v_11){};
  \draw[->,thick] (v_1) to (v_1_v_11);
  \draw[->>,distance={0.25*\mylen pt},out=-30,in=20] (v_1_v_11) to (v_11);
\draw[->>,out=135,in=180,distance={1*\mylen pt}] (v_1) to (v);          

%
\draw[-,dotted,thick,shorten <={0.2*\mylen pt},shorten >={0.2*\mylen pt}] (v_11) to (v_n1);

\path (v_n1) ++ ({0.25*\mylen pt},{-0.4*\mylen pt}) node[draw,thick,circle,minimum width=2.5pt,fill,inner sep=0pt,outer sep=2pt](v_n1_u_1){};
  \draw[->,thick] (v_n1) to (v_n1_u_1);
  \draw[->>,distance={0.25*\mylen pt},out=-30,in=20] (v_n1_u_1) to (u_1);

\draw[->,thick] (v_11) to ($(v_11) + ({-0.25*\mylen pt},{-0.25*\mylen pt})$);
\draw[->,thick] (v_11) to ($(v_11) + ({0.25*\mylen pt},{-0.25*\mylen pt})$);

%
\draw[<<-,out=190,in=90,distance={0.25*\mylen pt}] (v_11) to ($(v_11) + ({-0.4*\mylen pt},{-0.3*\mylen pt})$);

\draw[->>,out=160,in=270,distance={0.25*\mylen pt}] (v_n1) to ($(v_n1) + ({-0.35*\mylen pt},{0.4*\mylen pt})$);

%
\path (v_2) ++ ({0.35*\mylen pt},{0.05*\mylen pt}) node{$\averti{2}$};
\draw[->>,out=10,in=-40,distance={0.5*\mylen pt}] (v_12) to (v_2);
\draw[->>,out=45,in=0,distance={1*\mylen pt}] (v_2) to (v);

\path (v_2) ++ ({-0.25*\mylen pt},{-0.4*\mylen pt}) node[draw,thick,circle,minimum width=2.5pt,fill,inner sep=0pt,outer sep=2pt](v_2_v_12){};
  \draw[->,thick] (v_2) to (v_2_v_12);
  \draw[->>,distance={0.25*\mylen pt},out=210,in=160] (v_2_v_12) to (v_12);

%
\draw[-,dotted,thick,shorten <={0.2*\mylen pt},shorten >={0.2*\mylen pt}] (v_12) to (v_n2);
\draw[->,thick] (v_12) to ($(v_12) + ({-0.25*\mylen pt},{-0.25*\mylen pt})$);
\draw[->,thick] (v_12) to ($(v_12) + ({0.25*\mylen pt},{-0.25*\mylen pt})$);

\draw[<<-,out=-10,in=90,distance={0.25*\mylen pt}] (v_12) to ($(v_12) + ({0.4*\mylen pt},{-0.3*\mylen pt})$); 

\draw[->>,shorten <= {0.1 *\mylen pt},shorten >={0.1 *\mylen pt}] (v_2) to node[pos=0.5,sloped]{$\small {/}$} 
                                                                          node[pos=0.15,yshift={-0.15 * \mylen pt}]{$\scriptstyle \bodylab$}(v_1);

\path (u_1) ++ ({0*\mylen pt},{-0.225*\mylen pt}) node[forestgreen]{$\cverti{1}$};
   \path (u_1) ++ ({-0.4*\mylen pt},{0.4*\mylen pt}) node[draw,thick,circle,minimum width=2.5pt,fill,inner sep=0pt,outer sep=2pt] (u'_1){};
     \path(u'_1) ++ ({-0.25*\mylen pt},0pt) node{$\colorred{\cvertacci{1}}$};

\draw[->,out=180,in=270,distance={0.25*\mylen pt},red] (u_1) to node[pos=0.55,below]{$\scriptstyle\slbs$} (u'_1); 
\draw[->>,out=90,in=180,distance={0.25*\mylen pt}] (u'_1) to (v_n1);

\path (u_2) ++ ({0.075*\mylen pt},{0.175*\mylen pt}) node[forestgreen]{$\cverti{2}$};
   \path (u_2) ++ ({0.4*\mylen pt},{0.35*\mylen pt}) node[draw,thick,circle,minimum width=2.5pt,fill,inner sep=0pt,outer sep=2pt] (u'_2_1){};
     \path(u'_2_1) ++ ({0.25*\mylen pt},0pt) node{$\colorred{\cvertacci{2}}$};
   \path (u_2) ++ ({0*\mylen pt},{-0.55*\mylen pt}) node[draw,thick,circle,minimum width=2.5pt,fill,inner sep=0pt,outer sep=2pt] (u'_2_2){};
     \path(u'_2_2) ++ ({0*\mylen pt},{-0.3*\mylen pt}) node{$\colorred{\cvertacci{2}}$};
     \draw[->>,out=-25,in=-40,distance={0.5*\mylen pt}] (u'_2_2) to (u_2);
     
\path (v_n2) ++ ({-0.25*\mylen pt},{-0.4*\mylen pt}) node[draw,thick,circle,minimum width=2.5pt,fill,inner sep=0pt,outer sep=2pt](v_n2_u_2){};
  \draw[->,thick] (v_n2) to (v_n2_u_2);
  \draw[->>,distance={0.25*\mylen pt},out=210,in=160] (v_n2_u_2) to (u_2);

\draw[->>,out=20,in=270,distance={0.25*\mylen pt}] (v_n2) to ($(v_n2) + ({0.35*\mylen pt},{0.4*\mylen pt})$);

\draw[->,thick,red] (u_2) to node[left,pos=0.3,xshift={0.05*\mylen pt}]{$\scriptstyle \loopnsteplab{\aLname} $}(u'_2_2);

\draw[->,out=0,in=270,distance={0.25*\mylen pt},red] (u_2) to node[below,pos=0.7]{$\scriptstyle \bodylab$} (u'_2_1); 
\draw[->>,out=90,in=0,distance={0.25*\mylen pt}] (u'_2_1) to (v_n2);

\draw[-,thick,magenta,densely dashed] 
  (u_1) to node[pos=0.5](mid){} 
           node[pos=0.65](left){} (u_2);

\draw[-,thick,magenta,densely dashed,out=25,in=145,distance={1.25*\mylen pt}] 
  (u'_1) to node[pos=0.5,above,sloped,black]{use ind.\ hyp.} 
            node[pos=0.562](left_1){}
                                                             (u'_2_1);
  \draw[-implies,double equal sign distance,thick] (left) to (left_1);

\draw[-,thick,magenta,densely dashed,out=-10,in=190,distance={1.5*\mylen pt}] 
  (u'_1) to node[below,pos=0.65,sloped,black]{use ind.\ hyp.} 
            node[pos=0.75](left-2){} (u'_2_2);
  \draw[-implies,double equal sign distance,thick] (left) to (left-2);   



\end{tikzpicture}}
            \end{aligned}
            $
          \end{center}
          
        \item{\emph{Case 2:}}
          $\cverti{2}=\averti{2}$. We distinguish two cases.
          
        \item{\emph{Case 2.1:}}
          $\cverti{2}\loopnstepto{\alpha}\cverti{2}'$.
          Then either $\cverti{2}'=\cverti{2}$ or $\cverti{2} \descendsinlooptotc\cverti{2}'$. 
          Moreover,  $\sccof{\cverti{2}'}=\sccof{\cverti{2}}$, so by Lem.~\ref{lem:loop:relations},~\ref{it:descendsinloopto:scc:loopsbackto},
          $\cverti{2}'\loopsbacktortc \cverti{2}$, and hence
          $\cverti{2}'\loopsbacktortc \averti{2}$. 
          \vspace*{-.25mm}Thus we have obtained
          $
             \cvertacci{1} 
                 \loopsbacktortc 
               \averti{1}
                 \dloopsbackto 
               \avert
                 \convdloopsbackto 
               \averti{2}
                 \convloopsbacktortc 
               \cvertacci{2}
               \logand
             \lognot{( \averti{2} \redrtci{\bodylab} \averti{1} )} 
             $.
         Due to $\lbsminn{\cverti{1}'}<\lbsminn{\cverti{1}}$,
         we can apply the induction hypothesis again.
          
        \item{\emph{Case 2.2:}}
          $\cverti{2}\redi{\bodylab}\cverti{2}'$. Then $\neg(\averti{2} \redrtci{\bodylab} \averti{1})$ together with
          $\averti{2}=\cverti{2}\redi{\bodylab}\cverti{2}'$ and
          $\cverti{1}' \redrtci{\bodylab} \averti{1}$
          (because $\cverti{1}'\loopsbacktortc \averti{1}$) imply $\cverti{1}'\neq \cverti{2}'$.
          We distinguish two cases.
          
        \item{\emph{Case 2.2.1:}}
          $\cverti{2}'=\avert$. Then $\cverti{1}'\loopsbacktortc \averti{1}\dloopsbackto v=\cverti{2}'$, i.e., $\cverti{1}'\loopsbacktotc \cverti{2}'$, 
          so we are done, because \ref{cond:transf:II} holds 
          for $\bverti{1} = \cverti{2}'$ and $\bverti{2} = \cverti{1}'$.
          
        \item{\emph{Case 2.2.2:}}
          $\cverti{2}'\neq \avert$. By Lem.~\ref{lem:loop:relations},~\ref{it:descendsinloopto:scc:loopsbackto}, $\cverti{2}'\loopsbacktotc \avert$.
          Hence, $\cverti{2}'\loopsbacktortc \avertacci{2}\dloopsbackto \avert$ for some $\avertacci{2}$.
          Since $\averti{2}=\cverti{2}\redi{\bodylab}\cverti{2}'\loopsbacktortc \avertacci{2}$ and
          $\neg(\averti{2} \redrtci{\bodylab} \averti{1})$, it follows that
          $\neg(\avertacci{2} \redrtci{\bodylab} \averti{1})$.
          So
          $ \cvertacci{1} 
              \loopsbacktortc 
            \averti{1}
              \dloopsbackto 
            \avert
              \convdloopsbackto 
            \avertacci{2}
              \convloopsbacktortc 
            \cvertacci{2}
               \logand
            \lognot{( \averti{2} \redrtci{\bodylab} \avertacci{1} )}$.
         %
         Due to $\lbsminn{\cverti{1}'}<\lbsminn{\cverti{1}}$,
         we can apply the induction hypothesis again.
  \label{prf:prop:reduced:br:end}
   \end{description}
   This exhaustive case analysis concludes the proof.
\end{proof}

Now we define, for \LLEEwitnesses~$\acharthat$ of a \LLEEchart~$\achart$,
and for bisimilar vertices $\bverti{1},\bverti{2}$ in $\achart$,
in each of the three cases \ref{cond:transf:I}, \ref{cond:transf:II}, or \ref{cond:transf:III} of Prop.~\ref{prop:reduced:br}
a transformation~of~$\acharthat$ into an \entrybodylabeling\ of the \connectthroughchart{\bverti{1}}{\bverti{2}}~$\connthroughin{\achart}{\bverti{1}}{\bverti{2}}$
that can be shown to be a \LLEEwitness\ again. 
%
%
%
\begin{figure*}[tb]
  \input{figs/ex-transformations.tex}
\end{figure*}
We number the \emph{transformations} for \ref{cond:transf:I}, \ref{cond:transf:II}, and \ref{cond:transf:III} as \emph{I}, \emph{II}, and \emph{III}, respectively. 
Each transformation makes use of the \emph{connect-through construction
for \entrybodylabelings} as defined in Def.~\ref{def:connect-through}. 
Additionally, in each transformation an adaptation of labels of transitions is performed, 
to avoid violations of LLEE-witness properties. 
In transformations I and III the adaptation is performed \ul{before} connecting $w_1$ through to $w_2$,
and is needed to guarantee that layeredness is preserved; 
in transformation II it is performed right \ul{after} eliminating~$w_1$, and avoids the creation of body step cycles. 
The \emph{level adaptations} for the three~transformations~are:
\begin{enumerate}[label={\textit{L}$_{\text{\Roman{*}}}$},leftmargin=*,align=right,labelsep=1ex,itemsep=0.5ex]
  \item{}\label{labels:I}
   Let $m=\max\{\,\ll{\bLname}\,:\,\mbox{there is a path }\bverti{2}\,\scomprewrels{\sredrtc}{\sredi{\loopnsteplab{\bLname}}}\mbox{ in }\acharthat\,\}$. In loop-entry transitions $\cvert\;\sredi{\loopnsteplab{\aLname}}\;\avert$ for which there is a path $\avert\;\sredrtc\;\bverti{1}$ in $\achart$, replace $\aLname$ by an $\aLnameacc$ with $\ll{\aLnameacc}=\ll{\aLname}+m$.
        This increases the labels of \loopentrytransitions\ that descend to $\bverti{1}$ in $\acharthat$ to a higher level
        than the loop labels reachable from~$\bverti{2}$.\vspace{1mm}
  \item{}\label{labels:II}
  Since $\bverti{2} \loopsbacktotc \bverti{1}$, \vspace*{-.5mm}there exists a $\bverthati{2}$ with $\bverti{2} \loopsbacktortc \bverthati{2} \dloopsbackto \bverti{1}$.
  Let $\cLname$ be the maximum loop level among the loop-entries at $\bverti{1}$ in $\acharthat$.
  (Note that since $\bverti{2} \loopsbacktotc \bverti{1}$, there is at least one such transition.)
  Turn the body transitions from $\bverthati{2}$
  into loop-entry transitions with loop label $\cLname$.\vspace{1mm}
  \item{}\label{labels:III}
   Let $\cLname$ be a loop label of maximum level among the loop-entry transitions at $\avert$ in $\acharthat$. (Note that since $\bverti{1} \loopsbackto \avert$, there is at least one such transition.) Turn the loop labels of the loop-entry transitions from $\avert$ into $\cLname$.
\end{enumerate}
Each of these transformations ends with a \emph{clean-up step}: 
if the \loopentry\ transitions from a vertex with the same loop label 
no longer induce an infinite path (due to the removal of $\bverti{1}$),
then they are changed into body transitions.

\begin{example}\label{ex:transformations}%
  The \LLEEwitness\ on the left in Fig.~\ref{fig:ex:transformations} is 
  reduced in three transformation steps 
  to a \LLEEwitness\ of the chart $\chartof{\astexpi{0}}$ in Ex.~\ref{ex:chart:interpretation}. 
  Broken lines are between bisimilar vertices.
  In step one, a transformation I, the start state $v_0$ is connected through to the bisimilar vertex $v_0''$, whereby $v_0''$ becomes the start vertex;
  note that there is no path from $v_0''$ to $v_0$, and no vertex descends into a loop to $v_0$.
  In step two, a transformation II, $v_1$ is connected through to the bisimilar vertex $v_1'$; note that $v_1' \loopsbacktotc v_1$.
  In step three, a transformation III, the start vertex $v_0''$ is connected through to the bisimilar vertex $v_0'''$, whereby $v_0'''$ becomes the start vertex; 
  note that $v_0''\dloopsbackto v_2$ and $v_0'''\loopsbacktotc v_2$ and there is no body step path from $v_0'''$ to $v_0''$.
  By the loop level adaptation \ref{labels:III}, all loop entries from $\averti{2}$ get level~3.
  The final step is an isomorphic deformation. 
  Only the left and right charts depict actions. 
\end{example}

The following examples provide more illustrations of the transformations~II and III.
Similarly as Ex.~\ref{ex:trans-I} does so for transformation~I and \ref{cond:transf:I},
they also show that the conditions \ref{cond:transf:II} and \ref{cond:transf:III}  
mark rather sharp borders between whether, 
on a given \LLEEwitness, a connect-through operation is possible while preserving LLEE, or not.

\vspace*{-0.5ex}
\begin{example}\label{ex:trans-II}
  For the \LLEEwitness~$\acharthat$ below in the middle,\vspace*{-0.5mm}
  the chart $\connthroughin{\achart}{\bverti{2}}{\bverti{1}}$ on the left has no \LLEEwitness.
  \vspace*{-1.5ex}
  \begin{center}
%
\begin{tikzpicture}
\matrix[anchor=center,row sep=1cm,column sep=1cm,every node/.style={draw,very thick,circle,minimum width=2.5pt,fill,inner sep=0pt,outer sep=2pt}] at (0,0) {
  \node(w1_Cw2w1){};                          &[0.8cm] & \node(w1_C){};     &[0.8cm]  & \node(label_Cw1w2)[draw=none,fill=none]{};                                             
  \\
  \node(w2hat_Cw2w1){};                       &        & \node(w2hat_C){};  &         & \node(w2hat_Cw1w2){};     
  \\
  \node(u_Cw2w1){};                           &        & \node(u_C){};      &         & \node(u_Cw1w2){};    
  \\
  \node(label_Cw2w1)[draw=none,fill=none]{};  &        & \node(w2_C){};     &         & \node(w2_Cw1w2){};
  \\
};

\draw[-implies,thick,double equal sign distance,bend right,distance={1*\mylen pt},
               shorten <= 0.65cm,shorten >= 0.65cm
               ] (w1_C) to node[above,pos=0.4425,yshift={0.05*\mylen pt}]{\scalebox{1.25}{$\connthroughin{\achart}{\bverti{2}}{\bverti{1}}\! \mapsfrom \achart$}} (w1_Cw2w1) ;

\draw[-implies,thick,double equal sign distance,bend left,distance={1*\mylen pt},
               shorten <= 0.65cm,shorten >= 0.65cm
               ] (w1_C) to node[above,pos=0.55,yshift={0.05*\mylen pt}]{\scalebox{1.25}{$(\text{\nf II})^{(\bverti{1})}_{\bverti{2}}$}} (label_Cw1w2);

\draw[<-,very thick,>=latex,chocolate](w1_Cw2w1) -- ++ (90:0.5cm);
\path (w1_Cw2w1) ++ (0.35cm,0.175cm) node{$\bverti{1}$};
\draw[->](w1_Cw2w1) to node[right,xshift=-0.025cm,yshift=0.05cm]{
} (w2hat_Cw2w1);
\draw[->,distance=1.25cm,out=-5,in=5,color=red] (w1_Cw2w1) to node[above,xshift=-0cm,yshift=0.15cm,
color=red,
pos=0.275]{
} (u_Cw2w1);
\path (w2hat_Cw2w1) ++ (0.325cm,0cm) node{$\bverthati{2}$};
\draw[->,color=red
](w2hat_Cw2w1) to node[right,xshift=-0.025cm,yshift=0.05cm]{
} (u_Cw2w1);
\draw[->,distance=0.75cm,out=165,in=185,shorten <= 4.5pt] (w2hat_Cw2w1) to (w1_Cw2w1);
\path (u_Cw2w1) ++ (0cm,-0.275cm) node{$\cvert$};
\draw[->,distance=0.75cm,out=175,in=185] (u_Cw2w1) to (w2hat_Cw2w1);
\draw[->,distance=1.25cm,out=175,in=185
,color=red
] (u_Cw2w1) to (w1_Cw2w1);
\path(label_Cw2w1) ++ (0cm,0cm) node{\scalebox{1.45}{$\connthroughin{\achart}{\bverti{2}}{\bverti{1}}$}};

\draw[<-,very thick,>=latex,chocolate](w1_C) -- ++ (90:0.5cm);
\path (w1_C) ++ (0.35cm,0.175cm) node{$\bverti{1}$};
\draw[->,thick](w1_C) to node[right,xshift=-0.025cm,yshift=0.05cm]{$\loopnsteplab{2}$} (w2hat_C);
\draw[->,thick,distance=1.25cm,out=-5,in=5] (w1_C) to node[right,xshift=-0.05cm,yshift=0cm,pos=0.5]{$\loopnsteplab{2}$} (u_C);
\path (w2hat_C) ++ (0.325cm,0cm) node{$\bverthati{2}$};
\draw[->,thick](w2hat_C) to node[right,xshift=-0.025cm,yshift=0.05cm]{$\loopnsteplab{1}$} (u_C);
\draw[->,distance=0.75cm,out=165,in=185,shorten <= 4.5pt] (w2hat_C) to (w1_C);
\path (u_C) ++ (0.175cm,-0.25cm) node{$\cvert$};
\draw[->](u_C) to (w2_C);
\draw[->,distance=0.75cm,out=175,in=185] (u_C) to (w2hat_C);
\path (w2_C) ++ (0.35cm,-0.05cm) node{$\bverti{2}$};
\draw[->,distance=1.25cm,out=175,in=185] (w2_C) to (w2hat_C);
\path (w2_C) ++ (0.8cm,0.55cm) node(label_C){\scalebox{1.45}{$\acharthat$}};

\draw[magenta,thick,densely dashed,bend right,distance=1cm,looseness=1] (w1_C) to (w2_C);

%
%

\path (w2hat_Cw1w2) ++ (0cm,0.3cm) node{$\bverthati{2}$};
\draw[->,thick](w2hat_Cw1w2) to node[right,xshift=-0.025cm,yshift=0.05cm]{$\loopnsteplab{1}$} (u_Cw1w2);
\draw[->,thick,distance=1.25cm,out=-5,in=5](w2hat_Cw1w2) to node[left,xshift=0.1cm,yshift=0cm,pos=0.5]{$\loopnsteplab{2}$} (w2_Cw1w2);
\path (u_Cw1w2) ++ (0.25cm,-0cm) node{$\cvert$};
\draw[->] (u_Cw1w2) to (w2_Cw1w2);
\draw[->,distance=0.75cm,out=175,in=185] (u_Cw1w2) to (w2hat_Cw1w2);
\draw[<-,very thick,>=latex,chocolate](w2_Cw1w2) -- ++ (270:0.5cm);
\path (w2_Cw1w2) ++ (0.325cm,-0.3cm) node{$\bverti{2}$};
\draw[->,distance=1.25cm,out=175,in=185] (w2_Cw1w2) to (w2hat_Cw1w2);
\path(label_Cw1w2) ++ (0cm,0cm) node{\scalebox{1.45}{$\connthroughin{\acharthat}{\bverti{1}}{\bverti{2}}$}};

\end{tikzpicture}
  %
  %
  %
  \end{center}
  \vspace*{0ex}
  It does not satisfy LEE:
  it has no loop subchart, since from each of its three vertices an infinite path starts that does not return to this vertex; 
  from $\bverthati{2}$ this path, drawn in red, cycles between $\cvert$ and $\bverti{1}$.
  Transformation II applied to the pair $\bverti{1},\bverti{2}$ (instead of $\bverti{2},\bverti{1}$) in $\acharthat$ yields
  the \entrybodylabeling\ $\connthroughin{\acharthat}{\bverti{1}}{\bverti{2}}$ 
  where $\bverthati{2}\,\sredi{\bodylab}\,\bverti{2}$ is turned into $\bverthati{2}\,\sredi{\loopnsteplab{2}}\,\bverti{2}$. 
  As the pair $\bverti{1},\bverti{2}$ satisfies \ref{cond:transf:II}, the proof of Prop.\ \ref{prop:LEEshape:preserve:conds} 
  ensures that this labeling, drawn on the right, is a \LLEEwitness.
\end{example}
\vspace*{-0.5ex}

\begin{example}\label{ex:trans-III}
  In the \LLEEwitness\ $\acharthat$ below in the middle,
  $w_1,w_2\loopsbacktotc v$ and there is no body step path from $w_2$ to $w_1$, 
  but \ref{cond:transf:III} does not hold for the pair $w_1,w_2$ due to $\lognot(w_1\dloopsbackto v)$.\vspace*{-1mm}
  The chart $\connthroughin{\achart}{\bverti{1}}{\bverti{2}}$ on the left has no LLEE-witness.
  It does not satisfy LEE:
    the downwards \loopentry\ transition from $\bverthati{2}$ can be eliminated,
    and then two more arising \loopentry\ transitions from $\avert$;
  the remaining chart of solid arrows has no further loop subchart,
  because from each of its vertices an infinite path starts that does not return to this vertex.
  
  In $\acharthat$, loop-entry transitions from $v$ have the same loop label, so the preprocessing step of transformation III is void.
  The bisimilar pair $w_1,w_2$ progresses to the bisimilar pair $\bverthati{1},\bverthati{2}$ in $\acharthat$, for which \ref{cond:transf:III} holds because $\bverthati{1} \dloopsbackto \avert
  \convloopsbackto \bverthati{2}$ and $\lognot(\bverthati{2}\,\sredrtci{\bodylab}\,\bverthati{1})$.
  \vspace*{-.5mm}Transformation III applied to this pair yields
  the labeling $\protect\connthroughin{\acharthat}{\bverthati{1}}{\bverthati{2}}$ on the right. In the proof of
  Prop.\ \ref{prop:LEEshape:preserve:conds} it is argued that this is guaranteed to be a \LLEEwitness.
  The remaining two bisimilar pairs can be eliminated by one or by two further applications of transformation III. 
  \vspace*{0ex}
  \begin{center}
  %
%
\begin{tikzpicture}
\matrix[anchor=center,row sep=1cm,column sep=0.45cm,every node/.style={draw,very thick,circle,minimum width=2.5pt,fill,inner sep=0pt,outer sep=2pt}] at (0,0) {
                        &  \node(v_Cw1w2){}; &                       &[0.9cm] &                   & \node(v_C){}; &                   &[0.15cm] &                           & \node(v_Chatw1hatw2){}; &
  \\
  \node(hatw1_Cw1w2){}; &                    & \node(hatw2_Cw1w2){}; &        & \node(hatw1_C){}; &               & \node(hatw2_C){}; &       &                           &                         & \node(hatw2_Chatw1hatw2){};
  \\
  \node(u1_Cw1w2){};    &                    & \node(u2_Cw1w2){};    &        & \node(u1_C){};    &               & \node(u2_C){};    &       & \node(u1_Chatw1hatw2){};  &                         & \node(u2_Chatw1hatw2){};
  \\
  \node[draw=none,fill=none]{};
                        &                    & \node(w2_Cw1w2){};    &        & \node(w1_C){};    &               & \node(w2_C){};    &       & \node(w1_Chatw1hatw2){};  &                         & \node(w2_Chatw1hatw2){};
  \\
};
\calcLength(hatw1_C,u1_C){mylen}; 
%
\draw[<-,very thick,>=latex,chocolate](v_Cw1w2) -- ++ (90:{0.425*\mylen pt}); 
\path (v_Cw1w2) ++ ({0.225*\mylen pt},{0.225*\mylen pt}) node{$\avert$};
\path (v_Cw1w2) ++ ({0*\mylen pt},{0.8*\mylen pt}) node{\scalebox{1.45}{$\connthroughin{\achart}{\bverti{2}}{\bverti{1}}$}};
\path (hatw1_Cw1w2) ++ ({-0.4*\mylen pt},{0.175*\mylen pt}) node{$\bverthati{1}$};
\draw[->] (v_Cw1w2) to node[above,pos=0.7,xshift={-0.05*\mylen pt},yshift={0.05*\mylen pt}
]{
} (hatw1_Cw1w2);
\draw[->,shorten >= 5pt
] (v_Cw1w2) to node[right,pos=0.58,xshift={-0.025*\mylen pt},yshift={0.00*\mylen pt}]{
} (u1_Cw1w2);
\draw[-{>[length=1mm,width=1.8mm]},thick,dotted] (v_Cw1w2) to node[above,pos=0.7,xshift={0.05*\mylen pt},yshift={0.05*\mylen pt}]{
} (hatw2_Cw1w2);
\draw[-{>[length=1mm,width=1.8mm]},thick,dotted,shorten >= 5pt] (v_Cw1w2) to node[left,pos=0.58,xshift={0.025*\mylen pt},yshift={0.00*\mylen pt}]{
} (u2_Cw1w2);
\draw[->
] (hatw1_Cw1w2) to node[left,pos=0.45,xshift={0.05*\mylen pt}]{
} (u1_Cw1w2);
\draw[->,distance={0.5*\mylen pt},out=135,in=180] (hatw1_Cw1w2) to (v_Cw1w2);
\draw[->] (u1_Cw1w2) to (w2_Cw1w2);
\draw[->,distance={0.5*\mylen pt},out=180,in=180
] (u1_Cw1w2) to (hatw1_Cw1w2);
\path (hatw2_Cw1w2) ++ ({0.45*\mylen pt},{0.175*\mylen pt}) node{$\bverthati{2}$};
\draw[-{>[length=1mm,width=1.8mm]},thick,dotted] (hatw2_Cw1w2) to node[right,pos=0.45,xshift={-0.075*\mylen pt}]{
} (u2_Cw1w2);
\draw[->,distance={0.5*\mylen pt},out=45,in=0
] (hatw2_Cw1w2) to (v_Cw1w2);
\draw[-{>[length=1mm,width=1.8mm]},thick,dotted] (u2_Cw1w2) to (w2_Cw1w2);
\draw[-{>[length=1mm,width=1.8mm]},thick,dotted,distance={0.5*\mylen pt},out=0,in=0] (u2_Cw1w2) to (hatw2_Cw1w2);
\path (w2_Cw1w2) ++ ({0.05*\mylen pt},{-0.25*\mylen pt}) node{$\bverti{2}$};
\draw[->,distance={0.8*\mylen pt},out=0,in=0
] (w2_Cw1w2) to (hatw2_Cw1w2);

\draw[magenta,thick,densely dashed,bend left,distance={0.35*\mylen pt},looseness=1] (hatw1_Cw1w2) to (hatw2_Cw1w2);
\draw[magenta,thick,densely dashed] (u1_Cw1w2) to (u2_Cw1w2);

\draw[<-,very thick,>=latex,chocolate](v_C) -- ++ (90:{0.425*\mylen pt});   
\path (v_C) ++ ({0.225*\mylen pt},{0.225*\mylen pt}) node{$\avert$};
\path (v_C) ++ ({0*\mylen pt},{0.8*\mylen pt}) node{\scalebox{1.45}{${\acharthat}$}};
\path (hatw1_C) ++ ({-0.4*\mylen pt},{0.175*\mylen pt}) node{$\bverthati{1}$};
\draw[->,thick] (v_C) to node[above,pos=0.7,xshift={-0.05*\mylen pt},yshift={0.05*\mylen pt}]{$\loopnsteplab{2}$} (hatw1_C);
\draw[->,thick,shorten >= 5pt] (v_C) to node[right,pos=0.74,xshift={-0.075*\mylen pt},yshift={0.00*\mylen pt}]{$\loopnsteplab{2}$} (u1_C);
\draw[->,thick] (v_C) to node[above,pos=0.7,xshift={0.05*\mylen pt},yshift={0.05*\mylen pt}]{$\loopnsteplab{2}$}  (hatw2_C);
\draw[->,thick,shorten >= 5pt] (v_C) to node[left,pos=0.54,xshift={0.075*\mylen pt},yshift={0.00*\mylen pt}]{$\loopnsteplab{2}$} (u2_C);
\draw[->,thick] (hatw1_C) to node[left,pos=0.45,xshift={0.075*\mylen pt}]{$\loopnsteplab{1}$} (u1_C);
\draw[->,distance={0.5*\mylen pt},out=135,in=180] (hatw1_C) to (v_C);
\draw[->] (u1_C) to (w1_C);
\draw[->,distance={0.5*\mylen pt},out=180,in=180] (u1_C) to (hatw1_C);
\path (w1_C) ++ ({0.05*\mylen pt},{-0.25*\mylen pt}) node{$\bverti{1}$};
\draw[->,distance={0.8*\mylen pt},out=180,in=180] (w1_C) to (hatw1_C);

\path (hatw2_C) ++ ({0.45*\mylen pt},{0.175*\mylen pt}) node{$\bverthati{2}$};
\draw[->,thick] (hatw2_C) to node[right,pos=0.45,xshift={-0.075*\mylen pt}]{$\loopnsteplab{1}$} (u2_C);
\draw[->,distance={0.5*\mylen pt},out=45,in=0] (hatw2_C) to (v_C);
\draw[->] (u2_C) to (w2_C);
\draw[->,distance={0.5*\mylen pt},out=0,in=0] (u2_C) to (hatw2_C);
\path (w2_C) ++ ({0.05*\mylen pt},{-0.25*\mylen pt}) node{$\bverti{2}$};
\draw[->,distance={0.8*\mylen pt},out=0,in=0] (w2_C) to (hatw2_C);

\draw[magenta,thick,densely dashed,bend left,distance={0.35*\mylen pt},looseness=1] (hatw1_C) to (hatw2_C);
\draw[magenta,thick,densely dashed] (u1_C) to (u2_C);
\draw[magenta,thick,densely dashed,bend right,distance={0.35*\mylen pt},looseness=1] (w1_C) to (w2_C);

\draw[<-,very thick,>=latex,chocolate](v_Chatw1hatw2) -- ++ (90:{0.425*\mylen pt}); 
\path (v_Chatw1hatw2) ++ ({0.225*\mylen pt},{0.225*\mylen pt}) node{$\avert$};
\path (v_Chatw1hatw2) ++ ({0.35*\mylen pt},{0.8*\mylen pt}) node{\scalebox{1.45}{$\connthroughin{\acharthat}{\bverthati{2}}{\bverthati{1}}$}};
\draw[->,thick] (v_Chatw1hatw2) to node[above,pos=0.7,xshift={0.05*\mylen pt},yshift={0.075*\mylen pt}]{$\loopnsteplab{2}$} (hatw2_Chatw1hatw2);
\draw[->,thick] (v_Chatw1hatw2) to node[right,pos=0.78,xshift={-0.075*\mylen pt},yshift={0.00*\mylen pt}]{$\loopnsteplab{2}$} (u1_Chatw1hatw2);
\draw[->,thick,shorten >= 5pt] (v_Chatw1hatw2) to node[left,pos=0.575,xshift={0.075*\mylen pt},yshift={0.00*\mylen pt}]{$\loopnsteplab{2}$} (u2_Chatw1hatw2);
\draw[->] (u1_Chatw1hatw2) to (w1_Chatw1hatw2);
\draw[->] (u1_Chatw1hatw2) to (w2_Chatw1hatw2);
\path (w1_Chatw1hatw2) ++ ({0.05*\mylen pt},{-0.25*\mylen pt}) node{$\bverti{1}$};
\draw[->] (w1_Chatw1hatw2) to (w2_Chatw1hatw2);
\path (hatw2_Chatw1hatw2) ++ ({0.45*\mylen pt},{0.175*\mylen pt}) node{$\bverthati{2}$};
\draw[->,thick] (hatw2_Chatw1hatw2) to node[right,pos=0.45,xshift={-0.05*\mylen pt}]{$\loopnsteplab{1}$} (u2_Chatw1hatw2);
\draw[->,distance={0.5*\mylen pt},out=45,in=0] (hatw2_Chatw1hatw2) to (v_Chatw1hatw2);
\draw[->] (u2_Chatw1hatw2) to (w2_Chatw1hatw2);
\draw[->,distance={0.5*\mylen pt},out=0,in=0] (u2_Chatw1hatw2) to (hatw2_Chatw1hatw2);
\path (w2_Chatw1hatw2) ++ ({0.05*\mylen pt},{-0.25*\mylen pt}) node{$\bverti{2}$};
\draw[->,distance={0.8*\mylen pt},out=0,in=0] (w2_Chatw1hatw2) to (hatw2_Chatw1hatw2);

\draw[magenta,thick,densely dashed] (u1_Chatw1hatw2) to (u2_Chatw1hatw2);
\draw[magenta,thick,densely dashed,bend right,distance={0.35*\mylen pt},looseness=1] (w1_Chatw1hatw2) to (w2_Chatw1hatw2);

\draw[-implies,thick,double equal sign distance, bend right,distance={1*\mylen pt},
               shorten <= {0.5*\mylen pt},shorten >= {0.5*\mylen pt},
               ] (v_C) to node[above,pos=0.505,yshift={0.05*\mylen pt}] {\scalebox{1}{$\connthroughin{\achart}{\bverti{1}}{\bverti{2}}\! \mapsfrom \achart$}} 
                                                                         (v_Cw1w2) ;

\draw[-implies,thick,double equal sign distance, bend left,distance={0.8*\mylen pt},
               shorten <= {0.5*\mylen pt},shorten >= {0.4*\mylen pt},
               ] (v_C) to node[above,pos=0.58,yshift={0.05*\mylen pt}]{\scalebox{1}{$(\text{\nf III})^{(\bverthati{1})}_{\bverthati{2}}$}} (v_Chatw1hatw2) ;
\end{tikzpicture}\label{fig:ex:progr:transf:III}%
  %
  %
  %

  \end{center}
  \vspace*{0ex}
\end{example}

\begin{proposition}\label{prop:LEEshape:preserve:conds}
  Let $\achart$ be a LLEE-chart.
  If a pair $\pair{\bverti{1}}{\bverti{2}}$ of vertices 
  satisfies \ref{cond:transf:I}$\!$, \ref{cond:transf:II}$\!$, or \ref{cond:transf:III}$\!$\vspace*{-0.5mm}
  with respect to a \LLEEwitness\ of $\achart$,\vspace*{-0.5mm} 
  then\/ $\connthroughin{\achart}{\bverti{1}}{\bverti{2}}$ 
  is a \mbox{\LLEEchart}.
\end{proposition}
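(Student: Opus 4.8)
The plan is to prove the proposition by a case distinction on which of the three structural conditions \ref{cond:transf:I}$\!$, \ref{cond:transf:II}$\!$, \ref{cond:transf:III}$\!$ the pair $\pair{\bverti{1}}{\bverti{2}}$ satisfies with respect to a \LLEEwitness~$\acharthat$ of $\achart$; by the asymmetric way these conditions are phrased, no assumption of bisimilarity of $\bverti{1}$ and $\bverti{2}$ is needed here. In each case I would apply the corresponding transformation~I, II, or III, turning $\acharthat$ into an \entrybodylabeling~$\acharthatacc$ of $\connthroughin{\achart}{\bverti{1}}{\bverti{2}}$. Each transformation is assembled from three ingredients: a level adaptation (\ref{labels:I}, \ref{labels:II}, or \ref{labels:III}) that reassigns loop labels on selected \loopentrytransitions, performed \ul{before} connecting $\bverti{1}$ through to $\bverti{2}$ in transformations~I and III and right \ul{after} eliminating $\bverti{1}$ in transformation~II; the connect-through construction for \entrybodylabelings\ of Def.~\ref{def:connect-through}; and a clean-up step that demotes those \loopentrytransitions\ to body transitions whose common loop label no longer induces an infinite path. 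It then remains to check, in each of the three cases, that $\acharthatacc$ is a \LLEEwitness, i.e., that it satisfies \ref{LLEEw:1} and \ref{LLEEw:2}.

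For \ref{LLEEw:1} one shows that no infinite $\sredi{\bodylab}$-path departs from the start vertex of $\acharthatacc$. Under \ref{cond:transf:I} this is easy, since $\lognot{(\bverti{2}\redrtc\bverti{1})}$ means that redirecting the incoming transitions of $\bverti{1}$ to $\bverti{2}$ cannot close a new $\sredi{\bodylab}$-cycle, while the clean-up step only demotes transitions off the infinite paths. The body-step cycle that would otherwise arise in transformation~II, through a vertex $\bverthati{2}$ with $\bverti{2}\loopsbacktortc\bverthati{2}\dloopsbackto\bverti{1}$, is exactly what the level adaptation \ref{labels:II} forestalls by promoting the body transitions out of $\bverthati{2}$ to \loopentrytransitions\ before the identification; in transformation~III the analogous role is played by the hypothesis $\lognot(\bverti{2}\redrtci{\bodylab}\bverti{1})$ in \ref{cond:transf:III}. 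For \ref{LLEEw:2}, layeredness (condition \ref{LLEEw:2}\ref{LLEEw:2b}) is the milder part: the level adaptations \ref{labels:I} and \ref{labels:III} are designed precisely so that, once $\bverti{1}$ is connected through to $\bverti{2}$, every \loopentrytransition\ that formerly descended in a loop to $\bverti{1}$ acquires a loop level strictly above all loop levels reachable from $\bverti{2}$ (respectively, above all loop levels occurring at the relevant join vertex), which keeps layeredness intact; the order-theoretic properties of $\sloopsbacktortc$ collected in Lem.~\ref{lem:loop:relations}, \ref{it:least-upper-bound}, \ref{it:loopsbackto:lo:successors}, \ref{it:direct-subordinates}, make these reassignments coherent.

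The main obstacle will be the verification of \ref{LLEEw:2}\ref{LLEEw:2a}, and within it condition \ref{loop:2}: that every infinite path inside a purported loop subchart $\indsubchartinat{\acharthatacc}{\avert,\aLname}$ of $\acharthatacc$ eventually returns to its start vertex~$\avert$. Redirecting the incoming transitions of $\bverti{1}$ over to $\bverti{2}$ can open a detour through the outgoing behaviour of $\bverti{2}$ that escapes such a loop without revisiting $\avert$ --- this is exactly the failure mode displayed by Ex.~\ref{ex:trans-I}, \ref{ex:trans-II}, and \ref{ex:trans-III} when the conditions are dropped. To exclude it I would take an infinite path in $\connthroughin{\achart}{\bverti{1}}{\bverti{2}}$, lift it back to an infinite path in $\achart$ by reading each traversed redirected transition as the original transition into $\bverti{1}$, and then exploit the loop structure of $\acharthat$ together with the case hypothesis: under \ref{cond:transf:I} the proviso that $\descendsinloopto \bverti{1}$ implies $\bverti{2}$ is not normed, via Lem.~\ref{lem:loop:relations}, \ref{it:descendsinloopto:notloopsbackto:not:normed}, prevents the lifted path from reaching $\tick$; under \ref{cond:transf:II} and \ref{cond:transf:III} the comparability of $\bverti{1},\bverti{2}$ by $\sloopsbacktortc$, respectively the ``directly loops back'' hypothesis together with $\lognot(\bverti{2}\redrtci{\bodylab}\bverti{1})$, combined with Lem.~\ref{lem:loop:relations}, \ref{it:descendsinloopto:scc:loopsbackto} and \ref{it:bi:reachable:lLEEw}, force the lifted path, hence the original path, back to $\avert$. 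The remaining conditions \ref{loop:1} and \ref{loop:3} of ``loop chart'' are comparatively routine: \ref{loop:1} is secured by the clean-up step, and \ref{loop:3} because redirection never introduces a transition into $\tick$ that was not present before. Thus the heart of the proof is the careful path-lifting bookkeeping for \ref{loop:2}, which must be carried out separately for transformations~I, II, and III.
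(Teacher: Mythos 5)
Your overall architecture matches the paper's: split on \ref{cond:transf:I}--\ref{cond:transf:III}, apply the corresponding transformation (level adaptation, connect-through, clean-up), and verify the \LLEEwitness\ conditions. But the step you identify as the heart of the argument is where the plan breaks. The path-lifting for \ref{loop:2} does not work as described: a redirected transition $\cvert\red\bverti{2}$ of $\connthroughin{\achart}{\bverti{1}}{\bverti{2}}$ ``reads back'' as $\cvert\red\bverti{1}$ in $\achart$, but the path then continues along the outgoing transitions of $\bverti{2}$, which are not outgoing transitions of $\bverti{1}$; the lifted sequence is therefore not a path in $\achart$, and the conditions \ref{cond:transf:I}--\ref{cond:transf:III} do not assume $\bverti{1}\bisim\bverti{2}$, so there is no semantic repair available. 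The paper avoids lifting altogether: in a generated subchart $\indsubchartinat{\connthroughin{\acharthat}{\bverti{1}}{\bverti{2}}}{\avert,\aLname}$ every transition after the first is a body transition, so in a finite chart an infinite path that never returns to $\avert$ yields a body-step cycle; hence \ref{LLEEw:1} and part \ref{loop:2} of \ref{LLEEw:2}\ref{LLEEw:2a} both reduce to showing that no body-step cycle is created, and that is exactly what $\lognot{(\bverti{2}\redrtc\bverti{1})}$ in \ref{cond:transf:I}, the promotion step \ref{labels:II}, and $\lognot{(\bverti{2}\redrtci{\bodylab}\bverti{1})}$ in \ref{cond:transf:III} rule out. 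Relatedly, you misallocate the normedness proviso of \ref{cond:transf:I}: ``not reaching $\tick$'' is irrelevant to \ref{loop:2} (an infinite path never reaches the sink anyway); that proviso is what saves \ref{loop:3}, which you dismiss as routine. The danger for \ref{loop:3} is not that redirection creates new transitions into $\tick$, but that redirecting an incoming transition of $\bverti{1}$ lying inside some loop subchart pulls $\bverti{2}$ --- and with it everything reachable from $\bverti{2}$, possibly including $\tick$ --- into that subchart; this is excluded precisely because $\cvert\descendsinloopto\bverti{1}$ forces $\bverti{2}$ to be unnormed under \ref{cond:transf:I}.

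Two further points need genuine arguments rather than the gestures you give. Layeredness \ref{LLEEw:2}\ref{LLEEw:2b} is the most delicate verification in the paper, not the mildest: for transformation~I one must check both that the uniform increase by $m$ in \ref{labels:I} cannot itself create a violation (any entry label sitting above one that gets increased is also increased, since reachability of $\bverti{1}$ propagates backwards along $\sredrtc$), and that after connecting through, any offending pair of labels must consist of one increased and one unchanged label, hence is correctly ordered because $\bverti{1}$ is unreachable from $\bverti{2}$. And the clean-up step is not free: the paper proves separately that demoting the entry transitions of a spurious loop (one violating \ref{loop:1}) to body transitions preserves \ref{LLEEw:1}, \ref{loop:2}, \ref{loop:3} and layeredness; without that auxiliary argument, ``clean-up secures \ref{loop:1}'' could trade one violation for another.
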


\begin{proof}
  Let $\acharthat$ be a \LLEEwitness. 
  For vertices $\bverti{1}$, $\bverti{2}$ such that \ref{cond:transf:I}, \ref{cond:transf:II}, or \ref{cond:transf:III} holds,
  transformation~I, II, or III, respectively, produces an \entrybodylabeling~$\connthroughin{\acharthat}{\bverti{1}}{\bverti{2}}$.
  We prove for transformation~I that this is a \LLEEwitness,
  and refer to the appendix with regard to transformations II, and III. 
  
  We first argue\label{alleviation:prf:prop:LEEshape:preserve:conds} 
  it suffices to show that each of the transformations
  produces, before the final clean-up step, a labeling that satisfies the \LLEEwitness\ conditions, 
  except possible violations of loop property~\ref{loop:1} in \ref{LLEEw:2}\ref{LLEEw:2a}.
  Such violations
  can be
  removed from a \looplabeling\  while preserving the other \LLEEwitness\ conditions.
  To show this, suppose \ref{loop:1} is violated in some $\indsubchartinat{\acharthat}{\cvert,\aLname}$.
  Then $\cvert \,\sredi{\loopnsteplab{\aLname}}$ but $\lognot{(\cvert \comprewrels{\sredi{\loopnsteplab{\aLname}}}{\sredrtci{\bodylab}} \cvert)}$.
  Let $\acharthati{1}$ be the result of removing this violation by changing the $\aLname$\nb-\loopentry\ transitions from $\cvert$ into body transitions.
  No new violation of \ref{loop:1} is introduced~in~$\acharthati{1}$.
  \ref{LLEEw:1} and \ref{LLEEw:2}\ref{LLEEw:2a}, \ref{loop:2}, are preserved in $\acharthati{1}$
  because an introduced infinite body step path in $\acharthati{1}$ would be a body step cycle that stems from  
  a path $\cvert \redi{\loopnsteplab{\aLname}} \cvertacc \redrtci{\bodylab} \cvert$ in $\acharthat$. 
  \ref{LLEEw:2}\ref{LLEEw:2b} 
                              might only be violated by a path
  $ \bvert \comprewrels{\sredtavoidsvi{\bvert}{\loopnsteplab{\bLname}}}{\sredtavoidsvrtci{\bvert}{\bodylab}} \cvert \redtavoidsvi{\bvert,\cvert}{\bodylab} \cvertacc
           \comprewrels{\sredtavoidsvrtci{\bvert,\cvert}{\bodylab}}{\sredi{\loopnsteplab{\cLname}}}$ with $\ll{\bLname}\leq\ll{\cLname}$\vspace*{-1mm} 
  in $\acharthati{1}$ where $\cvert \redi{\bodylab} \cvertacc$ stems from
  $\cvert \redi{\loopnsteplab{\aLname}} \cvertacc$ in $\acharthat$;
  then \mbox{$\ll{\bLname} > \ll{\aLname} > \ll{\cLname}$} by layeredness of $\acharthat$; so  \ref{LLEEw:2}\ref{LLEEw:2b} is preserved.
  Analogously we find that also \ref{LLEEw:2}\ref{LLEEw:2a}, \ref{loop:3} is preserved, 
  because $\tick$ is never in $\indsubchartinat{\acharthat}{\cvert,\aLname}$.
  
  To show the correctness of transformation~I, consider vertices~$\bverti{1}$ and $\bverti{2}$ 
                                                                                               with \ref{cond:transf:I}. 
  We show that the result $\connthroughin{\acharthat}{\bverti{1}}{\bverti{2}}$ of transformation~I before the clean-up step
  satisfies the LLEE-witness properties, except for possible violations~of~\ref{loop:1}.
  
  To verify \ref{LLEEw:1} and part \ref{loop:2} of \ref{LLEEw:2}\ref{LLEEw:2a},
  it suffices to show that
  \vspace*{-0.35mm}
  $\connthroughin{\acharthat}{\bverti{1}}{\bverti{2}}$ does not contain body step cycles. 
  The original \looplabeling~$\acharthat$ is a \LLEEwitness, so it does not contain body step cycles.
  Since the level adaptation step does not turn \loopentry\ steps into body steps,
  body step cycles could only arise in the step connecting $\bverti{1}$ through to $\bverti{2}$.
  Suppose such a body step cycle arises. 
  Then there must be a transition $\cvert\redi{\bodylab}\bverti{1}$ in $\acharthat$
  (which is redirected to $\bverti{2}$ in $\connthroughin{\acharthat}{\bverti{1}}{\bverti{2}}$)
  and a path $\bverti{2}\redrtci{\bodylab}\cvert$ in $\acharthat$.
  But then $\bverti{2}\redrtci{\bodylab}\cvert\redi{\bodylab}\bverti{1}$ in $\achart$,
  which contradicts \ref{cond:transf:I} that there is no path from $\bverti{2}$ to $\bverti{1}$.
  Hence \ref{LLEEw:1} and part \ref{loop:2} of \ref{LLEEw:2}\ref{LLEEw:2a} hold for~$\connthroughin{\acharthat}{\bverti{1}}{\bverti{2}}$.
  
  Now we verify part~\ref{loop:3} of \ref{LLEEw:2}\ref{LLEEw:2a} in $\connthroughin{\acharthat}{\bverti{1}}{\bverti{2}}$.
  Consider a path 
  $\cvert \,\scomprewrels{\sredtavoidsvi{\cvert}{\loopnsteplab{\aLname}}}{\sredtavoidsvrtci{\cvert}{\bodylab}}\,~\bverti{1}$
  in $\acharthat$. 
  Then $\cvert\neq\bverti{1}$, and $\cvert\descendsinloopto\bverti{1}$. 
  It suffices to show that then $\lognot(\bverti{2}\redtc\surd)$ in $\achart$.
  But this is guaranteed, because otherwise
  $\bverti{2}$ were normed, and due to $\cvert\descendsinloopto\bverti{1}$
  we would have a contradiction with condition~\ref{cond:transf:I}.

  Finally we show that \ref{LLEEw:2}\ref{LLEEw:2b} is preserved in $\connthroughin{\acharthat}{\bverti{1}}{\bverti{2}}$
  by both the level adaptation and the connect-through step.
  First, since in the level adaptation step all adapted loop labels are increased with the same value $m$, 
  a violation of \ref{LLEEw:2}\ref{LLEEw:2b} would arise 
  by a path $\cvert \comprewrels{\sredi{\loopnsteplab{\aLname}}}{\comprewrels{\sredrtci{\bodylab}}{\sredi{\loopnsteplab{\bLname}}}} \avert$ in $\acharthat$ 
  where loop label $\bLname$ is increased while $\aLname$ is not. But such a path cannot exist. 
  Since $\bLname$ is increased, there is a path $\avert \redrtc \bverti{1}$ in $\achart$. 
  But then there is a path $\cvert \comprewrels{\sredi{\loopnsteplab{\aLname}}}{\sredtc} \avert \redrtc \bverti{1}$ in $\acharthat$, 
  which implies that also $\aLname$ is increased in the level adaptation step.
  Second, a violation of \ref{LLEEw:2}\ref{LLEEw:2b} in the connect-through step
  \vspace*{-.5mm}would arise from paths $\cvert \comprewrels{\redi{\loopnsteplab{\aLname}}}{\sredrtci{\bodylab}} \bverti{1}$ and 
                         $\bverti{2} \:\scomprewrels{\sredrtci{\bodylab}}{\sredi{\loopnsteplab{\bLname}}}\,$ in $\acharthatacc$ with $\ll{\aLname}\leq\ll{\bLname}$. 
  However, in view of the path $\cvert \comprewrels{\sredi{\loopnsteplab{\aLname}}}{\sredrtc} \bverti{1}$, 
  the loop label $\aLname$ was increased with $m$ in the level adaptation step . 
  On the other hand, in view of \ref{cond:transf:I} that there is no path from $\bverti{2}$ to $\bverti{1}$ in $\achart$, 
  $\bverti{1}$ is unreachable at the end of the path $\bverti{2} \,\scomprewrels{\sredrtc}{\sredi{\loopnsteplab{\bLname}}}$. 
  \vspace*{-.5mm}Hence this loop label $\bLname$ was not increased in the level adaptation step.
  So it is guaranteed that for such a pair of paths in $\connthroughin{\acharthat}{\bverti{1}}{\bverti{2}}$ always $\ll{\aLname}>\ll{\bLname}$.
  
  We conclude that the result of transformation~I is again a \LLEEwitness.
\end{proof}

\begin{theorem}\label{thm:LEEshaped:collapse}
  The bisimulation collapse of a \LLEEchart\ is again a \LLEEchart.
\end{theorem}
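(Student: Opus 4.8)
The plan is to prove the theorem by induction on the number of vertices of the \LLEEchart, with the connect-through construction as the single reduction step, drawing on the two preceding propositions for its correctness.

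First I would set up the induction. Let $\achart$ be a \LLEEchart. If $\achart$ is already a bisimulation collapse, then it is its own bisimulation collapse and we are done. Otherwise, by Prop.~\ref{prop:reduced:br} there is a pair of \emph{distinct} bisimilar vertices $\bverti{1},\bverti{2}$ in $\achart$ that, with respect to some \LLEEwitness\ of $\achart$, satisfies one of the conditions~\ref{cond:transf:I}, \ref{cond:transf:II}, \ref{cond:transf:III}. I would then form $\achartacc \defdby \connthroughin{\achart}{\bverti{1}}{\bverti{2}}$. By Lem.~\ref{lem:connthroughchart:bisim}, $\achartacc \bisim \achart$; and by Prop.~\ref{prop:LEEshape:preserve:conds}, $\achartacc$ is again a \LLEEchart. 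The key quantitative point is that $\achartacc$ has strictly fewer vertices than $\achart$, since the connect-through construction makes $\bverti{1}$ unreachable and garbage-collects it (along with any further vertices that become unreachable). Hence the induction hypothesis applies to $\achartacc$: its bisimulation collapse $\acharti{0}$ is a \LLEEchart.

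It then remains to observe that $\acharti{0}$ is also the bisimulation collapse of $\achart$. This uses only the standard facts about bisimulation collapse in this setting: every finite chart has a bisimulation collapse, unique up to isomorphism (the quotient by the maximal bisimulation, equivalently the unique minimal chart bisimilar to the given one), and bisimilar charts therefore have isomorphic collapses. Since $\achartacc \bisim \achart$, the chart $\acharti{0}$ is a minimal chart bisimilar to $\achart$, i.e., it is (up to isomorphism) the bisimulation collapse of $\achart$, and it is a \LLEEchart. This closes the induction.

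The genuinely hard work is already packaged in Prop.~\ref{prop:reduced:br} and Prop.~\ref{prop:LEEshape:preserve:conds}, so the obstacle here is conceptual rather than computational: one must be sure that iterating the connect-through step both terminates and converges to \emph{the} bisimulation collapse. Termination is immediate from the strict drop in the vertex count at each step; convergence is the point that requires the well-definedness and bisimilarity-invariance of the collapse mentioned above, together with the fact — supplied by Prop.~\ref{prop:LEEshape:preserve:conds} — that $\achartacc$ is again a \LLEEchart, so that Prop.~\ref{prop:reduced:br} can legitimately be re-applied at the next stage of the induction.
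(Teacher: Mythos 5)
Your proposal is correct and follows essentially the same route as the paper: repeatedly pick a bisimilar pair satisfying \ref{cond:transf:I}, \ref{cond:transf:II}, or \ref{cond:transf:III} via Prop.~\ref{prop:reduced:br}, apply the connect-through step, and invoke Prop.~\ref{prop:LEEshape:preserve:conds} and Lem.~\ref{lem:connthroughchart:bisim} to keep the LLEE property and bisimilarity. Your explicit induction on the vertex count and the appeal to uniqueness of the collapse up to isomorphism merely spell out what the paper's terser ``which is reached eventually'' leaves implicit.
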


\begin{proof}
  Given a \LLEEchart~$\achart$, repeat the following step:
  based on a \LLEEwitness\ pick, by Prop.~\ref{prop:reduced:br}, 
  bisimilar vertices $\bverti{1}$ and $\bverti{2}$ with \ref{cond:transf:I}, \ref{cond:transf:II}, or \ref{cond:transf:III},
  and then connect $\bverti{1}$ through to $\bverti{2}$, obtaining by Prop.~\ref{prop:LEEshape:preserve:conds} a \LLEEchart\
  bisimilar to $\achart$, due to Lem.~\ref{lem:connthroughchart:bisim}.  
  Hence the bisimulation collapse of $\achart$, which is reached eventually, is a \LLEEchart.
\end{proof}

We mention that by using a refinement of the interpretation TSS
(that avoids creating concatenations~$\stexpprod{\astexpi{1}}{\astexpi{2}}$ where $\astexpi{1}$ is not normed,
 in favor of using just $\astexpi{1}$)
and a refinement of the extraction procedure (that ensures an eager use of the right distributive law~(B4) of $\sstexpprod$ over $\sstexpsum$)
this theorem can be strengthened: the bisimulation collapse of a \LLEEchart\ is the chart interpretation of some star expression.
which then is a \LLEEchart\ by Prop.~\ref{prop:id:is:sol:chart:interpretation}.  
This can be proved by showing that, on collapsed \LLEEcharts, (refined) chart interpretation is the converse of (refined) solution extraction.


\begin{corollary}\label{cor:expressible}
  If a chart is expressible by a star expression modulo bisimilarity, then its collapse is a \LLEEchart.  
\end{corollary}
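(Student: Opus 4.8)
The plan is to derive the statement quickly from two results already in hand: Prop.~\ref{prop:lbl:chart:translation:is:LLEEw}, which says that for every star expression $\astexp\in\StExpsover{\actions}$ the chart interpretation $\chartof{\astexp}$ is a \LLEEchart\ (witnessed by the \entrybodylabeling\ $\charthatof{\astexp}$), and Thm.~\ref{thm:LEEshaped:collapse}, which says that the bisimulation collapse of a \LLEEchart\ is again a \LLEEchart. So suppose a chart $\achart$ is expressible by a star expression modulo bisimilarity, i.e.\ there is an $\astexp\in\StExpsover{\actions}$ with $\chartof{\astexp}\bisim\achart$. First I would invoke Prop.~\ref{prop:lbl:chart:translation:is:LLEEw} to get that $\chartof{\astexp}$ is a \LLEEchart, and then Thm.~\ref{thm:LEEshaped:collapse} to get that the bisimulation collapse of $\chartof{\astexp}$ is a \LLEEchart. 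It then only remains to identify this with the collapse of $\achart$.

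The bridging step is the standard observation that bisimilar charts have isomorphic bisimulation collapses. Concretely, bisimilarity is itself a bisimulation and an equivalence relation on the vertices of any chart, so the quotient of a chart by (its restriction to it of) bisimilarity is a well-defined chart, its bisimulation collapse, onto which the chart maps by a functional bisimulation; equivalently, for every chart $\acharti{0}$ there are functional bisimulations $\achart \funbisim \acharti{0} \convfunbisim \chartof{\astexp}$ with $\acharti{0}$ a common collapse. If $\achart \bisim \chartof{\astexp}$ via a bisimulation $\abisim$, then two vertices of $\achart$ lie in the same class of its collapse iff they are bisimilar, and -- by transitivity of bisimilarity composed through $\abisim$ -- sending such a class to the class of any $\abisim$-related vertex of $\chartof{\astexp}$ is a well-defined bijection between the two collapses that respects transitions, start vertices, and the termination marker; hence it is an isomorphism. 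Finally, being a \LLEEchart\ is preserved under isomorphism of charts: an isomorphism of charts carries a \LLEEwitness\ of the one to a \LLEEwitness\ of the other, since it preserves paths, body and loop-entry marking labels, loop charts, and layeredness, so properties \ref{LLEEw:1} and \ref{LLEEw:2} transfer verbatim. Combining these points: the collapse of $\achart$ is isomorphic to the collapse of $\chartof{\astexp}$, which is a \LLEEchart\ by Thm.~\ref{thm:LEEshaped:collapse}, and therefore the collapse of $\achart$ is a \LLEEchart.

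Since the substantial work has already been carried out in Thm.~\ref{thm:LEEshaped:collapse}, I do not expect a genuine obstacle here; the only matter requiring (routine) care is the identification of the collapses of two bisimilar charts, together with the remark that the \LLEEchart\ property is invariant under chart isomorphism -- both classical facts. An alternative, equally short phrasing avoids speaking of collapses of $\achart$ altogether: pick $\acharti{0}$ with $\chartof{\astexp} \funbisim \acharti{0} \convfunbisim \achart$, note that $\acharti{0}$ is the (common) bisimulation collapse, and observe that $\acharti{0}$ inherits LLEE from the \LLEEchart\ $\chartof{\astexp}$ directly by Thm.~\ref{thm:LEEshaped:collapse}.
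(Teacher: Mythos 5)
Your proposal is correct and is exactly the derivation the paper intends: the corollary is stated as an immediate consequence of Prop.~\ref{prop:lbl:chart:translation:is:LLEEw} and Thm.~\ref{thm:LEEshaped:collapse}, with the (routine, unstated) identification of the bisimulation collapses of bisimilar charts up to isomorphism and the invariance of the \LLEEchart\ property under chart isomorphism. No gap.
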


The converse statement holds as well.
But this corollary
does not hold for star expressions with $\stexpone$ and unary star.
For example, with respect to the TSS for the process interpretation of star expressions from this class, see e.g.\ \cite{baet:corr:grab:2007},
the expression $\astexpi{1} \defdby \stexpprod{(\stexpprod{(\stexpprod{(\stexpone}{\stexpit{\aact})}}{(\stexpprod{\bact}{\stexpit{\cact}})})}{\astexp}$
with $\astexp \defdby \stexpit{(\stexpprod{\stexpit{\aact}}{(\stexpprod{\bact}{\stexpit{\cact}})})}$ 
has the following interpretation,
  where $\astexpi{2} \defdby \stexpprod{(\stexpprod{\stexpone}{\stexpit{\cact}})}{\astexp}\,$:\vspace{-3ex}         
\begin{center}
  \begin{tikzpicture}

\matrix[anchor=north,row sep=1cm,column sep=1cm,every node/.style={draw,very thick,circle,minimum width=2.5pt,fill,inner sep=0pt,outer sep=2pt}] at (6.3,0) {
  \node(e-1){};  &  \node[draw=none,fill=none](dummy){};  
                                   &[-0.3cm] \node(e-2){}; 
  \\
};
\calcLength(e-1,dummy){mylen}

\draw[<-,very thick,>=latex,chocolate,shorten <=2pt](e-1) -- ++ (90:{0.5*\mylen pt});
\draw[thick] (e-1) circle (0.12cm);
\path(e-1) ++ ({-0.35*\mylen pt},{-0.035*\mylen pt}) node{$\astexpi{1}$}; 
\draw[->,shorten <=2pt,shorten >=2pt,out=220,in=140,distance={1.25*\mylen pt}] (e-1) to node[left]{$\aact$} (e-1);
\draw[->,shorten <=2pt,shorten >=2pt,out=-20,in=200,distance={0.6*\mylen pt}] (e-1) to node[below]{$\bact$} (e-2);

\draw[thick] (e-2) circle (0.12cm);
\path(e-2) ++ ({-0.125*\mylen pt},{0.3*\mylen pt}) node{$\astexpi{2}$};  
\draw[->,shorten <=2pt,shorten >=2pt,out=-20,in=60,distance={1.25*\mylen pt}] (e-2) to node[right,xshift={-0.025*\mylen pt}]{$\bact$} (e-2);
\draw[->,shorten <=2pt,shorten >=2pt,out=-60,in=20,distance={1.25*\mylen pt}] (e-2) to node[right,xshift={-0.05*\mylen pt}]{$\cact$} (e-2);
\draw[->,shorten <=2pt,shorten >=2pt,out=160,in=20,distance={0.6*\mylen pt}] (e-2) to node[above]{$\aact$} (e-1);

\end{tikzpicture}
  
\end{center}\vspace{-4ex}  
This is a chart in the extended sense in which immediate termination is permitted
at arbitrary vertices. It is a bisimulation collapse that does not satisfy \LEE,
taking into account that in the definition of `loop' for charts in the extended sense
\ref{loop:3} needs to be changed to exclude immediate termination for vertices
in a loop chart other than~the~start~vertex.


%

\section{The completeness result, and conclusion}%
  \label{conclusion}         

That bisimulation collapse preserves LLEE was the last building block in the proof of the desired completeness result.

\begin{theorem}\label{thm:BBP:sound:complete}
  The proof system \BBP\ is complete with respect to
  the bisimulation semantics of star expressions, that is, 
  with respect to bisimilarity of charts that interpret
  star expressions without $\stexpone$ and with binary Kleene star $\stexpbit{}{}$. 
\end{theorem}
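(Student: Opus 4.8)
The plan is to assemble the completeness argument precisely as anticipated in Section~\ref{compl:proof}, now that the four building blocks {\bf (C)}, {\bf (E)}, {\bf (S)}, {\bf (P)} are in place. Soundness of $\BBP$ (the converse implication, which the theorem does not assert) is standard — each axiom (B1)--(B7), (BKS1), (BKS2) is valid modulo bisimilarity of chart interpretations, and $\RSPbit$ is sound because a guarded linear system of behavioural equations has a unique solution up to bisimilarity — so I would focus on completeness. Thus, fix star expressions $\astexpi{1},\astexpi{2}\in\StExpsover{\actions}$ with $\chartof{\astexpi{1}}\bisim\chartof{\astexpi{2}}$; the goal is to derive $\astexpi{1}\BBPeq\astexpi{2}$.

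First I would record the two elementary facts about chart interpretations. By Prop.~\ref{prop:lbl:chart:translation:is:LLEEw}, both $\chartof{\astexpi{1}}$ and $\chartof{\astexpi{2}}$ are \LLEEcharts, witnessed by $\charthatof{\astexpi{1}}$ and $\charthatof{\astexpi{2}}$. By Prop.~\ref{prop:id:is:sol:chart:interpretation}, the identity functions $\sidfunon{\vertsof{\astexpi{1}}}$ and $\sidfunon{\vertsof{\astexpi{2}}}$ are provable solutions of $\chartof{\astexpi{1}}$ and $\chartof{\astexpi{2}}$, with principal values $\astexpi{1}$ and $\astexpi{2}$, respectively. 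Since $\chartof{\astexpi{1}}\bisim\chartof{\astexpi{2}}$, these two \startconnected\ charts have a common bisimulation collapse $\achart$, and there are functional bisimulations $\sphifuni{1}$ from $\chartof{\astexpi{1}}$ onto $\achart$ and $\sphifuni{2}$ from $\chartof{\astexpi{2}}$ onto $\achart$, both mapping the start vertex of their domain to the start vertex of $\achart$ (the bisimulation collapse of a \startconnected\ chart is again \startconnected, with the image of the start vertex as its start vertex). By {\bf (C)}, i.e.\ Thm.~\ref{thm:LEEshaped:collapse}, the collapse $\achart$ is again a \LLEEchart; fix a \LLEEwitness~$\acharthat$ of $\achart$.

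Next I would apply {\bf (E)}, Prop.~\ref{prop:extracted:fun:is:solution}, to obtain the extracted provable solution $\sextrsolof{\acharthat}$ of $\achart$, and write $\astexp$ for its principal value (its value at the start vertex of $\achart$). Applying {\bf (P)}, Prop.~\ref{prop:transf:sol:via:funbisim}, along $\sphifuni{1}$ and along $\sphifuni{2}$ yields that $\scompfuns{\sextrsolof{\acharthat}}{\sphifuni{1}}$ is a provable solution of $\chartof{\astexpi{1}}$ and $\scompfuns{\sextrsolof{\acharthat}}{\sphifuni{2}}$ is a provable solution of $\chartof{\astexpi{2}}$, each again with principal value $\astexp$ — this is exactly the point where it matters that $\sphifuni{1},\sphifuni{2}$ respect start vertices, so that the principal value is preserved. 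Now $\chartof{\astexpi{1}}$ carries two provable solutions, $\sidfunon{\vertsof{\astexpi{1}}}$ and $\scompfuns{\sextrsolof{\acharthat}}{\sphifuni{1}}$, and it is a \LLEEchart, so {\bf (S)}, Prop.~\ref{prop:extrsol:vs:solution}, gives that these two solutions are \provablyin{\BBP} equal at every vertex, in particular at the start vertex $\astexpi{1}$: hence $\astexpi{1}\BBPeq\astexp$. Symmetrically, $\astexpi{2}\BBPeq\astexp$, and therefore $\astexpi{1}\BBPeq\astexp\BBPeq\astexpi{2}$, which is the desired completeness.

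The genuinely hard part of this development has already been settled upstream: it is {\bf (C)}, Thm.~\ref{thm:LEEshaped:collapse}, that the bisimulation collapse of a \LLEEchart\ is still a \LLEEchart, together with the case analysis of Prop.~\ref{prop:reduced:br} and the three LLEE-preserving connect-through transformations. Ex.~\ref{ex:salomaa} explains why this detour through the collapse is forced upon us rather than Salomaa's upward route via a product automaton: products of \LLEEcharts\ need not be \LLEEcharts. At the level of the final assembly above essentially nothing can go wrong; the single point requiring attention is the bookkeeping of principal values across the functional bisimulations, which is precisely what Prop.~\ref{prop:transf:sol:via:funbisim} is designed to deliver. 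So the theorem follows by the stated chaining of Props.~\ref{prop:lbl:chart:translation:is:LLEEw} and~\ref{prop:id:is:sol:chart:interpretation} with {\bf (C)}, {\bf (E)}, {\bf (P)}, and {\bf (S)}.
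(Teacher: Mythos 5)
Your proposal is correct and follows exactly the route the paper itself takes: its proof of Theorem~\ref{thm:BBP:sound:complete} simply refers back to the chaining of Props.~\ref{prop:lbl:chart:translation:is:LLEEw} and~\ref{prop:id:is:sol:chart:interpretation} with \textbf{(C)}~(Thm.~\ref{thm:LEEshaped:collapse}), \textbf{(E)}~(Prop.~\ref{prop:extracted:fun:is:solution}), \textbf{(P)}~(Prop.~\ref{prop:transf:sol:via:funbisim}), and \textbf{(S)}~(Prop.~\ref{prop:extrsol:vs:solution}) laid out in Sect.~\ref{compl:proof}, which is precisely your assembly. Your added bookkeeping about start vertices and principal values is a harmless (and correct) elaboration of the same argument.
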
   

\begin{proof}
  The proof steps were already explained in Sect.~\ref{compl:proof}.
\end{proof}

\begin{example}\label{ex:completeness1}
  The bisimilar LLEE-charts~$\acharti{1}$ and $\acharti{2}$ 
  in Ex.~\ref{ex:salomaa} have
  $\stexpbit{(\stexpsum{\stexpprod{a}{(\stexpsum{a}{b})}}{b})}{\stexpzero}$ and $\stexpbit{(\stexpsum{\stexpprod{b}{(\stexpsum{a}{b})}}{a})}{\stexpzero}$
  as their principal solutions. 
  Their bisimulation collapse $\acharti{0}$ has principal solution $\stexpbit{(\stexpsum{a}{b})}{\stexpzero}$.
  Then 
  $\stexpbit{(\stexpsum{\stexpprod{a}{(\stexpsum{a}{b})}}{b})}{\stexpzero} \BBPeq$ 
  $\stexpbit{(\stexpsum{a}{b})}{\stexpzero}\BBPeq\stexpbit{(\stexpsum{\stexpprod{b}{(\stexpsum{a}{b})}}{a})}{\stexpzero}$
  by Prop.~\ref{prop:transf:sol:via:funbisim}, Prop.~\ref{prop:extrsol:vs:solution}.
\end{example}

\begin{example}\label{ex:completeness2}
  Revisiting the star expressions $\astexpi{1},\astexpi{2}$ in Ex.~\ref{ex:chart:interpretation}
  with bisimilar chart interpretations $\chartof{\astexpi{1}}$ and $\chartof{\astexpi{2}}$,
  we can apply our proof in order to show that $\astexpi{1} \BBPeq \astexpi{2}$. 
  $\chartof{\astexpi{1}}$ and $\chartof{\astexpi{2}}$ have provable solutions with principal values $e_1$ and $e_2$
  by Prop.~\ref{prop:id:is:sol:chart:interpretation}. 
  As $\chartof{\astexpi{1}}$ and $\chartof{\astexpi{2}}$ are \LLEEcharts\ by Prop.~\ref{prop:lbl:chart:translation:is:LLEEw}
  with \LLEEwitnesses~$\charthatof{\astexpi{1}}$ and $\charthatof{\astexpi{2}}$,  
  their bisimulation collapse~$\achart$ is a \LLEEchart\ by Thm.~\ref{thm:LEEshaped:collapse}. 
  We take here the more familiar~$\acharthat$,
  but could also take the one obtained in Fig.~\ref{fig:ex:transformations}.
  We saw in Fig.~\ref{fig:ex:extr::sol:vs:extrsol} that $\acharthat$ has a provable solution with principal value
      $\extrsolof{\acharthat}{\averti{0}}=
          \stexpprod{\aact}{(\stexpbit{(\stexpsum{\stexpprod{\cact}{\aact}}
  {\stexpprod{\aact}{(\stexpsum{\bact}{\stexpprod{\bact}{\aact}})}})}{\stexpzero})}$.
  Then by Prop.~\ref{prop:transf:sol:via:funbisim} and Prop.~\ref{prop:extrsol:vs:solution} it follows that
  \mbox{$e_1\BBPeq\extrsolof{\acharthat}{\averti{0}}\BBPeq e_2$}.\vspace{-2mm}
  %
  \begin{center}
    \scalebox{0.96}{
  %
%
\hspace*{-1em}
\begin{tikzpicture}[scale=1,every node/.style={transform shape}]
%
\matrix[anchor=north,row sep=0.9cm,every node/.style={draw,very thick,circle,minimum width=2.5pt,fill,inner sep=0pt,outer sep=2pt}] at (0,-0.5) {
  \node(v_e1_0){};
  \\
  \node(v_e1_1){};
  \\
  \node(v_e1_2){};
  \\
  \node(v_e1_0'){};
  \\
};
\calcLength(v_e1_0,v_e1_1){mylen}
\draw[<-,very thick,>=latex,chocolate](v_e1_0) -- ++ (90:{0.45*\mylen pt});
%
\draw[->,very thick] (v_e1_0) to node[right,pos=0.45,xshift={-0.05*\mylen pt}]{$\loopnsteplab{2}$} 
                                 node[left,pos=0.45]{\small $\aact$} (v_e1_1);
%
\draw[->,very thick] (v_e1_1) to node[right,pos=0.45,xshift={-0.05*\mylen pt}]{$\loopnsteplab{1}$} 
                                 node[left,pos=0.45]{\small $\aact$} (v_e1_2);
\draw[->,shorten <= 4.5pt] (v_e1_1) to[out=170,in=180,distance={0.75*\mylen pt}] node[left,pos=0.5]{\small $\cact$} (v_e1_0);
%
\draw[->] (v_e1_2) to node[right,pos=0.5]{\small $\bact$} (v_e1_0');
\draw[->] (v_e1_2) to[out=180,in=190,distance={0.75*\mylen pt}] node[left,pos=0.5]{\small $\bact$} (v_e1_1);
%
\draw[->] (v_e1_0') to[out=0,in=0,distance={1.5*\mylen pt}] node[right,pos=0.5]{\small $\aact$} (v_e1_1); 
%
\path (v_e1_0') ++ ({0*\mylen pt},{-0.6*\mylen pt}) node{\large $\chartof{\astexpi{1}},\, \charthighhatof{\astexpi{1}}$};

\matrix[anchor=north,row sep=0.9cm,every node/.style={draw,very thick,circle,minimum width=2.5pt,fill,inner sep=0pt,outer sep=2pt}] at (2.9,0) {
  \node[draw=none,fill=none](v_1-dummy){};
  \\
  \node(v_0){};
  \\
  \node(v_1){};
  \\
  \node(v_2){};
  \\
};
\calcLength(v_0,v_1){mylen}
\draw[draw=none,<-,very thick,>=latex,chocolate](v_1-dummy) -- ++ (90:{0.45*\mylen pt});
\draw[<-,very thick,>=latex,chocolate](v_0) -- ++ (90:{0.5*\mylen pt});
\draw[->](v_0) to node[right,xshift={-0.05*\mylen pt},pos=0.45]{\small $\aact$} (v_1); 
%
\draw[->,very thick](v_1) to node[right,xshift={-0.05*\mylen pt},pos=0.45]{$\loopnsteplab{1}$}
                             node[left,xshift={0.05*\mylen pt},pos=0.45]{\small $\aact$} (v_2);
\draw[->,very thick,shorten <= 5pt](v_1) to[out=175,in=180,distance={0.75*\mylen pt}] 
         node[left,pos=0.5,xshift={0.05*\mylen pt}]{$\loopnsteplab{1}$} 
         node[right,xshift={-0.065*\mylen pt},pos=0.35]{\small $\cact$} (v_0);
%
\draw[->](v_2) to[out=180,in=185,distance={0.75*\mylen pt}]  
               node[below,yshift={0.0*\mylen pt},pos=0.2]{\small $\bact$} (v_1);
\draw[->](v_2) to[out=0,in=0,distance={1.3*\mylen pt}] 
               node[below,yshift={0.00*\mylen pt},pos=0.125]{\small $\bact$} (v_0);

\draw[-,magenta,thick,densely dashed] (v_e1_0) to (v_0);
\draw[-,magenta,thick,densely dashed] (v_e1_0') to (v_0);
\draw[-,magenta,thick,densely dashed] (v_e1_1) to (v_1);
\draw[-,magenta,thick,densely dashed] (v_e1_2) to (v_2);

\path (v_2) ++ ({0*\mylen pt},{-0.8*\mylen pt}) node{\large $\achart,\,\acharthat$};

%
\matrix[anchor=north,row sep=0.9cm,every node/.style={draw,very thick,circle,minimum width=2.5pt,fill,inner sep=0pt,outer sep=2pt}] at (5.85,0.1) {
  \node(v_e2_0){};
  \\
  \node(v_e2_1){};
  \\
  \node(v_e2_2){};
  \\
  \node(v_e2_0''){};
  \\
  \node(v_e2_1'){};
  \\
};
\calcLength(v_e2_0,v_e2_1){mylen}
  \draw[draw=none] (v_e2_2) arc (270:205:{\mylen pt}) node[style={draw,very thick,circle,minimum width=2.5pt,fill,inner sep=0pt,outer sep=2pt}](v_e2_0'){};
  \draw[draw=none] (v_e2_0'') arc (90:205:{\mylen pt}) node[style={draw,very thick,circle,minimum width=2.5pt,fill,inner sep=0pt,outer sep=2pt}](v_e2_0'''){};
\draw[<-,very thick,>=latex,chocolate](v_e2_0) -- ++ (90:{0.45*\mylen pt});
\draw[->](v_e2_0) to node[right,xshift={-0.05*\mylen pt},pos=0.4]{\small $\aact$} (v_e2_1);
%
%
\draw[->,very thick](v_e2_1) to node[right,pos=0.45,xshift={-0.075*\mylen}]{$\loopnsteplab{3}$}
                                node[left,pos=0.45,xshift={0.065*\mylen}]{\small $\aact$} (v_e2_2);
\draw[->,very thick](v_e2_1) to node[below,pos=0.45]{$\loopnsteplab{3}$} 
                                node[above,pos=0.7,yshift={-0.025*\mylen pt}]{\small $\cact$} (v_e2_0');
%
\draw[->] (v_e2_0') to[out=115,in=150,distance={0.7*\mylen pt}] node[above,pos=0.75]{\small $\aact$} (v_e2_1);
%
\draw[->,very thick](v_e2_2) to node[right,pos=0.45,xshift={-0.075*\mylen}]{$\loopnsteplab{2}$} 
                                node[left,pos=0.425,xshift={0.05*\mylen}]{\small $\bact$} (v_e2_0'');
\draw[->,shorten <= 5pt] (v_e2_2) to[out=10,in=0,distance=0.7cm] 
                                node[right,pos=0.5,xshift={-0.025*\mylen}]{\small $\bact$} (v_e2_1);
%
\draw[->](v_e2_0'') to node[right,xshift={-0.025*\mylen pt},pos=0.4]{\small $\aact$} (v_e2_1');
%
\draw[->,very thick](v_e2_1') to node[below,pos=0.45]{$\loopnsteplab{1}$} 
                                 node[above,pos=0.7,yshift={-0.05*\mylen pt}]{\small $\cact$} (v_e2_0''');
\draw[->](v_e2_1') to[out=0,in=0,distance=1.4cm] node[right,pos=0.6,xshift={-0.025*\mylen}]{\small $\aact$} (v_e2_2);
%
\draw[->] (v_e2_0''') to[out=115,in=150,distance={0.7*\mylen pt}] node[above,pos=0.4]{\small $\aact$} (v_e2_1');

\path (v_e2_1') ++ ({1*\mylen pt},{-0.3*\mylen pt}) node{\large $\chartof{\astexpi{2}},\,\charthighhatof{\astexpi{2}}$};

\draw[-,magenta,thick,densely dashed] (v_e2_0) to (v_0);
\draw[-,magenta,thick,densely dashed] (v_e2_0') to (v_0);
\draw[-,magenta,thick,densely dashed] (v_e2_0'') to (v_0);
\draw[-,magenta,thick,densely dashed] (v_e2_0''') to (v_0);
\draw[-,magenta,thick,densely dashed,out=160,bend right,distance={0.75*\mylen pt}] (v_e2_1) to (v_1);
\draw[-,magenta,thick,densely dashed] (v_e2_1') to (v_1);
\draw[-,magenta,thick,densely dashed] (v_e2_2) to (v_2);

\end{tikzpicture} 
%
  %
  \end{center}
\end{example}

\vspace{-1mm}
We have shown that Milner's axiomatization, tailored to star expressions without~1 and with ${}^{\sstexpbit}$, 
is complete in bisimulation semantics. 
At the core of our proof is the graph structure property LLEE, 
which characterizes the process graphs that can be expressed by star expressions without 1 and with ${}^{\sstexpbit}$
as charts whose bisimulation collapse is a \LLEEchart. 

Completeness of \BBP\ covers completeness of the theory $\thplus{\BPAzeropl}{\RSPpl}$ of perpetual loop iteration $\stexppl{(\cdot)}$ \cite{fokk:1997:pl:ICALP}
in the sense that the latter result can be shown by our means, or by a faithful interpretation $\stexppl{\astexp} \mapsto \stexpbit{\astexp}{\stexpzero}$
of $\thplus{\BPAzeropl}{\RSPpl}$ in \BBP. 

Completeness of \BBP\ can be extended, also by means of a faithful interpretation,
to cover star expressions with 0, 1, and ${}^{\sstexpit}$, 
but with a syntactic restriction on terms directly under a~${}^{\sstexpit}$:
that they can be rewritten to star expressions with only 'harmless' occurrences of 1. 
This is analogous to the situation that the completeness result from \cite{fokk:zant:1994,fokk:1996:kleene:star:AMAST} 
for star expressions without 0 and 1 and with ${}^{\sstexpbit}$ was extended in \cite{corr:nico:labe:2002} to a setting with 1 (but not 0) 
and ${}^{\sstexpit}$, where a generalized version of the non-empty-word property is disallowed for terms directly under a ${}^{\sstexpit}$. 
With the interpretation approach, also the result in \cite{corr:nico:labe:2002} can be obtained 
from the one in \cite{fokk:zant:1994,fokk:1996:kleene:star:AMAST}.



The main future goal is to solve Milner's problem entirely by extending our result to the full class of star expressions.

\begin{acks}
  We thank Alban Ponse for his suggestion to consider completeness of Milner's axiomatization for the fragment without 1,
  and Luca Aceto for encouragement, comments on the exposition, 
  and facilitating a visit of the second author to GSSI, from
  which this paper developed.
  Also, we thank the reviewers for detailed remarks
  and suggestions on how to improve the positioning of our completeness result.
\end{acks}

\bibliography{BBP-complete.bib}



\newpage\onecolumn
\appendix%
\section{Appendix: supplements, more proof details, and omitted proofs}%
  \label{appendix}%

\subsection{Proofs in Section~\ref{prelims}: Preliminaries}

\begin{repeatedprop}[= Proposition~\ref{prop:id:is:sol:chart:interpretation}, uses \BBP-axioms (B1)--(B7), (BKS1)]
  For every $\astexp\in\StExpsover{\actions}$,
  the identity function $\sidfunon{\vertsof{\astexp}} \funin \vertsof{\astexp} \to \vertsof{\astexp}\subseteq\StExpsover{\actions}$, $\astexpacc \mapsto \astexpacc$,
  is a provable solution of the chart interpretation $\chartof{\astexp}$ of $\astexp$. 
\end{repeatedprop}

In the proof of this proposition we will use the following definition concerning `action derivatives',
and the subsequent lemma. That statement can be viewed as the `fundamental theorem of differential calculus for star expressions'
which says that every star expressions can be reassembled by a form of `integration' from its action derivatives. 
In this context `differentiation' follows the definition of action derivatives in Definition~\ref{def:chart:interpretation}
(corresponding to Antimirov's concept of `partial derivative' in \cite{anti:1996}),
and `integration' means sum formation over products of pairs $\pair{\aact}{\atickstexp}$
for actions $\aact$ and \aderivatives{\aact} $\atickstexp$.

\begin{definition}
  For star expressions $\astexp\in\StExpsover{\actions}$ we define the set $\actderivs{\astexp}$ of \emph{action derivatives} of $\astexp$ as follows:
  \begin{equation*}
    \actderivs{\astexp}
      \defdby
        \descsetexpbig{ \pair{\aact}{\atickstexp} }
                      { \aact\in\actions,\, \atickstexp\in\tickStExpsover{\actions},\, \astexp \lt{\aact} \atickstexp } \punc{.}
  \end{equation*}
\end{definition}

\begin{lemma}\label{lem:ft}
  Every $\astexp\in\StExpsover{\actions}$ can be provably reassembled from its action derivatives as: 
  \begin{align}
    \astexp
      \:\BBPeq\: &
    \stexpsum{  
      \Bigl(
        \sum_{i=1}^{m}
          \aacti{i} 
      \Bigr)     
              }{
      \Big(          
        \sum_{j=1}^{n}
          \stexpprod{\bacti{j}}{\astexpacci{j}}
      \Bigr)  
                } \punc{,} 
      \label{eq:1:lem:ft}          
      \\[-0.75ex]
      & \;          
      \text{provided that }
      \actderivs{\astexp}
        \: =\:
      \setexpbig{ \pair{\aacti{1}}{\tick}, \ldots, \pair{\aacti{m}}{\tick},
                  \pair{\bacti{1}}{\astexpacci{1}}, \ldots, \pair{\bacti{n}}{\astexpacci{n}} } \punc{.}
      \label{eq:2:lem:ft}            
  \end{align}
\end{lemma}

\begin{proof}
  We start by noting that we need to show \eqref{eq:1:lem:ft}, for all $\astexp\in\StExpsover{\actions}$, only for one list representation of $\actderivs{\astexp}$ of the form \eqref{eq:2:lem:ft}.
  This is because then \eqref{eq:1:lem:ft} follows also for all other list representations of $\actderivs{\astexp}$ the form \eqref{eq:2:lem:ft}.
  Indeed, the axioms ($\commstexpsum$), ($\assocstexpsum$), and ($\idempotstexpsum$) of \BBP\ 
    (the \ACI\nb-axioms for associativity, commutativity, and idempotency of $\sstexpsum$)
  can be used to permute and duplicate summands as well as to remove duplicates of summands in sums \eqref{eq:1:lem:ft}
  according to permutations, duplications, and removal of duplicates in list representations of $\actderivs{\astexp}$ of the form \eqref{eq:2:lem:ft}.  
  
  We proceed by induction on the structure of star expressions in $\StExpsover{\actions}$.
  For performing the induction step, we distinguish 
  the five cases of productions in the grammar in Definition~\ref{def:StExps}.
  \begin{description}\setlength{\itemsep}{1.5ex}\vspace{0.5ex}
    \item{\emph{Case~1:}} \mbox{}
      $\astexp \syntequal \stexpzero$. 
      \vspace*{0.75ex}
      
      Then $\astexp$ does not enable any transitions, and hence $\actderivs{\astexp} = \emptyset$.
      We find the provable equality: 
      \begin{align*}
        \astexp
          \:\syntequal\:
        \stexpzero
          \BBPeq
        \stexpsum{\stexpzero}{\stexpzero}
            \qquad \text{(by axiom ($\commstexpsum$) of \BBP)} \punc{.}   
      \end{align*}
      This is of the form as in \eqref{eq:1:lem:ft} with $m = n = 0$
      when we construe $\actderivs{\astexp} = \emptyset$ as a list representation of the form \eqref{eq:2:lem:ft}.
    \item{\emph{Case~2:}} \mbox{}
      $\astexp \syntequal \aact$ for some $\aact\in\actions$. 
      \vspace*{0.75ex}  
    
      Then according to the TSS in Definition~\ref{def:chart:interpretation}
      the expression $\astexp$ enables precisely one transition, an \transitionact{\aact} to $\tick$.
      Hence the set of \actionderivatives\ of $\astexp$ consists only of one element:
      \begin{equation}\label{eq:2:prf:lem:ft}
        \actderivs{\astexp} = \setexp{ \pair{\aact}{\tick} } \punc{.}
      \end{equation}
      
      We find the provable equality: 
      \begin{align*}
        \astexp
          \BBPeq
        \stexpsum{\aact}{\stexpzero}
          \qquad \text{(by axiom ($\neutralstexpsum$) of \BBP)} \punc{.}  
      \end{align*}
      The right-hand side is of the form \eqref{eq:1:lem:ft} with $m = 1$, $\aacti{1} = \aact$ and $n = 0$
      in relation to \eqref{eq:2:prf:lem:ft}
      when we construe $\actderivs{\astexp}$ as a list representation of the form \eqref{eq:2:lem:ft}.
    \item{\emph{Case~3:}} \mbox{}
      $\astexp \syntequal \stexpsum{\astexpi{1}}{\astexpi{2}}$. 
      \vspace*{0.75ex}
      
      Since every star expression has only finitely many derivatives, each of which is either $\tick$ or a star expression,
      we may assume that the sets of \actionderivatives\ of the constituent expressions $\astexpi{1}$ and $\astexpi{2}$ of $\stexpsum{\astexpi{1}}{\astexpi{2}}$ 
      have list representations:
      \begin{equation}\label{eq:3:prf:lem:ft}
        \begin{aligned}
          \actderivs{\astexpi{1}}
            \: & = \:
          \setexpbig{ \pair{\aacti{11}}{\tick}, \ldots, \pair{\aacti{m_{1}1}}{\tick},
                      \pair{\bacti{11}}{\astexpacci{11}}, \ldots, \pair{\bacti{n_{1}1}}{\astexpacci{n_{1}1}} } \punc{,}
          \\
          \actderivs{\astexpi{2}}
            \: & = \:
          \setexpbig{ \pair{\aacti{12}}{\tick}, \ldots, \pair{\aacti{{m_{2}2}}}{\tick},
                      \pair{\bacti{12}}{\astexpacci{12}}, \ldots, \pair{\bacti{n_{2}2}}{\astexpacci{n_{2}2}} } \punc{.}    
        \end{aligned}                       
      \end{equation}
      Then it follows from the form of the TSS rules in Definition~\ref{def:chart:interpretation} concerning sums of star expressions
      that the sets of \actionderivatives\ of $\stexpsum{\astexpi{1}}{\astexpi{2}}$ is the union
      of the sets of \actionderivatives\ of $\astexpi{1}$, and of $\astexpi{2}$. 
      By permuting the \actionderivatives\ with tick to the front, this union has the list representation:
      \begin{equation}\label{eq:4:prf:lem:ft}
        \begin{aligned}
          &
          \actderivs{\stexpsum{\astexpi{1}}{\astexpi{2}}}
          \: = \:
          \begin{aligned}[t]
            \bigl\{ & \pair{\aacti{11}}{\tick}, \ldots, \pair{\aacti{m_{1}1}}{\tick},
                      \pair{\aacti{12}}{\tick}, \ldots, \pair{\aacti{{m_{2}2}}}{\tick},
                      \\[-0.25ex]
                    & \pair{\bacti{11}}{\astexpacci{11}}, \ldots, \pair{\bacti{n_{1}1}}{\astexpacci{n_{1}1}},
                      \pair{\bacti{12}}{\astexpacci{12}}, \ldots, \pair{\bacti{n_{2}2}}{\astexpacci{n_{2}2}}
            \bigr\} \punc{.}
          \end{aligned}
        \end{aligned}
      \end{equation}
      Now we can argue as follows to reassemble $\stexpsum{\astexpi{1}}{\astexpi{2}}$
      from its \actionderivatives: 
      \begin{align*}
        \astexp
          \:\syntequal\:
        \stexpsum{\astexpi{1}}{\astexpi{2}} \:
        & \;\parbox[t]{\widthof{$\eqin{\BBP}$}}{$\BBPeq$}\:
        \stexpsum{
          \Bigl(
            \stexpsum{  
              \Bigl(
                \sum_{i=1}^{m_{1}}
                  \aacti{i1} 
              \Bigr)     
                      }{
              \Big(          
                \sum_{j=1}^{n_{1}}
                  \stexpprod{\bacti{j1}}{\astexpacci{j1}}
              \Bigr) 
                       } 
          \Bigr)  
                  }{
          \Bigl(
            \stexpsum{  
              \Bigl(
                \sum_{i=1}^{m_{2}}
                  \aacti{i2} 
              \Bigr)     
                      }{
              \Big(          
                \sum_{j=1}^{n_{2}}
                  \stexpprod{\bacti{j2}}{\astexpacci{j2}}
              \Bigr) 
                       } 
          \Bigr)  
                 }     
          \displaybreak[0]\\
          & \;\parbox[t]{\widthof{$\eqin{\BBP}$}}{\hspace*{\fill}}\: \text{(by the induction hypothesis, using representation~\eqref{eq:4:prf:lem:ft})} 
        \displaybreak[0]\\
        & \;\parbox[t]{\widthof{$\eqin{\BBP}$}}{$\BBPeq$}\:
        \stexpsum{
          \Bigl(
            \stexpsum{
              \Bigl( 
                \stexpsum{
                  \Bigl(
                    \sum_{i=1}^{m_{1}}
                      \aacti{i1} 
                  \Bigr)     
                          }{
                  \Bigl(
                    \sum_{i=1}^{m_{2}}
                      \aacti{i2} 
                  \Bigr)
                          }
              \Bigr)  
                      }{  
                  \Big(          
                    \sum_{j=1}^{n_{1}}
                      \stexpprod{\bacti{j1}}{\astexpacci{j1}}
                  \Bigr) 
                        }
              \Bigr)
                   }{           
            \Big(          
              \sum_{j=1}^{n_{2}}
                \stexpprod{\bacti{j2}}{\astexpacci{j2}}
            \Bigr) 
                     } \punc{.}
          \displaybreak[0]\\
          & \;\parbox[t]{\widthof{$\eqin{\BBP}$}}{\hspace*{\fill}}\: \text{(by axioms ($\assocstexpsum$) and ($\commstexpsum$))} 
      \end{align*}
      Since \ACI\ is a subsystem of \BBP, this chain of provably equalities is one in \BBP.
      It demonstrates,
      together with applications of the axiom~($\assocstexpsum$) that are needed to bring each of the subexpressions of the two outermost summands
      into a form with association of summation subterms to the left,
      that $\astexp$ satisfies \eqref{eq:1:lem:ft} 
      when we construe $\actderivs{\astexp}$ in \eqref{eq:4:prf:lem:ft} as a list representation of the form \eqref{eq:2:lem:ft}
      with $m = m_{1} + m_{2}$ and $n = n_{1} + n_{2}$.
    \item{\emph{Case~4:}} \mbox{}
      $\astexp \syntequal \stexpprod{\astexpi{1}}{\astexpi{2}}$. 
      \vspace*{0.75ex}
      
      As argued in the previous case, we may assume
      that the \actionderivatives\ of $\astexpi{1}$ are of the form: 
      \begin{equation}\label{eq:5:prf:lem:ft}
        \actderivs{\astexpi{1}}
          \: = \:
        \setexpbig{ \pair{\aacti{11}}{\tick}, \ldots, \pair{\aacti{m_{1}1}}{\tick},
                    \pair{\bacti{11}}{\astexpacci{11}}, \ldots, \pair{\bacti{n_{1}1}}{\astexpacci{n_{1}1}} } \punc{.} 
      \end{equation}
      Then it follows from the forms of the two rules in the TSS in Definition~\ref{def:chart:interpretation}
      concerning transitions from expressions with concatenation as their outermost symbol 
      that the set of \actionderivatives\ of $\stexpprod{\astexpi{1}}{\astexpi{2}}$ has a list representation of the form: 
      \begin{equation}\label{eq:6:prf:lem:ft}
        \actderivs{\stexpprod{\astexpi{1}}{\astexpi{2}}}
          \;\, = \:
            \bigl\{ \pair{\aacti{11}}{\astexpi{2}}, \ldots, \pair{\aacti{m_{1}1}}{\astexpi{2}},
                    \pair{\bacti{11}}{\stexpprod{\astexpacci{11}}{\astexpi{2}}}, \ldots, \pair{\bacti{n_{1}1}}{\stexpprod{\astexpacci{n_{1}1}}{\astexpi{2}}} 
            \bigr\} \punc{.}
      \end{equation}
      \begin{description}\setlength{\itemsep}{1.5ex}\vspace{0.5ex}
        \item{\emph{Case 4.1:}} \mbox{}
          $m_1,n_1 > 0$.
          \vspace*{0.5ex}
      
          Then we can reassemble $\stexpprod{\astexpi{1}}{\astexpi{2}}$ as follows:
          \begin{align*}
            \astexp
              \:\syntequal\:
            \stexpprod{\astexpi{1}}{\astexpi{2}} \:
            & \;\parbox[t]{\widthof{$\eqin{\BBP}$}}{$\BBPeq$}\:
            \stexpprod{
              \Bigl(
                \stexpsum{  
                  \Bigl(
                    \sum_{i=1}^{m_{1}}
                      \aacti{i1} 
                  \Bigr)     
                          }{
                  \Big(          
                    \sum_{j=1}^{n_{1}}
                      \stexpprod{\bacti{j1}}{\astexpacci{j1}}
                  \Bigr)  
              \Bigr)  
                          }
                       }{\astexpi{2}}
              & \parbox{\widthof{(by the induction hypothesis,}}
                       {(by the induction hypothesis,
                        \\\phantom{(}%
                        using representation~\eqref{eq:5:prf:lem:ft})}           
            \displaybreak[0]\\
            & \;\parbox[t]{\widthof{$\eqin{\BBP}$}}{$\BBPeq$}\:
              \stexpsum{ 
                \Bigl(
                  \sum_{i=1}^{m_{1}}
                    \stexpprod{\aacti{i1}}{\astexpi{2}} 
                \Bigr)  
                     }{
                \Big(          
                  \sum_{j=1}^{n_{1}}
                    \stexpprod{(\stexpprod{\bacti{j1}}{\astexpacci{j1}})}{\astexpi{2}}
                \Bigr)  
                         }
              & \text{(by axiom ($\distr$))}          
            \displaybreak[0]\\
            & \;\parbox[t]{\widthof{$\eqin{\BBP}$}}{$\BBPeq$}\:
              \stexpsum{ 
                \Bigl(
                  \sum_{i=1}^{m_{1}}
                    \stexpprod{\aacti{i1}}{\astexpi{2}} 
                \Bigr)  
                     }{
                \Big(          
                  \sum_{j=1}^{n_{1}}
                    \stexpprod{\bacti{j1}}{(\stexpprod{\astexpacci{j1}}{\astexpi{2}})}
                \Bigr)  
                         }
              & \text{(by axiom  ($\assocstexpprod$))}
            \displaybreak[0]\\
            & \;\parbox[t]{\widthof{$\eqin{\BBP}$}}{$\BBPeq$}\:
              \stexpsum{\stexpzero}
                       {\Bigl(
                          \stexpsum{ 
                            \Bigl(
                              \sum_{i=1}^{m_{1}}
                                \stexpprod{\aacti{i1}}{\astexpi{2}} 
                            \Bigr)  
                                 }{
                            \Big(          
                              \sum_{j=1}^{n_{1}}
                                \stexpprod{\bacti{j1}}{(\stexpprod{\astexpacci{j1}}{\astexpi{2}})}
                            \Bigr)  
                                     }
                        \Bigr)}
              & \text{(by axiom ($\neutralstexpsum$))}
          \end{align*}
          This chain of provable equalities demonstrates,
          together with applications of the axiom~(B2) that are needed to bring each of the subexpressions of the right outermost summands
          into a form with association of summation subterms to the left,
          that $\astexp$ satisfies \eqref{eq:1:lem:ft} 
          when we construe $\actderivs{\astexp}$ in \eqref{eq:6:prf:lem:ft} as a list representation \eqref{eq:2:lem:ft}
          with $m = 0$ and $n = m_{1} + n_{1}$.
        \item{\emph{Case 4.2:}} \mbox{}
          $m_1 > 0$, $n_1 = 0$.
          \vspace*{0.5ex}
      
          Then we can reassemble $\stexpprod{\astexpi{1}}{\astexpi{2}}$ as follows:
          \begin{align*}
            \astexp
              \:\syntequal\:
            \stexpprod{\astexpi{1}}{\astexpi{2}} \:
            & \;\parbox[t]{\widthof{$\eqin{\BBP}$}}{$\BBPeq$}\:
            \stexpprod{
              \Bigl(
                \stexpsum{  
                  \Bigl(
                    \sum_{i=1}^{m_{1}}
                      \aacti{i1} 
                  \Bigr)     
                          }{
                  \Big(          
                    \sum_{j=1}^{n_{1}}
                      \stexpprod{\bacti{j1}}{\astexpacci{j1}}
                  \Bigr)
                          }  
              \Bigr)  
                       }{\astexpi{2}}
              & \parbox{\widthof{(by the induction hypothesis,}}
                       {(by the induction hypothesis,
                        \\\phantom{(}%
                        using representation~\eqref{eq:5:prf:lem:ft})}
          \displaybreak[0]\\
            & \;\parbox[t]{\widthof{$\eqin{\BBP}$}}{$\BBPeq$}\:
            \stexpprod{
              \Bigl(
                \stexpsum{  
                  \Bigl(
                    \sum_{i=1}^{m_{1}}
                      \aacti{i1} 
                  \Bigr)     
                          }{\stexpzero}
              \Bigr)  
                       }{\astexpi{2}}
              & \text{(since $n_1 = 0$))}                         
          \displaybreak[0]\\
            & \;\parbox[t]{\widthof{$\eqin{\BBP}$}}{$\BBPeq$}\:
            \stexpprod{
                \stexpsum{  
                  \Bigl(
                    \sum_{i=1}^{m_{1}}
                      \stexpprod{\aacti{i1}}{\astexpi{2}} 
                  \Bigr)    
                          }{\stexpzero}
                       }{\astexpi{2}}
              & \text{(by axiom $\distr$)}
          \displaybreak[0]\\  
            & \;\parbox[t]{\widthof{$\eqin{\BBP}$}}{$\BBPeq$}\:
              \stexpsum{ 
                \Bigl(
                  \sum_{i=1}^{m_{1}}
                    \stexpprod{\aacti{i1}}{\astexpi{2}} 
                \Bigr)  
                     }{\stexpzero}
              & \text{(by axiom ($\stexpzerostexpprod$))}
            \displaybreak[0]\\
            & \;\parbox[t]{\widthof{$\eqin{\BBP}$}}{$\BBPeq$}\:
              \stexpsum{\stexpzero}
                       {\Bigl(
                          \sum_{i=1}^{m_{1}}
                            \stexpprod{\aacti{i1}}{\astexpi{2}} 
                        \Bigr)}
              & \text{(by axioms ($\commstexpsum$) and ($\neutralstexpsum$))}
          \end{align*}
          This chain of provable equalities demonstrates that $\astexp$ satisfies \eqref{eq:1:lem:ft} 
          when we construe $\actderivs{\astexp}$ in \eqref{eq:6:prf:lem:ft}, recalling that $n_1 = 0$, 
          as a list representation \eqref{eq:2:lem:ft} with $m = 0$ and $n = m_{1}$.
        \item{\emph{Case 4.3:}} \mbox{}
          $m_1 = 0$, $n_1 > 0$.
          \vspace*{0.5ex}
      
          Then we can reassemble $\stexpprod{\astexpi{1}}{\astexpi{2}}$ as follows:
          \begin{align*}
            \astexp
              \:\syntequal\:
            \stexpprod{\astexpi{1}}{\astexpi{2}} \:
            & \;\parbox[t]{\widthof{$\eqin{\BBP}$}}{$\BBPeq$}\:
            \stexpprod{
              \Bigl(
                \stexpsum{  
                  \Bigl(
                    \sum_{i=1}^{m_{1}}
                      \aacti{i1} 
                  \Bigr)     
                          }{
                  \Big(          
                    \sum_{j=1}^{n_{1}}
                      \stexpprod{\bacti{j1}}{\astexpacci{j1}}
                  \Bigr)  
              \Bigr)  
                          }
                       }{\astexpi{2}}
              & \parbox{\widthof{(by the induction hypothesis,}}
                       {(by the induction hypothesis,
                        \\\phantom{(}%
                        using representation~\eqref{eq:5:prf:lem:ft})}           
            \displaybreak[0]\\
            & \;\parbox[t]{\widthof{$\eqin{\BBP}$}}{$\BBPeq$}\:
            \stexpprod{
              \Bigl(
                \stexpsum{\stexpzero}{
                  \Big(          
                    \sum_{j=1}^{n_{1}}
                      \stexpprod{\bacti{j1}}{\astexpacci{j1}}
                  \Bigr)  
              \Bigr)  
                          }
                       }{\astexpi{2}}
              & \text{(since $m_1 = 0$)}
            \displaybreak[0]\\           
            & \;\parbox[t]{\widthof{$\eqin{\BBP}$}}{$\BBPeq$}\:
              \stexpsum{ 
                \stexpprod{\stexpzero}{\astexpi{2}} 
                     }{
                \Big(          
                  \sum_{j=1}^{n_{1}}
                    \stexpprod{(\stexpprod{\bacti{j1}}{\astexpacci{j1}})}{\astexpi{2}}
                \Bigr)  
                         }
              & \text{(by axiom ($\distr$))}
            \displaybreak[0]\\           
            & \;\parbox[t]{\widthof{$\eqin{\BBP}$}}{$\BBPeq$}\:
              \stexpsum{ 
                \stexpzero 
                     }{
                \Big(          
                  \sum_{j=1}^{n_{1}}
                    \stexpprod{(\stexpprod{\bacti{j1}}{\astexpacci{j1}})}{\astexpi{2}}
                \Bigr)  
                         }
              & \text{(by axiom ($\stexpzerostexpprod$))}
            \displaybreak[0]\\           
            & \;\parbox[t]{\widthof{$\eqin{\BBP}$}}{$\BBPeq$}\:
              \stexpsum{ 
                \stexpzero 
                     }{
                \Big(          
                  \sum_{j=1}^{n_{1}}
                    \stexpprod{\bacti{j1}}{(\stexpprod{\astexpacci{j1}}{\astexpi{2}})}
                \Bigr)  
                         }
              & \text{(by axiom ($\assocstexpprod$))}
          \end{align*}
          This chain of provable equalities demonstrates that $\astexp$ satisfies \eqref{eq:1:lem:ft} 
          when we construe $\actderivs{\astexp}$ in \eqref{eq:6:prf:lem:ft}, recalling that $m_1 = 0$, 
          as a list representation \eqref{eq:2:lem:ft} with $m = 0$ and $n = n_{1}$.
        \item{\emph{Case 4.4:}} \mbox{}
          $m_1 = n_1 = 0$.
          \vspace*{0.5ex}
      
          Then we can reassemble $\stexpprod{\astexpi{1}}{\astexpi{2}}$ as follows:
          \begin{align*}
            \astexp
              \:\syntequal\:
            \stexpprod{\astexpi{1}}{\astexpi{2}} \:
            & \;\parbox[t]{\widthof{$\eqin{\BBP}$}}{$\BBPeq$}\:
            \stexpprod{
              \Bigl(
                \stexpsum{  
                  \Bigl(
                    \sum_{i=1}^{m_{1}}
                      \aacti{i1} 
                  \Bigr)     
                          }{
                  \Big(          
                    \sum_{j=1}^{n_{1}}
                      \stexpprod{\bacti{j1}}{\astexpacci{j1}}
                  \Bigr)
                          }  
              \Bigr)  
                       }{\astexpi{2}}
              & \parbox{\widthof{(by the induction hypothesis,}}
                       {(by the induction hypothesis,
                        \\\phantom{(}%
                        using representation~\eqref{eq:5:prf:lem:ft})}
          \displaybreak[0]\\
            & \;\parbox[t]{\widthof{$\eqin{\BBP}$}}{$\BBPeq$}\:
            \stexpprod{
              \bigl(
                \stexpsum{\stexpzero}  
                         {\stexpzero}
              \bigr)  
                       }{\astexpi{2}}
              & \text{(since $m_1 = n_1 = 0$)}
          \displaybreak[0]\\
            & \;\parbox[t]{\widthof{$\eqin{\BBP}$}}{$\BBPeq$}\:
            \stexpprod{\stexpzero}
                      {\astexpi{2}}
              & \text{(by axiom ($\neutralstexpsum$))} 
          \displaybreak[0]\\
            & \;\parbox[t]{\widthof{$\eqin{\BBP}$}}{$\BBPeq$}\:
            \stexpzero
              & \text{(by axiom ($\stexpzerostexpprod$))} 
          \displaybreak[0]\\
            & \;\parbox[t]{\widthof{$\eqin{\BBP}$}}{$\BBPeq$}\:
            \stexpsum{\stexpzero}{\stexpzero}
              & \text{(by axiom ($\neutralstexpsum$))}
          \end{align*}
          This chain of provable equalities demonstrates
          that $\astexp$ satisfies \eqref{eq:1:lem:ft} 
          when we construe $\actderivs{\astexp}$ in \eqref{eq:6:prf:lem:ft},  recalling that $m_1 = n_1 = 0$, 
          as a list representation \eqref{eq:2:lem:ft} with $m = 0$ and $n = 0$.
  
     \end{description}

   \item{\emph{Case~5:}} \mbox{}
      $\astexp \syntequal \stexpbit{\astexpi{1}}{\astexpi{2}}$. 
      \vspace*{0.75ex}
  
      As in Case~3 we may assume that the sets of \actionderivatives\ 
      of the constituent expressions $\astexpi{1}$ and $\astexpi{2}$ of $\stexpsum{\astexpi{1}}{\astexpi{2}}$ 
      have list representations of the form \eqref{eq:3:prf:lem:ft}.
      Then it follows from the forms of the three rules in Definition~\ref{def:chart:interpretation}
      concerning transitions from expressions with binary iteration as their outermost symbol,
      that the set of \actionderivatives\ of $\stexpbit{\astexpi{1}}{\astexpi{2}}$ has a list representation of the form:
      \begin{equation*}
        \actderivs{\stexpbit{\astexpi{1}}{\astexpi{2}}}
          \;\, = \:
          \begin{aligned}[t]
            \bigl\{ & \pair{\aacti{11}}{\stexpbit{\astexpi{1}}{\astexpi{2}}}, \ldots, \pair{\aacti{m_{1}1}}{\stexpbit{\astexpi{1}}{\astexpi{2}}},
                    \\[-0.25ex]
                    & \pair{\bacti{11}}{\stexpprod{\astexpacci{11}}{(\stexpbit{\astexpi{1}}{\astexpi{2}})}}, 
                        \ldots, \pair{\bacti{n_{1}1}}{\stexpprod{\astexpacci{n_{1}1}}{(\stexpbit{\astexpi{1}}{\astexpi{2}})}},
                    \\[-0.25ex]
                    & \pair{\aacti{12}}{\tick}, \ldots, \pair{\aacti{m_{2}2}}{\tick},
                      \pair{\bacti{12}}{\astexpacci{12}}, \ldots, \pair{\bacti{n_{2}2}}{\astexpacci{n_{2}2}} 
            \bigr\} \punc{.}
          \end{aligned}
      \end{equation*}
      By permuting the \actionderivatives\ with tick to the front, this representation can be changed into: 
      \begin{equation}\label{eq:7:prf:lem:ft}
        \left.\;
          \begin{aligned}
            \actderivs{\stexpbit{\astexpi{1}}{\astexpi{2}}}
              \;\, = \:
              \begin{aligned}[t]
                \bigl\{ & \pair{\aacti{12}}{\tick}, \ldots, \pair{\aacti{m_{2}2}}{\tick},
                        \\[-0.25ex]
                        & \pair{\aacti{11}}{\stexpbit{\astexpi{1}}{\astexpi{2}}}, \ldots, \pair{\aacti{m_{1}1}}{\stexpbit{\astexpi{1}}{\astexpi{2}}},
                        \\[-0.25ex]
                        & \pair{\bacti{11}}{\stexpprod{\astexpacci{11}}{(\stexpbit{\astexpi{1}}{\astexpi{2}})}}, 
                            \ldots, \pair{\bacti{n_{1}1}}{\stexpprod{\astexpacci{n_{1}1}}{(\stexpbit{\astexpi{1}}{\astexpi{2}})}},
                        \\[-0.25ex]
                        & \pair{\bacti{12}}{\astexpacci{12}}, \ldots, \pair{\bacti{n_{2}2}}{\astexpacci{n_{2}2}} 
                \bigr\} \punc{.}
              \end{aligned}
          \end{aligned}
        \;\;\right\}
      \end{equation}
      Now we argue as follows in order to reassemble $\stexpbit{\astexpi{1}}{\astexpi{2}}$ 
      from its \actionderivatives\ in $\actderivs{\astexp}\,$:
      \begin{alignat*}{2}
        \astexp
        & \;\parbox[t]{\widthof{$\eqin{\BBP}$}}{$\syntequal$}\:
          \stexpbit{\astexpi{1}}{\astexpi{2}}
          & & \hspace*{-10.5ex} \text{(assumption in this case)}
        \\
        & \;\parbox[t]{\widthof{$\eqin{\BBP}$}}{$\BBPeq$}\:
          \stexpsum{
            \stexpprod{\astexpi{1}}{(\stexpbit{\astexpi{1}}{\astexpi{2}})}
                    }{\astexpi{2}}
          & & \hspace*{-10.5ex} \text{(by axiom (BKS1))}
        \\
        & \;\parbox[t]{\widthof{$\eqin{\BBP}$}}{$\BBPeq$}\:
          \stexpsum{
            \stexpprod{
                \stexpsum{
                  \Bigl(
                    \sum_{i=1}^{m_{1}}
                      \aacti{i1} 
                  \Bigr)     
                          }{
                  \Big(          
                    \sum_{j=1}^{n_{1}}
                      \stexpprod{\bacti{j1}}{\astexpacci{j1}}
                  \Bigr) 
                          }  
                       }{(\stexpbit{\astexpi{1}}{\astexpi{2}})}
              \Bigr)   
                    }{
              \Bigl(
                \stexpsum{
                  \Bigl(
                    \sum_{i=1}^{m_{2}}
                      \aacti{i2} 
                  \Bigr)     
                          }{
                  \Big(          
                    \sum_{j=1}^{n_{2}}
                      \stexpprod{\bacti{j2}}{\astexpacci{j2}}
                  \Bigr) 
                          }
              \Bigr)        
                      }
          & & \hspace*{-10.5ex}
              \parbox{\widthof{(by the induction hypothesis,}}
                     {(by the induction hypothesis,
                      \\\phantom{(}%
                      using representation \eqref{eq:3:prf:lem:ft})} 
        \displaybreak[0]\\
        & \;\parbox[t]{\widthof{$\eqin{\BBP}$}}{$\BBPeq$}\:
          \stexpsum{
              \Bigl(
                \stexpsum{
                  \Bigl(
                    \sum_{i=1}^{m_{1}}
                      \stexpprod{\aacti{i1}}{(\stexpbit{\astexpi{1}}{\astexpi{2}})} 
                  \Bigr)     
                          }{
                  \Big(          
                    \sum_{j=1}^{n_{1}}
                      \stexpprod{(\stexpprod{\bacti{j1}}{\astexpacci{j1}})}
                                {(\stexpbit{\astexpi{1}}{\astexpi{2}})}
                  \Bigr) 
                          }
              \Bigr)
                    }{
              \Bigl(
                \stexpsum{
                  \Bigl(
                    \sum_{i=1}^{m_{2}}
                      \aacti{i2} 
                  \Bigr)     
                          }{
                  \Big(          
                    \sum_{j=1}^{n_{2}}
                      \stexpprod{\bacti{j2}}{\astexpacci{j2}}
                  \Bigr) 
                          }
              \Bigr)        
                      }
        \\              
          & & & \hspace*{-10.5ex} \text{(by axiom ($\distr$))} 
        \displaybreak[0]\\[0.75ex]
        & \;\parbox[t]{\widthof{$\eqin{\BBP}$}}{$\BBPeq$}\:
          \stexpsum{
            \Bigl(
              \stexpsum{
                \Bigl(
                  \sum_{i=1}^{m_{1}}
                    \stexpprod{\aacti{i1}}{(\stexpbit{\astexpi{1}}{\astexpi{2}})} 
                \Bigr)     
                        }{
                \Big(          
                  \sum_{j=1}^{n_{1}}
                    \stexpprod{\bacti{j1}}
                              {(\stexpprod{\astexpacci{j1}}
                                          {(\stexpbit{\astexpi{1}}{\astexpi{2}})})}
                \Bigr) 
                        }
            \Bigr)
                  }{
            \Bigl(
              \stexpsum{
                \Bigl(
                  \sum_{i=1}^{m_{2}}
                    \aacti{i2} 
                \Bigr)     
                        }{
                \Big(          
                  \sum_{j=1}^{n_{2}}
                    \stexpprod{\bacti{j2}}{\astexpacci{j2}}
                \Bigr) 
                        }
            \Bigr)        
                    }
          \\          
          & & & \hspace*{-10.5ex} \text{(by axiom ($\assocstexpprod$))} 
        \displaybreak[0]\\[0.75ex]
        & \;\parbox[t]{\widthof{$\eqin{\BBP}$}}{$\ACIeq$}\:
          \stexpsum{
            \Bigl(
              \sum_{i=1}^{m_{2}}
                \aacti{i2} 
            \Bigr) 
                    }{
            \Bigr(           
              \stexpsum{            
                \Bigl(
                  \sum_{i=1}^{m_{1}}
                    \stexpprod{\aacti{i1}}{(\stexpbit{\astexpi{1}}{\astexpi{2}})} 
                \Bigr) 
                       }{
                \Bigl(             
                  \stexpsum{
                    \Bigl(   
                      \sum_{j=1}^{n_{1}}
                        \stexpprod{\bacti{j1}}
                                  {(\stexpprod{\astexpacci{j1}}
                                              {(\stexpbit{\astexpi{1}}{\astexpi{2}})})}
                    \Bigr)     
                            }{
                    \Big(          
                      \sum_{j=1}^{n_{2}}
                        \stexpprod{\bacti{j2}}{\astexpacci{j2}}
                    \Bigr) 
                              }
                \Bigr)                             
                         }
             \Bigr)                    
                    }
          \\          
          & & & \hspace*{-10.5ex} \text{(by axioms ($\assocstexpsum$) and ($\commstexpsum$))} 
      \end{alignat*}
      This chain of provably equalities demonstrates,
      together with applications of the axiom~($\assocstexpsum$) that are needed to bring each of the subexpressions of the right outermost summand
      into a form with association of summation subterms to the left,
      that $\astexp$ satisfies \eqref{eq:1:lem:ft} 
      when we construe $\actderivs{\astexp}$ in \eqref{eq:7:prf:lem:ft} as a list representation of the form \eqref{eq:2:lem:ft}
      with $m = m_{2}$ and $n = m_{1} + n_{1} + n_{2}$.
  \end{description}  
  In each of these five possible cases concerning the outermost structure of $\astexp$
  we have successfully performed the induction step. 
  In this way we have proved the statement of the lemma.
\end{proof}

\begin{proof}[Proof of Proposition~\ref{prop:id:is:sol:chart:interpretation}]
  Let $\chartof{\astexp} = \tuple{\vertsof{\astexp},\tick,\astexp,\actions,\transsof{\astexp}}$ 
  be the chart interpretation of a star expression $\astexp\in\StExpsover{\actions}$.
 
  Let $\bstexp\in\vertsof{\astexp}\subseteq\StExpsover{\actions}$ be a vertex of $\chartof{\astexp}$. 
  By Lemma~\ref{lem:ft}
  every star expression in $\StExpsover{\actions}$ can be reassembled
  as the \provablein{\BBP} sum over products of 
  over its action derivatives $\pair{\aact}{\atickstexp}$, 
  that is, over all actions $\aact\in\actions$ and \aderivatives{\aact} $\atickstexp$ of $\astexp$. 
  In particular, 
  \eqref{eq:1:lem:ft} guarantees that $\idfunon{\vertsof{\astexp}}{\bstexp} = \bstexp$
  satisfies the condition for $\sidfunon{\vertsof{\astexp}}$ 
  to be a provable solution at the vertex $\bstexp$ of $\chartof{\astexp}$,
  relative to a representation \eqref{eq:2:lem:ft} of the action derivatives of $\bstexp$
  which corresponds to a representation as assumed in Definition~\ref{def:provable-solution}.
  Since $\bstexp\in\vertsof{\astexp}$ was arbitrary in this argument,
  it follows that $\sidfunon{\vertsof{\astexp}}$ is a provable solution of $\chartof{\astexp}$. 
\end{proof}

\subsection{Proofs in Section~\ref{LLEE}: 
            Layered loop existence and elimination}
\renewcommand{\ll}[1]{#1}

\begin{repeatedprop}[= Proposition~\ref{prop:lbl:chart:translation:is:LLEEw}]
  For every $\astexp\in\StExpsover{\actions}$,
  the \entrybodylabeling\/ $\charthatof{\astexp}$ of $\chartof{\astexp}$ is a \LLEEwitness\ of $\chartof{\astexp}$.  
\end{repeatedprop}

\begin{proof}
  To verify \ref{LLEEw:1}
  it suffices to show that there are no infinite body step paths from any star expression $\astexp$
  (this is also a preparation for \ref{LLEEw:2}\ref{LLEEw:2a}, part~\ref{loop:2}).
  We prove, by induction on the syntactic structure of $\astexp$, the stronger statement that if 
  $e \redtc f$, then there does not exist an infinite body step path from $f$.
  The base cases, in which $e$ is of the form $a$ or 0, are trivial.
  Suppose $e\syntequal \stexpsum{\astexpi{1}}{\astexpi{2}}$. Then $e_i\,\sredtc\,f$ for some $i\in\{1,2\}$. So by induction, $f$ does not exhibit an infinite body step path.
  Suppose $e\syntequal \stexpprod{\astexpi{1}}{\astexpi{2}}$. Then $e\,\sredtc\,f$ means either $\astexpi{1}\,\sredtc\,f_1$ and $f\syntequal\stexpprod{f_1}{\astexpi{2}}$, or $\astexpi{2}\,\sredrtc\,f$.
  In the first case, by induction, $f_1$ and $\astexpi{2}$ do not exhibit infinite body step paths.
  This induces that $\stexpprod{f_1}{\astexpi{2}}$ does not exhibit an infinite body step path.
  In the second case, by induction, $f$ does not exhibit an infinite body step path.
  Suppose $e\syntequal \stexpbit{\astexpi{1}}{\astexpi{2}}$. Then $e\,\sredtc\,f$ means (A) $f\syntequal\stexpbit{\astexpi{1}}{\astexpi{2}}$, or (B) $\astexpi{1}\,\sredtc\,f_1$ and $f\syntequal\stexpprod{f_1}{(\stexpbit{\astexpi{1}}{\astexpi{2}})}$, or (C) $\astexpi{2}\,\sredtc\,f$.
  In case (A), each body step path from $f$ starts with either $f\lt{}_{\bodylab}\stexpprod{\astexpacci{1}}{(\stexpbit{\astexpi{1}}{\astexpi{2}})}$ where $\astexpi{1}\lt{}\astexpacci{1}$ and $\astexpacci{1}$ is not normed, or $f\lt{}_{\bodylab}\astexpacci{2}$ where $\astexpi{2}\lt{}\astexpacci{2}$.
  In the first case, by induction, $\astexpacci{1}$ does not exhibit an infinite body step path, so since $\astexpacci{1}$ is not normed, $\stexpprod{\astexpacci{1}}{(\stexpbit{\astexpi{1}}{\astexpi{2}})}$ does not exhibit an infinite body step path.
  In the second case, by induction, $e_2'$ does not exhibit an infinite body step path.
  In case (B), since by induction $f_1$ and by case (A) $\stexpbit{\astexpi{1}}{e_2}$ do not exhibit infinite body step paths, $\stexpprod{f_1}{(\stexpbit{\astexpi{1}}{e_2})}$ does not exhibit an infinite body step path.
  In case (C), by induction, $f$ does not exhibit an infinite body step path.

  We verify \ref{LLEEw:2}.
  From the TSS-rules in Definition~\ref{def:chart:interpretation} it follows that if $\astexp$ has a \loopentrytransition, then 
  $\astexp \syntequal ((\cdots((\stexpbit{\astexpi{1}}{\astexpi{2}})\cdot\bstexpi{1})\cdots )\cdot\bstexpi{n})$
  for some $n\geq 0$ and $\astexpi{1}$ normed.
  Let $\acharthat$ denote the \entrybodylabeling\ defined by the TSS-rules in Definition~\ref{def:chart:interpretation}
  on the `free' (= start-vertex free) chart of all star expressions in $\StExpsover{\actions}$. 
  We prove \ref{LLEEw:2} for a subchart $\indsubchartinat{\acharthat}{\astexp,\aLname}$ of $\acharthat$.  
  We first consider the case $n=0$, and then generalize it. 

  Let $\astexp \syntequal \stexpbit{\astexpi{1}}{\astexpi{2}}$ with $\astexpi{1}$ normed, and 
  $\aLname = \bsth{\astexpi{1}}+1$.
  Either $\astexp \redi{\loopnsteplab{\aLname}} \astexp$
  or $\astexp \redi{\loopnsteplab{\aLname}} \stexpprod{\astexpacci{1}}{\astexp}$
  for some normed $\astexpacci{1}$ with $\astexpi{1} \red \astexpacci{1}$. 
  In the first case \ref{loop:1} is clearly satisfied; we focus on the second case.
  It can be argued, by induction on syntactic structure, that every normed star expression has a body step path to $\tick$.
  Then so does $\astexpacci{1}$.
  This means $\stexpprod{\astexpacci{1}}{\astexp}$ has a body step path to $\astexp$.
  Hence \ref{loop:1} holds.
  For the remainder of \ref{LLEEw:2} it suffices to consider 
  \loopentry\ transitions $\astexp \redi{\loopnsteplab{\aLname}} \stexpprod{\astexpdacci{1}}{\astexp}$
  where $\astexpi{1} \red \astexpdacci{1}$.
  Since we showed above there are no body step cycles, every body step path from $\astexpdacci{1}$ eventually leads to deadlock or $\tick$; 
  in the first case the corresponding body step path of $\stexpprod{\astexpdacci{1}}{\astexp}$ also deadlocks, 
  and in the second case it returns to $\astexp$. Hence \ref{loop:2} holds.
  Since $\stexpprod{\astexpdacci{1}}{\astexp}$ cannot reach $\surd$ without returning to $\astexp$, \ref{loop:3} holds.
  It can be shown, by induction on derivation depth, that $f\lt{}f'$ implies $\bsth{f}\geq\bsth{f'}$, 
  and clearly $f\lt{}_{\loopnsteplab{\bLname}}$ implies $\ll{\bLname}\leq\bsth{f}$.
  So if $e_1'' \:\scomprewrels{\sredrtc}{\sredi{\loopnsteplab{\bLname}}}$, then $\ll{\bLname}\leq\bsth{e_1''}\leq\bsth{e_1}$.
  Hence, if 
  $\stexpprod{\astexpdacci{1}}{\astexp}
    \:\scomprewrels{\sredtavoidsvrtci{\stexpbit{\astexpi{1}}{\astexpi{2}}}{\bodylab}}{\sredi{\loopnsteplab{\bLname}}}$,
  then $\ll{\bLname}<\bsth{e_1}+1=\ll{\aLname}$. So $\ref{LLEEw:2}\ref{LLEEw:2b}$ holds.

  Now consider $\astexp \syntequal ((\cdots((\stexpbit{\astexpi{1}}{\astexpi{2}})\cdot\bstexpi{1})\cdots )\cdot\bstexpi{n})$ for $n>0$,
  with $\astexpi{1}$ normed.
  Again $\aLname = \bsth{\astexpi{1}}+1$.
  The subchart $\indsubchartinat{\acharthat}{\astexp,\aLname}$ basically coincides with $\indsubchartinat{\acharthat}{\stexpbit{\astexpi{1}}{\astexpi{2}},\aLname}$, except that the star expressions in the first chart are post-fixed with $\bstexpi{1}$, \ldots, $\bstexpi{n}$;
  its transitions are derived by $n$ additional applications 
  of the first rule for concatenation in Definition~\ref{def:chart:interpretation}, to affix these expressions.
  This chart isomorphism between $\indsubchartinat{\acharthat}{\stexpbit{\astexpi{1}}{\astexpi{2}},\aLname}$
  and $\indsubchartinat{\acharthat}{\astexp,\aLname}$ 
  preserves action labels as well as the loop-labeling, because 
  the first rule for concatenation preserves these labels.
  We showed that $\indsubchartinat{\acharthat}{\stexpbit{\astexpi{1}}{\astexpi{2}},\aLname}$ satisfies \ref{LLEEw:2}, 
  so the same holds for $\indsubchartinat{\acharthat}{\astexp,\aLname}$.
\end{proof}



\bigskip

We now turn to the proof of Lemma~\ref{lem:loop:relations}, which expresses properties of the body transition relation $\sredi{\bodylab}$,
the descends-in-loop-to relation $\sdescendsinloopto$, 
the loops-back-to relation $\sloopsbackto$, and the directly-loops-back-to relation $\sdloopsbackto$. 

\begin{repeatedlem}[= Lemma~\ref{lem:loop:relations}]
  The relations $\sredi{\bodylab}$, $\sdescendsinloopto$, $\sloopsbackto$, $\sdloopsbackto$\vspace*{-0.5mm}
  as defined by a \LLEEwitness~$\acharthat$ on a chart~$\achart$ satisfy the following properties:
  \begin{enumerate}[label=(\roman{*})]
    \item{}
      $\acharthat$ does not have infinite $\sredi{\bodylab}$ paths (so no $\sredi{\bodylab}$ cycles).
    \item{}
      If $\sccof{\cvert} = \sccof{\avert}$, then $\cvert \descendsinlooptortc \avert$ implies $\avert \loopsbacktortc \cvert$.
    \item{}
      If $\avert \descendsinloopto \bvert$ and $\lognot{(\bvert\,\sloopsbackto)}$, then $\bvert$ is not normed.
    \item{}
      $\sccof{\cvert}=\sccof{\avert}$ if and only if  $\cvert\loopsbacktortc \bvert$ and $\avert\loopsbacktortc \bvert$ for some vertex~$\bvert$.
    \item{}
      $\sloopsbacktortc$ is a partial order that has the least-upper-bound property: 
      if a non\-empty set of vertices has an upper bound with respect to $\sloopsbacktortc$, then it has a least upper bound.
    \item{}
      $\sloopsbackto$ is a total order on $\sloopsbackto$-successor vertices: 
      if $\bvert \loopsbackto \averti{1}$ and $\bvert \loopsbackto \averti{2}$, 
      then $\averti{1} \loopsbackto \averti{2}$ or $\averti{1} = \averti{2}$ or $\averti{2} \loopsbackto \averti{1}$.
    \item{}
      If $\averti{1}\dloopsbackto \cvert$ and $\averti{2}\dloopsbackto \cvert$ for distinct $\averti{1},\averti{2}$, 
      then there is no vertex $\bvert$ such that both $\bvert\loopsbacktortc \averti{1}$ and $w\loopsbacktortc \averti{2}$.
  \end{enumerate}
\end{repeatedlem}

We split the proof into the arguments for the parts~\ref{it:bo:terminating}--\ref{it:direct-subordinates}, respectively.
In doing so we repeat these statements as individual lemmas, and add a few more on the way.

\begin{lemma}\label{lem:layeredness}
  In a chart with a \LLEEwitness, if $\avert \comprewrels{\sdescendsinlooplto{\aLname}}{\scomprewrels{\sdescendsinlooptortc}{\sredi{\loopnsteplab{\bLname}}}}$,
  then $\ll{\aLname} > \ll{\bLname}$. 
\end{lemma}

\begin{proof}
  By induction on the number $n$ of $\sdescendsinloopto\,$\nb-steps in a path 
  $\avert \comprewrels{\sdescendsinlooplto{\aLname}}{\scomprewrels{\sdescendsinloopto^{n}}{\sredi{\loopnsteplab{\bLname}}}}\,$. 
  If $n = 0$, then from 
  $\avert \comprewrels{\sdescendsinlooplto{\aLname}}{\sredi{\loopnsteplab{\bLname}}}$ 
  we get $\ll{\aLname} > \ll{\bLname}$ by means of the \LLEEwitness\ condition \ref{LLEEw:2}\ref{LLEEw:2b}.
  If $n > 0$, 
  then the path 
  $\avert \comprewrels{\sdescendsinlooplto{\aLname}}{\scomprewrels{\sdescendsinloopto^{n}}{\sredi{\loopnsteplab{\bLname}}}}$
  is of the form 
  $\avert \comprewrels{\sdescendsinlooplto{\aLname}}{\scomprewrels{\scomprewrels{\sdescendsinloopto^{n-1}}{\sdescendsinlooplto{\cLname}}}{\sredi{\loopnsteplab{\bLname}}}}$
  for some loop name $\cLname$. 
  This path contains an initial segment
  $\avert \:\scomprewrels{\sdescendsinlooplto{\aLname}}{\scomprewrels{\sdescendsinloopto^{n-1}}{\sredi{\loopnsteplab{\cLname}}}}$.
  Then $\ll{\aLname} > \ll{\cLname}$ follows by the induction hypothesis.
  From the part $\scomprewrels{\sdescendsinlooplto{\cLname}}{\sredi{\loopnsteplab{\bLname}}}$ of this path 
  we get $\ll{\cLname} > \ll{\bLname}$ by \LLEEwitness\ condition \ref{LLEEw:2}\ref{LLEEw:2b}.
  So we conclude that $\ll{\aLname} > \ll{\bLname}$ holds.
\end{proof}

\begin{lemma}\label{lem:descendsinlooptotc:tick}
  In a chart with a \LLEEwitness, if $\avert \descendsinlooptotc \bvert$, then $\bvert \neq \tick$. 
\end{lemma}

\begin{proof}
  Let $\acharthat$ be a \LLEEwitness\ of a chart~$\achart$. 
  It suffices to show that $\sdescendsinloopto\, \bvert$ implies $\bvert \neq \tick$.
  For this, we let $\avert$ and $\bvert$ be vertices such that $\avert \descendsinloopto \bvert$.
  Then we can pick $\aLname\in\natplus$ such that $\avert \descendsinlooplto{\aLname} \bvert$.
  Since this means $\avert \comprewrels{\sredtavoidsvi{\avert}{\loopnsteplab{\aLname}}}{\sredtavoidsvrtci{\avert}{\bodylab}} \bvert$,
  it follows that $\bvert\in\indsubchartinat{\acharthat}{\avert,\aLname}$.
  Now since $\indsubchartinat{\acharthat}{\avert,\aLname}$ is a loop chart by condition~\ref{LLEEw:2}\ref{LLEEw:2a} for the \LLEEwitness~$\acharthat$,
  it follows that $\bvert \neq \tick$.
\end{proof}

\begin{lemma}\label{lem:to:bodystep:descendsinloopto:path}
  In a chart with a \LLEEwitness\ (assumed to be \startconnected, see Definition~\ref{def:charts}), 
  every vertex is reachable by an acylic $\scomprewrels{\sredrtci{\bodylab}}{\sdescendsinlooptortc}$ path from the start vertex~$\start$,
  that is, $\start \comprewrels{\sredrtci{\bodylab}}{\sdescendsinlooptortc} \bvert$ holds for all vertices $\bvert$.
\end{lemma}

\begin{proof}
  Let $\apath$ be a path from $\start$ to $\bvert$.  
  By removing cycles from $\apath$ we obtain 
  an acyclic path $\apathacc$ from $\start$ to $\bvert$
  that consists of a sequence of \loopentry\ and body transitions.
  Hence $\apathacc$ is of the form
  $\start
     \redrtci{\bodylab}
   \bvert$
  or  
  $\start
     \redrtci{\bodylab}
   \cverti{0}
     \comprewrels{\sredtavoidsvi{\avert}{\loopnsteplab{\cLnamei{0}}}}{\sredtavoidsvrtci{\cverti{0}}{\bodylab}} 
   \cverti{1}
     \comprewrels{\sredtavoidsvi{\cverti{1}}{\loopnsteplab{\cLnamei{1}}}}{\sredtavoidsvrtci{\cverti{1}}{\bodylab}}
       \:\cdots\:
     \comprewrels{\sredtavoidsvi{\cverti{1}}{\loopnsteplab{\cLnamei{1}}}}{\sredtavoidsvrtci{\cverti{n-1}}{\bodylab}}
   \cverti{n}
     \syntequal
   \bvert$  
  for some $n\in\nat$, 
  and $\cLnamei{0},\ldots,\cLnamei{n}\in\natplus$,
  where the target-avoidance parts are due to acyclicity of $\apathacc$.
  Hence $\apathacc$ is of the form 
  $\start
     \redrtci{\bodylab}
   \cverti{0} 
     \comprewrels{\sdescendsinlooplto{\cLnamei{0}}}
                 {\sdescendsinlooplto{\cLnamei{1}}{\:\cdots\,\scomprewrels{\sdescendsinlooplto{\cLnamei{n-2}}}
                                                                          {\sdescendsinlooplto{\cLnamei{n-1}}}}}
   \bvert$,
  for some $n\in\nat$, and $\cLnamei{0},\ldots,\cLnamei{n}\in\natplus$,
  and therefore of the form $\start \comprewrels{\redrtci{\bodylab}}{\descendsinlooptortc} \bvert$.
\end{proof}

\begin{lemma}\label{lem:to:descendsinloopto:path}
  In a chart with a \LLEEwitness, for every path 
  \vspace*{-2mm}%
  $\avert \comprewrels{\sredtavoidsvi{\avert}{\loopnsteplab{\aLname}}}{\redtavoidsvrtc{\avert}} \bvert$
  there is an acyclic path 
                           $\avert \comprewrels{\sdescendsinlooplto{\aLname}}{\sdescendsinlooptortc} \bvert$.
\end{lemma}

\begin{proof}
  Let $\apath$ be a path from $\avert$ to $\bvert$ that starts with a \loopentry\ step with loop name $\aLname$ such that
  all targets of transitions in $\apath$ avoid $\avert$.  
  By removing cycles we obtain 
  an acyclic path $\apathacc$ from $\avert$ to $\bvert$ that starts with an $\alpha$\nb-\loopentry\ step whose target is not $\avert$.
  We can write $\apathacc$ as a sequence of \loopentry\ and body steps of the form
  $\avert 
     \comprewrels{\sredtavoidsvi{\avert}{\loopnsteplab{\aLname}}}{\sredtavoidsvrtci{\avert}{\bodylab}} 
   \cverti{1}
     \comprewrels{\sredtavoidsvi{\cverti{0}}{\loopnsteplab{\cLnamei{0}}}}{\sredtavoidsvrtci{\cverti{0}}{\bodylab}}
       \:\cdots\:
   \cverti{n-2}    
     \comprewrels{\sredtavoidsvi{\cverti{n-2}}{\loopnsteplab{\cLnamei{n-2}}}}{\sredtavoidsvrtci{\cverti{n-2}}{\bodylab}}
   \cverti{n-1}
     \comprewrels{\sredtavoidsvi{\cverti{n-1}}{\loopnsteplab{\cLnamei{n-1}}}}{\sredtavoidsvrtci{\cverti{n-1}}{\bodylab}}
   \bvert$  
  for some $n\geq 1$, where the target-avoidance parts are due to acyclicity of $\apathacc$.
  Hence $\apathacc$ is of the form 
  $\avert 
    \comprewrels{\sdescendsinlooplto{\aLname}}
                {\sdescendsinlooplto{\cLnamei{1}}{\:\cdots\,\scomprewrels{\sdescendsinlooplto{\cLnamei{n-2}}}
                                                                         {\sdescendsinlooplto{\cLnamei{n-1}}}}}
   \bvert$,
  and therefore of the form $\avert \comprewrels{\sdescendsinlooplto{\aLname}}{\sdescendsinlooptortc} \bvert$.
\end{proof}

The following lemma was also used implicitly in the proof of Lem.~\ref{lem:loop:relations},~\ref{it:least-upper-bound}.

\begin{lemma}\label{lem:layeredness:ext}
  In a chart with a \LLEEwitness, 
  if
  $\avert \comprewrels{\sredtavoidsvi{\avert}{\loopnsteplab{\aLname}}}{\scomprewrels{\sredtavoidsvrtc{\avert}}{\sredi{\loopnsteplab{\bLname}}}}$,
  then $\ll{\aLname} > \ll{\bLname}$.
\end{lemma}

\begin{proof}
  This is a direct consequence of Lem.~\ref{lem:to:descendsinloopto:path} and Lem.~\ref{lem:layeredness}.
\end{proof}

\begin{lemma}\label{lem:loopsbackto:channel}
  In a chart with a \LLEEwitness, if $\cvert \loopsbacktortc \avert \loopsbacktortc \bvert$, then each path $\cvert \redrtci{\bodylab} \bvert$ visits~$\avert$.
\end{lemma}

\begin{proof}
Let $\avert\neq\cvert,\bvert$, as else the lemma trivially holds. Since $\cvert \loopsbacktotc \avert \loopsbacktotc \bvert$, there is a path $\bvert\,\sredtavoidsvi{\bvert}{\loopnsteplab{\aLname}}\cdot\sredtavoidsvrtc{\bvert}\,\avert\,\sredtavoidsvi{\avert}{\loopnsteplab{\bLname}}\cdot\sredtavoidsvrtc{\avert}\,\cvert$.
  \vspace*{-.5mm}By layeredness, $\ll{\aLname} > \ll{\bLname}$.
  A path $\cvert\,\sredtavoidsvrtci{\avert}{\bodylab}\,\bvert$
  would yield $\avert\,\sredtavoidsvi{\avert}{\loopnsteplab{\bLname}}\cdot\sredtavoidsvrtc{\avert}\,\cvert\,\sredtavoidsvrtci{\avert}{\bodylab}\, \bvert\,\sredi{\loopnsteplab{\aLname}}$.
  \vspace*{-.5mm}Then layeredness would require $\ll{\bLname} > \ll{\aLname}$, which cannot be the case.
\end{proof}

\begin{repeatedlem}[= Lemma~\ref{lem:loop:relations},~\ref{it:bo:terminating}]
  In a chart with a \LLEEwitness,
    there are no infinite $\sredi{\bodylab}$ paths (so no $\sredi{\bodylab}$ cycles).
\end{repeatedlem}

\begin{proof}
  Let $\achart$ be a chart with \LLEEwitness~$\acharthat$, and with start vertex $\start$. 
  Due to Lemma~\ref{lem:to:bodystep:descendsinloopto:path} every vertex of $\avert$
  is reachable by a $\comprewrels{\sredrtci{\bodylab}}{\sdescendsinlooptortc}$ path from $\start$. 
  In order to show that there are no infinite $\sredi{\bodylab}$ paths in $\acharthat$
  it therefore suffices to show
  that if $\start \comprewrels{\sredrtci{\bodylab}}{\sdescendsinloopto^{n}} \avert$, then there is no infinite $\sredi{\bodylab}$ path from $\avert$. 
  
  For the base case, $n=0$, let $\bvert$ be such that $\start \redrtci{\bodylab} \bvert$.
  Now suppose that there is an infinite $\sredrtci{\bodylab}$ path from $\bvert$ in $\acharthat$.
  Then due to $\start \redrtci{\bodylab} \bvert$ it follows that there is also an infinite $\sredrtci{\bodylab}$ path from $\start$ in $\acharthat$.
  This, however, contradicts with the condition~\ref{LLEEw:1} that the \LLEEwitness~$\acharthat$ must satisfy.
  We conclude that there is no infinite $\sredrtci{\bodylab}$ path from $\bvert$ in $\acharthat$.\vspace{1.5mm}
  
  For performing the induction step from $n$ to $n+1$, let $\bvert$ be such that $\start \comprewrels{\sredrtci{\bodylab}}{\sdescendsinloopto^{n+1}} \bvert$.
  Then we can pick $\bverti{0}$ with $\start \comprewrels{\sredrtci{\bodylab}}{\sdescendsinloopto^{n}} \bverti{0} \descendsinloopto \bvert$.
  It follows that $\bverti{0} \comprewrels{\redtavoidsvi{\bverti{0}}{\loopnsteplab{\aLname}}}{\redtavoidsvrtci{\bverti{0}}{\bodylab}} \bvert$
  for some $\aLname\in\natplus$, which we pick accordingly.  
  Now suppose that there is an infinite $\sredrtci{\bodylab}$ path $\apath$ from $\bvert$ in $\acharthat$.
  Then it cannot be the case that $\apath$ avoids $\bverti{0}$ forever, because otherwise it would give rise to an infinite path
  $\bverti{0} 
     \comprewrels{\redtavoidsvi{\bverti{0}}{\loopnsteplab{\aLname}}}{\redtavoidsvi{\bverti{0}}{\bodylab}} 
   \bvert
     \redtavoidsvi{\bverti{0}}{\bodylab}
   \bverti{1}
     \redtavoidsvi{\bverti{0}}{\bodylab}
   \bverti{2}
     \redtavoidsvi{\bverti{0}}{\bodylab}
       \cdots$,
   which is not possible since the condition \ref{LLEEw:2}\ref{LLEEw:2a} for the \LLEEwitness~$\achart$
   implies that $\indsubchartinat{\acharthat}{\cverti{0},\aLname}$ is a loop chart.      
   Therefore it follows that $\apath$ must visit $\averti{0}$.
   But then $\apath$ also gives rise to an infinite $\sredrtci{\bodylab}$ path from $\bverti{0}$. 
   This, however, contradicts the the statement that the induction hypothesis guarantees for $\bverti{0}$ 
   due to $\start \comprewrels{\sredrtci{\bodylab}}{\sdescendsinloopto^{n}} \bverti{0}$,
   namely that there is no infinite $\sredrtci{\bodylab}$ path from $\bverti{0}$. 
   We have reached a contradiction. 
   Therefore we can conclude that there is no infinite $\sredrtci{\bodylab}$ path $\apath$ from $\bvert$ in $\acharthat$.
   In this way we have successfully performed the induction step.
\end{proof}

%

\begin{repeatedlem}[= Lemma~\ref{lem:loop:relations},~\ref{it:descendsinloopto:scc:loopsbackto}]
  In a chart with a \LLEEwitness, 
    if $\sccof{\cvert} = \sccof{\avert}$, then $\cvert \descendsinlooptortc \avert$ implies $\avert \loopsbacktortc \cvert$.
\end{repeatedlem}

\begin{proof}
 We prove that $u \descendsinloopto^n v$ implies $v \loopsbackto^n u$ for all $n\geq 0$, by induction on $n$. 
 The base case $n=0$ is trivial, as then $u=v$. If $n>0$, $u \descendsinloopto^{n-1} u'\descendsinloopto v$ for some $u'$. 
 Clearly $\sccof{u}=\sccof{u'}=\sccof{v}$. 
 By induction, $u' \loopsbackto^{n-1} u$. 
 Since $u' \descendsinloopto v$, there is an acyclic path $u'\, \scomprewrels{\sredi{\loopnsteplab{\aLname}}}{\sredrtci{\bodylab}}\,v$. 
 And since $\sccof{u'}=\sccof{v}$, 
 there is an acyclic path 
 $v \,\sredrtci{\bodylab}
        \cdot
      \sloopnstepto{\bLname_1}
        \cdot
      \sredrtci{\bodylab}
        \cdot
          \,\cdots\,
        \cdot
      \sloopnstepto{\bLname_k}
        \cdot
      \sredrtci{\bodylab} \, u'$. 
 By \ref{LLEEw:2}\ref{LLEEw:2b}, $\ll{\aLname} > \ll{\bLname_1} > \cdots > \ll{\bLname_k} > \ll{\aLname}$. 
 This means $k=0$, so $v \redrtci{\bodylab}u'$. This implies $v \loopsbackto u'$ and hence $v\loopsbackto^n u$.
\end{proof}

\begin{repeatedlem}[= Lemma~\ref{lem:loop:relations}, \ref{it:descendsinloopto:notloopsbackto:not:normed}]
  If, in a chart with a \LLEEwitness, $\sdescendsinloopto\,\bvert$ and $\lognot{(\bvert\,\sloopsbackto)}$, then $\bvert$ is not normed.
\end{repeatedlem}

\begin{proof}
  We argue indirectly by showing that the negation of the implication in the statement of the lemma leads to a contradiction.
  For this, suppose that $\avert\descendsinloopto\,\bvert$ and $\lognot{(\bvert\,\sloopsbackto)}$ hold for some vertices $\avert$ and $\bvert$,
  and that additionally $\bvert$ is normed. 
  From $\avert\descendsinloopto\,\bvert$ and $\lognot{(\bvert\,\sloopsbackto)}$ we obtain by Lem.~\ref{lem:loop:relations},~\ref{it:descendsinloopto:scc:loopsbackto}
  that $\bvert\notin\sccof{\avert}$. Since $\avert \descendsinloopto \bvert$ entails $\avert \redrtc \bvert$ this entails $\lognot{(\bvert \redrtc \avert)}$. 
  Now since that $\bvert$ is normed means $\bvert \redrtc \tick$,
  we obtain $\avert \descendsinlooptortc \bvert \redtavoidsvrtc{\avert} \tick$,
  which means $\avert \comprewrels{\sredtavoidsvi{\avert}{\loopnsteplab{\aLname}}}{\sredtavoidsvrtci{\avert}{\bodylab}} \bvert \redtavoidsvrtc{\avert} \tick$
  for some $\aLname\in\natplus$.
  Then it follows from Lemma~\ref{lem:to:descendsinloopto:path}
  that $\avert \descendsinlooptotc \tick$. 
  This, however, contradicts, Lemma~\ref{lem:descendsinlooptotc:tick}.
\end{proof}

\begin{repeatedlem}[= Lemma~\ref{lem:loop:relations},~\ref{it:bi:reachable:lLEEw}]
  In a chart with a \LLEEwitness, 
    $\sccof{\cvert}=\sccof{\avert}$ if and only if $\cvert\loopsbacktortc \bvert$ and $\avert\loopsbacktortc \bvert$ for some vertex~$\bvert$.
\end{repeatedlem}

\begin{proof}
  The direction from right to left of the lemma trivially holds; we focus on the direction from left to right.
  Let $\sccof{\cvert}=\sccof{\avert}$. The case $u=v$ is trivial. Let $u\neq v$.
  Then they are on a cycle, which,
  since there is no body step cycle, contains a loop-entry transition from some $w$.
  Without loss of generality, suppose $w\neq u$. 
  Then $w\descendsinlooptotc u$, so by Lemma~\ref{lem:loop:relations}, \ref{it:descendsinloopto:scc:loopsbackto}, $u\loopsbacktotc w$. 
  If $w=v$ we have $v\loopsbacktortc w$, and if $w\neq v$ we can argue in the same fashion that $v\loopsbacktotc w$.
\end{proof}

\begin{lemma}\label{lem:loopsbacktotc:irreflexive}
  In a chart with a \LLEEwitness, $\loopsbacktotc$ is irreflexive. 
\end{lemma}

\begin{proof} 
  Let $\acharthat$ be a \LLEEwitness\ of a \LLEEchart~$\achart$. 
  Suppose that $\bvert \loopsbacktotc \bvert$ holds for some vertex $\bvert$ of $\achart$ and $\acharthat$.
  Then it follows from the definition of $\sloopsbacktotc$ that there is a $\sredi{\bodylab}$ path of \nonzero\ length from $\bvert$ to $\bvert$ itself.
  But such a $\sredi{\bodylab}$ cycle in $\acharthat$ is not possible, as it would give rise to an infinite $\sredi{\bodylab}$ path in $\acharthat$,
  contradicting Lemma~\ref{lem:loop:relations}, \ref{it:bo:terminating}.
\end{proof}

\begin{lemma}\label{lem:loopsbacktortc:po}
  In a chart with a \LLEEwitness, $\loopsbacktortc$ is a partial order.
\end{lemma}

\begin{proof} 
  By definition, $\sloopsbackto$ is transitive--reflexive. Moreover, $\sloopsbackto$ is anti-symmetric,
  because $u\loopsbacktotc v$ and $v\loopsbacktotc u$ for $u \neq v$ would imply 
  $u\loopsbacktotc v$ and $v\loopsbacktotc u$, in contradiction with irreflexivity of $\sloopsbacktotc$, see Lemma~\ref{lem:loopsbacktotc:irreflexive}.   
\end{proof}

\begin{repeatedlem}[= Lemma~\ref{lem:loop:relations},~\ref{it:least-upper-bound}]
  In a chart with a \LLEEwitness,   
    $\sloopsbacktortc$ is a partial order that has the least-upper-bound property: 
    if a non\-empty set of vertices has an upper bound with respect to $\sloopsbacktortc$, then it has a least upper bound.
\end{repeatedlem}

\begin{proof}
  Let $\achart$ be a chart with a \LLEEwitness~$\achart$. Let the relation $\sloopsbackto$ be defined on $\achart$ according to $\acharthat$. 
  
  $\loopsbacktortc$ is a partial order by Lemma~\ref{lem:loopsbacktortc:po}.
  %
  %
  Since $\achart$ as a chart is finite, it suffices to show that for each vertex $v$ the set of vertices $x$ with $v\loopsbacktortc x$ 
  is totally ordered with regard to $\loopsbacktortc$. Let $v\loopsbacktotc u_1$ and $v\loopsbacktotc u_2$ with $u_1\neq u_2$. 
  There is a path 
   $u_1 
      \comprewrels{\sredtavoidsvi{u_1}{\loopnsteplab{\aLname}}}{\sredtavoidsvrtc{u_1}} 
    v
      \redtci{\bodylab} 
    u_2
      \comprewrels{\sredtavoidsvi{u_2}{\loopnsteplab{\bLname}}}{\sredtavoidsvrtc{u_2}}
    v
      \redtci{\bodylab}
    u_1$.
  Without loss of generality, suppose $\ll{\bLname}\geq\ll{\aLname}$.
  Then layeredness implies that each path $v \redtci{\bodylab} u_2$ must visit $u_1$, 
  so $v \redtavoidsvtci{u_2}{\bodylab} u_1 \redtci{\bodylab} u_2$.
  \vspace*{-.75mm}Hence there is a path 
  $u_2  
     \comprewrels{\sredtavoidsvi{u_2}{\loopnsteplab{\bLname}}}{\sredtavoidsvrtc{u_2}} 
   v
     \redtavoidsvtci{u_2}{\bodylab}
   u_1 
     \redtci{\bodylab}
   u_2$, which implies $u_1 \loopsbacktotc u_2$.
\end{proof}

\begin{repeatedlem}[= Lemma~\ref{lem:loop:relations},~\ref{it:direct-subordinates}]
  In a chart with a \LLEEwitness, 
    if $\averti{1}\dloopsbackto \cvert$ and $\averti{2}\dloopsbackto \cvert$ for distinct $\averti{1},\averti{2}$, 
    then there is no vertex $\bvert$ such that both $\bvert\loopsbacktortc \averti{1}$ and $w\loopsbacktortc \averti{2}$.
\end{repeatedlem}

\begin{proof}
  $\lognot(v_2\loopsbacktotc v_1)$ and $\lognot(v_1\loopsbacktotc v_2)$, for else the definition of $\dloopsbackto$
  would imply $u\loopsbacktortc v_1$ or $u\loopsbacktortc v_2$,
  and so $v_1\loopsbacktotc v_1$ or $v_2\loopsbacktotc v_2$,
  contradicting irreflexivity of $\sloopsbacktotc$, see Lemma~\ref{lem:loopsbacktotc:irreflexive}. 
  In the proof of Lemma~\ref{lem:loop:relations}, \ref{it:least-upper-bound}, we furthermore saw
  that for each $w$, $\{x\mid w\loopsbacktortc x\}$ is totally ordered with regard to $\loopsbacktortc$, which implies that any such sets cannot contain both $v_1$ and $v_2$. 
\end{proof}

\subsection{Proofs in Section~\ref{extraction:transferral}: 
            Extraction of star expressions from, 
            and transferral between, LLEE-charts}

\begin{repeatedprop}[= Proposition~\ref{prop:transf:sol:via:funbisim}, requires \BBP-axioms (B1), (B2), (B3)]%
  Let $\sphifun \funin \vertsi{1} \rightarrow \vertsi{2}$ be a functional bisimulation between charts $\acharti{1}$ and $\acharti{2}$.
  Let $\sasoli{2} \funin \vertsi{2}\setminus\setexp{\tick} \to \StExpsover{\actions}$ be a provable solution of $\acharti{2}$.
  Then $\scompfuns{\sasoli{2}}{\sphifun} \funin \vertsi{1}\setminus\setexp{\tick} \to \StExpsover{\actions}$ is a provable solution of $\acharti{1}$
  with the same principal value~as~$\sasoli{2}$. 
\end{repeatedprop}

\begin{proof}
  Let $\sasoli{2}$ be a provable solution of $\acharti{2}$. 
  Let $\avert\in\vertsi{1}\backslash\{\tick\}$. 
  Since $\sphifun$ is a functional bisimulation between $\acharti{1}$ and $\acharti{2}$, 
  the forth, back, and termination conditions for the graph of $\sphifun$ as a bisimulation hold
  for the pair $\pair{\avert}{\phifun{\avert}}$ of vertices. 
  This makes it possible to bring the sets of transitions $T_1(\avert)$ from $\avert$ in $\acharti{1}$, and  $T_2(\phifun{\avert})$ from $\phifun{\avert}$ in $\acharti{2}$
  into a 1--1~correspondence such that $\sphifun$ again relates their targets:
  \begin{align}
    T_1(\avert)
      & {} =
      \descsetexpbig{ \avert \lt{\aacti{i}} \tick }{ i = 1,\ldots,m }
        \cup
      \descsetexpbig{ \avert \lt{\bacti{j}} \avertacci{j 1} }{ j = 1,\ldots,n } \punc{,} 
        \label{eq:1:prf:prop:transf:sol:via:funbisim} 
    \displaybreak[0]\\[-0.5ex]
    T_2(\phifun{\avert})
      & {} =
      \descsetexpbig{ \phifun{\avert} \lt{\aacti{i}} \tick }{ i = 1,\ldots,m }
        \cup
      \descsetexpbig{ \phifun{\avert} \lt{\bacti{j}} \avertacci{j 2} }{ j = 1,\ldots,n } \punc{,}
        \label{eq:2:prf:prop:transf:sol:via:funbisim} 
    \displaybreak[0]\\
    \phifun{\avertacci{j 1}} 
      & {} =
    \avertacci{j 2} \punc{,} 
      \quad \text{for all $j\in\setexp{1,\ldots,n}$} \punc{,} 
        \label{eq:3:prf:prop:transf:sol:via:funbisim} 
  \end{align}
  with $n,m\in\nat$, and vertices $\avertacci{j 1}\in\vertsi{1}\setminus\setexp{\tick}$, and $\avertacci{j 2}\in\vertsi{2}\setminus\setexp{\tick}$, for $j\in\setexp{1,\ldots,n}$. 
  Note that the same transition may be listed multiple times in the set $T_2(\phifun{\avert})$.
  On this basis we can argue as follows.
  \[
  \begin{array}{rcl}
  \compfuns{(\sasoli{2}}{\sphifun)}{\avert}
  ~\syntequal~
  \asoli{2}{\phifun{\avert}} & ~\BBPeq~ &
  {\displaystyle\stexpsum{\Bigl(\sum_{i=1}^{m} \aacti{i}\Bigr)}
                 {\Bigl(\sum_{j=1}^{n} 
        \stexpprod{\bacti{j}}{\asoli{2}{\avertacci{j 2}}}\Bigr)}}\vspace{1mm}\\
  \multicolumn{3}{r}{\mbox{(since $\sasoli{2}$ is a provable solution of $\acharti{2}$,
  using \eqref{eq:2:prf:prop:transf:sol:via:funbisim} and axioms (\commstexpsum), (\assocstexpsum), (\idempotstexpsum))}}\vspace{2mm}\\
  &~\syntequal~&
   {\displaystyle\stexpsum{\Bigl(\sum_{i=1}^{m} \aacti{i}\Bigr)}
             {\Bigl(\sum_{j=1}^{n} 
  \stexpprod{\bacti{j}}{\compfuns{(\sasoli{2}}{\sphifun)}{\avertacci{j 1}}}\Bigr)}} \vspace{1mm}\\
 \multicolumn{3}{r}{\text{(using \eqref{eq:3:prf:prop:transf:sol:via:funbisim} and $\compfuns{(\sasoli{2}}{\sphifun)}{\avertacci{j 1}}\equiv\asoli{2}{\phifun{\avertacci{j 1}}}$)}}
  \end{array}
  \]
  This shows, in view of \eqref{eq:1:prf:prop:transf:sol:via:funbisim}, that $\scompfuns{\sasol}{\sphifun}$ 
  satisfies the condition for a provable solution at $\avert$. Now as $\avert\in\vertsi{1}\setminus\setexp{\tick}$ was arbitrary,
  $\scompfuns{\sasoli{2}}{\sphifun}$ (with domain $\vertsi{1}\setminus\setexp{\tick}$) is a provable solution of $\acharti{1}$.
  Since furthermore the functional bisimulation $\sphifun$ must relate the start vertices of $\acharti{1}$ and $\acharti{2}$,
  the principal value of $\scompfuns{\sasoli{2}}{\sphifun}$ coincides with that of $\sasoli{2}$. 
\end{proof}

\begin{repeatedlem}[= Lemma~\ref{lem:def:extrsol}]
  In a chart with a \LLEEwitness,
  for all vertices $\avert,\bvert$: 
  \begin{enumerate}[label=(\roman{*})]
    \item{}\label{it:bosn:repeatedlem:def:extrsol}
      $\avert \redi{\bodylab}  \bvert \Rightarrow \bosn{\avert} > \bosn{\bvert}$,
    \item{}\label{it:enl:repeatedlem:def:extrsol}
      $\avert \descendsinloopto \bvert \Rightarrow \enl{\avert} > \enl{\bvert}$.
  \end{enumerate}
\end{repeatedlem}

\begin{proof}
  For statement~\ref{it:bosn:repeatedlem:def:extrsol} we argue as follows.
  Recall that the body step norm $\bosn{\avert}$ in a \LLEEwitness~$\acharthat$ was defined 
  as the maximal length of a body step path from $\avert$ in $\acharthat$.
  This was \welldefined\ due to Lemma~\ref{lem:loop:relations}, \ref{it:bo:terminating}, and the finiteness of charts.
  Now suppose that $\avert \redi{\bodylab}  \bvert$.
  Then every body step path from $\bvert$ gives rise to a body step path 
  from $\avert$ that starts with the transition $\avert \redi{\bodylab}  \bvert$. 
  Hence a longest body step path from $\bvert$ of length $\bosn{\bvert}$ gives rise to a body step path from $\avert$
  of length $\bosn{\bvert} + 1$. It follows that $\bosn{\avert} \ge \bosn{\bvert} + 1 > \bosn{\bvert}$, and hence $\bosn{\avert} > \bosn{\bvert}$.
  
  \smallskip
  For showing statement~\ref{it:enl:lem:def:extrsol}, suppose that $\avert \descendsinloopto \bvert$.
  Then $\avert \descendsinlooplto{\aLname} \bvert$ holds for some $\aLname\in\natplus$.
  Then $\enl{\avert} \ge \aLname$. 
 %
  If there is no \loopentrytransition\ that departs from $\bvert$, then $\enl{\bvert} = 0$ holds,
  and hence we get $\enl{\avert} \ge \aLname > 0 = \enl{\bvert}$. 
  Otherwise we let $\bLname\in\natplus$ be the maximal index of a \loopentrytransition\ from $\bvert$. 
  Then $\avert \descendsinlooplto{\aLname} \bvert \,\sredi{\loopnsteplab{\bLname}}$. 
  By Lemma~\ref{lem:layeredness} it follows that $\aLname > \bLname$. 
  Consequently we find $\enl{\avert} \ge \aLname > \bLname = \enl{\bvert}$.
  In both cases we have shown $\enl{\avert} > \enl{\bvert}$.
\end{proof}

\begin{repeatedlem}[= Lemma~\ref{lem:prop:extracted:fun:is:solution}, uses the \BBP-axioms (B1)--(B6), (BKS2), but not the rule $\RSPbit\,$]
  For a \LLEEchart~$\achart$ with \LLEEwitness~$\acharthat$
  the following connection holds 
  between the extracted solution $\sextrsolof{\acharthat}$
  and the relative extracted solution $\sextrsoluntilof{\acharthat}$,
  for all vertices $\avert,\bvert$:  
  \begin{equation}\label{eq:1:prf:prop:extracted:fun:is:solution}
    \avert \descendsinloopto \bvert
      \;\;\Longrightarrow\;\;
        \extrsolof{\acharthat}{\bvert}
          \:\eqin{\BBP}\:
        \stexpprod{ \extrsoluntilof{\acharthat}{\bvert}{\avert} }
                  { \extrsolof{\acharthat}{\avert} } \punc{.}
  \end{equation}
  Note that if $\avert \descendsinloopto \bvert$, then $\avert \neq \tick$, and also $\bvert\neq\tick$, because $\bvert$ is in the body
  of a loop at $\avert$, and therefore cannot be $\tick$ (see Lem.\ \ref{lem:descendsinlooptotc:tick}).
\end{repeatedlem}

\begin{proof}
  In order to show \eqref{eq:1:prf:prop:extracted:fun:is:solution}
  we proceed by complete induction (without explicit treatment of the base case)
  on the length $\bosn{\bvert}$ of a longest body step path from $\bvert$.
  For performing the induction step, we consider arbitrary $\avert,\bvert\neq\tick$ with $\avert \descendsinloopto \bvert$.
  We assume a representation of the set $\fap{\hat{T}}{\bvert}$ of transitions from $\bvert$ in $\acharthat\,$:
  \begin{align}
      \fap{\hat{T}}{\bvert}
        = {} & 
        \descsetexpbig{ \bvert \lti{\aacti{i}}{\loopnsteplab{\aLnamei{i}}} \bvert }{ i = 1,\ldots,m }
          \cup
        \descsetexpbig{ \bvert \lti{\bacti{j}}{\loopnsteplab{\bLnamei{j}}} \bverti{j} }{ \, \bverti{j} \neq \bvert, \: j = 1,\ldots,n }  
        \notag\\[-0.75ex]
        & {} \; \cup 
        \descsetexpbig{ \bvert \lti{\cacti{i}}{\bodylab} \avert }{ i = 1,\ldots,p }
          \cup
        \descsetexpbig{ \bvert \lti{\dacti{j}}{\bodylab} \cverti{j} }{ \cverti{j} \neq \avert, \: j = 1,\ldots,q }
        \label{eq:2:prf:prop:extracted:fun:is:solution} 
  \end{align} 
  that partitions $\fap{\hat{T}}{\bvert}$ into \loopentry\ transitions to $\bvert$ and to other targets $\bverti{1},\ldots,\bverti{n}$, 
  and \bodytransitions\ to $\avert$ and to other targets $\cverti{1},\ldots,\cverti{q}$.
  Since $\bvert$ is contained in a loop at $\avert$, none of these targets can be $\tick$.
  In order to show provable equality at the right-hand side of \eqref{eq:1:prf:prop:extracted:fun:is:solution}, we argue as follows:
  \begin{align*}
    \extrsolof{\acharthat}{\bvert}
      & \;\parbox[t]{\widthof{$\eqin{\BBP}$}}{$\syntequal$}\:
      \Bigl(
          \stexpsum{
            \Bigl(
              \sum_{i=1}^{m}
                \aacti{i}
            \Bigr)
                    }{
            \Bigl(
              \sum_{j=1}^{n}
                \stexpprod{\bacti{j}}{\extrsoluntilof{\acharthat}{\bverti{j}}{\bvert}}
            \Bigr)
                    }
        \Bigr)^{\sstexpbit}
        \Bigl(
          \stexpsum{\stexpzero}
                   {\Bigl(
                      \stexpsum{
                        \Bigl(
                          \sum_{i=1}^{p}
                            \stexpprod{\cacti{i}}{\extrsolof{\acharthat}{\avert}}
                        \Bigr)
                                }{
                        \Bigl(
                          \sum_{j=1}^{q}
                            \stexpprod{\dacti{j}}{\extrsolof{\acharthat}{\cverti{j}}}
                        \Bigr)
                    \Bigr)}}
        \Bigr)
        \\[0.25ex]
        & \;\,\parbox[t]{\widthof{$\eqin{\milnersysmin}$\hspace*{3ex}}}{\mbox{}}\:
          \parbox{\widthof{\widthof{(using that none of the target vertices is the terminating sink~$\tick$,)}}}
                 {(by the definition of $\extrsolof{\acharthat}{\bvert}$, 
                  based on the representation \eqref{eq:2:prf:prop:extracted:fun:is:solution},\\\phantom{(}%
                  using that none of the target vertices is $\tick$)}
    \displaybreak[0]\\[0.5ex]
      & \;\parbox[t]{\widthof{$\eqin{\BBP}$}}{$\eqin{\BBP}$}\:
      \Bigl(
          \stexpsum{
            \Bigl(
              \sum_{i=1}^{m}
                \aacti{i}
            \Bigr)
                    }{
            \Bigl(
              \sum_{j=1}^{n}
                \stexpprod{\bacti{j}}{\extrsoluntilof{\acharthat}{\bverti{j}}{\bvert}}
            \Bigr)
                    }
        \Bigr)^{\sstexpbit}
        \Bigl(
          \stexpsum{
            \Bigl( 
              \sum_{i=1}^{p}
              \stexpprod{\cacti{i}}{\extrsolof{\acharthat}{\avert}}
            \Bigr)
                    }{
            \Bigl(
              \sum_{j=1}^{q}
                \stexpprod{\dacti{j}}{\extrsolof{\acharthat}{\cverti{j}}}
            \Bigr)
                    }
        \Bigr)
        \\
        & \;\,\parbox[t]{\widthof{$\eqin{\milnersysmin}$\hspace*{3ex}}}{\mbox{}}\:
          \text{(using axiom (\neutralstexpsum))}
    \displaybreak[0]\\
      & \;\parbox[t]{\widthof{$\eqin{\BBP}$}}{$\eqin{\BBP}$}\:
      \Bigl(
          \stexpsum{
            \Bigl(
              \sum_{i=1}^{m}
                \aacti{i}
            \Bigr)
                    }{
            \Bigl(
              \sum_{j=1}^{n}
                \stexpprod{\bacti{j}}{\extrsoluntilof{\acharthat}{\bverti{j}}{\bvert}}
            \Bigr)
                    }
        \Bigr)^{\sstexpbit}
        \Bigl(
          \stexpsum{
            \Bigl(
              \sum_{i=1}^{p}
                \stexpprod{\cacti{i}}{\extrsolof{\acharthat}{\avert}}
            \Bigr)}{
            \Bigl(  
              \sum_{j=1}^{q}
                \stexpprod{\dacti{j}}
                          {\bigl(
                            \stexpprod{ \extrsoluntilof{\acharthat}{\cverti{j}}{\avert} }
                                      { \extrsolof{\acharthat}{\avert}}
                           \bigr)}
            \Bigr)
                   }              
        \Bigr)   
        \\
        & \;\,\parbox[t]{\widthof{$\eqin{\milnersysmin}$\hspace*{3ex}}}{\mbox{}}\:
          \parbox[t]{\widthof{(by the induction hypothesis, using that $\avert\descendsinloopto\cverti{j}$ and $\bosn{\cverti{j}} < \bosn{\bvert}$)}}
                    {(by the induction hypothesis, using that 
             $\avert\descendsinloopto\cverti{j}$ and $\bosn{\cverti{j}} < \bosn{\bvert}$ \\\phantom{(}
                      because $\bvert \redi{\bodylab} \cverti{j}$
                      for $j=1,\ldots,q$, see \eqref{eq:2:prf:prop:extracted:fun:is:solution})}
    \displaybreak[0]\\[0.5ex]
      & \;\parbox[t]{\widthof{$\eqin{\BBP}$}}{$\eqin{\BBP}$}\:
      \Bigl(
          \stexpsum{
            \Bigl(
              \sum_{i=1}^{m}
                \aacti{i}
            \Bigr)
                    }{
            \Bigl(
              \sum_{j=1}^{n}
                \stexpprod{\bacti{j}}{\extrsoluntilof{\acharthat}{\bverti{j}}{\bvert}}
            \Bigr)
                    }
        \Bigr)^{\sstexpbit}
        \Bigl(
          \stexpprod{
            \Bigl(
              \stexpsum{
                \Bigl(
                  \sum_{i=1}^{p}
                    \cacti{i}
                \Bigr)}{
                \Bigl(  
                  \sum_{j=1}^{q}
                    \stexpprod{\dacti{j}}
                              {\extrsoluntilof{\acharthat}{\cverti{j}}{\avert}}
                \Bigr)
                        }
            \Bigr)
                     }{\extrsolof{\acharthat}{\avert}}             
        \Bigr)
        \\
        & \;\,\parbox[t]{\widthof{$\eqin{\milnersysmin}$\hspace*{3ex}}}{\mbox{}}\:
          \text{(using axioms (\assocstexpprod), (\distr))}
    \displaybreak[0]\\
      & \;\parbox[t]{\widthof{$\eqin{\BBP}$}}{$\eqin{\BBP}$}\:
      \stexpprod{
        \Bigl(\!
          \stexpsum{
            \Bigl(
              \sum_{i=1}^{m}
                \aacti{i}
            \Bigr)\!
                    }{\!
            \Bigl(
              \sum_{j=1}^{n}
                \stexpprod{\bacti{j}}{\extrsoluntilof{\acharthat}{\bverti{j}}{\bvert}}
            \Bigr)
                    }
        \!\Bigr)^{\sstexpbit}
        \Bigl(\!
          \stexpsum{
            \Bigl(
              \sum_{i=1}^{p}
                \cacti{i}
            \Bigr)\!
                    }{ \!
            \Bigl(
              \sum_{j=1}^{q}
                \stexpprod{\dacti{j}}{\extrsoluntilof{\acharthat}{\cverti{j}}{\avert}}
            \Bigr)
                    }
        \!\Bigr)\!
                }{\!\extrsolof{\acharthat}{\avert}}
        \\
        & \;\,\parbox[t]{\widthof{$\eqin{\milnersysmin}$\hspace*{3ex}}}{\mbox{}}\:
          \text{(using axiom (BKS2))}      
    \displaybreak[0]\\[0.5ex]
      & \;\parbox[t]{\widthof{$\eqin{\BBP}$}}{$\syntequal$}\:
      \stexpprod{ \extrsoluntilof{\acharthat}{\bvert}{\avert} }
                { \extrsolof{\acharthat}{\avert} }
      \\          
        & \;\,\parbox[t]{\widthof{$\eqin{\milnersysmin}$\hspace*{3ex}}}{\mbox{}}\:
          \parbox{300pt}{(by the definition of $\extrsoluntilof{\acharthat}{\bvert}{\avert}$, 
                          based on the representation \eqref{eq:2:prf:prop:extracted:fun:is:solution})}
  \end{align*}  
  This chain of provable equalities demonstrates
  \eqref{eq:1:prf:prop:extracted:fun:is:solution}.
\end{proof}

\begin{repeatedprop}[= Proposition~\ref{prop:extracted:fun:is:solution}, uses the \BBP-axioms (B1)--(B6), (BKS1), (BKS2), but not the rule $\RSPbit\,$]
  In a chart $\achart$ with a LLEE-witness $\acharthat$, 
  $\sextrsolof{\acharthat}$ is a provable solution of~$\achart$.
\end{repeatedprop}

\begin{proof}
  We prove that $\sextrsolof{\acharthat}$ is a provable solution of the chart $\achart$. 
  Let $\bvert\neq{\tick}$.
  We show that $\extrsolof{\acharthat}{\bvert}$ satisfies the defining equation of $\sextrsolof{\acharthat}$ to be a provable solution of $\achart$ at $\bvert$.
  
  We consider a representation of the set $\fap{\hat{T}}{\bvert}$ of transitions from $\bvert$ in~$\acharthat$ as follows: 
  \begin{align}
      \fap{\hat{T}}{\bvert}
        = {} & 
        \descsetexpbig{ \bvert \lti{\aacti{i}}{\loopnsteplab{\aLnamei{i}}} \bvert }{ i = 1,\ldots,m }
          \cup
        \descsetexpbig{ \bvert \lti{\bacti{j}}{\loopnsteplab{\bLnamei{j}}} \bverti{j} }{ \, \bverti{j} \neq \bvert, \: j = 1,\ldots,n }  
        \notag\\[-0.75ex]
        & {} \; \cup 
        \descsetexpbig{ \bvert \lti{\cacti{i}}{\bodylab} \tick }{ i = 1,\ldots,p }
          \cup
        \descsetexpbig{ \bvert \lti{\dacti{j}}{\bodylab} \cverti{j} }{ \cverti{j} \neq \tick, \: j = 1,\ldots,q }
        \label{eq:3:prf:prop:extracted:fun:is:solution} 
  \end{align} 
  that partitions $\fap{\hat{T}}{\bvert}$ into \loopentry\ transitions to $\bvert$ and to other targets $\bverti{1},\ldots,\bverti{n}$, 
  and \bodytransitions\ to $\tick$ and to other targets $\cverti{1},\ldots,\cverti{q}$.
  We argue as follows:
  \begin{align*}
    \extrsolof{\acharthat}{\bvert}~
      & {} \;\parbox[t]{\widthof{$\eqin{\BBP}$}}{$\syntequal$}\:
      \Bigl(
          \stexpsum{
            \Bigl(
              \sum_{i=1}^{m}
                \aacti{i}
            \Bigr)
                    }{
            \Bigl(
              \sum_{j=1}^{n}
                \stexpprod{\bacti{j}}{\extrsoluntilof{\acharthat}{\bverti{j}}{\bvert}}
            \Bigr)
                    }
        \Bigr)^{\sstexpbit}
        \Bigl(
          \stexpsum{
            \Bigl(
              \sum_{i=1}^{p}
                \cacti{i}
            \Bigr)
                    }{
            \Bigl(
              \sum_{j=1}^{q}
                \stexpprod{\dacti{j}}{\extrsolof{\acharthat}{\cverti{j}}}
            \Bigr)
                    }
        \Bigr)
        \\
        & \;\,\parbox[t]{\widthof{$\eqin{\milnersysmin}$\hspace*{3ex}}}{\mbox{}}\:
          \parbox{300pt}{(by the definition of $\sextrsolof{\acharthat}$,
                          in view of \eqref{eq:3:prf:prop:extracted:fun:is:solution})}
    \displaybreak[0]\\
      & \;\parbox[t]{\widthof{$\eqin{\BBP}$}}{$\eqin{\BBP}$}\:
      \stexpsum{
        \stexpprod{
          \Bigl(
              \stexpsum{
                \Bigl(
                  \sum_{i=1}^{m}
                    \aacti{i}
                \Bigr)
                        }{
                \Bigl(
                  \sum_{j=1}^{n}
                    \stexpprod{\bacti{j}}{\extrsoluntilof{\acharthat}{\bverti{j}}{\bvert}}
                \Bigr)
                        }
            \Bigr)
                   }{\extrsolof{\acharthat}{\bvert}}
                }{        
        \Bigl(
          \stexpsum{
            \Bigl(
              \sum_{i=1}^{p}
                \cacti{i}
            \Bigr)
                    }{
            \Bigl(
              \sum_{j=1}^{q}
                \stexpprod{\dacti{j}}{\extrsolof{\acharthat}{\cverti{j}}}
            \Bigr)
                    }
        \Bigr)
                   }
        \\
        & \;\,\parbox[t]{\widthof{$\eqin{\milnersysmin}$\hspace*{3ex}}}{\mbox{}}\:
          \text{(using axiom (BKS1) and the defining equality in the first step)}
    \displaybreak[0]\\
      & \;\parbox[t]{\widthof{$\eqin{\BBP}$}}{$\eqin{\BBP}$}\:
      \stexpsum{
        \Bigl(
          \stexpsum{
            \Bigl(
              \sum_{i=1}^{m}
                \stexpprod{\aacti{i}}{\extrsolof{\acharthat}{\bvert}}
            \Bigr)
                    }{         
            \Bigl(
              \sum_{j=1}^{n}
                \stexpprod{\bacti{j}}
                          {(\stexpprod{\extrsoluntilof{\acharthat}{\bverti{j}}{\bvert}}
                                      {\extrsolof{\acharthat}{\bvert}})}
            \Bigr)
                      }   
        \Bigr)
                  }{     
        \Bigl(
          \stexpsum{
            \Bigl(
              \sum_{i=1}^{p}
                \cacti{i}
            \Bigr)
                    }{
            \Bigl(
              \sum_{j=1}^{q}
                \stexpprod{\dacti{j}}{\extrsolof{\acharthat}{\cverti{j}}}
            \Bigr)
                    }
        \Bigr)
                   }
        \\
        & \;\,\parbox[t]{\widthof{$\eqin{\milnersysmin}$\hspace*{3ex}}}{\mbox{}}\:
          \text{(using axioms (\assocstexpprod), (\distr))}
    \displaybreak[0]\\
      & \;\parbox[t]{\widthof{$\eqin{\BBP}$}}{$\eqin{\BBP}$}\:
      \stexpsum{
        \Bigl(
          \stexpsum{
            \Bigl(
              \sum_{i=1}^{m}
                \stexpprod{\aacti{i}}{\extrsolof{\acharthat}{\bvert}}
            \Bigr)
                    }{         
            \Bigl(
              \sum_{j=1}^{n}
                \stexpprod{\bacti{j}}
                          {\extrsolof{\acharthat}{\bverti{j}}}
            \Bigr)
                      }   
        \Bigr)
                  }{     
        \Bigl(
          \stexpsum{
            \Bigl(
              \sum_{i=1}^{p}
                \cacti{i}
            \Bigr)
                    }{
            \Bigl(
              \sum_{j=1}^{q}
                \stexpprod{\dacti{j}}{\extrsolof{\acharthat}{\cverti{j}}}
            \Bigr)
                    }
        \Bigr)
                   }
        \\
        & \;\,\parbox[t]{\widthof{$\eqin{\milnersysmin}$\hspace*{3ex}}}{\mbox{}}\:
          \parbox{\widthof{(using \eqref{eq:1:prf:prop:extracted:fun:is:solution} of Lemma~\ref{lem:prop:extracted:fun:is:solution}, 
                            in view of $\bvert \descendsinloopto \bverti{i}$ for $j=1,\ldots,n$,}}
                 {(using \eqref{eq:1:prf:prop:extracted:fun:is:solution} of Lemma~\ref{lem:prop:extracted:fun:is:solution}, 
                   in view of $\bvert \descendsinloopto \bverti{j}$ for $j= 1,\ldots,n$)}
    \displaybreak[0]\\
      & \;\parbox[t]{\widthof{$\eqin{\BBP}$}}{$\eqin{\BBP}$}\:
      \stexpsum{
        \Bigl(
          \sum_{i=1}^{p}
            \cacti{i}
        \Bigr)  }{
        \Bigr(    
          \stexpsum{
            \Bigl(
              \stexpsum{
                \Bigl(
                  \sum_{i=1}^{m}
                    \stexpprod{\aacti{i}}{\extrsolof{\acharthat}{\bvert}}
                \Bigr)
                     }{         
                \Bigl(
                  \sum_{j=1}^{n}
                    \stexpprod{\bacti{j}}
                              {\extrsolof{\acharthat}{\bverti{j}}}
                \Bigr)
                          }   
            \Bigr)
                    }{ 
                \Bigl(
                  \sum_{j=1}^{q}
                    \stexpprod{\dacti{j}}{\extrsolof{\acharthat}{\cverti{j}}}
                \Bigr)
                       }
        \Bigr)
                   }
        \\
        & \;\,\parbox[t]{\widthof{$\eqin{\milnersysmin}$\hspace*{3ex}}}{\mbox{}}\:
          \text{(using axioms (\commstexpsum), (\assocstexpsum))}
  \end{align*}
  This chain of provable equalities demonstrates that
  $\extrsolof{\acharthat}{\bvert}$ is a provable solution of $\achart$ at $\bvert$,
  in view of \eqref{eq:3:prf:prop:extracted:fun:is:solution}.
  As $\bvert\neq\tick$ is arbitrary, $\sextrsolof{\acharthat}$ is indeed a provable solution of $\achart$.
\end{proof}

\begin{repeatedlem}[= Lemma~\ref{lem:prop:extrsol:vs:solution}, uses the \BBP-axioms (B1)--(B6), and the rule $\RSPbit\,$]
  For every provable solution $\sasol$ of a chart~$\achart$ with \LLEEwitness~$\acharthat$,
  the following connection holds with the relative extraction function $\sextrsoluntilof{\acharthat}$ holds,
  for all vertices $\avert,\bvert$:
  \begin{equation}\label{eq:repeatedlem:prop:extrsol:vs:solution}
    \avert \descendsinloopto \bvert
      \;\;\,\Longrightarrow\;\;\;
        \asol{\bvert} 
          \,\eqin{\BBP}\,
        \stexpprod{\extrsoluntilof{\acharthat}{\bvert}{\avert}}
                  {\asol{\avert}} 
  \end{equation}
  Note that if $\avert \descendsinloopto \bvert$, then $\avert \neq \tick$, and also $\bvert\neq\tick$, because $\bvert$ is in the body
   of a loop at $\avert$, and therefore cannot be $\tick$.
\end{repeatedlem}


\begin{proof}
  
  In order to prove \eqref{eq:repeatedlem:prop:extrsol:vs:solution}
  we proceed by complete induction
  on the same measure as used in the definition of the relative extraction function~$\sextrsoluntilof{\acharthat}$,
  namely, 
  induction on the maximal loop level of a loop at $\avert$, with a subinduction on $\bosn{\bvert}$.
  For performing the induction step, consider vertices $\avert$, $\bvert$ with $\avert \descendsinloopto \bvert$.
  As in the proof of Prop.~\ref{prop:extracted:fun:is:solution} 
  we assume the representation~\eqref{eq:2:prf:prop:extracted:fun:is:solution} of the set $\fap{\hat{T}}{\bvert}$ of transitions from $\bvert$ in $\acharthat$,
  which partitions $\fap{\hat{T}}{\bvert}$ into \loopentry\ transitions to $\bvert$ and to other targets $\bverti{1},\ldots,\bverti{n}$, 
  and \bodytransitions\ to $\avert$ and to other targets $\cverti{1},\ldots,\cverti{q}$.
  Since $\bvert$ is contained in a loop at $\avert$, none of these targets can be $\tick$.
  We now argue as follows:
  
  \begin{align*}
    \asol{\bvert}~
      & \;\parbox[t]{\widthof{$\eqin{\BBP}$}}{$\BBPeq$}\:
      \stexpsum{\stexpzero}
               {\Bigl(
                  \stexpsum{
                    \Bigl(
                      \sum_{i=1}^{m}
                        \stexpprod{\aacti{i}}{\asol{\bvert}}
                    \Bigr) 
                            }{
                    \Bigl(
                      \stexpsum{
                          \stexpsum{
                            \Bigl(
                              \sum_{j=1}^{n}
                                \stexpprod{\bacti{j}}
                                          {\asol{\bverti{j}}}
                            \Bigr)
                                    }{         
                            \Bigl(
                              \sum_{i=1}^{p}
                                \stexpprod{\cacti{i}}{\asol{\avert}}
                            \Bigr)  
                                      }
                                }{
                        \Bigl(
                          \sum_{j=1}^{q}
                            \stexpprod{\dacti{j}}{\asol{\cverti{j}}}
                        \Bigr)
                                  }
                     \Bigr)
                            }
                \Bigr)}        
            \\[0.25ex]
            & \;\,\parbox[t]{\widthof{$\eqin{\milnersysmin}$\hspace*{3ex}}}{\mbox{}}\:
          \parbox{\widthof{(as a representation of $\fap{\transshat}{\bvert}$ in Def.~\ref{def:provable-solution} without $\tick$)}}
                 {(since $\sasol$ is a provable solution of $\achart$ at $\bvert$,
                  using \eqref{eq:2:prf:prop:extracted:fun:is:solution})}
       \displaybreak[0]\\
          & \;\parbox[t]{\widthof{$\eqin{\BBP}$}}{$\BBPeq$}\:
            \stexpsum{
              \Bigl(
                \stexpsum{
                  \Bigl(
                    \sum_{i=1}^{m}
                      \stexpprod{\aacti{i}}{\asol{\bvert}}
                  \Bigr)
                          }{         
                  \Bigl(
                    \sum_{j=1}^{n}
                      \stexpprod{\bacti{j}}
                                {\asol{\bverti{j}}}
                  \Bigr)
                            }   
              \Bigr)
                      }{
              \Bigr(           
                \stexpsum{          
                  \Bigl(
                    \sum_{i=1}^{p}
                      \stexpprod{\cacti{i}}{\asol{\avert}}
                  \Bigr) 
                          }{
                  \Bigl(
                    \sum_{j=1}^{q}
                      \stexpprod{\dacti{j}}{\asol{\cverti{j}}}
                  \Bigr)
                            }
              \Bigr)
                         }
        \\
        & \;\,\parbox[t]{\widthof{$\eqin{\milnersysmin}$\hspace*{3ex}}}{\mbox{}}\:
          \text{(using axioms (\neutralstexpsum), (\assocstexpsum))}
    \displaybreak[0]\\
      & \;\parbox[t]{\widthof{$\eqin{\BBP}$}}{$\BBPeq$}\:
        \stexpsum{
          \Bigl(
            \stexpsum{
              \Bigl(
                \sum_{i=1}^{m}
                  \stexpprod{\aacti{i}}{\asol{\bvert}}
              \Bigr)
                      }{         
              \Bigl(
                \sum_{j=1}^{n}
                  \stexpprod{\bacti{j}}
                            {\bigl(
                               \stexpprod{\extrsoluntilof{\acharthat}{\bverti{j}}{\bvert}}
                                         {\asol{\bvert}}
                             \bigr)}
              \Bigr)
                        }   
          \Bigr)
                  }{
          \Bigr(           
            \stexpsum{          
              \Bigl(
                \sum_{i=1}^{p}
                  \stexpprod{\cacti{i}}{\asol{\avert}}
              \Bigr) 
                      }{
              \Bigl(
                \sum_{j=1}^{q}
                  \stexpprod{\dacti{j}}
                            {\bigl(
                               \stexpprod{\extrsoluntilof{\acharthat}{\cverti{j}}{\avert}}
                                         {\asol{\avert}}
                             \bigr)}
              \Bigr)
                        }
          \Bigr)
                     }          
        \\[0.25ex]
        & \;\,\parbox[t]{\widthof{$\eqin{\milnersysmin}$\hspace*{3ex}}}{\mbox{}}\:
          \parbox{\widthof{( $\avert \descendsinloopto \cverti{i}$ and $\bosn{\cverti{j}} < \bosn{\bvert}$ due to $\bvert \redi{\bodylab} \cverti{j}$ for $j\in\setexp{1,\ldots,q}$,
                            see \eqref{eq:2:prf:prop:extracted:fun:is:solution})}}
                 {(using the induction hypothesis, which is applicable because%
                  \\\phantom{(}%
                  the maximal loop level at $\bvert$ is smaller than that at $\avert$ due to $\avert \descendsinloopto \bvert$, and 
                  \\\phantom{(}%
                  $\avert \descendsinloopto \cverti{i}$ and $\bosn{\cverti{j}} < \bosn{\bvert}$ due to $\bvert \redi{\bodylab} \cverti{j}$ for $j=1,\ldots,q$,
                  see \eqref{eq:2:prf:prop:extracted:fun:is:solution})}
    \displaybreak[0]\\[0.5ex]
      & \;\parbox[t]{\widthof{$\eqin{\BBP}$}}{$\BBPeq$}\:
      \stexpsum{
        \stexpprod{
          \Bigl(
            \stexpsum{
              \Bigl(
                \sum_{i=1}^{m}
                  \aacti{i}
              \Bigr)
                       }{         
              \Bigl(
                \sum_{j=1}^{n}
                      \stexpprod{\bacti{j}}
                                {\extrsoluntilof{\acharthat}{\bverti{j}}{\bvert}}
              \Bigr)
                        }
          \Bigr) 
                    }{\asol{\bvert}}
                 }{ 
        \stexpprod{  
          \Bigl(
            \stexpsum{         
              \Bigl(
                \sum_{i=1}^{p}
                  \cacti{i}
              \Bigr) 
                      }{
              \Bigl(
                \sum_{j=1}^{q}
                  \stexpprod{\dacti{j}}
                            {\extrsoluntilof{\acharthat}{\cverti{j}}{\avert}}
              \Bigr)   }
          \Bigr)
                    }{\asol{\avert}}
                     }
        \\
        & \;\,\parbox[t]{\widthof{$\eqin{\milnersysmin}$\hspace*{3ex}}}{\mbox{}}\:
          \text{(using axioms (\assocstexpprod), (\distr))}
  \end{align*}
  This chain of provable equalities justifies: 
  \begin{align*}
    \asol{\bvert}
      & \;\parbox[t]{\widthof{$\eqin{\BBP}$}}{$\BBPeq$}\:
      \stexpsum{
        \stexpprod{
          \Bigl(
            \stexpsum{
              \Bigl(
                \sum_{i=1}^{m}
                  \aacti{i}
              \Bigr)
                       }{         
              \Bigl(
                \sum_{j=1}^{n}
                      \stexpprod{\bacti{j}}
                                {\extrsoluntilof{\acharthat}{\bverti{j}}{\bvert}}
              \Bigr)
                        }
          \Bigr) 
                    }{\asol{\bvert}}
                 }{ 
        \stexpprod{  
          \Bigl(
            \stexpsum{         
              \Bigl(
                \sum_{i=1}^{p}
                  \cacti{i}
              \Bigr) 
                      }{
              \Bigl(
                \sum_{j=1}^{q}
                  \stexpprod{\dacti{j}}
                            {\extrsoluntilof{\acharthat}{\cverti{j}}{\avert}}
              \Bigr)   }
          \Bigr)
                    }{\asol{\avert}}
                     }
  \end{align*}
  To this equality we can apply the rule $\RSPbit$:
  \begin{align*}
    \asol{\bvert}
      & \;\parbox[t]{\widthof{$\eqin{\BBP}$}}{$\BBPeq$}\:
      \stexpprod{
        \Bigl(
              \Bigl(
                \stexpsum{
                  \Bigl(
                    \sum_{i=1}^{m}
                      \aacti{i}
                  \Bigr)
                           }{         
                  \Bigl(
                    \sum_{j=1}^{n}
                          \stexpprod{\bacti{i}}
                                    {\extrsoluntilof{\acharthat}{\bverti{j}}{\bvert}}
                  \Bigr)
                            }
              \Bigr)^{\sstexpbit}\hspace*{-1pt} 
              \Bigl(
                \stexpsum{         
                  \Bigl(
                    \sum_{i=1}^{p}
                      \cacti{i}
                  \Bigr) 
                          }{
                  \Bigl(
                    \sum_{j=1}^{q}
                      \stexpprod{\dacti{j}}
                                {\extrsoluntilof{\acharthat}{\cverti{j}}{\avert}}
                  \Bigr)   }
              \Bigr)
      \Bigr)
             }{\asol{\avert}} 
        \\
        & \;\,\parbox[t]{\widthof{$\eqin{\milnersysmin}$\hspace*{3ex}}}{\mbox{}}\:
          \text{(by applying rule $\RSPbit$)}
    \displaybreak[0]\\[0.75ex]
      & \;\parbox[t]{\widthof{$\eqin{\BBP}$}}{$\syntequal$}\:
        \stexpprod{\extrsoluntilof{\acharthat}{\bvert}{\avert}}
                  {\asol{\avert}} \punc{,}
  \end{align*}
  The last step uses the definition of $\extrsoluntilof{\acharthat}{\bvert}{\avert}$,
  based on representation \eqref{eq:2:prf:prop:extracted:fun:is:solution} of $\fap{\transshat}{\bvert}$.
  In this way we have carried out the induction step.
  We conclude that \eqref{eq:repeatedlem:prop:extrsol:vs:solution} holds for all vertices $\avert$ and $\bvert$ of $\achart$.
\end{proof}

\subsection{Proofs in Section~\ref{collapse}: 
    Preservation of {\bf LLEE} under 
                               collapse}

\begin{repeatedlem}[= Lemma~\ref{lem:connthroughchart:bisim}]
  If $\bverti{1}\bisim \bverti{2}$ in $\achart$, then $\connthroughin{\achart}{\bverti{1}}{\bverti{2}}\bisim\achart$.
\end{repeatedlem}

\begin{proof}
Let $\achart=\tuple{\vertsi{1},\tick,\starti{1},\transsi{1}}$ and $\connthroughin{\achart}{\bverti{1}}{\bverti{2}}=\tuple{\vertsi{2},\tick,\starti{2},\transsi{2}}$.
Let  $\abisimi{1} \subseteq \vertsi{1}\times\vertsi{1}$ be the largest bisimulation relation on $\achart$.
In particular, $\pair{\bverti{1}}{\bverti{2}}\in\abisimi{1}$.
We argue that $\abisimi{2} = \abisimi{1}\cap(\vertsi{1}\times\vertsi{2})$ is a bisimulation relation between $\achart$ and $\connthroughin{\achart}{\bverti{1}}{\bverti{2}}$.
Take any $\pair{\cvert}{\avert}\in\abisimi{2}\subseteq\abisimi{1}$.
\begin{itemize} 
  \item
    ({\it forth}): Let  $\cvert \lt{\aact} \cvertacc\in\transsi{1}$.
    Then $\pair{\cvert}{\avert}\in\abisimi{1}$ implies there is a $\avert \lt{\aact} \avertacc\in\transsi{1}$ with
     $\pair{\cvertacc}{\avertacc}\in\abisimi{1}$.
    If $\avert \lt{\aact} \avertacc\in\transsi{2}$, then $\avertacc\in\vertsi{2}$, so $\pair{\cvertacc}{\avertacc}\in\abisimi{2}$ and we are done.
    If $\avert \lt{\aact} \avertacc\not\in\transsi{2}$, then $\avertacc=\bverti{1}$ and $\avert \lt{\aact} \bverti{2}\in\transsi{2}$.
    Since $\pair{\cvertacc}{\bverti{1}}\in\abisimi{1}$ and $\pair{\bverti{1}}{\bverti{2}}\in\abisimi{1}$, also $\pair{\cvertacc}{\bverti{2}}\in\abisimi{1}$.
    Since $\bverti{2}\in\vertsi{2}$, it follows that $\pair{\cvertacc}{\bverti{2}}\in\abisimi{2}$.
  \item
    ({\it back}): Let $\avert \lt{\aact} \avertacc\in\transsi{2}$.
    If $\avert \lt{\aact} \avertacc\in\transsi{1}$, then  $\pair{\cvert}{\avert}\in\abisimi{1}$ implies there is a $\cvert \lt{\aact} \cvertacc\in\transsi{1}$ with $\pair{\cvertacc}{\avertacc}\in\abisimi{1}$.
    Since $\avertacc\in\vertsi{2}$, also $\pair{\cvertacc}{\avertacc}\in\abisimi{2}$ and we are done.
    If $\avert \lt{\aact} \avertacc\not\in\transsi{1}$, then $\avertacc=\bverti{2}$ and $\avert \lt{\aact} \bverti{1}\in\transsi{1}$.
    So $\pair{\cvert}{\avert}\in\abisimi{1}$ implies there is a $\cvert \lt{\aact} \cvertacc\in\transsi{1}$ with $\pair{\cvertacc}{\bverti{1}}\in\abisimi{1}$.
    Since $\pair{\cvertacc}{\bverti{1}}\in\abisimi{1}$ and $\pair{\bverti{1}}{\bverti{2}}\in\abisimi{1}$, also $\pair{\cvertacc}{\bverti{2}}\in\abisimi{1}$.
    Since $\bverti{2}\in\vertsi{2}$, it follows that $\pair{\cvertacc}{\bverti{2}}\in\abisimi{2}$.
  \item
    ({\it termination}): Since $\abisimi{2}\subseteq\abisimi{1}$ clearly $\cvert=\tick$ if and only if $\avert=\tick$.
\end{itemize}
Finally, concerning ({\it start}):
  If $\starti{1}=\starti{2}$, then trivially $\pair{\starti{1}}{\starti{2}}\in\abisimi{2}$.
  If $\starti{1}\neq \starti{2}$, then $\starti{1}=\bverti{1}$ and $\starti{2}=\bverti{2}$.
  Since $\pair{\bverti{1}}{\bverti{2}}\in\abisimi{1}$ and $\bverti{2}\in\vertsi{2}$, we have $\pair{\bverti{1}}{\bverti{2}}\in\abisimi{2}$. 
\end{proof}

\begin{repeatedprop}[= Proposition~\ref{prop:reduced:br}]
  If a \LLEEchart~$\achart$ is not a bisimulation collapse,
  then it contains a pair of bisimilar vertices $w_1,w_2$ 
  that satisfy, for a \LLEEwitness\ of $\achart$, one of the following conditions:
  \begin{enumerate}[label=(C\arabic*)$\:$,itemsep=0.35ex]
    \item{}
      $\lognot{(\bverti{2} \redrtc \bverti{1})} \logand (\descendsinloopto \bverti{1} \,\Longrightarrow\, \text{$\bverti{2}$ is not normed}\,)$,
    \item{}
      $\bverti{2} \loopsbacktotc \bverti{1}$,
    \item{}
      $\existsstzero{\avert\in\verts}
         \bigl(\,
           \bverti{1} \dloopsbackto \avert
             \logand 
           \bverti{2} \loopsbacktotc \avert
         \,\bigr)\logand\lognot(\bverti{2} \redrtci{\bodylab} \bverti{1})$.
  \end{enumerate}
\end{repeatedprop}

\begin{proof}[More supplementary illustrations for the proof of Prop.~\ref{prop:reduced:br} on pages~\pageref{prf:prop:reduced:br:start}--\pageref{prf:prop:reduced:br:end}]
  The proof started from a pair $\cverti{1}$, $\cverti{2}$ of distinct bisimilar vertices. 
  In the case $\sccof{\cverti{1}} = \sccof{\cverti{2}}$, we had the following situation:
  \begin{equation}\label{eq:1:suppl:prf:prop:reduced:br}
     \cverti{1} 
       \loopsbacktortc 
     \averti{1}
       \dloopsbackto 
     \avert
       \convdloopsbackto 
     \averti{2}
       \convloopsbacktortc 
     \cverti{2}
       \logand
     \lognot{( \averti{2} \redrtci{\bodylab} \averti{1} )} \punc{.}
  \end{equation}
  For pairs of vertices $\cverti{1}$ and $\cverti{2}$ 
  such that \eqref{eq:1:suppl:prf:prop:reduced:br} holds, for some $\averti{1}$, $\averti{2}$, and $\avert$,
  we used induction on $\lbsminn{\cverti{1}}$ in order show that 
  $\cverti{1}$ and $\cverti{2}$ progress, via pairs of distinct bisimilar vertices, to
  bisimilar vertices $\bverti{1}$ and $\bverti{2}$ such that one of the conditions~\ref{cond:transf:I}, \ref{cond:transf:II}, or \ref{cond:transf:III} holds. 
  Note that each of \ref{cond:transf:I}, \ref{cond:transf:II}, and \ref{cond:transf:III} implies that $\bverti{1}$ and $\bverti{2}$ are distinct.
  
  In order to carry out the induction step we used a case distinction. Below we repeat the arguments, and supplement them 
  with illustrations. 
  
  \begin{enumerate}[label={Case~\arabic{*}:},leftmargin=*,align=right,labelsep=1ex,itemsep=0.5ex] 
    \item
      $\cverti{2}\loopsbacktotc \averti{2}$.\smallskip
        
      Since $\cverti{2}\ult \cverti{2}'$, either $\cverti{2}'=\averti{2}$ or $\averti{2} \descendsinlooptotc\cverti{2}'$. 
      Moreover,  $\sccof{\cverti{2}'}=\sccof{\cverti{2}}=\sccof{\averti{2}}$, so by Lem.~\ref{lem:loop:relations},~\ref{it:descendsinloopto:scc:loopsbackto},
      $\cverti{2}'\loopsbacktortc \averti{2}$.
      Hence,
      $
         \cvertacci{1} 
             \loopsbacktortc 
           \averti{1}
             \dloopsbackto 
           \avert
             \convdloopsbackto 
           \averti{2}
             \convloopsbacktortc 
           \cvertacci{2}
           \logand
         \lognot{( \averti{2} \redrtci{\bodylab} \averti{1} )} 
         $,
     and $\lbsminn{\cverti{1}'}<\lbsminn{\cverti{1}}$. 
     We apply the induction hypothesis to obtain 
     a bisimilar pair $w_1,w_2$ for which \ref{cond:transf:I}, \ref{cond:transf:II}, or \ref{cond:transf:III} holds. 
     In the illustration below, we drew both of the two cases in which the transition $\cverti{2} \red \cvertacci{2}$ 
     is a \loopentry\ transition, or a \bodytransition, from $\cverti{2}$.
     \begin{center}
       $
       \begin{aligned}[c]
         \scalebox{1}{\begin{tikzpicture}[scale=1,every node/.style={transform shape}]
%
\matrix[anchor=center,row sep=0.8cm,column sep=1.75cm,every node/.style={draw,thick,circle,minimum width=2.5pt,fill,inner sep=0pt,outer sep=2pt}] at (0,0) {
                   &  \node(v){};
  \\[0.25cm]
  \node(v_1){};    &  \node[draw=none,fill=none](h0){};
                                   &  \node(v_2){};  
  \\
  \node(v_11){};   &               &  \node(v_12){};
  \\[0.25cm]
  \node(v_n1){};   &               &  \node(v_n2){};
  \\
  \node(u_1){};    &  \node[draw=none,fill=none](h){}; 
                                   &  \node(u_2){};
  \\
};
\calcLength(v,h0){mylen};

\path (v) ++ (0pt,{0.25*\mylen pt}) node{$\avert$};
  \path (v) ++ ({-0.5*\mylen pt},{-0.4*\mylen pt}) node[draw,thick,circle,minimum width=2.5pt,fill,inner sep=0pt,outer sep=2pt] (v_01){};
    \draw[->,thick] (v) to (v_01);
    \draw[->>,bend right,distance={0.35*\mylen pt}] (v_01) to (v_1);
  \path (v) ++ ({0.5*\mylen pt},{-0.4*\mylen pt}) node[draw,thick,circle,minimum width=2.5pt,fill,inner sep=0pt,outer sep=2pt] (v_03){}; 
    \draw[->,thick] (v) to (v_03);
    \draw[->>,bend left,distance={0.35*\mylen pt}] (v_03) to (v_2);
   %
   %
%
\path (v_1) ++ ({-0.35*\mylen pt},{0.05*\mylen pt}) node{$\averti{1}$};
\draw[->>,out=170,in=220,distance={0.5*\mylen pt}] (v_11) to (v_1);

\path (v_1) ++ ({0.25*\mylen pt},{-0.4*\mylen pt}) node[draw,thick,circle,minimum width=2.5pt,fill,inner sep=0pt,outer sep=2pt](v_1_v_11){};
  \draw[->,thick] (v_1) to (v_1_v_11);
  \draw[->>,distance={0.25*\mylen pt},out=-30,in=20] (v_1_v_11) to (v_11);
\draw[->>,out=135,in=180,distance={1*\mylen pt}] (v_1) to (v);          

%
\draw[-,dotted,thick,shorten <={0.2*\mylen pt},shorten >={0.2*\mylen pt}] (v_11) to (v_n1);

\path (v_n1) ++ ({0.25*\mylen pt},{-0.4*\mylen pt}) node[draw,thick,circle,minimum width=2.5pt,fill,inner sep=0pt,outer sep=2pt](v_n1_u_1){};
  \draw[->,thick] (v_n1) to (v_n1_u_1);
  \draw[->>,distance={0.25*\mylen pt},out=-30,in=20] (v_n1_u_1) to (u_1);

\draw[->,thick] (v_11) to ($(v_11) + ({-0.25*\mylen pt},{-0.25*\mylen pt})$);
\draw[->,thick] (v_11) to ($(v_11) + ({0.25*\mylen pt},{-0.25*\mylen pt})$);

%
\draw[<<-,out=190,in=90,distance={0.25*\mylen pt}] (v_11) to ($(v_11) + ({-0.4*\mylen pt},{-0.3*\mylen pt})$);

\draw[->>,out=160,in=270,distance={0.25*\mylen pt}] (v_n1) to ($(v_n1) + ({-0.35*\mylen pt},{0.4*\mylen pt})$);

%
\path (v_2) ++ ({0.35*\mylen pt},{0.05*\mylen pt}) node{$\averti{2}$};
\draw[->>,out=10,in=-40,distance={0.5*\mylen pt}] (v_12) to (v_2);
\draw[->>,out=45,in=0,distance={1*\mylen pt}] (v_2) to (v);

\path (v_2) ++ ({-0.25*\mylen pt},{-0.4*\mylen pt}) node[draw,thick,circle,minimum width=2.5pt,fill,inner sep=0pt,outer sep=2pt](v_2_v_12){};
  \draw[->,thick] (v_2) to (v_2_v_12);
  \draw[->>,distance={0.25*\mylen pt},out=210,in=160] (v_2_v_12) to (v_12);

%
\draw[-,dotted,thick,shorten <={0.2*\mylen pt},shorten >={0.2*\mylen pt}] (v_12) to (v_n2);
\draw[->,thick] (v_12) to ($(v_12) + ({-0.25*\mylen pt},{-0.25*\mylen pt})$);
\draw[->,thick] (v_12) to ($(v_12) + ({0.25*\mylen pt},{-0.25*\mylen pt})$);

\draw[<<-,out=-10,in=90,distance={0.25*\mylen pt}] (v_12) to ($(v_12) + ({0.4*\mylen pt},{-0.3*\mylen pt})$); 

\draw[->>,shorten <= {0.1 *\mylen pt},shorten >={0.1 *\mylen pt}] (v_2) to node[pos=0.5,sloped]{$\small {/}$} 
                                                                          node[pos=0.15,yshift={-0.15 * \mylen pt}]{$\scriptstyle \bodylab$}(v_1);

\path (u_1) ++ ({0*\mylen pt},{-0.225*\mylen pt}) node[forestgreen]{$\cverti{1}$};
   \path (u_1) ++ ({-0.4*\mylen pt},{0.4*\mylen pt}) node[draw,thick,circle,minimum width=2.5pt,fill,inner sep=0pt,outer sep=2pt] (u'_1){};
     \path(u'_1) ++ ({-0.25*\mylen pt},0pt) node{$\colorred{\cvertacci{1}}$};

\draw[->,out=180,in=270,distance={0.25*\mylen pt},red] (u_1) to node[pos=0.55,below]{$\scriptstyle\slbs$} (u'_1); 
\draw[->>,out=90,in=180,distance={0.25*\mylen pt}] (u'_1) to (v_n1);

\path (u_2) ++ ({0.075*\mylen pt},{0.175*\mylen pt}) node[forestgreen]{$\cverti{2}$};
   \path (u_2) ++ ({0.4*\mylen pt},{0.35*\mylen pt}) node[draw,thick,circle,minimum width=2.5pt,fill,inner sep=0pt,outer sep=2pt] (u'_2_1){};
     \path(u'_2_1) ++ ({0.25*\mylen pt},0pt) node{$\colorred{\cvertacci{2}}$};
   \path (u_2) ++ ({0*\mylen pt},{-0.55*\mylen pt}) node[draw,thick,circle,minimum width=2.5pt,fill,inner sep=0pt,outer sep=2pt] (u'_2_2){};
     \path(u'_2_2) ++ ({0*\mylen pt},{-0.3*\mylen pt}) node{$\colorred{\cvertacci{2}}$};
     \draw[->>,out=-25,in=-40,distance={0.5*\mylen pt}] (u'_2_2) to (u_2);
     
\path (v_n2) ++ ({-0.25*\mylen pt},{-0.4*\mylen pt}) node[draw,thick,circle,minimum width=2.5pt,fill,inner sep=0pt,outer sep=2pt](v_n2_u_2){};
  \draw[->,thick] (v_n2) to (v_n2_u_2);
  \draw[->>,distance={0.25*\mylen pt},out=210,in=160] (v_n2_u_2) to (u_2);

\draw[->>,out=20,in=270,distance={0.25*\mylen pt}] (v_n2) to ($(v_n2) + ({0.35*\mylen pt},{0.4*\mylen pt})$);

\draw[->,thick,red] (u_2) to node[left,pos=0.3,xshift={0.05*\mylen pt}]{$\scriptstyle \loopnsteplab{\aLname} $}(u'_2_2);

\draw[->,out=0,in=270,distance={0.25*\mylen pt},red] (u_2) to node[below,pos=0.7]{$\scriptstyle \bodylab$} (u'_2_1); 
\draw[->>,out=90,in=0,distance={0.25*\mylen pt}] (u'_2_1) to (v_n2);

\draw[-,thick,magenta,densely dashed] 
  (u_1) to node[pos=0.5](mid){} 
           node[pos=0.65](left){} (u_2);

\draw[-,thick,magenta,densely dashed,out=25,in=145,distance={1.25*\mylen pt}] 
  (u'_1) to node[pos=0.5,above,sloped,black]{use ind.\ hyp.} 
            node[pos=0.562](left_1){}
                                                             (u'_2_1);
  \draw[-implies,double equal sign distance,thick] (left) to (left_1);

\draw[-,thick,magenta,densely dashed,out=-10,in=190,distance={1.5*\mylen pt}] 
  (u'_1) to node[below,pos=0.65,sloped,black]{use ind.\ hyp.} 
            node[pos=0.75](left-2){} (u'_2_2);
  \draw[-implies,double equal sign distance,thick] (left) to (left-2);   



\end{tikzpicture}}
       \end{aligned}
        $
     \end{center}   
        
      \vspace{-2.75ex}
    \item
      $\cverti{2}=\averti{2}$.
      
      \begin{enumerate}[label={Case~2.\arabic{*}:},align=right,labelsep=1ex,itemsep=0.5ex] 
        \item
          $\cverti{2}\loopnstepto{\alpha}\cverti{2}'$.
          
          Then either $\cverti{2}'=\cverti{2}$ or $\cverti{2} \descendsinloopto\cverti{2}'$. Moreover,  $\sccof{\cverti{2}'}=\sccof{\cverti{2}}$, 
          so by Lem.~\ref{lem:loop:relations},~\ref{it:descendsinloopto:scc:loopsbackto},
          $\cverti{2}'\loopsbacktortc \cverti{2}$, and hence
          $\cverti{2}'\loopsbacktortc \averti{2}$. 
          \vspace*{-.25mm}Thus we have obtained
          $
             \cvertacci{1} 
                 \loopsbacktortc 
               \averti{1}
                 \dloopsbackto 
               \avert
                 \convdloopsbackto 
               \averti{2}
                 \convloopsbacktortc 
               \cvertacci{2}
               \logand
             \lognot{( \averti{2} \redrtci{\bodylab} \averti{1} )} 
             $.
         Due to $\lbsminn{\cverti{1}'}<\lbsminn{\cverti{1}}$,
         we can apply the induction hypothesis again.
         %
         %
         \begin{equation*}
             \begin{aligned}[c]
               \scalebox{1}{\begin{tikzpicture}[scale=1,every node/.style={transform shape}]
%
\matrix[anchor=center,row sep=0.8cm,column sep=1.75cm,every node/.style={draw,thick,circle,minimum width=2.5pt,fill,inner sep=0pt,outer sep=2pt}] at (0,0) {
                   &  \node(v){};
  \\[0.25cm]
  \node(v_1){};    &  \node[draw=none,fill=none](h0){};
                                   &  \node(v_2){};  
  \\
  \node(v_11){};   &               &  \node[draw=none,fill=none](v_12){};
  \\[0.25cm]
  \node(v_n1){};   &               &  \node[draw=none,fill=none](v_n2){};
  \\
  \node(u_1){};    &  \node[draw=none,fill=none](h){}; 
                                   &  \node[draw=none,fill=none](u_2){};
  \\
};
\calcLength(v,h0){mylen};

\path (v) ++ (0pt,{0.25*\mylen pt}) node{$\avert$};
  \path (v) ++ ({-0.5*\mylen pt},{-0.4*\mylen pt}) node[draw,thick,circle,minimum width=2.5pt,fill,inner sep=0pt,outer sep=2pt] (v_01){};
    \draw[->,thick] (v) to (v_01);
    \draw[->>,bend right,distance={0.35*\mylen pt}] (v_01) to (v_1);
  \path (v) ++ ({0.5*\mylen pt},{-0.4*\mylen pt}) node[draw,thick,circle,minimum width=2.5pt,fill,inner sep=0pt,outer sep=2pt] (v_03){}; 
    \draw[->,thick] (v) to (v_03);
    \draw[->>,bend left,distance={0.35*\mylen pt}] (v_03) to (v_2);
   %
   %
%
\path (v_1) ++ ({-0.35*\mylen pt},{0.05*\mylen pt}) node{$\averti{1}$};
\draw[->>,out=170,in=220,distance={0.5*\mylen pt}] (v_11) to (v_1);

\path (v_1) ++ ({0.25*\mylen pt},{-0.4*\mylen pt}) node[draw,thick,circle,minimum width=2.5pt,fill,inner sep=0pt,outer sep=2pt](v_1_v_11){};
  \draw[->,thick] (v_1) to (v_1_v_11);
  \draw[->>,distance={0.25*\mylen pt},out=-30,in=20] (v_1_v_11) to (v_11);
\draw[->>,out=135,in=180,distance={1*\mylen pt}] (v_1) to (v);          

%
\draw[-,dotted,thick,shorten <={0.2*\mylen pt},shorten >={0.2*\mylen pt}] (v_11) to (v_n1);

\path (v_n1) ++ ({0.25*\mylen pt},{-0.4*\mylen pt}) node[draw,thick,circle,minimum width=2.5pt,fill,inner sep=0pt,outer sep=2pt](v_n1_u_1){};
  \draw[->,thick] (v_n1) to (v_n1_u_1);
  \draw[->>,distance={0.25*\mylen pt},out=-30,in=20] (v_n1_u_1) to (u_1);

\draw[->,thick] (v_11) to ($(v_11) + ({-0.25*\mylen pt},{-0.25*\mylen pt})$);
\draw[->,thick] (v_11) to ($(v_11) + ({0.25*\mylen pt},{-0.25*\mylen pt})$);

%
\draw[<<-,out=190,in=90,distance={0.25*\mylen pt}] (v_11) to ($(v_11) + ({-0.4*\mylen pt},{-0.3*\mylen pt})$);


\draw[->>,out=160,in=270,distance={0.25*\mylen pt}] (v_n1) to ($(v_n1) + ({-0.35*\mylen pt},{0.4*\mylen pt})$);    

%
\path (v_2) ++ ({0.15*\mylen pt},{0.05*\mylen pt}) node[right]{$\averti{2} = \forestgreen{\cverti{2}} = \colorred{\cvertacci{2}} $};
\draw[->,red,thick,out=-80,in=-30,distance={0.9*\mylen pt}] (v_2) to node[pos=0.5,below,yshift={0.05*\mylen pt}]{$\scriptstyle\loopnsteplab{\aLname}$} (v_2);
\draw[->>,out=45,in=0,distance={1*\mylen pt}] (v_2) to (v);

\draw[->>,shorten <= {0.1 *\mylen pt},shorten >={0.1 *\mylen pt}] (v_2) to node[pos=0.5,sloped]{$\small {/}$} 
                                                                          node[pos=0.15,yshift={-0.15 * \mylen pt}]{$\scriptstyle \bodylab$}(v_1);

\path (u_1) ++ ({0.025*\mylen pt},{-0.225*\mylen pt}) node{\forestgreen{$\cverti{1}$}};
   \path (u_1) ++ ({-0.4*\mylen pt},{0.4*\mylen pt}) node[draw,thick,circle,minimum width=2.5pt,fill,inner sep=0pt,outer sep=2pt] (u'_1){};
     \path(u'_1) ++ ({-0.25*\mylen pt},0pt) node{$\colorred{\cvertacci{1}}$};

\draw[->,out=180,in=270,distance={0.25*\mylen pt},red] (u_1) to node[below,pos=0.4,yshift={0.04*\mylen pt}]{$\scriptstyle \slbs$} (u'_1); 
\draw[->>,out=90,in=180,distance={0.25*\mylen pt}] (u'_1) to (v_n1);

\draw[-,thick,magenta,densely dashed,out=0,in=260,distance={2*\mylen pt}] 
  (u_1) to node[pos=0.5](mid){} 
           node[pos=0.65](left){} (v_2);

\draw[-,thick,magenta,densely dashed,out=25,in=215,distance={1*\mylen pt}] 
  (u'_1) to node[pos=0.45,above,sloped,black,xshift={0.05*\mylen pt},yshift={-0.05*\mylen pt}]{\small use ind.\ hyp.} 
            node[pos=0.45](mid_1){}
                                                             (v_2);
  
\draw[-implies,double equal sign distance,thick] (mid) to (mid_1);

\end{tikzpicture}}
             \end{aligned}
             \hspace*{0ex} 
             \begin{aligned}[c]
               \scalebox{1}{\begin{tikzpicture}[scale=1,every node/.style={transform shape}]
%
\matrix[anchor=center,row sep=0.8cm,column sep=1.75cm,every node/.style={draw,thick,circle,minimum width=2.5pt,fill,inner sep=0pt,outer sep=2pt}] at (0,0) {
                   &  \node(v){};
  \\[0.25cm]
  \node(v_1){};    &  \node[draw=none,fill=none](h0){};
                                   &  \node(v_2){};  
  \\
  \node(v_11){};   &               &  \node(v_12){};
  \\[0.25cm]
  \node(v_n1){};   &               &  \node(v_n2){};
  \\
  \node(u_1){};    &  \node[draw=none,fill=none](h){}; 
                                   &  \node(u_2){};
  \\
};
\calcLength(v,h0){mylen};

\path (v) ++ (0pt,{0.25*\mylen pt}) node{$\avert$};
  \path (v) ++ ({-0.5*\mylen pt},{-0.4*\mylen pt}) node[draw,thick,circle,minimum width=2.5pt,fill,inner sep=0pt,outer sep=2pt] (v_01){};
    \draw[->,thick] (v) to (v_01);
    \draw[->>,bend right,distance={0.35*\mylen pt}] (v_01) to (v_1);
  \path (v) ++ ({0.5*\mylen pt},{-0.4*\mylen pt}) node[draw,thick,circle,minimum width=2.5pt,fill,inner sep=0pt,outer sep=2pt] (v_03){}; 
    \draw[->,thick] (v) to (v_03);
    \draw[->>,bend left,distance={0.35*\mylen pt}] (v_03) to (v_2);
   %
   %
%
\path (v_1) ++ ({-0.35*\mylen pt},{0.05*\mylen pt}) node{$\averti{1}$};
\draw[->>,out=170,in=220,distance={0.5*\mylen pt}] (v_11) to (v_1);

\path (v_1) ++ ({0.25*\mylen pt},{-0.4*\mylen pt}) node[draw,thick,circle,minimum width=2.5pt,fill,inner sep=0pt,outer sep=2pt](v_1_v_11){};
  \draw[->,thick] (v_1) to (v_1_v_11);
  \draw[->>,distance={0.25*\mylen pt},out=-30,in=20] (v_1_v_11) to (v_11);
\draw[->>,out=135,in=180,distance={1*\mylen pt}] (v_1) to (v);          

%
\draw[-,dotted,thick,shorten <={0.2*\mylen pt},shorten >={0.2*\mylen pt}] (v_11) to (v_n1);

\path (v_n1) ++ ({0.25*\mylen pt},{-0.4*\mylen pt}) node[draw,thick,circle,minimum width=2.5pt,fill,inner sep=0pt,outer sep=2pt](v_n1_u_1){};
  \draw[->,thick] (v_n1) to (v_n1_u_1);
  \draw[->>,distance={0.25*\mylen pt},out=-30,in=20] (v_n1_u_1) to (u_1);

\draw[->,thick] (v_11) to ($(v_11) + ({-0.25*\mylen pt},{-0.25*\mylen pt})$);
\draw[->,thick] (v_11) to ($(v_11) + ({0.25*\mylen pt},{-0.25*\mylen pt})$);

%
\draw[<<-,out=190,in=90,distance={0.25*\mylen pt}] (v_11) to ($(v_11) + ({-0.4*\mylen pt},{-0.3*\mylen pt})$);

\draw[->>,out=160,in=270,distance={0.25*\mylen pt}] (v_n1) to ($(v_n1) + ({-0.35*\mylen pt},{0.4*\mylen pt})$);

%
\path (v_2) ++ ({0.15*\mylen pt},{0*\mylen pt}) node[right]{$\averti{2} = \forestgreen{\cverti{2}} $};
\draw[->>,out=60,in=280,distance={0.25*\mylen pt}] (v_12) to (v_2);
\draw[->>,out=45,in=0,distance={1*\mylen pt}] (v_2) to (v);

\path (v_2) ++ ({-0.25*\mylen pt},{-0.4*\mylen pt}) node[draw,thick,circle,minimum width=2.5pt,fill,inner sep=0pt,outer sep=2pt](v_2_v_12){};
  \draw[->,thick] (v_2) to (v_2_v_12);
  \draw[->>,distance={0.25*\mylen pt},out=210,in=160] (v_2_v_12) to (v_12);

\draw[->,thick,red,out=-45,in=25,distance={1.25*\mylen pt}] (v_2) to node[pos=0.5,right,xshift={-0.05*\mylen pt}]{$\scriptstyle \loopnsteplab{\aLname}$} (u_2);

%
\draw[-,dotted,thick,shorten <={0.2*\mylen pt},shorten >={0.2*\mylen pt}] (v_12) to (v_n2);
\draw[->,thick] (v_12) to ($(v_12) + ({-0.25*\mylen pt},{-0.25*\mylen pt})$);
\draw[->,thick] (v_12) to ($(v_12) + ({0.25*\mylen pt},{-0.25*\mylen pt})$);

\draw[<<-,out=-10,in=90,distance={0.25*\mylen pt}] (v_12) to ($(v_12) + ({0.4*\mylen pt},{-0.3*\mylen pt})$); 

\draw[->>,shorten <= {0.1 *\mylen pt},shorten >={0.1 *\mylen pt}] (v_2) to node[pos=0.5,sloped]{$\small {/}$} 
                                                                          node[pos=0.15,yshift={-0.15 * \mylen pt}]{$\scriptstyle \bodylab$}(v_1);

\path (u_1) ++ ({0*\mylen pt},{-0.225*\mylen pt}) node[forestgreen]{$\cverti{1}$};
   \path (u_1) ++ ({-0.4*\mylen pt},{0.4*\mylen pt}) node[draw,thick,circle,minimum width=2.5pt,fill,inner sep=0pt,outer sep=2pt] (u'_1){};
     \path(u'_1) ++ ({-0.25*\mylen pt},{0.25*\mylen pt}) node{$\colorred{\cvertacci{1}}$};

\draw[->,out=180,in=270,distance={0.25*\mylen pt},red] (u_1) to node[pos=0.55,below]{$\scriptstyle\slbs$} (u'_1); 
\draw[->>,out=90,in=180,distance={0.25*\mylen pt}] (u'_1) to (v_n1);

\path (u_2) ++ ({0*\mylen pt},{-0.25*\mylen pt}) node[forestgreen]{$\colorred{\cvertacci{2}}$};
     
%
\path (v_n2) ++ ({-0.25*\mylen pt},{-0.4*\mylen pt}) node[draw,thick,circle,minimum width=2.5pt,fill,inner sep=0pt,outer sep=2pt](v_n2_u_2){};
  \draw[->,thick] (v_n2) to (v_n2_u_2);
  \draw[->>,distance={0.25*\mylen pt},out=210,in=160] (v_n2_u_2) to (u_2);     
\draw[->>,out=60,in=280,distance={0.25*\mylen pt}] (u_2) to (v_n2);     
   
\draw[->>,out=20,in=270,distance={0.25*\mylen pt}] (v_n2) to ($(v_n2) + ({0.35*\mylen pt},{0.4*\mylen pt})$);

\draw[-,thick,magenta,densely dashed,out=110,in=215,distance={1*\mylen pt}] 
  (u_1) to node[pos=0.5](mid){} 
           node[pos=0.6](left){} (v_2);

\draw[-,thick,magenta,densely dashed,out=225,in=195,distance={1.75*\mylen pt}] 
  (u'_1) to node[pos=0.5](mid'){} 
           node[pos=0.7](left'){} 
           node[pos=0.7,below,sloped,black]{\small use ind.\ hyp.} (u_2);

\draw[-implies,double equal sign distance,thick] (mid) to (left');

\end{tikzpicture}}
             \end{aligned} 
         \end{equation*} 
         
       \item 
          $\cverti{2}\redi{\bodylab}\cverti{2}'$.\smallskip
          
          Then $\neg(\averti{2} \redrtci{\bodylab} \averti{1})$ together with
          $\averti{2}=\cverti{2}\redi{\bodylab}\cverti{2}'$ and
          $\cverti{1}' \redrtci{\bodylab} \averti{1}$
          (because $\cverti{1}'\loopsbacktortc \averti{1}$) imply $\cverti{1}'\neq \cverti{2}'$.
          We distinguish two cases.
          
          \vspace{0.5ex}
          \begin{enumerate}[label={Case~2.2.\arabic{*}:},align=right,labelsep=1ex,itemsep=0.5ex] 
            \item 
              $\cverti{2}'=\avert$.\smallskip
              
              Then $\cverti{1}'\loopsbacktortc \averti{1}\dloopsbackto v=\cverti{2}'$, i.e., $\cverti{1}'\loopsbacktotc \cverti{2}'$, 
              so we are done, because \ref{cond:transf:II} holds 
              for $\bverti{1} = \cverti{2}'$ and $\bverti{2} = \cverti{1}'$.
              
             \begin{center}   
               $
               \begin{aligned}[c]
                 \begin{tikzpicture}[scale=1,every node/.style={transform shape}]
%
\matrix[anchor=center,row sep=0.8cm,column sep=1.75cm,every node/.style={draw,thick,circle,minimum width=2.5pt,fill,inner sep=0pt,outer sep=2pt}] at (0,0) {
                   &  \node(v){};
  \\[0.25cm]
  \node(v_1){};    &  \node[draw=none,fill=none](h0){};
                                   &  \node(v_2){};  
  \\
  \node(v_11){};   &               &  \node[draw=none,fill=none](v_12){};
  \\[0.25cm]
  \node(v_n1){};   &               &  \node[draw=none,fill=none](v_n2){};
  \\
  \node(u_1){};    &  \node[draw=none,fill=none](h){}; 
                                   &  \node[draw=none,fill=none](u_2){};
  \\
};
\calcLength(v,h0){mylen};

\path (v) ++ (0pt,{0.25*\mylen pt}) node{$\avert = \colorred{\cvertacci{2}}$};
  \path (v) ++ ({-0.5*\mylen pt},{-0.4*\mylen pt}) node[draw,thick,circle,minimum width=2.5pt,fill,inner sep=0pt,outer sep=2pt] (v_01){};
    \draw[->,thick] (v) to (v_01);
    \draw[->>,bend right,distance={0.35*\mylen pt}] (v_01) to (v_1);
  \path (v) ++ ({0.5*\mylen pt},{-0.4*\mylen pt}) node[draw,thick,circle,minimum width=2.5pt,fill,inner sep=0pt,outer sep=2pt] (v_03){}; 
    \draw[->,thick] (v) to (v_03);
    \draw[->>,bend left,distance={0.35*\mylen pt}] (v_03) to (v_2);
   %
   %
%
\path (v_1) ++ ({-0.35*\mylen pt},{0.05*\mylen pt}) node{$\averti{1}$};
\draw[->>,out=170,in=220,distance={0.5*\mylen pt}] (v_11) to (v_1);

\path (v_1) ++ ({0.25*\mylen pt},{-0.4*\mylen pt}) node[draw,thick,circle,minimum width=2.5pt,fill,inner sep=0pt,outer sep=2pt](v_1_v_11){};
  \draw[->,thick] (v_1) to (v_1_v_11);
  \draw[->>,distance={0.25*\mylen pt},out=-30,in=20] (v_1_v_11) to (v_11);
\draw[->>,out=135,in=180,distance={1*\mylen pt}] (v_1) to (v);          

%
\draw[-,dotted,thick,shorten <={0.2*\mylen pt},shorten >={0.2*\mylen pt}] (v_11) to (v_n1);

\path (v_n1) ++ ({0.25*\mylen pt},{-0.4*\mylen pt}) node[draw,thick,circle,minimum width=2.5pt,fill,inner sep=0pt,outer sep=2pt](v_n1_u_1){};
  \draw[->,thick] (v_n1) to (v_n1_u_1);
  \draw[->>,distance={0.25*\mylen pt},out=-30,in=20] (v_n1_u_1) to (u_1);

\draw[->,thick] (v_11) to ($(v_11) + ({-0.25*\mylen pt},{-0.25*\mylen pt})$);
\draw[->,thick] (v_11) to ($(v_11) + ({0.25*\mylen pt},{-0.25*\mylen pt})$);

%
\draw[<<-,out=190,in=90,distance={0.25*\mylen pt}] (v_11) to ($(v_11) + ({-0.4*\mylen pt},{-0.3*\mylen pt})$);


\draw[->>,out=160,in=270,distance={0.25*\mylen pt}] (v_n1) to ($(v_n1) + ({-0.35*\mylen pt},{0.4*\mylen pt})$);    

%
\path (v_2) ++ ({0.15*\mylen pt},{0*\mylen pt}) node[right]{$\averti{2} = \forestgreen{\cverti{2}}$};
\draw[->,out=45,in=0,distance={1*\mylen pt},red] (v_2) to node[right,pos=0.25,xshift={-0.05*\mylen pt}]{$\scriptstyle \bodylab$} (v);

\draw[->>,shorten <= {0.1 *\mylen pt},shorten >={0.1 *\mylen pt}] (v_2) to node[pos=0.5,sloped]{$\small {/}$} 
                                                                          node[pos=0.15,yshift={-0.15 * \mylen pt}]{$\scriptstyle \bodylab$}(v_1);

\path (u_1) ++ ({0.025*\mylen pt},{-0.225*\mylen pt}) node{\forestgreen{$\cverti{1}$}};
   \path (u_1) ++ ({-0.4*\mylen pt},{0.4*\mylen pt}) node[draw,thick,circle,minimum width=2.5pt,fill,inner sep=0pt,outer sep=2pt] (u'_1){};
     \path(u'_1) ++ ({-0.25*\mylen pt},0pt) node{$\colorred{\cvertacci{1}}$};

\draw[->,out=180,in=270,distance={0.25*\mylen pt},red] (u_1) to node[below,pos=0.4,yshift={0.04*\mylen pt}]{$\scriptstyle \slbs$} (u'_1); 
\draw[->>,out=90,in=180,distance={0.25*\mylen pt}] (u'_1) to (v_n1);

\draw[-,thick,magenta,densely dashed,out=0,in=260,distance={2*\mylen pt}] 
  (u_1) to node[pos=0.5](mid){} 
           node[pos=0.65](left){} (v_2);

\draw[-,thick,magenta,densely dashed,out=25,in=260,distance={1.75*\mylen pt}] 
  (u'_1) to node[pos=0.45,above,sloped,black]{\small use ind.\ hyp.} 
            node[pos=0.53,right,black]{\small \ref{cond:transf:II}} 
            node[pos=0.45](mid_1){}
                                                             (v);
  
\draw[-implies,double equal sign distance,thick] (mid) to (mid_1);

\end{tikzpicture}
               \end{aligned}  
               $
             \end{center}   
             
             \medskip
              
           \item 
             $\cverti{2}'\neq \avert$.\smallskip 
              
             By Lem.~\ref{lem:loop:relations},~\ref{it:descendsinloopto:scc:loopsbackto}, $\cverti{2}'\loopsbacktotc \avert$.
             Hence, $\cverti{2}'\loopsbacktortc \avertacci{2}\dloopsbackto \avert$ for some $\avertacci{2}$.
             Since $\averti{2}=\cverti{2}\redi{\bodylab}\cverti{2}'\loopsbacktortc \avertacci{2}$ and
             $\neg(\averti{2} \redrtci{\bodylab} \averti{1})$, it follows that
             $\neg(\avertacci{2} \redrtci{\bodylab} \averti{1})$.
             So
             $ \cvertacci{1} 
                  \loopsbacktortc 
                \averti{1}
                  \dloopsbackto 
                \avert
                  \convdloopsbackto 
                \avertacci{2}
                  \convloopsbacktortc 
                \cvertacci{2}
                   \logand
                \lognot{( \averti{2} \redrtci{\bodylab} \avertacci{1} )}$.
             Due to $\lbsminn{\cverti{1}'}<\lbsminn{\cverti{1}}$,
             we can apply the induction hypothesis again.
             \begin{center}   
               $
               \begin{aligned}[c]
                 \begin{tikzpicture}[every node/.style={transform shape}]
\useasboundingbox (-3.25,2.45) rectangle (3.2,-3);
\matrix[anchor=center,row sep=0.8cm,column sep=1cm,every node/.style={draw,thick,circle,minimum width=2.5pt,fill,inner sep=0pt,outer sep=2pt}] at (0,0) {
                   &  \node(v){};
  \\[0.25cm]
  \node(v_1){};    &  \node[draw=none,fill=none](h0){};
                                   &  \node(v_2){};     &[1.75ex] & \node(v'_2){}; 
  \\
  \node(v_11){};   &               &  \node[draw=none,fill=none](v_12){};    
                                                        & & \node(v'_12){};
  \\[0.25cm]
  \node(v_n1){};   &               &  \node[draw=none,fill=none](v_n2){};
                                                        & & \node(v'_n2){};
  \\
  \node(u_1){};    &  \node[draw=none,fill=none](h){}; 
                                   &  \node[draw=none,fill=none](u_2){};
                                                        & & \node(u'_2){};
  \\
};
\calcLength(v,h0){mylen};

\path (v) ++ (0pt,{0.25*\mylen pt}) node{$\avert$};
  \path (v) ++ ({-0.5*\mylen pt},{-0.4*\mylen pt}) node[draw,thick,circle,minimum width=2.5pt,fill,inner sep=0pt,outer sep=2pt] (v_01){};
    \draw[->,thick] (v) to (v_01);
    \draw[->>,bend right,distance={0.35*\mylen pt}] (v_01) to (v_1);
  \path (v) ++ ({0.5*\mylen pt},{-0.4*\mylen pt}) node[draw,thick,circle,minimum width=2.5pt,fill,inner sep=0pt,outer sep=2pt] (v_03){}; 
    \draw[->,thick] (v) to (v_03);
    \draw[->>,bend left,distance={0.35*\mylen pt}] (v_03) to (v_2);
   %
   %
%
\path (v_1) ++ ({-0.35*\mylen pt},{0.05*\mylen pt}) node{$\averti{1}$};
\draw[->>,out=170,in=220,distance={0.5*\mylen pt}] (v_11) to (v_1);

\path (v_1) ++ ({0.25*\mylen pt},{-0.4*\mylen pt}) node[draw,thick,circle,minimum width=2.5pt,fill,inner sep=0pt,outer sep=2pt](v_1_v_11){};
  \draw[->,thick] (v_1) to (v_1_v_11);
  \draw[->>,distance={0.25*\mylen pt},out=-30,in=20] (v_1_v_11) to (v_11);
\draw[->>,out=135,in=180,distance={1*\mylen pt}] (v_1) to (v);          

%
\draw[-,dotted,thick,shorten <={0.2*\mylen pt},shorten >={0.2*\mylen pt}] (v_11) to (v_n1);

\path (v_n1) ++ ({0.25*\mylen pt},{-0.4*\mylen pt}) node[draw,thick,circle,minimum width=2.5pt,fill,inner sep=0pt,outer sep=2pt](v_n1_u_1){};
  \draw[->,thick] (v_n1) to (v_n1_u_1);
  \draw[->>,distance={0.25*\mylen pt},out=-30,in=20] (v_n1_u_1) to (u_1);

\draw[->,thick] (v_11) to ($(v_11) + ({-0.25*\mylen pt},{-0.25*\mylen pt})$);
\draw[->,thick] (v_11) to ($(v_11) + ({0.25*\mylen pt},{-0.25*\mylen pt})$);

%
\draw[<<-,out=190,in=90,distance={0.25*\mylen pt}] (v_11) to ($(v_11) + ({-0.4*\mylen pt},{-0.3*\mylen pt})$);

\draw[->>,out=160,in=270,distance={0.25*\mylen pt}] (v_n1) to ($(v_n1) + ({-0.35*\mylen pt},{0.4*\mylen pt})$);

%
\path (v_2) ++ ({0.1*\mylen pt},{0.05*\mylen pt}) node[right]{$\averti{2} = \forestgreen{\cverti{2}} $};
\draw[->>,shorten <= {0.1 *\mylen pt},shorten >={0.1 *\mylen pt}] 
  (v_2) to node[pos=0.5,sloped]{$\small {/}$} 
           node[pos=0.15,yshift={-0.125 * \mylen pt}]{$\scriptstyle \bodylab$} (v_1);            
\draw[->,out=-25,in=165,distance={1*\mylen pt},red]
  (v_2) to node[left,pos=0.3,xshift={0.075*\mylen pt}]{$\scriptstyle \bodylab$} (u'_2);                                                                          
\draw[->>,out=45,in=0,distance={1*\mylen pt}] (v_2) to (v);
\draw[->,thick] (v_2) to ($(v_2) + ({-0.15*\mylen pt},{-0.35*\mylen pt})$);
\draw[->,thick] (v_2) to ($(v_2) + ({0.15*\mylen pt},{-0.35*\mylen pt})$);

\path (u_1) ++ ({-0*\mylen pt},{-0.25*\mylen pt}) node[forestgreen]{$\cverti{1}$};
   \path (u_1) ++ ({-0.4*\mylen pt},{0.4*\mylen pt}) node[draw,thick,circle,minimum width=2.5pt,fill,inner sep=0pt,outer sep=2pt] (u'_1){};
     \path(u'_1) ++ ({-0.2*\mylen pt},{0.15*\mylen pt}) node{$\colorred{\cvertacci{1}}$};

\draw[->,out=180,in=270,distance={0.25*\mylen pt},red] (u_1) to node[below,yshift={0.04*\mylen pt}]{$\scriptstyle\slbs$} (u'_1); 
\draw[->>,out=90,in=180,distance={0.25*\mylen pt}] (u'_1) to (v_n1); 


%
\path (v'_2) ++ ({0.1*\mylen pt},{0.05*\mylen pt}) node[right]{$\avertacci{2}$};
\draw[->>,out=10,in=-40,distance={0.5*\mylen pt}] (v'_12) to (v'_2);
\draw[->>,out=75,in=0,distance={0.8*\mylen pt}] (v'_2) to (v);
\draw[->>,out=135,in=20,distance={1*\mylen pt}] 
   (v'_2) to node[pos=0.175,sloped]{$\small {/}$} 
             node[pos=0.1,yshift={-0.15 * \mylen pt}]{$\scriptstyle \bodylab$} (v_1); 

\path (v'_2) ++ ({-0.25*\mylen pt},{-0.4*\mylen pt}) node[draw,thick,circle,minimum width=2.5pt,fill,inner sep=0pt,outer sep=2pt](v'_2_v'_12){};
  \draw[->,thick] (v'_2) to (v'_2_v'_12);
  \draw[->>,distance={0.25*\mylen pt},out=210,in=160] (v'_2_v'_12) to (v'_12);

%
\draw[-,dotted,thick,shorten <={0.2*\mylen pt},shorten >={0.2*\mylen pt}] (v'_12) to (v'_n2);
\draw[->,thick] (v'_12) to ($(v'_12) + ({-0.25*\mylen pt},{-0.25*\mylen pt})$);
\draw[->,thick] (v'_12) to ($(v'_12) + ({0.25*\mylen pt},{-0.25*\mylen pt})$);

%
\draw[<<-,out=-10,in=90,distance={0.25*\mylen pt}] (v'_12) to ($(v'_12) + ({0.4*\mylen pt},{-0.3*\mylen pt})$);

%
\path (v'_n2) ++ ({-0.25*\mylen pt},{-0.4*\mylen pt}) node[draw,thick,circle,minimum width=2.5pt,fill,inner sep=0pt,outer sep=2pt](v'_n2_u'_2){};
  \draw[->,thick] (v'_n2) to (v'_n2_u'_2);
  \draw[->>,distance={0.25*\mylen pt},out=0,in=80] (v'_n2_u'_2) to (u'_2);     
     
\draw[->>,out=20,in=270,distance={0.25*\mylen pt}] (v'_n2) to ($(v'_n2) + ({0.35*\mylen pt},{0.4*\mylen pt})$);

%
\path (u'_2) ++ ({0*\mylen pt},{-0.275*\mylen pt}) node[red]{$\cvertacci{2}$};
\draw[->>,out=10,in=-40,distance={0.5*\mylen pt}] (u'_2) to (v'_n2);


\draw[-,thick,magenta,densely dashed,out=0,in=270,distance={1.5*\mylen pt}] 
  (u_1) to node[pos=0.4](mid){} (v_2);

\draw[-,thick,magenta,densely dashed,out=230,in=195,distance={1.75*\mylen pt}] 
  (u'_1) to node[pos=0.7](mid'){} 
            node[pos=0.7,below,sloped,black]{\small use ind.\ hyp.} (u'_2);

\draw[-implies,double equal sign distance,thick] (mid) to (mid');

%
%

\end{tikzpicture}
               \end{aligned}  
               $
             \end{center} 
             
          \end{enumerate}
      \end{enumerate}    
  \end{enumerate}
\end{proof}

\begin{repeatedprop}[= Proposition~\ref{prop:LEEshape:preserve:conds}]
  Let $\achart$ be a \LLEEchart.
  If a pair $\pair{\bverti{1}}{\bverti{2}}$ of vertices 
  satisfies \ref{cond:transf:I}$\!$, \ref{cond:transf:II}$\!$, or \ref{cond:transf:III}$\!$\vspace*{-0.5mm}
  with respect to a \LLEEwitness\ of $\achart$,\vspace*{-0.5mm} 
  then\/ $\connthroughin{\achart}{\bverti{1}}{\bverti{2}}$ 
  is a \mbox{\LLEEchart}.
\end{repeatedprop}  

\noindent
As background for the proof of this proposition, 
we first give examples why conditions \ref{cond:transf:I}, \ref{cond:transf:II}, and \ref{cond:transf:III} 
cannot be readily relaxed or changed. 
These examples showcase that, far from being artificial,  
the conditions \ref{cond:transf:I}, \ref{cond:transf:II}, and \ref{cond:transf:III} mark sharp borders between whether, 
on a given \LLEEwitness, a connect-through operation is possible while preserving LLEE, or not.
Thus these examples demonstrate that a further simplification 
of the case analysis provided by Proposition~7.3 is not readily possible,
with an eye towards LLEE-structure preserving connect-through operations. 
Therefore a substantial further improvement of our stepwise collapse procedure appears unlikely.

For convenience, the pictures in these examples neglect action labels on transitions. 

\begin{example}[= Example~\ref{ex:trans-I}]
  To show that in \ref{cond:transf:I} it is crucial that $\bverti{1}$ does not loop back,
  we refer back to the \LLEEwitness\ $\acharthat$ in Ex.~\ref{ex:trans-I}.
  There $\protect\lognot{(\bverti{2} \protect\redrtc \bverti{1})}$,
  but \ref{cond:transf:I} is not satisfied by the pair $w_1,w_2$ because $\bverti{1}\loopsbackto\bverthati{1}$.
  Since in $\acharthat$ the levels of loop-entry transitions that descend to $\bverti{1}$ are higher than the loop levels that descend from $\bverti{2}$, 
  the preprocessing step of transformation~I is void.
  We observed that the \connectthroughchart{\bverti{1}}{\bverti{2}} $\protect\connthroughin{\achart}{\bverti{1}}{\bverti{2}}$ on the left in Ex.~\ref{ex:trans-I} has no LLEE-witness.
  The bisimilar pair $w_1,w_2$ in $\acharthat$ progresses to the bisimilar pair $\bverthati{1},\bverthati{2}$, for which \ref{cond:transf:I} holds.
  Since $\connthroughin{\acharthat}{\bverthati{1}}{\bverthati{2}}$ on the right of Ex.~\ref{ex:trans-I}
  is obtained by applying transformation~I to this pair,
  it is guaranteed to be a LEE-witness; this will be argued in the proof of Prop.~\ref{prop:LEEshape:preserve:conds}. 
   \begin{center}
  %
\begin{center}
\begin{tikzpicture}[scale=0.875]
\matrix[anchor=center,row sep=1cm,column sep=1.15cm,every node/.style={draw,very thick,circle,minimum width=2.5pt,fill,inner sep=0pt,outer sep=2pt}] at (-5.85,0) {
  \node(0){};   & \node[draw=none,fill=none](root-anchor){};                    
                                       & \node(1){};
  \\
  \node(00){};  & \node[draw=none,fill=none](sink-anchor){};  
                                       & \node(10){};
  \\
  \node(000)[draw=none,fill=none]{}; 
                & \node[draw=none,fill=none](label-anchor){};  
                                       & \node(100){};
  \\
};
\calcLength(0,00){mylen};
\path (root-anchor) ++ (0cm,0.75cm) node[style={draw,very thick,circle,minimum width=2.5pt,fill,inner sep=0pt,outer sep=2pt}](root){};
\path (sink-anchor) ++ (0cm,0.25cm) node[style={draw,very thick,circle,minimum width=2.5pt,fill,inner sep=0pt,outer sep=2pt,red}](sink){};
\draw[thick,red] (sink) circle (0.12cm); 
%

%
%
\draw[<-,very thick,>=latex,chocolate](root) -- ++ (90:0.575cm);
\path (root) ++ (-0.15cm,1.25cm) node{\scalebox{1.45}{{$\connthroughin{\achart}{\bverti{1}}{\bverti{2}}$}}};
\draw[->] (root) to node[below,xshift=0.1cm]{} (0); 
\draw[->] (root) to node[below,xshift=-0.1cm]{} (1);
\path (0) ++ (-0.09cm,0.05cm) node[above]{$\bverthati{1}$};
\draw[->,red] (0) to node[left,pos=0.45,xshift=0.06cm]{
} (00);
\draw[->,shorten >= 2pt] (0) to (sink);
\draw[->,red] (00) to node[left,darkmagenta]{} (100);
\draw[->,distance=0.75cm,out=180,in=185] (00) to (0);
\path (1) ++ (0.15cm,0.05cm) node[above]{$\bverthati{2}$};
\draw[-{>[length=1mm,width=1.8mm]},thick,dotted] (1) to node[right,pos=0.45,xshift=-0.06cm]{
} (10);
\draw[->,shorten >= 2pt,red] (1) to (sink);
\path (sink) ++ (0cm,0.45cm) node{$\tick$};
\draw[-{>[length=1mm,width=1.8mm]},thick,dotted] (10) to (100);
\draw[-{>[length=1mm,width=1.8mm]},thick,dotted,distance=0.75cm,out=0,in=-5] (10) to (1);
\draw[->,distance=1.5cm,out=0,in=-5,red] (100) to (1);
\path (100) ++ (0cm,-0.3cm) node{$\bverti{2}$};

\draw[magenta,thick,densely dashed] (0) to (1);
\matrix[anchor=center,row sep=1cm,column sep=1.15cm,every node/.style={draw,very thick,circle,minimum width=2.5pt,fill,inner sep=0pt,outer sep=2pt}] at (0,0) {
  \node(C_0){};   & \node[draw=none,fill=none](C_root-anchor){};
                                       & \node(C_1){};
  \\
  \node(C_00){};  & \node[draw=none,fill=none](C_sink-anchor){};  
                                       & \node(C_10){};
  \\
  \node(C_000){}; & \node[draw=none,fill=none](C_label-anchor){};  
                                       & \node(C_100){};
  \\
};
\path (C_root-anchor) ++ (0cm,0.75cm) node[style={draw,very thick,circle,minimum width=2.5pt,fill,inner sep=0pt,outer sep=2pt}](C_root){};
\path (C_root) ++ (0cm,1.25cm) node{\scalebox{1.45}{{$\acharthat$}}};
\path (C_sink-anchor) ++ (0cm,0.25cm) node[style={draw,very thick,circle,minimum width=2.5pt,fill,inner sep=0pt,outer sep=2pt}](C_sink){};
\draw[thick] (C_sink) circle (0.12cm); 
\path (C_sink) ++ (0cm,0.45cm) node{$\tick$};
%
%
%
%
\draw[<-,very thick,>=latex,chocolate](C_root) -- ++ (90:0.575cm);
\draw[->] (C_root) to node[below,xshift=0.1cm]{} (C_0);
\draw[->] (C_root) to node[below,xshift=-0.1cm]{} (C_1);
\draw[->,thick] (C_0) to node[left,pos=0.45,xshift=0.06cm]{$\loopsteplabof{2}$} (C_00);
\draw[->,shorten >= 2pt] (C_0) to (C_sink);
\path (C_0) ++ (-0.09cm,0.05cm) node[above]{$\bverthati{1}$};
\draw[->] (C_00) to node[left]{} (C_000);
\draw[->,distance=0.75cm,out=180,in=185] (C_00) to (C_0);
\draw[->,distance=1.5cm,out=180,in=185] (C_000) to (C_0);
\path (C_000) ++ (0cm,-0.3cm) node{$\bverti{1}$};
\draw[->,thick] (C_1) to node[right,pos=0.45,xshift=-0.06cm]{$\loopsteplabof{1}$} (C_10);
\draw[->,shorten >= 2pt] (C_1) to (C_sink);
\path (C_1) ++ (0.15cm,0.05cm) node[above]{$\bverthati{2}$};
\draw[->] (C_10) to (C_100);
\draw[->,distance=0.75cm,out=0,in=-5] (C_10) to (C_1);
\draw[->,distance=1.5cm,out=0,in=-5] (C_100) to (C_1);
\path (C_100) ++ (0cm,-0.3cm) node{$\bverti{2}$};
\draw[magenta,thick,densely dashed] (C_0) to (C_1);
\draw[magenta,thick,densely dashed,bend right,distance=0.4cm,looseness=1] (C_000) to (C_100);

\matrix[anchor=center,row sep=1cm,column sep=1.15cm,every node/.style={draw,very thick,circle,minimum width=2.5pt,fill,inner sep=0pt,outer sep=2pt}] at (3.45,0) {
  \node[draw=none,fill=none](Cw1hatw2hat_0){};   & \node[draw=none,fill=none](Cw1hatw2hat_root-anchor){};                    
                                       & \node(Cw1hatw2hat_1){};
  \\
  \node[draw=none,fill=none](Cw1hatw2hat_00){};  & \node[draw=none,fill=none](Cw1hatw2hat_sink-anchor){};  
                                       & \node(Cw1hatw2hat_10){};
  \\
  \node[draw=none,fill=none](Cw1hatw2hat_000)[draw=none,fill=none]{}; 
                & \node[draw=none,fill=none](Cw1hatw2hat_label-anchor){};  
                                       & \node(Cw1hatw2hat_100){};
  \\
};
\path (Cw1hatw2hat_root-anchor) ++ (0cm,0.75cm) node[style={draw,very thick,circle,minimum width=2.5pt,fill,inner sep=0pt,outer sep=2pt}](Cw1hatw2hat_root){};
\path (Cw1hatw2hat_sink-anchor) ++ (0cm,0.25cm) node[style={draw,very thick,circle,minimum width=2.5pt,fill,inner sep=0pt,outer sep=2pt}](Cw1hatw2hat_sink){};
\draw[thick] (Cw1hatw2hat_sink) circle (0.12cm); 
\path (Cw1hatw2hat_sink) ++ (0cm,0.45cm) node{$\tick$};
%
%
%
\draw[<-,very thick,>=latex,chocolate](Cw1hatw2hat_root) -- ++ (90:0.575cm);
\draw[->] (Cw1hatw2hat_root) to node[below,xshift=-0.1cm]{} (Cw1hatw2hat_1);
\path (Cw1hatw2hat_root) ++ (0.65cm,1.15cm) node{\scalebox{1.45}{{$\connthroughin{\acharthat}{\bverthati{1}}{\bverthati{2}}$}}};
\draw[->,thick] (Cw1hatw2hat_1) to node[right,pos=0.45,xshift=-0.06cm]{$\loopsteplabof{1}$} (Cw1hatw2hat_10);
\draw[->,shorten >= 2pt] (Cw1hatw2hat_1) to (Cw1hatw2hat_sink);
\path (Cw1hatw2hat_1) ++ (0.15cm,0.05cm) node[above]{$\bverthati{2}$};
\draw[->] (Cw1hatw2hat_10) to (Cw1hatw2hat_100);
\draw[->,distance=0.75cm,out=0,in=-5] (Cw1hatw2hat_10) to (Cw1hatw2hat_1);
\draw[->,distance=1.5cm,out=0,in=-5] (Cw1hatw2hat_100) to (Cw1hatw2hat_1);
\path (Cw1hatw2hat_100) ++ (0cm,-0.3cm) node{$\bverti{2}$};

\draw[-implies,thick,double equal sign distance, bend right,distance={1.75*\mylen pt},
               shorten <= 1cm,shorten >= 1cm
               ] (C_root) to node[above,pos=0.5,yshift={0.15*\mylen pt}] {\scalebox{1.25}{$\connthroughin{\achart}{\bverti{1}}{\bverti{2}}\! \mapsfrom \achart$}}  
                                                                         (root) ;

\draw[-implies,thick,double equal sign distance, bend left,distance={1.25*\mylen pt},
               shorten <= 1cm,shorten >= 1cm
               ] (C_root) to node[above,pos=0.6,yshift={0.15*\mylen pt}]{\scalebox{1.25}{$(\text{\nf I})^{(\bverthati{1})}_{\bverthati{2}}$}} (Cw1hatw2hat_root) ;

\end{tikzpicture}
\end{center}
  \end{center}
\end{example}

To avoid the creation of body step cycles in transformation II, it would seem expedient to connect transitions to $\bverti{2}$ through to $\bverti{1}$, 
since \ref{cond:transf:II}, $\bverti{2}\loopsbacktotc\bverti{1}$, rules out the existence of a path $\bverti{1}\,\sredtci{\bodylab}\,\bverti{2}$ in $\acharthat$.
(Instead, transitions to $\bverti{1}$ are connected through to $\bverti{2}$, and resulting body step cycles are eliminated by turning the body transitions at $\bverthati{2}$ into loop-entry transitions.)
However, connecting transitions to $\bverti{2}$ through to $\bverti{1}$ may produce a chart for which no LLEE-witness exists.

\begin{example}[= Example~\ref{ex:trans-II}]
  For the \LLEEchart~$\achart$ with \LLEEwitness~$\acharthat$ below in the middle, the \connectthroughchart{\bverti{2}}{\bverti{1}} $\connthroughin{\achart}{\bverti{2}}{\bverti{1}}$ on the left does not have a \LLEEwitness:
  it has no loop subchart, because from each of its three vertices an infinite path starts that does not return to this vertex. From $\bverthati{2}$ this path, drawn in red, cycles between $\cvert$ and $\bverti{1}$.
  Transformation~II applied to the pair $\bverti{1},\bverti{2}$ (instead of $\bverti{2},\bverti{1}$) in $\acharthat$ yields
  the \entrybodylabeling~$\connthroughin{\acharthat}{\bverti{1}}{\bverti{2}}$ 
  for the \connectthroughchart{\bverti{1}}{\bverti{2}}
  with additionally $\bverthati{2}\,\sredi{\bodylab}\,\bverti{2}$ turned into $\bverthati{2}\,\sredi{\loopnsteplab{2}}\,\bverti{2}$. 
  Since the pair $\bverti{1},\bverti{2}$ satisfies \ref{cond:transf:II}, the proof of Prop.~\ref{prop:LEEshape:preserve:conds} 
  guarantees that this \entrybodylabeling, drawn on the right, is a \LLEEwitness.
  \begin{center}
  %
\begin{center}
\begin{tikzpicture}
\matrix[anchor=center,row sep=1cm,column sep=1cm,every node/.style={draw,very thick,circle,minimum width=2.5pt,fill,inner sep=0pt,outer sep=2pt}] at (0,0) {
  \node(w1_Cw2w1){};                          & &[0.75cm] & & \node(w1_C){};     & &[0.75cm] & & \node(label_Cw1w2)[draw=none,fill=none]{};                                             
  \\
  \node(w2hat_Cw2w1){};                       & &         & & \node(w2hat_C){};  &           & & & \node(w2hat_Cw1w2){};     
  \\
  \node(u_Cw2w1){};                           & &         & & \node(u_C){};      &           & & & \node(u_Cw1w2){};    
  \\
  \node(label_Cw2w1)[draw=none,fill=none]{};  & &         & & \node(w2_C){};     &           & & & \node(w2_Cw1w2){};
  \\
};

\draw[-implies,thick,double equal sign distance, 
               shorten <= 1.15cm,shorten >= 1.55cm
               ] (w2hat_C) to node[above,pos=0.4425,yshift={0*\mylen pt}]{\scalebox{1.25}{$\connthroughin{\achart}{\bverti{2}}{\bverti{1}}\! \mapsfrom \achart$}} (w2hat_Cw2w1) ;

\draw[-implies,thick,double equal sign distance, 
               shorten <= 1.9cm,shorten >= 1.4cm
               ] (w2hat_C) to node[above,pos=0.55,yshift={0*\mylen pt}]{\scalebox{1.25}{$(\text{\nf II})^{(\bverti{1})}_{\bverti{2}}$}} (w2hat_Cw1w2) ;

\draw[<-,very thick,>=latex,chocolate](w1_Cw2w1) -- ++ (90:0.5cm);
\path (w1_Cw2w1) ++ (0.35cm,0.175cm) node{$\bverti{1}$};
\draw[->](w1_Cw2w1) to node[right,xshift=-0.025cm,yshift=0.05cm]{
} (w2hat_Cw2w1);
\draw[->,distance=1.5cm,out=-5,in=5,color=red] (w1_Cw2w1) to node[above,xshift=-0cm,yshift=0.15cm,
color=red,
pos=0.275]{
} (u_Cw2w1);
\path (w2hat_Cw2w1) ++ (0.325cm,0cm) node{$\bverthati{2}$};
\draw[->,color=red
](w2hat_Cw2w1) to node[right,xshift=-0.025cm,yshift=0.05cm]{
} (u_Cw2w1);
\draw[->,distance=0.75cm,out=165,in=185,shorten <= 4.5pt] (w2hat_Cw2w1) to (w1_Cw2w1);
\path (u_Cw2w1) ++ (0cm,-0.275cm) node{$\cvert$};
\draw[->,distance=0.75cm,out=175,in=185] (u_Cw2w1) to (w2hat_Cw2w1);
\draw[->,distance=1.5cm,out=175,in=185
,color=red
] (u_Cw2w1) to (w1_Cw2w1);
\path(label_Cw2w1) ++ (0cm,0cm) node{\scalebox{1.45}{$\connthroughin{\achart}{\bverti{2}}{\bverti{1}}$}};

\draw[<-,very thick,>=latex,chocolate](w1_C) -- ++ (90:0.5cm);
\path (w1_C) ++ (0.35cm,0.175cm) node{$\bverti{1}$};
\draw[->,thick](w1_C) to node[right,xshift=-0.025cm,yshift=0.05cm]{$\loopnsteplab{2}$} (w2hat_C);
\draw[->,thick,distance=1.5cm,out=-5,in=5] (w1_C) to node[above,xshift=-0cm,yshift=0.15cm,pos=0.275]{$\loopnsteplab{2}$} (u_C);
\path (w2hat_C) ++ (0.325cm,0cm) node{$\bverthati{2}$};
\draw[->,thick](w2hat_C) to node[right,xshift=-0.025cm,yshift=0.05cm]{$\loopnsteplab{1}$} (u_C);
\draw[->,distance=0.75cm,out=165,in=185,shorten <= 4.5pt] (w2hat_C) to (w1_C);
\path (u_C) ++ (0.175cm,-0.25cm) node{$\cvert$};
\draw[->](u_C) to (w2_C);
\draw[->,distance=0.75cm,out=175,in=185] (u_C) to (w2hat_C);
\path (w2_C) ++ (0.35cm,-0.05cm) node{$\bverti{2}$};
\draw[->,distance=1.5cm,out=175,in=185] (w2_C) to (w2hat_C);
\path (w2_C) ++ (1.05cm,0.65cm) node(label_C){\scalebox{1.45}{$\acharthat$}};

\draw[magenta,thick,densely dashed,bend right,distance=1cm,looseness=1] (w1_C) to (w2_C);

%
%

\path (w2hat_Cw1w2) ++ (0cm,0.3cm) node{$\bverthati{2}$};
\draw[->,thick](w2hat_Cw1w2) to node[right,xshift=-0.025cm,yshift=0.05cm]{$\loopnsteplab{1}$} (u_Cw1w2);
\draw[->,thick,distance=1.5cm,out=-5,in=5](w2hat_Cw1w2) to node[above,xshift=0cm,yshift=0.15cm,pos=0.275]{$\loopnsteplab{2}$} (w2_Cw1w2);
\path (u_Cw1w2) ++ (0.25cm,-0cm) node{$\cvert$};
\draw[->] (u_Cw1w2) to (w2_Cw1w2);
\draw[->,distance=0.75cm,out=175,in=185] (u_Cw1w2) to (w2hat_Cw1w2);
\draw[<-,very thick,>=latex,chocolate](w2_Cw1w2) -- ++ (270:0.5cm);
\path (w2_Cw1w2) ++ (0.325cm,-0.3cm) node{$\bverti{2}$};
\draw[->,distance=1.5cm,out=175,in=185] (w2_Cw1w2) to (w2hat_Cw1w2);
\path(label_Cw1w2) ++ (0cm,0cm) node{\scalebox{1.45}{$\connthroughin{\acharthat}{\bverti{1}}{\bverti{2}}$}};

\end{tikzpicture}\label{fig:repeatedex:prop:transformation:II}
\end{center}
  %
  %
  \end{center}  
\end{example}

The following example shows that for transformation III it is essential to select a bisimilar pair $w_1, w_2$ where $w_1$ \emph{directly} loops back to $v$.

\begin{example}[= Example~\ref{ex:trans-III}]
  In the \LLEEwitness~$\acharthat$ below in the middle, $w_1,w_2\loopsbacktotc v$, and there is no body step path from $w_2$ to $w_1$, 
  but \ref{cond:transf:III} does not hold for the pair $w_1,w_2$ because $\lognot(w_1\dloopsbackto v)$.
  \vspace*{-.5mm}All loop-entry transitions from $v$ have the same loop label, so the preprocessing step of transformation III is void.
  The \connectthroughchart{\bverti{1}}{\bverti{2}} $\connthroughin{\achart}{\bverti{1}}{\bverti{2}}$ on the left does not have a \LLEEwitness.
  Namely, the transition from $\bverthati{2}$ can be declared a loop-entry transition, 
  and after its removal also two transitions from $v$ can be declared loop-entry transitions,
  leading to the removal of the five transitions that are depicted as dotted arrows. 
  The remaining chart (of solid arrows) however has no further loop subchart,
  because from each of its vertices an infinite path starts that does not return to this vertex.
  The bisimilar pair $w_1,w_2$ progresses to the bisimilar pair $\bverthati{1},\bverthati{2}$ in $\acharthat$, for which \ref{cond:transf:III} holds because $\bverthati{1} \dloopsbackto \avert
  \convloopsbackto \bverthati{2}$ and $\lognot(\bverthati{2}\,\sredrtci{\bodylab}\,\bverthati{1})$.
  Transformation III applied to this pair yields
  the \entrybodylabeling\ $\protect\connthroughin{\acharthat}{\bverthati{1}}{\bverthati{2}}$ on the right. In the proof of
  Prop.~\ref{prop:LEEshape:preserve:conds} it is argued that this is guaranteed to be a \LLEEwitness.
  The remaining two bisimilar pairs can be eliminated by one or two further applications of transformation III.
  
  \begin{center}
  %
%
\begin{tikzpicture}
\matrix[anchor=center,row sep=1cm,column sep=0.8cm,every node/.style={draw,very thick,circle,minimum width=2.5pt,fill,inner sep=0pt,outer sep=2pt}] at (0,0) {
                        &  \node(v_Cw1w2){}; &                       &[1.25cm] &                   & \node(v_C){}; &                   &[0.2cm] &                           & \node(v_Chatw1hatw2){}; &
  \\
  \node(hatw1_Cw1w2){}; &                    & \node(hatw2_Cw1w2){}; &        & \node(hatw1_C){}; &               & \node(hatw2_C){}; &       &                           &                         & \node(hatw2_Chatw1hatw2){};
  \\
  \node(u1_Cw1w2){};    &                    & \node(u2_Cw1w2){};    &        & \node(u1_C){};    &               & \node(u2_C){};    &       & \node(u1_Chatw1hatw2){};  &                         & \node(u2_Chatw1hatw2){};
  \\
  \node[draw=none,fill=none]{};
                        &                    & \node(w2_Cw1w2){};    &        & \node(w1_C){};    &               & \node(w2_C){};    &       & \node(w1_Chatw1hatw2){};  &                         & \node(w2_Chatw1hatw2){};
  \\
};
\calcLength(hatw1_C,u1_C){mylen}; 
%
\draw[<-,very thick,>=latex,chocolate](v_Cw1w2) -- ++ (90:{0.425*\mylen pt}); 
\path (v_Cw1w2) ++ ({0.225*\mylen pt},{0.225*\mylen pt}) node{$\avert$};
\path (v_Cw1w2) ++ ({0*\mylen pt},{0.9*\mylen pt}) node{\scalebox{1.45}{$\connthroughin{\achart}{\bverti{2}}{\bverti{1}}$}};
\path (hatw1_Cw1w2) ++ ({-0.4*\mylen pt},{0.175*\mylen pt}) node{$\bverthati{1}$};
\draw[->] (v_Cw1w2) to node[above,pos=0.7,xshift={-0.05*\mylen pt},yshift={0.05*\mylen pt}
]{
} (hatw1_Cw1w2);
\draw[->,shorten >= 5pt
] (v_Cw1w2) to node[right,pos=0.58,xshift={-0.025*\mylen pt},yshift={0.00*\mylen pt}]{
} (u1_Cw1w2);
\draw[-{>[length=1mm,width=1.8mm]},thick,dotted] (v_Cw1w2) to node[above,pos=0.7,xshift={0.05*\mylen pt},yshift={0.05*\mylen pt}]{
} (hatw2_Cw1w2);
\draw[-{>[length=1mm,width=1.8mm]},thick,dotted,shorten >= 5pt] (v_Cw1w2) to node[left,pos=0.58,xshift={0.025*\mylen pt},yshift={0.00*\mylen pt}]{
} (u2_Cw1w2);
\draw[->
] (hatw1_Cw1w2) to node[left,pos=0.45,xshift={0.05*\mylen pt}]{
} (u1_Cw1w2);
\draw[->,distance={0.9*\mylen pt},out=135,in=180] (hatw1_Cw1w2) to (v_Cw1w2);
\draw[->] (u1_Cw1w2) to (w2_Cw1w2);
\draw[->,distance={0.75*\mylen pt},out=180,in=180
] (u1_Cw1w2) to (hatw1_Cw1w2);
\path (hatw2_Cw1w2) ++ ({0.45*\mylen pt},{0.175*\mylen pt}) node{$\bverthati{2}$};
\draw[-{>[length=1mm,width=1.8mm]},thick,dotted] (hatw2_Cw1w2) to node[right,pos=0.45,xshift={-0.05*\mylen pt}]{
} (u2_Cw1w2);
\draw[->,distance={0.9*\mylen pt},out=45,in=0
] (hatw2_Cw1w2) to (v_Cw1w2);
\draw[-{>[length=1mm,width=1.8mm]},thick,dotted] (u2_Cw1w2) to (w2_Cw1w2);
\draw[-{>[length=1mm,width=1.8mm]},thick,dotted,distance={0.75*\mylen pt},out=0,in=0] (u2_Cw1w2) to (hatw2_Cw1w2);
\path (w2_Cw1w2) ++ ({0.05*\mylen pt},{-0.25*\mylen pt}) node{$\bverti{2}$};
\draw[->,distance={1.25*\mylen pt},out=0,in=0
] (w2_Cw1w2) to (hatw2_Cw1w2);

\draw[magenta,thick,densely dashed,bend left,distance={0.35*\mylen pt},looseness=1] (hatw1_Cw1w2) to (hatw2_Cw1w2);
\draw[magenta,thick,densely dashed] (u1_Cw1w2) to (u2_Cw1w2);

\draw[<-,very thick,>=latex,chocolate](v_C) -- ++ (90:{0.425*\mylen pt});   
\path (v_C) ++ ({0.225*\mylen pt},{0.225*\mylen pt}) node{$\avert$};
\path (v_C) ++ ({0*\mylen pt},{0.9*\mylen pt}) node{\scalebox{1.45}{${\acharthat}$}};
\path (hatw1_C) ++ ({-0.4*\mylen pt},{0.175*\mylen pt}) node{$\bverthati{1}$};
\draw[->,thick] (v_C) to node[above,pos=0.7,xshift={-0.05*\mylen pt},yshift={0.05*\mylen pt}]{$\loopnsteplab{2}$} (hatw1_C);
\draw[->,thick,shorten >= 5pt] (v_C) to node[right,pos=0.58,xshift={-0.025*\mylen pt},yshift={0.00*\mylen pt}]{$\loopnsteplab{2}$} (u1_C);
\draw[->,thick] (v_C) to node[above,pos=0.7,xshift={0.05*\mylen pt},yshift={0.05*\mylen pt}]{$\loopnsteplab{2}$}  (hatw2_C);
\draw[->,thick,shorten >= 5pt] (v_C) to node[left,pos=0.58,xshift={0.025*\mylen pt},yshift={0.00*\mylen pt}]{$\loopnsteplab{2}$} (u2_C);
\draw[->,thick] (hatw1_C) to node[left,pos=0.45,xshift={0.05*\mylen pt}]{$\loopnsteplab{1}$} (u1_C);
\draw[->,distance={0.9*\mylen pt},out=135,in=180] (hatw1_C) to (v_C);
\draw[->] (u1_C) to (w1_C);
\draw[->,distance={0.75*\mylen pt},out=180,in=180] (u1_C) to (hatw1_C);
\path (w1_C) ++ ({0.05*\mylen pt},{-0.25*\mylen pt}) node{$\bverti{1}$};
\draw[->,distance={1.25*\mylen pt},out=180,in=180] (w1_C) to (hatw1_C);

\path (hatw2_C) ++ ({0.45*\mylen pt},{0.175*\mylen pt}) node{$\bverthati{2}$};
\draw[->,thick] (hatw2_C) to node[right,pos=0.45,xshift={-0.05*\mylen pt}]{$\loopnsteplab{1}$} (u2_C);
\draw[->,distance={0.9*\mylen pt},out=45,in=0] (hatw2_C) to (v_C);
\draw[->] (u2_C) to (w2_C);
\draw[->,distance={0.75*\mylen pt},out=0,in=0] (u2_C) to (hatw2_C);
\path (w2_C) ++ ({0.05*\mylen pt},{-0.25*\mylen pt}) node{$\bverti{2}$};
\draw[->,distance={1.25*\mylen pt},out=0,in=0] (w2_C) to (hatw2_C);

\draw[magenta,thick,densely dashed,bend left,distance={0.35*\mylen pt},looseness=1] (hatw1_C) to (hatw2_C);
\draw[magenta,thick,densely dashed] (u1_C) to (u2_C);
\draw[magenta,thick,densely dashed,bend right,distance={0.35*\mylen pt},looseness=1] (w1_C) to (w2_C);

\draw[<-,very thick,>=latex,chocolate](v_Chatw1hatw2) -- ++ (90:{0.425*\mylen pt}); 
\path (v_Chatw1hatw2) ++ ({0.225*\mylen pt},{0.225*\mylen pt}) node{$\avert$};
\path (v_Chatw1hatw2) ++ ({0.35*\mylen pt},{0.8*\mylen pt}) node{\scalebox{1.45}{$\connthroughin{\acharthat}{\bverthati{2}}{\bverthati{1}}$}};
\draw[->,thick] (v_Chatw1hatw2) to node[above,pos=0.7,xshift={0.05*\mylen pt},yshift={0.05*\mylen pt}]{$\loopnsteplab{2}$} (hatw2_Chatw1hatw2);
\draw[->,thick,shorten >= 5pt] (v_Chatw1hatw2) to node[left,pos=0.58,xshift={0.025*\mylen pt},yshift={0.00*\mylen pt}]{$\loopnsteplab{2}$} (u2_Chatw1hatw2);
\draw[->,thick] (v_Chatw1hatw2) to node[right,pos=0.58,xshift={-0.025*\mylen pt},yshift={0.00*\mylen pt}]{$\loopnsteplab{2}$} (u1_Chatw1hatw2);
\draw[->] (u1_Chatw1hatw2) to (w1_Chatw1hatw2);
\draw[->] (u1_Chatw1hatw2) to (w2_Chatw1hatw2);
\path (w1_Chatw1hatw2) ++ ({0.05*\mylen pt},{-0.25*\mylen pt}) node{$\bverti{1}$};
\draw[->] (w1_Chatw1hatw2) to (w2_Chatw1hatw2);
\path (hatw2_Chatw1hatw2) ++ ({0.45*\mylen pt},{0.175*\mylen pt}) node{$\bverthati{2}$};
\draw[->,thick] (hatw2_Chatw1hatw2) to node[right,pos=0.45,xshift={-0.05*\mylen pt}]{$\loopnsteplab{1}$} (u2_Chatw1hatw2);
\draw[->,distance={0.9*\mylen pt},out=45,in=0] (hatw2_Chatw1hatw2) to (v_Chatw1hatw2);
\draw[->] (u2_Chatw1hatw2) to (w2_Chatw1hatw2);
\draw[->,distance={0.75*\mylen pt},out=0,in=0] (u2_Chatw1hatw2) to (hatw2_Chatw1hatw2);
\path (w2_Chatw1hatw2) ++ ({0.05*\mylen pt},{-0.25*\mylen pt}) node{$\bverti{2}$};
\draw[->,distance={1.25*\mylen pt},out=0,in=0] (w2_Chatw1hatw2) to (hatw2_Chatw1hatw2);

\draw[magenta,thick,densely dashed] (u1_Chatw1hatw2) to (u2_Chatw1hatw2);
\draw[magenta,thick,densely dashed,bend right,distance={0.35*\mylen pt},looseness=1] (w1_Chatw1hatw2) to (w2_Chatw1hatw2);

\draw[-implies,thick,double equal sign distance, bend right,distance={1.75*\mylen pt},
               shorten <= {1*\mylen pt},shorten >= {1*\mylen pt},
               ] (v_C) to node[above,pos=0.505,yshift={0.15*\mylen pt}] {\scalebox{1.25}{$\connthroughin{\achart}{\bverti{1}}{\bverti{2}}\! \mapsfrom \achart$}} 
                                                                         (v_Cw1w2) ;

\draw[-implies,thick,double equal sign distance, bend left,distance={1.3*\mylen pt},
               shorten <= {0.9*\mylen pt},shorten >= {0.8*\mylen pt},
               ] (v_C) to node[above,pos=0.58,yshift={0.15*\mylen pt}]{\scalebox{1.25}{$(\text{\nf III})^{(\bverthati{1})}_{\bverthati{2}}$}} (v_Chatw1hatw2) ;
\end{tikzpicture}
  %
  %
  %

  \end{center}
\end{example}

The following example shows \ref{cond:transf:III} cannot be weakened
by dropping $\lognot{(\bverti{2} \redrtci{\bodylab} \bverti{1})}$.

\begin{example}
  For the \LLEEwitness~$\acharthat$ \vspace*{-.5mm}below in the middle,
  $\bverti{1} \dloopsbackto \avert
  \convloopsbacktotc \bverti{2}$,
  but there is a body step path from $\bverti{2}$ to $\bverti{1}$. 
  The \connectthroughchart{\bverti{1}}{\bverti{2}} $\connthroughin{\achart}{\bverti{1}}{\bverti{2}}$ on the left does not have a \LLEEwitness,
  because from each of its vertices an infinite path starts that does not return to it.
  The bisimilar pair $\bverti{1},\bverti{2}$ in $\acharthat$
  progresses to the bisimilar pair $\avert,\bverthati{2}$,
  to which \vspace*{-.25mm}transformation~II is applicable because \ref{cond:transf:II} holds: $\bverthati{2}\loopsbackto\avert$.
  In the resulting \LLEEwitness~$\connthroughin{\acharthat}{\avert}{\bverthati{2}}$, second to the right,
  \ref{cond:transf:III} holds for the pair $\bverti{1},\bverti{2}$ because $\bverti{1} \dloopsbackto \bverthati{2} \convloopsbackto \bverti{2}$ and $\lognot(\bverti{2}\,\sredrtci{\bodylab}\,\bverti{1})$.
  Applying transformation III to this pair results in the \LLEEwitness\ on the right.
   \begin{center}
%
%
\begin{tikzpicture}
\matrix[anchor=center,row sep=0.5cm,column sep=1cm,every node/.style={draw,very thick,circle,minimum width=2.5pt,fill,inner sep=0pt,outer sep=2pt}] {
  \node(v_Cw1w2){}; &                       &[0.25cm]                  & \node(v_C){};  & \node[draw=none,fill=none](dummy_C){};  &[0.2cm]  & \node[draw=none,fill=none](v_Cvw2hat){};  
  \\
  \\
                    & \node(w2hat_Cw1w2){}; &                          &                & \node(w2hat_C){};                       &                       &                     & \node(w2hat_Cvw2hat){};  &[0.3cm]                     & \node(w2hat_C'w1w2){};                            
  \\
  \node(u_Cw1w2){}; &                       &                          & \node(u_C){};  &                                         &                       & \node(u_Cvw2hat){}; &                          &         \node(u_C'w1w2){}; &
  \\
                    & \node(w2_Cw1w2){};    &           \node(w1_C){}; &                & \node(w2_C){};                          &  \node(w1_Cvw2hat){}; &                     & \node(w2_Cvw2hat){};     &                            & \node(w2_C'w1w2){};
  \\    
};
\calcLength(dummy_C,w2hat_C){mylen};   
%
  
\draw[<-,very thick,>=latex,chocolate](v_Cw1w2) -- ++ (90:{0.425*\mylen pt});      
\path (v_Cw1w2) ++ ({-0.1*\mylen pt},{1*\mylen pt}) node{\scalebox{1.45}{$\connthroughin{\achart}{\bverti{1}}{\bverti{2}}$}};
\path (v_Cw1w2) ++ ({0.3*\mylen pt},{0.225*\mylen pt}) node{$\avert$};
\draw[->] (v_Cw1w2) to (u_Cw1w2);
\draw[->] (v_Cw1w2) to (w2hat_Cw1w2);  
\draw[->,shorten >= 5pt] (v_Cw1w2) to (w2_Cw1w2); 
\path (w2hat_Cw1w2) ++ ({0.45*\mylen pt},{0.075*\mylen pt}) node{$\bverthati{2}$};
\draw[->] (w2hat_Cw1w2) to (u_Cw1w2);
\draw[->,dashed] (w2hat_Cw1w2) to (w2_Cw1w2);  
\draw[->,out=27,in=0,distance={1*\mylen pt}] (w2hat_Cw1w2) to (v_Cw1w2);
\draw[->] (u_Cw1w2) to (w2_Cw1w2); 
\draw[->,out=140,in=180,distance={0.8*\mylen pt}] (u_Cw1w2) to (v_Cw1w2);
\path (w2_Cw1w2) ++ ({0*\mylen pt},{-0.25*\mylen pt}) node{$\bverti{2}$};
\draw[->,out=0,in=-10,distance={0.75*\mylen pt}] (w2_Cw1w2) to (w2hat_Cw1w2);

\draw[<-,very thick,>=latex,chocolate](v_C) -- ++ (90:{0.425*\mylen pt});    
\path (v_C) ++ ({0*\mylen pt},{1*\mylen pt}) node[xshift={0.275*\mylen pt}]{\scalebox{1.45}{$\acharthat$}}; 
\path (v_C) ++ ({0.3*\mylen pt},{0.225*\mylen pt}) node{$\avert$};
\draw[->,thick] (v_C) to node[right,pos=0.45,yshift={0.08*\mylen pt}]{$\loopnsteplab{2}$} (w2hat_C);
\draw[->,thick] (v_C) to node[left,pos=0.45,xshift={0.065*\mylen pt}]{$\loopnsteplab{2}$} (u_C);  
\draw[->,thick,shorten >= 5pt] (v_C) to node[left,pos=0.825,xshift={0.025*\mylen pt}]{$\loopnsteplab{2}$} (w2_C); 
\path (w2hat_C) ++ ({0.45*\mylen pt},{0.075*\mylen pt}) node{$\bverthati{2}$};
\draw[->] (w2hat_C) to (u_C);
\draw[->,thick] (w2hat_C) to node[right,xshift={-0.05*\mylen pt}]{$\loopnsteplab{1}$} (w2_C);  
\draw[->,out=27,in=0,distance={1*\mylen pt}] (w2hat_C) to (v_C);
\draw[->] (u_C) to (w1_C); 
\draw[->,out=140,in=180,distance={0.8*\mylen pt}] (u_C) to (v_C);
\path (w1_C) ++ ({0*\mylen pt},{-0.25*\mylen pt}) node{$\bverti{1}$};
\draw[->,out=130,in=180,distance={0.95*\mylen pt}] (w1_C) to (v_C);
\path (w2_C) ++ ({0*\mylen pt},{-0.25*\mylen pt}) node{$\bverti{2}$};
\draw[->,out=0,in=-10,distance={0.75*\mylen pt}] (w2_C) to (w2hat_C);
\draw[magenta,thick,densely dashed,out=270,in=180,distance={0.65*\mylen pt},looseness=1] (v_C) to (w2hat_C);
\draw[magenta,thick,bend right,distance={0.2*\mylen pt},looseness=2,densely dashed] (w1_C) to (w2_C);

\draw[<-,very thick,>=latex,chocolate](w2hat_Cvw2hat) -- ++ (90:{0.425*\mylen pt});   
\path (w2hat_Cvw2hat) ++ ({-0.15*\mylen pt},{1.25*\mylen pt}) node{\scalebox{1.45}{$\connthroughin{\acharthat}{\avert}{\bverthati{2}}$}};
\path (w2hat_Cvw2hat) ++ ({-0.275*\mylen pt},{0.275*\mylen pt}) node{$\bverthati{2}$};
\draw[->,thick] (w2hat_Cvw2hat) to node[below,pos=0.35,yshift={-0.025*\mylen pt}]{$\loopnsteplab{2}$} (u_Cvw2hat);
\draw[->,thick] (w2hat_Cvw2hat) to node[right,xshift={-0.05*\mylen pt}]{$\loopnsteplab{1}$} (w2_Cvw2hat); 
\draw[->,thick,out=-10,in=45,distance={1*\mylen pt}] (w2hat_Cvw2hat) to node[above,pos=0.725,yshift={-0.05*\mylen pt}]{$\loopnsteplab{2}$} (w2hat_Cvw2hat);
\draw[->] (u_Cvw2hat) to (w1_Cvw2hat); 
\draw[->,out=130,in=180,distance={0.5*\mylen pt}] (u_Cvw2hat) to (w2hat_Cvw2hat);
\path (w1_Cvw2hat) ++ ({0*\mylen pt},{-0.25*\mylen pt}) node{$\bverti{1}$};
\draw[->,out=130,in=180,distance={1*\mylen pt}] (w1_Cvw2hat) to (w2hat_Cvw2hat);
\path (w2_Cvw2hat) ++ ({0*\mylen pt},{-0.25*\mylen pt}) node{$\bverti{2}$};
\draw[->,out=0,in=-20,distance={0.75*\mylen pt}] (w2_Cvw2hat) to (w2hat_Cvw2hat);  
\draw[magenta,thick,bend right,distance={0.2*\mylen pt},looseness=2,densely dashed] (w1_Cvw2hat) to (w2_Cvw2hat);

\draw[<-,very thick,>=latex,chocolate](w2hat_C'w1w2) -- ++ (90:{0.425*\mylen pt});
\path (w2hat_C'w1w2) ++ ({-0.275*\mylen pt},{0.275*\mylen pt}) node{$\bverthati{2}$};
\draw[->,thick] (w2hat_C'w1w2) to node[below,pos=0.35,yshift={-0.025*\mylen pt}]{$\loopnsteplab{2}$} (u_C'w1w2);
\draw[->,thick] (w2hat_C'w1w2) to node[right,xshift={-0.05*\mylen pt}]{$\loopnsteplab{2}$} (w2_C'w1w2); 
\draw[->,thick,out=-10,in=45,distance={1*\mylen pt}] (w2hat_C'w1w2) to node[above,pos=0.725,yshift={-0.05*\mylen pt}]{$\loopnsteplab{2}$} (w2hat_C'w1w2);
\draw[->] (u_C'w1w2) to (w2_C'w1w2); 
\draw[->,out=130,in=180,distance={0.5*\mylen pt}] (u_C'w1w2) to (w2hat_C'w1w2);
\path (w2_C'w1w2) ++ ({0*\mylen pt},{-0.25*\mylen pt}) node{$\bverti{2}$};
\draw[->,out=0,in=-20,distance={0.75*\mylen pt}] (w2_C'w1w2) to (w2hat_C'w1w2);

  
\draw[-implies,thick,double equal sign distance, bend right,distance={1.5*\mylen pt},
               shorten <= {0.8*\mylen pt},shorten >= {0.8*\mylen pt},
               ] (v_C) to node[above,pos=0.505,yshift={0.125*\mylen pt}] {\scalebox{1.25}{$\connthroughin{\achart}{\bverti{1}}{\bverti{2}}\! \mapsfrom \achart$}} 
                                                                         (v_Cw1w2) ;

\draw[-implies,thick,double equal sign distance, bend left,distance={2.25*\mylen pt},
               shorten <= {1.3*\mylen pt},shorten >= {1.6*\mylen pt},
               ] (v_C) to node[above,pos=0.5,yshift={0.2*\mylen pt}]{\scalebox{1.25}{$(\text{\nf II})^{(\avert)}_{\bverthati{2}}$}} (w2hat_Cvw2hat) ; 

\draw[-implies,thick,double equal sign distance,bend left,distance={1.25*\mylen pt},looseness=2,
               shorten <= {0.75*\mylen pt},shorten >= {0.6*\mylen pt},
               ] ($(w2hat_Cvw2hat) + ({0*\mylen pt},{0.7*\mylen pt})$) 
                   to node[above,pos=0.7,yshift={0.25*\mylen pt}]{\scalebox{1.25}{$(\text{\nf III})^{(\bverti{1})}_{\bverti{2}}$}} 
                 ($(w2hat_C'w1w2) + ({0*\mylen pt},{0.2*\mylen pt})$); 
\end{tikzpicture}   
%
%
  \end{center}
\end{example}


\begin{proof}[Supplement for the proof of Proposition~\ref{prop:LEEshape:preserve:conds}]
  Let $\acharthat$ be a LLEE-chart. 
  For vertices $\bverti{1}$, $\bverti{2}$ such that \ref{cond:transf:I}, \ref{cond:transf:II}, or \ref{cond:transf:III} holds,
  transformation~I, II, or III, \vspace*{-.25mm}respectively, produces an \entrybodylabeling~$\connthroughin{\acharthat}{\bverti{1}}{\bverti{2}}$.
  In the article submission we have proved for transformation~I that it is a \LLEEwitness.
  Here we do the same for transformations II and III. 
  
  We recall that in the proof in the article submission we have shown that 
  it suffices to show that each of the transformations
  produces, before the final clean-up step, an \entrybodylabeling\ that satisfies the LLEE-conditions 
  with the exception of possible violations of the loop property~\ref{loop:1} in \ref{LLEEw:2}\ref{LLEEw:2a}.

  \smallskip
  
\begin{description}

  \item[Transformation~II:]
    We argue the correctness of transformation~II. 
    Consider vertices~$\bverti{1},\bverti{2}$ such that \ref{cond:transf:II} holds,
    that is, $\bverti{2} \loopsbacktotc \bverti{1}$. 
    Let $\bverthati{2}$ be the $\sdloopsbackto$\nb-pre\-de\-cessor of $\bverti{1}$ in the $\sdloopsbackto$\nb-chain from $\bverti{2}$ to $\bverti{1}$,
    i.e., $\bverti{2} \loopsbacktortc \bverthati{2} \dloopsbackto \bverti{1}$. 
    
    As for the transformations~I and III it suffices to show,
    in view of the alleviation of the proof obligation at the start of the proof on page~\pageref{alleviation:prf:prop:LEEshape:preserve:conds},
    that the intermediate result $\acharthatdacc$ of transformation~II
    before the clean-up step satisfies the LLEE-witness properties, except for possible violations of \ref{loop:1}.
    By the definition of transformation~II, $\acharthatdacc$ results
    by performing the adaptation step~\ref{labels:II} to the chart $\acharthatacc \defdby \connthroughin{\acharthat}{\bverti{1}}{\bverti{2}}$
    that arises from $\acharthat$ by connecting $\bverti{1}$ through to $\bverti{2}$. 
    
    To prove that \ref{LLEEw:1}, and the part concerning~\ref{loop:2} for \ref{LLEEw:2}\ref{LLEEw:2a} is satisfied for~$\acharthatdacc$, 
    it suffices to show that the transformed chart does not contain a cycle of body transitions.
    At first, the step of connecting $\bverti{1}$ through to $\bverti{2}$ in $\acharthat$ may introduce a body step cycle in $\acharthatacc = \connthroughin{\acharthat}{\bverti{1}}{\bverti{2}}$.  
    But every such cycle is removed in the subsequent level adaptation step \ref{labels:II}. 
    Namely, each body step cycle introduced in $\acharthatacc$
    must stem from a transition $\cvert \redi{\bodylab} \bverti{1}$ (which is redirected to $\bverti{2}$ in $\acharthatacc$)
    and a path $\bverti{2} \redrtci{\bodylab} \cvert$ in $\acharthat$, 
    for some $\cvert\neq\bverti{1}$.
    Since $\bverti{2} \loopsbacktortc \bverthati{2} \dloopsbackto \bverti{1}$, 
    by Lem.\ \ref{lem:loopsbackto:channel}, 
    the path $\bverti{2} \redrtci{\bodylab} \cvert \redi{\bodylab} \bverti{1}$ in $\acharthat$ must visit $\bverthati{2}$. 
    Since all body transitions from $\bverthati{2}$ are turned into loop-entry transitions in step~\ref{labels:II}, 
    the body step cycle $\bverti{2} \redrtci{\bodylab} \cvert \redi{\bodylab} \bverti{2}$ in $\acharthatacc$ that was introduced 
    in the connect-through step, is after step~\ref{labels:II} no longer a body step cycle in $\acharthatdacc$.
  
    
    Now we prove that \ref{LLEEw:2}\ref{LLEEw:2b} is preserved by the two steps from $\acharthat$ via $\achartacc = \connthroughin{\acharthat}{\bverti{1}}{\bverti{2}}$
    to $\acharthatdacc$.  
    Every path 
    $\cvert 
       \:\scomprewrels{\sredtavoidsvi{\cvert}{\loopnsteplab{\aLname}}}{\,\scomprewrels{\sredtavoidsvrtci{\cvert}{\bodylab}}{\sredi{\loopnsteplab{\bLname}}}}$ 
    in $\acharthatdacc$ with $\cvert\neq\bverti{1},\bverti{2}$ arises by a, possibly empty, combination of
    the following three kinds of modifications in the first two transformation steps:
    \begin{itemize}
      \item[(i)]
        A transition to $\bverti{1}$ was redirected to $\bverti{2}$ in the connect-through step. 
      \item[(ii)]
        The loop-entry transition at the beginning of the path is from $\bverthati{2}$ and was a body transition before step~\ref{labels:II}, 
        meaning that $\cvert=\bverthati{2}$ and $\aLname=\cLname$.
        (Recall that $\cLname$ is a loop name of maximum loop level among the loop-entries at $\bverti{1}$ in $\acharthat$.)
      \item[(iii)]
        The loop-entry transition at the end of the path is from $\bverthati{2}$ and was a body transition before step~\ref{labels:II}, 
        meaning that $\bLname=\cLname$.
    \end{itemize}
    This gives $2^3=8$ possibilities. Of these, three possibilities are void: if all three adaptations are not the case, 
    the path is already present in $\acharthat$, and so $\ll{\aLname}>\ll{\bLname}$ is guaranteed; 
    (ii) and (iii) together cannot hold, because then the path would return to $\cvert=\bverthati{2}$,
    which it cannot, because all of its steps avoid $\cvert$ as target.
    We now show that in the remaining five cases always $\ll{\aLname}>\ll{\bLname}$.
    Since $\bverti{2}\loopsbacktotc\bverti{1}$,
    there is a path $\bverti{1} \comprewrels{\sredtavoidsvi{\bverti{1}}{\loopnsteplab{\dLname}}}{\sredtavoidsvrtci{\bverti{1}}{\bodylab}} \bverti{2}$ in $\acharthat$. 
    By definition of $\cLname$, $\ll{\cLname}\geq\ll{\dLname}$.\vspace*{-1.5mm}
    \begin{itemize}
      \item[A] Let \vspace*{-1mm}only (i) hold: 
        there are paths $\cvert \comprewrels{\sredtavoidsvi{\cvert}{\loopnsteplab{\aLname}}}{\sredtavoidsvrtci{\cvert}{\bodylab}} \bverti{1}$ 
        and $\bverti{2} \:\scomprewrels{\sredtavoidsvrtci{\cvert}{\bodylab}}{\sredi{\loopnsteplab{\bLname}}}$ in $\acharthat$
        (which do not visit $\bverthati{2}$).
        Then there is a path 
        \vspace*{-2.25mm}%
        $\cvert \comprewrels{\sredtavoidsvi{\cvert}{\loopnsteplab{\aLname}}}{\sredtavoidsvrtci{\cvert}{\bodylab}} \bverti{1} \:\sredi{\loopnsteplab{\cLname}}$ in $\acharthat$, 
        so $\ll{\aLname}>\ll{\cLname}$. We distinguish two cases.
        \vspace{1.5mm}
      
        {\sc Case 1:} 
          The \vspace*{-.5mm}path $\bverti{2} \:\scomprewrels{\sredtavoidsvrtci{\cvert}{\bodylab}}{\sredi{\loopnsteplab{\bLname}}}$ visits $\bverti{1}$. 
          Then there is a path $\bverti{1} \:\scomprewrels{\sredtavoidsvrtc{\cvert}}{\sredi{\loopnsteplab{\bLname}}}$ in $\acharthat$.
          So $\cvert
          \comprewrels{\sredtavoidsvi{\cvert}{\loopnsteplab{\aLname}}}{\sredtavoidsvrtc{\cvert}}
          \bverti{1}
          \scomprewrels{\sredtavoidsvrtci{\cvert}{\bodylab}}{\sredi{\loopnsteplab{\bLname}}}$ in $\acharthat$.
          So by \ref{LLEEw:2}\ref{LLEEw:2b}, $\ll{\aLname}>\ll{\bLname}$.
          \vspace{1.5mm}
      
        {\sc Case 2:} The path $\bverti{2} \:\scomprewrels{\sredtavoidsvrtci{\cvert}{\bodylab}}{\sredi{\loopnsteplab{\bLname}}}$ does not visit $\bverti{1}$.
        \vspace*{-1.25mm}Then there is a path 
          $\bverti{1} 
            \comprewrels{\sredtavoidsvi{\bverti{1}}{\loopnsteplab{\dLname}}}{\sredtavoidsvrtci{\bverti{1}}{\bodylab}}  
           \bverti{2} 
             \:\scomprewrels{\sredtavoidsvrtci{\bverti{1}}{\bodylab}}{\sredi{\loopnsteplab{\bLname}}}$ in $\acharthat$, 
           so $\ll{\dLname}>\ll{\bLname}$.
           Hence $\ll{\aLname}>\ll{\cLname}\geq\ll{\dLname}>\ll{\bLname}$.
        \vspace{1.5mm}

      \item[B]
        Let only (ii) hold. Then $\cvert=\bverthati{2}$, $\aLname=\cLname$, and there is a path 
        $\bverthati{2} \;\scomprewrels{\sredtavoidsvtci{\bverthati{2},\bverti{1}}{\bodylab}}{\sredi{\loopnsteplab{\bLname}}}$ in $\acharthat$.
        \vspace*{-.25mm}As $\bverthati{2}\dloopsbackto\bverti{1}$, 
        there is a path
        \mbox{$\bverti{1} \comprewrels{\sredtavoidsvi{\bverti{1}}{\loopnsteplab{\dLname}}}{\sredtavoidsvrtci{\bverti{1}}{\bodylab}} \bverthati{2}$} in $\acharthat$.
        Hence $\bverti{1} 
           \comprewrels{\sredtavoidsvi{\bverti{1}}{\loopnsteplab{\dLname}}}{\sredtavoidsvrtci{\bverti{1}}{\bodylab}}
         \bverthati{2}
           \:\comprewrels{\sredtavoidsvtci{\bverti{1}}{\bodylab}}{\sredi{\loopnsteplab{\bLname}}}$ in $\acharthat$, 
        so $\ll{\dLname}>\ll{\bLname}$.
        Hence $\ll{\aLname}=\ll{\cLname}\geq\ll{\dLname}>\ll{\bLname}$.
      \vspace{1.5mm}
      
      \item[C]
        Let only (iii) hold. \vspace*{-1mm}Then $\bLname=\cLname$, and  
        $\cvert 
           \comprewrels{\sredtavoidsvi{\cvert,\bverti{1}}{\loopnsteplab{\aLname}}}{\sredtavoidsvrtci{\cvert,\bverti{1}}{\bodylab}}
         \bverthati{2}$ with $\cvert\neq\bverti{1}$ is a path in in $\acharthat$.
        Since $\bverthati{2}\dloopsbackto\bverti{1}$ and $\cvert\neq \bverti{1}$, it follows that $\lognot{(\bverthati{2}\dloopsbackto\cvert)}$.
        So in view of 
        the path $\cvert \comprewrels{\sredtavoidsvi{\cvert}{\loopnsteplab{\aLname}}}{\sredtavoidsvrtci{\cvert}{\bodylab}} \bverthati{2}$, 
        \vspace*{-1mm}there is no path $\bverthati{2} \redrtci{\bodylab} \cvert$ in $\acharthat$.
        Since $\bverthati{2}\dloopsbackto\bverti{1}$, there is a path $\bverthati{2} \redrtci{\bodylab} \bverti{1}$ in $\acharthat$, 
        which by the previous observation is of the form $\bverthati{2} \redtavoidsvrtci{\cvert}{\bodylab} \bverti{1}$.
        \vspace*{-1mm}Hence there is 
        a path $\cvert 
                  \comprewrels{\sredtavoidsvi{\cvert}{\loopnsteplab{\aLname}}}{\sredtavoidsvrtci{\cvert}{\bodylab}}
                \bverthati{2}
                  \redtavoidsvrtci{\cvert}{\bodylab}
                \bverti{1}
                  \:\sredi{\loopnsteplab{\cLname}}$ in $\acharthat$, so $\ll{\aLname}>\ll{\cLname}=\ll{\bLname}$.
      \vspace{1.5mm}
      
      \item[D] 
        Let only (i) and (ii) hold, meaning $\cvert=\bverthati{2}$, $\aLname=\cLname$, and
        there are paths 
        $\bverthati{2} \redtavoidsvtc{\bverthati{2}} \bverti{1}$ 
        and $\bverti{2} \:\scomprewrels{\sredtavoidsvrtci{\bverthati{2}}{\bodylab}}{\sredi{\loopnsteplab{\bLname}}}$ in $\acharthat$.
       \vspace{-.25mm}Since $\bverti{2} \loopsbacktortc \bverthati{2} \dloopsbacktotc \bverti{1}$, 
        and $\cvert=\bverthati{2}$ implies $\bverti{2}\neq\bverthati{2}$, by Lem.\ \ref{lem:loopsbackto:channel}, 
        \vspace*{-.25mm}the path $\bverti{2} \:\scomprewrels{\sredtavoidsvrtci{\bverthati{2}}{\bodylab}}{\sredi{\loopnsteplab{\bLname}}}$ cannot visit $\bverti{1}$.
        Hence $\bverti{1} \comprewrels{\sredtavoidsvi{\bverti{1}}{\loopnsteplab{\dLname}}}{\sredtavoidsvrtci{\bverti{1}}{\bodylab}}
                \bverti{2}
                 \:\scomprewrels{\sredtavoidsvrtci{\bverti{1}}{\bodylab}}{\sredi{\loopnsteplab{\bLname}}}$ in $\acharthat$.
        So $\ll{\dLname}>\ll{\bLname}$.
        Hence $\ll{\aLname}=\ll{\cLname}\geq\ll{\dLname}>\ll{\bLname}$.
      \vspace{1.5mm}
      
      \item[E]
        Let only (i) and (iii) hold. \vspace*{-.25mm}Then $\bLname=\cLname$, and 
        $\cvert \comprewrels{\sredtavoidsvi{\cvert}{\loopnsteplab{\aLname}}}{\sredtavoidsvrtci{\cvert}{\bodylab}} \bverti{1}$ 
        and $\bverti{2} \redtavoidsvrtci{\cvert}{\bodylab} \bverthati{2}$ are paths in $\acharthat$.
        Since $\cvert 
           \comprewrels{\sredtavoidsvi{\cvert}{\loopnsteplab{\aLname}}}{\sredtavoidsvrtci{\cvert}{\bodylab}} 
         \bverti{1}
           \:\sredi{\loopnsteplab{\cLname}}$ in $\acharthat$,
         $\ll{\aLname}>\ll{\cLname}=\ll{\bLname}$.\vspace*{-1mm}
      \end{itemize}
      We conclude that in all five cases, $\acharthatdacc$ satisfies \ref{LLEEw:2}\ref{LLEEw:2b}.\vspace{1.5mm}
      
       Finally we argue that part~\ref{loop:3} of \ref{LLEEw:2}\ref{LLEEw:2a} holds for $\acharthatdacc$, i.e.,
      there are no \txtdescendsinloopto\ paths of the form 
      \vspace*{-0.25mm}
      $\cvert 
         \comprewrels{\sredtavoidsvi{\cvert}{\loopnsteplab{\aLname}}}{\sredtavoidsvrtci{\cvert}{\bodylab}}
        \tick$ 
      in $\acharthatdacc$.  
      We can use part of the argumentation employed for demonstrating \ref{LLEEw:2}\ref{LLEEw:2b} \vspace*{-.25mm}above.
      It was demonstrated in particular that for every \txtdescendsinloopto\ path 
      $\cvert 
         \comprewrels{\sredtavoidsvi{\cvert}{\loopnsteplab{\aLname}}}{\sredtavoidsvrtci{\cvert}{\bodylab}}
        \dvert$ 
      in $\acharthatdacc$,
      there is a \txtdescendsinloopto\ path
      $\cverttilde 
         \comprewrels{\sredtavoidsvi{\cverttilde}{\loopnsteplab{\cLname}}}{\sredtavoidsvrtci{\cverttilde}{\bodylab}}
        \dvert$
      with\vspace*{-1mm} the same target~$\dvert$ in $\acharthat$. 
      From this it follows that if a \txtdescendsinloopto\ path in $\acharthatdacc$ had $\tick$ as target,
      then there were a \txtdescendsinloopto\ path already in $\acharthat$ that had $\tick$ as target,
      violating \ref{loop:3} for the LLEE-chart $\acharthat$.
      Hence $\acharthatdacc$ must satisfy \ref{loop:3}. 
      
      We conclude that the result of transformation~II is  a \LLEEchart.\vspace{2mm}

  \item[Transformation~III:]

  To show the correctness of transformation~III, consider vertices~$\bverti{1}$ and $\bverti{2}$ such that \ref{cond:transf:III} holds.
  Let $\avert$ be such that $\bverti{1} \dloopsbackto \avert \convloopsbacktotc \bverti{2}$.
  We show that its intermediate result $\connthroughin{\acharthat}{\bverti{1}}{\bverti{2}}$ before the clean-up step
  satisfies the LLEE-witness properties, except for possible violations of \ref{loop:1}.
 
  First we show that \ref{LLEEw:2}\ref{LLEEw:2b} is preserved
  by both the level adaptation and the connect-through step.
  A violation arising by the first step, i.e., in $\acharthatacc$, would involve a path
  \vspace*{-0.5mm}%
  $\cvert \comprewrels{\sredtavoidsvi{\cvert}{\loopnsteplab{\aLname}}}{\sredtavoidsvrtci{\cvert}{\bodylab}} \avert
          \redi{\loopnsteplab{\bLname}}$ in $\acharthat$
  where $\bLname$ is increased to a loop label $\cLname$ of maximum level among all loop-entries at $\avert$. 
  But in this way no violation can arise, since there was already a path 
  $\cvert \comprewrels{\sredtavoidsvi{\cvert}{\loopnsteplab{\aLname}}}{\sredtavoidsvrtci{\cvert}{\bodylab}} \avert
          \redi{\loopnsteplab{\cLname}}$ in $\acharthat$, so
  $\ll{\aLname}>\ll{\cLname}\geq\ll{\bLname}$.
  
  Now we exclude violations of \ref{LLEEw:2}\ref{LLEEw:2b}
  in the connect-through step, \vspace*{-.5mm}by showing that in $\connthroughin{\acharthat}{\bverti{1}}{\bverti{2}}$, $\ll{\aLname}>\ll{\bLname}$ for all newly created paths
  \vspace*{-1mm}%
  $\cvert  \comprewrels{\sredtavoidsvi{\cvert}{\loopnsteplab{\aLname}}}{\,\scomprewrels{\sredtavoidsvrtci{\cvert}{\bodylab}}{\sredi{\loopnsteplab{\bLname}}}}$ 
  with $\cvert\neq\bverti{1}$ that stem from paths
  $\cvert  \comprewrels{\sredtavoidsvi{\cvert}{\loopnsteplab{\aLname}}}{\sredtavoidsvrtci{\cvert}{\bodylab}} \bverti{1}$ 
  and $\bverti{2} \comprewrels{\sredtavoidsvrtci{\cvert}{\bodylab}}{\sredi{\loopnsteplab{\bLname}}}$ 
  in $\acharthatacc$.
  As $\bverti{2}\loopsbacktotc\avert$, there is a path
  $\avert \comprewrels{\sredtavoidsvi{\avert}{\loopnsteplab{\cLname}}}{\sredtavoidsvrtci{\avert}{\bodylab}} \bverti{2}$ 
  in $\acharthatacc$. We distinguish two cases.
  
  \noindent
  {\sc Case 1:} $\cvert=\avert$. \vspace*{-2mm}Then, by the level adaptation step, $\aLname=\cLname$. Since $u=v$, 
  there is a path $\avert \comprewrels{\sredtavoidsvi{\avert}{\loopnsteplab{\cLname}}}{\sredtavoidsvrtci{\avert}{\bodylab}} \bverti{2}
                          \:\scomprewrels{\sredtavoidsvrtci{\avert}{\bodylab}}{\sredi{\loopnsteplab{\bLname}}}$ in $\acharthatacc$. 
  By \ref{LLEEw:2}\ref{LLEEw:2b} for $\acharthatacc$, $\ll{\cLname}>\ll{\bLname}$.\vspace{1.5mm}
  
  \noindent
  {\sc Case 2:} $\cvert\neq\avert$. Since $\bverti{1}\dloopsbackto\avert$, 
  there is a path $\bverti{1} \redtci{\bodylab} \avert$ in $\acharthat$ and thus in $\acharthatacc$. Suppose, toward a contradiction, that this path visits $\cvert$.
  Then
  \vspace*{-1.5mm}
  $\cvert \comprewrels{\sredtavoidsvi{\cvert}{\loopnsteplab{\aLname}}}{\sredtavoidsvrtci{\cvert}{\bodylab}} \bverti{1}
          \redtci{\bodylab} \cvert$, 
  so $\bverti{1}\loopsbackto\cvert$ in $\acharthatacc$ and thus in $\acharthat$. 
  Then $\bverti{1}\dloopsbackto\avert$ and $\cvert\neq\avert$ imply $\avert\loopsbackto\cvert$, 
  which together with $\cvert\,\sredtci{\bodylab}\,\avert$ yields a body step cycle between $\cvert$ and $\avert$ in $\acharthat$. 
  This contradicts that \ref{LLEEw:1} holds in $\acharthat$.
  Therefore $\bverti{1} \redtavoidsvtci{\cvert}{\bodylab} \avert$ in $\acharthatacc$.
  We consider two cases.
  
  \noindent
  {\sc Case 2.1:} $\bverti{2} \comprewrels{\sredtavoidsvrtci{\cvert}{\bodylab}}{\sredi{\loopnsteplab{\bLname}}}$ visits $\avert$, 
      so 
    \vspace*{-.75mm}%
    $\avert \:\scomprewrels{\sredtavoidsvrtci{\cvert}{\bodylab}}{\sredi{\loopnsteplab{\bLname}}}$ in $\acharthatacc$.
    Then 
    $\cvert \comprewrels{\sredtavoidsvi{\cvert}{\loopnsteplab{\aLname}}}{\sredtavoidsvrtci{\cvert}{\bodylab}} \bverti{1}
            \redtavoidsvtci{\cvert}{\bodylab} \avert
            \:\scomprewrels{\sredtavoidsvrtci{\cvert}{\bodylab}}{\sredi{\loopnsteplab{\bLname}}}$ in $\acharthatacc$. 
    By  \ref{LLEEw:2}\ref{LLEEw:2b} for $\acharthatacc$, $\ll{\aLname}>\ll{\bLname}$.
   \vspace{1mm}
  
  \noindent
  {\sc Case 2.2:} 
    \vspace{-1mm}$\bverti{2} \:\scomprewrels{\sredtavoidsvrtci{\cvert}{\bodylab}}{\sredi{\loopnsteplab{\bLname}}}$ does not visit $\avert$.
    Then since $\bverti{2} \loopsbacktotc \avert$ implies $\avert \descendsinlooptotc \bverti{2}$,
    there is \vspace*{-.75mm}a path 
    $\avert 
       \comprewrels{\sredtavoidsvi{\avert}{\loopnsteplab{\cLname}}}{\sredtavoidsvrtci{\avert}{\bodylab}} 
     \dverti{k} 
       \comprewrels{\sredtavoidsvi{\dverti{k}}{\loopnsteplab{\dLnamei{k}}}}{\sredtavoidsvrtci{\dverti{k}}{\bodylab}}
         \cdots\,
     \dverti{1}    
       \comprewrels{\sredtavoidsvi{\dverti{1}}{\loopnsteplab{\dLnamei{1}}}}{\sredtavoidsvrtci{\dverti{1}}{\bodylab}}   
     \bverti{2} 
       \:\comprewrels{\redtavoidsvrtc{\avert}}{\sredi{\loopnsteplab{\bLname}}}$ in $\acharthatacc$, for \vspace*{-.75mm}some $k\geq 0$.
  Since also 
  $\cvert \comprewrels{\sredtavoidsvi{\cvert}{\loopnsteplab{\aLname}}}{\sredtavoidsvrtci{\cvert}{\bodylab}} \bverti{1}
          \redtavoidsvtci{\cvert}{\bodylab} \avert \,\sredi{\loopnsteplab{\cLname}}$ in $\acharthatacc$, 
  by \ref{LLEEw:2}\ref{LLEEw:2b},
  $\ll{\aLname} >\ll{\cLname} > \ll{\dLnamei{k}} > \cdots > \ll{\dLnamei{1}} > \ll{\bLname}$.
  So $\ll{\aLname} > \ll{\bLname}$.
  
  To verify \ref{LLEEw:1} together with part \ref{loop:2} of \ref{LLEEw:2}\ref{LLEEw:2a} for $\connthroughin{\acharthat}{\bverti{1}}{\bverti{2}}$,
  it suffices to show that $\connthroughin{\acharthat}{\bverti{1}}{\bverti{2}}$ does not contain body step cycles. 
  This can be verified analogously as for transformation~I.
  That is, under the assumption of a body step cycle we can construct a path $\bverti{2} \redtci{\bodylab} \bverti{1}$ in $\acharthat$, 
  which contradicts \ref{cond:transf:III} (as it contradicted \ref{cond:transf:I}).  
  
  To show part \ref{loop:3} of \ref{LLEEw:2}\ref{LLEEw:2a} for $\connthroughin{\acharthat}{\bverti{1}}{\bverti{2}}$, 
  we can use part of the argumentation employed above for proving \ref{LLEEw:2}\ref{LLEEw:2b}.
  It was demonstrated in particular that for every \txtdescendsinloopto\ path 
  \vspace*{-1.25mm}%
  $\cvert 
     \comprewrels{\sredtavoidsvi{\cvert}{\loopnsteplab{\aLname}}}{\sredtavoidsvrtci{\cvert}{\bodylab}}
    \dvert$ 
  in $\acharthatdacc$
  there is a \txtdescendsinloopto\ path 
  \vspace*{-.5mm}%
  $\cverttilde
     \comprewrels{\sredtavoidsvi{\cverttilde}{\loopnsteplab{\cLname'}}}{\sredtavoidsvrtci{\cverttilde}{\bodylab}}
    \dvert$
  with the same target~$\dvert$ in $\acharthat$.
  This entails that if a \txtdescendsinloopto\ path in $\acharthatdacc$ had $\tick$ as target,
  then there were a \txtdescendsinloopto\ path in $\acharthat$ with $\tick$ as target,
  contradicting \ref{loop:3} for the \LLEEwitness~$\acharthat$.
  Hence $\acharthatdacc$ must satisfy part~\ref{loop:3}~of~\ref{LLEEw:2}\ref{LLEEw:2a}.
  
  We conclude that the result of transformation~III is again a \LLEEwitness. 
\end{description}  
\end{proof}
 

\end{document}